\newif\ifdraft\drafttrue
\newcommand{\ADG}[1]{\ifdraft\marginpar{ADG says: #1}\fi}
\newcommand{\JB}[1]{\ifdraft\marginpar{JB says: #1}\fi}
\newcommand{\MS}[1]{\ifdraft\marginpar{Marcin says: #1}\fi}
\newcommand{\ie}{{\it i.e.}}
\newcommand{\termone}{M}
\newcommand{\termtwo}{N}
\newcommand{\termthree}{L}
\newcommand{\termfour}{P}
\newcommand{\varone}{x}
\newcommand{\vartwo}{y}
\newcommand{\red}{\rightarrow}
\newcommand{\Nat}{\mathbb{N}}
\newcommand{\midd}{\; \; \mbox{\Large{$\mid$}}\;\;}
\newcommand{\NN}{\mathbb{N}}
\newcommand{\RR}{\mathbb{R}}
\newcommand{\RRp}{\mathbb{R}_{[0,1]}}
\newenvironment{varitemize}
{
\begin{list}{\labelitemi}
{
\setlength{\itemsep}{0pt}
 \setlength{\topsep}{0pt}
 \setlength{\parsep}{0pt}
 \setlength{\partopsep}{0pt}
 \setlength{\leftmargin}{15pt}
 \setlength{\rightmargin}{0pt}
 \setlength{\itemindent}{0pt}
 \setlength{\labelsep}{5pt}
 \setlength{\labelwidth}{10pt}}}
{
 \end{list}
}
\newcommand{\ratio}{0.6}
\newenvironment{display}[2][\ratio]{
  \begin{tabbing}
    \hspace{0.1em} \= \hspace{1.5em} \= \hspace{#1\linewidth-3.2em} \= \hspace{1.5em} \= \kill
    \textbf{#2}\\[-.8ex]
    \hbra\\[-.8ex]
  }
  {\\[-.8ex]\hket
  \end{tabbing}}
\newcommand{\clause}[3][]{\>$#2$\>#1\>#3}
\newcounter{rule}
\newcommand{\hbra}{
  \hbox to \columnwidth{\vrule width0.3mm height 1.8mm depth-0.3mm
      \leaders\hrule height1.8mm depth-1.5mm\hfill
          \vrule width0.3mm height 1.8mm depth-0.3mm}}
\newcommand{\hket}{
  \hbox to \columnwidth{\vrule width0.3mm height1.5mm
      \leaders\hrule height0.3mm\hfill
          \vrule width0.3mm height1.5mm}}
 \newcommand{\addToLabel}[1]{%
   \protected@edef\@currentlabel{\@currentlabel#1}%
  }
\newcommand{\restaterule}[4][]{%
  \addToLabel{(\textsc{#2})}%
  $\begin{array}[b]{@{}l@{}}%
    \mbox{(\textsc{#2})#1}\\%
    \begin{array}{@{}l@{}}
      #3\\
      \hline      \raisebox{0ex}[2.5ex]{\strut}#4%
    \end{array}
  \end{array}$}
\newcommand{\Appref}[1]{Appendix~\ref{#1}}
\newcounter{numberone}
\newenvironment{varenumerate}
{
\begin{list}{\arabic{numberone}.}
{
  \usecounter{numberone}
  \setlength{\itemsep}{0pt}
  \setlength{\topsep}{0pt}
  \setlength{\parsep}{0pt}
  \setlength{\partopsep}{0pt}
  \setlength{\leftmargin}{15pt}
  \setlength{\rightmargin}{0pt}
  \setlength{\itemindent}{0pt}
  \setlength{\labelsep}{5pt}
  \setlength{\labelwidth}{15pt}
}}
{
\end{list} 
}
\newenvironment{prog}{\begin{array}[t]{@{}l@{}}}{\end{array}}
\newcommand{\measterms}{\mathcal{M}}
\newcommand{\expset}{\{\fail\}}
\newcommand{\Abs}[1]{\lvert{#1}\rvert}
\newcommand{\distset}{\mathcal{I}}
\newcommand{\distone}{\mathsf{D}}
\newcommand{\disttwo}{\mathsf{E}} 
\newcommand{\dirac}[1]{\delta(#1)}
\newcommand{\pdf}[1]{\operatorname{pdf}_{#1}}
\newcommand{\Inject}[1]{\lceil{#1}\rceil}
\newcommand{\bnf}{::=}
\newcommand{\varset}{\mathcal{X}}
\newcommand{\abstr}[2]{\lambda #1.#2}
\newcommand{\ite}[3]{\mathtt{if}\ {#1}\ \mathtt{then}\ {#2}\ \mathtt{else}\ {#3}}
\newcommand{\score}[1]{\mathtt{score}(#1)}
\newcommand{\ttrue}{\mathtt{true}}
\newcommand{\tfalse}{\mathtt{false}}
\newcommand{\sbst}[3]{#1\{#3/#2\}}
\newcommand{\valone}{V}
\newcommand{\valtwo}{W}
\newcommand{\valset}{\mathcal{V}}
\newcommand{\gvalone}{G}
\newcommand{\gvaltwo}{H}
\newcommand{\gvalset}{\mathcal{GV}}
\newcommand{\tsetone}{A}
\newcommand{\tsettwo}{B}
\newcommand{\fail}{\mathtt{fail}}
\newcommand{\terror}{\mathtt{fail}}
\newcommand{\intfun}[1]{\sigma_{#1}}
\newcommand{\redterms}{R\Lambda_P}
\newcommand{\lamterms}{\Lambda}
\newcommand{\cterms}{C\Lambda}
\newcommand{\msone}{\mathscr{M}}
\newcommand{\setone}{X}
\newcommand{\settwo}{Y}
\newcommand{\funone}{f}
\newcommand{\algone}{\Sigma}
\newcommand{\algtwo}{\Sigma'}
\newcommand{\hole}{[\cdot]}
\newcommand{\ectxone}{E}
\newcommand{\ectxthree}{G}
\newcommand{\ct}[2]{#1[#2]}
\newcommand{\ctm}[2]{#1\{#2\}}
\newcommand{\emdistr}{\mathbf{0}}
\newcommand{\distrone}{\mathscr{D}}
\newcommand{\distrtwo}{\mathscr{E}}
\newcommand{\distrthree}{\mathscr{F}}
\newcommand{\distrfour}{\mathscr{G}}
\newcommand{\distrfive}{\mathscr{H}}
\newcommand{\distrsix}{\mathscr{I}}
\newcommand{\distrseven}{\mathscr{L}}
\newcommand{\restr}[2]{{#1}\rvert_{#2}}
\newcommand{\supp}[1]{\operatorname{supp}(#1)}
\newcommand{\funs}{\mathbf{F}}
\newcommand{\sisss}[3]{#1\Rightarrow_{#2} #3}
\newcommand{\redss}{\Rightarrow}
\newcommand{\sibss}[3]{#1\Downarrow_{#2} #3}
\newcommand{\sts}[1]{\llbracket #1\rrbracket_{\Rightarrow}}
\newcommand{\bts}[1]{\llbracket #1\rrbracket_{\Downarrow}}
\newcommand{\tsq}[1]{\llbracket #1\rrbracket}
\newcommand{\Tracedist}[1]{\langle\!\langle #1\rangle\!\rangle}
\newcommand{\Tracevaldist}[1]{\langle\!\langle  #1\rangle\!\rangle^{\!\valset}}
\newcommand{\valdist}[1]{\restr{\tsq{#1}}{\valset}}
\newcommand{\detred}{\xrightarrow{\text{det}}}
\newcommand{\oCPO}{\ensuremath{\omega\mathbf{CPO}}}
\newcommand{\Termsmp}[1]{\mathbf{P}_{\! #1}}
\newcommand{\Termsmv}[1]{\mathbf{P}^{\valset}_{\! #1}}
\newcommand{\termsmp}[2]{\Termsmp{#1}(#2)}
\newcommand{\termsmv}[2]{\Termsmv{#1}(#2)}
\newcommand{\Obssmp}[1]{\mathbf{O}_{#1}}
\newcommand{\obssmp}[2]{\Obssmp{#1}(#2)}
\newcommand{\sampseq}{\mathbb{S}}
\newcommand{\sampseqp}[1]{\mathbb{S}_{#1}}
\newcommand{\sbd}[1]{\llbracket #1\rrbracket_{\sampseq}}
\newcommand{\sbdp}[2]{\sbd{#1}^{#2}}
\newenvironment{restate}[1]%
{\begin{trivlist}\item[]{\normalsize\sc Restatement of #1.} \it}%
  {\end{trivlist}}
\newcommand{\condnocr}{}
\newcommand{\Rtwo}{\textsc{R2}\xspace}
\newcommand{\Church}{\textsc{Church}\xspace}
\newcommand{\Venture}{\textsc{Venture}\xspace}
\newcommand{\Anglican}{\textsc{Anglican}\xspace}
\newcommand{\WebChurch}{\textsc{Web Church}\xspace}
\newcommand{\WebPPL}{\textsc{WebPPL}\xspace}
\newcommand{\Scheme}{\textsc{Scheme}\xspace}
\newcommand{\JavaScript}{\textsc{JavaScript}}
\renewcommand{\emptyset}{\varnothing}
\newcommand{\Set}[1]{\{#1\}}                    
\newcommand{\0}{\emptyset}
\newcommand{\mAlg}{\Sigma}
\newcommand{\Borel}{\mathcal{B}}
\newcommand{\Real}{\mathbb{R}}
\newcommand{\RealInf}{\mathbb{R}_{+}}
\newcommand{\Powerset}[2][]{\mathcal{P}_{#1}(#2)}
\newcommand{\indfun}[2]{{[{#2}\in{#1}]}}
\newcommand{\deq}{\mathrel{\smash[t]{\triangleq}}}
\begin{document}
\title{A Lambda-Calculus Foundation for\\Universal Probabilistic Programming\thanks{The first author is supported by the Swedish Research Council grant 2013-4853. The second author is partially supported by the ANR project 12IS02001
PACE and the ANR project 14CE250005 ELICA. The fourth author was supported by Microsoft Research through its PhD Scholarship Programme.}}
  \author{
      Johannes Borgstr\"om\footnote{Uppsala University} \and
      Ugo Dal Lago\footnote{University of Bologna \& INRIA} \and
      Andrew D. Gordon\footnote{Microsoft Research and University of Edinburgh} \and
      Marcin Szymczak\footnote{University of Edinburgh}}

\newtheorem{lemma}{Lemma}
\newtheorem{theorem}{Theorem}
\newtheorem{proposition}{Proposition}
\newtheorem{corollary}{Corollary}
\newtheorem{definition}{Definition}
\newenvironment{proof}{\begin{trivlist}
       \item[\hskip \labelsep {\sc Proof.}]}{\end{trivlist}}
\newcommand{\qed}{\hfill$\Box$}

\maketitle

\begin{abstract}
We develop the operational semantics of an untyped probabilistic
$\lambda$-calculus with continuous distributions, and both hard and soft constraints, 
as a foundation for universal probabilistic programming languages such as \Church,
\Anglican, and \Venture.
Our first contribution is to adapt the classic operational semantics
of~$\lambda$-calculus to a continuous setting via creating a measure
space on terms and defining step-indexed approximations. We prove
equivalence of big-step and small-step formulations of this
\emph{distribution-based semantics}.
To move closer to inference techniques, we also define the
\emph{sampling-based semantics} of a term as a function from a trace
of random samples to a value.
We show that the distribution induced by integration over the space of traces
equals the distribution-based semantics.
Our second contribution is to formalize the implementation technique
of trace \emph{Markov chain Monte Carlo} (MCMC) for our calculus and
to show its correctness.
A key step is defining sufficient conditions for the distribution
induced by trace MCMC to converge to the distribution-based semantics.
To the best of our knowledge, this is the first rigorous correctness
proof for trace MCMC for a higher-order functional language, 
or for a language with soft constraints.
\end{abstract}

\clearpage
\tableofcontents
\clearpage

\section{Introduction} \label{sec:intro}
In computer science, probability theory can be used for models that
enable system abstraction, and also as a way to compute in a setting
where having access to a source of randomness is essential to achieve
correctness, as in randomised computation or
cryptography~\cite{GoldwasserMicali}. Domains in which probabilistic
models play a key role include robotics~\cite{thrun2002robotic},
linguistics~\cite{manning1999foundations}, and especially machine
learning~\cite{pearl1988probabilistic}.  The wealth of applications
has stimulated the development of concrete and abstract programming
languages, that most often are extensions of their deterministic
ancestors. Among the many ways probabilistic choice can be captured in
programming, a simple one consists in endowing the language of
programs with an operator modelling the sampling from (one or many)
distributions. This renders program evaluation a probabilistic
process, and under mild assumptions the language becomes universal for
probabilistic computation. Particularly fruitful in this sense has
been the line of work in the functional paradigm.

In \emph{probabilistic programming}, programs become a way to specify
probabilistic models for observed data, on top of which one can later
do inference. This has been a source of inspiration for AI
researchers, and has recently been gathering interest in the
programming language community (see \citet{DBLP:conf/popl/Goodman13}, \citet{DBLP:conf/icse/GordonHNR14}, and 
\citet{DBLP:journals/cacm/Russell15}).
\subsection{Universal Probabilistic Programming in \Church}\label{sec:background}
\Church\ \cite{DBLP:conf/uai/GoodmanMRBT08} introduced \emph{universal
  probabilistic programming}, the idea of writing probabilistic models
for machine learning in a Turing-complete functional programming
language. \Church, and its descendants
\Venture\ \cite{DBLP:journals/corr/MansinghkaSP14},
\Anglican\ \cite{DBLP:conf/pkdd/TolpinMW15}, and
\WebChurch\ \cite{probmods} are dialects of \Scheme. Another example
of universal probabilistic programming is \WebPPL~\cite{dippl}, a
probabilistic interpretation of \JavaScript.

A probabilistic query in \Church\ has the form:
\begin{lstlisting}
  (query (define $x_1$ $e_1$)...(define $x_n$ $e_n$) $e_q$ $e_c$)
\end{lstlisting}
The query denotes the distribution given by the probabilistic
expression~$e_q$, given variables~$x_i$ defined by potentially
probabilistic expressions~$e_i$, constrained so that the boolean
predicate~$e_c$ is true.

Consider a coin with bias~$p$, that is,~$p$ is the probability of
heads. Recall that the \emph{geometric distribution} of the coin is
the distribution over the number of flips in a row before it comes up
heads. An example of a \Church\ query is as follows: it denotes the
geometric distribution for a fair coin, constrained to be greater than one.
\begin{lstlisting}
  (query
    (define flip (lambda (p) (< (rnd) p)))
    (define geometric (lambda (p)
      (if (flip p) 0 (+ 1 (geometric p))))
    (define n (geometric .5))
    n
    (> n 1))
\end{lstlisting}
The query defines three variables: (1) \lstinline{flip} is a function
that flips a coin with bias \lstinline{p}, by calling
\lstinline{(rnd)} to sample a probability from the uniform
distribution on the unit interval; (2)
\lstinline{geometric}\footnote{See
  \url{http://forestdb.org/models/geometric.html}.} is a function that
samples from the geometric distribution of a coin with bias
\lstinline{p}; and (3) \lstinline{n} denotes the geometric
distribution with bias \lstinline{0.5}.  Here are samples from this
query:
\begin{lstlisting}
  (5 5 5 4 2 2 2 2 2 3 3 2 2 7 2 2 3 4 2 3)
\end{lstlisting}
This example is a discrete distribution with unbounded support (any
integer greater than one may be sampled with some non-zero
probability), defined in terms of a continuous distribution (the
uniform distribution on the unit interval). Queries may also define
continuous distributions, such as regression parameters.



\subsection{Problem 1: Semantics of \Church\ Queries}
The first problem we address in this work is to provide a formal
semantics for universal probabilistic programming languages with
constraints.  Our example illustrates the common situation in machine
learning that models are based on continuous distributions (such as
\lstinline{(rnd)}) and use constraints, but previous works on formal
semantics for untyped probabilistic~$\lambda$-calculi do not rigorously treat
the combination of these features.

To address the problem we introduce a call-by-value~$\lambda$-calculus with
primitives for random draws from various continuous distributions,
and primitives for both hard and soft constraints.
We present an encoding of \Church\ into our calculus, and some nontrivial examples of
probabilistic models.

We consider two styles of operational semantics for our
$\lambda$-calculus, in which a term is interpreted in two ways, the
first closer to inference techniques, the second more extensional:
\begin{description}
\item[Sampling-Based:] A function from a trace to a value and weight.
\item[Distribution-Based:] A distribution over terms of our calculus.
\end{description}
To obtain a thorough understanding of the semantics of the calculus,
for each of these styles we present two inductive definitions of
operational semantics, in small-step and big-step style.

First, we consider the \emph{sampling-based semantics}: the two
inductive definitions have the forms shown below, where $M$ is a
closed term, $s$ is a finite \emph{trace} of random real numbers, $w >
0$ is a \emph{weight} (to impose soft constraints), and $G$ is a \emph{generalized value}
(either a value (constant or $\lambda$-abstraction) or the exception
\lstinline{fail}, used to model a failing hard constraint).

\begin{varitemize}
\item Figure~\ref{fig:small-step-sampling} defines small-step relation $(M,w,s) \rightarrow (M',w',s')$.
\item Figure~\ref{fig-sampling} defines the big-step relation $M \Downarrow^s_w G$.
\end{varitemize}

For example, if $M$ is the $\lambda$-term for our geometric distribution example
and we have $M \Downarrow^s_w G$ then there is $n \geq 0$ such that:
\begin{varitemize}
\item 
  the trace has the form $s=[q_1,\dots,q_{n+1}]$ where each $q_i$ is a probability, and $q_i < 0.5$
  if and only if $i=n+1$. (A sample $q_i \geq 0.5$ is tails; a sample $q_i < 0.5$ is heads.)
\item 
  the result takes the form $G=n$ if $n>1$, and otherwise
  $G=\mathtt{fail}$ (the failure of a hard constraint leads to $\mathtt{fail}$);
\item 
  and the weight is $w=1$ (the density of the uniform distribution on the unit interval).
\end{varitemize}

Our first result, Theorem~\ref{thm:sample-small-big-eq}, shows equivalence: that the big-step
and small-semantics of a term consume the same traces to produce the
same results with the same weights.

To interpret these semantics probabilistically, we describe a metric space of
$\lambda$-terms and let~$\distrone$ range over \emph{distributions},
that is, sub-probability Borel measures on terms of the
$\lambda$-calculus. We define~$\sbd{\termone}$ to be the distribution
induced by the sampling-based semantics of~$M$, 
by integrating the weight over the space of traces.

Second, we consider the \emph{distribution-based semantics}, that
directly associate distributions with terms, without needing to
integrate out traces. The two inductive definitions have the forms
shown below, where $n$ is a step-index:
\begin{varitemize}
\item 
  Figure~\ref{fig:smallstepsem} defines a family of small-step
  relations $\sisss{\termone}{n}{\distrone}$.
\item 
  Figure~\ref{fig:bigstepsem} defines a family of big-step relations
  $\sibss{\termone}{n}{\distrone}$.
\end{varitemize}
These step-indexed families are approximations to their suprema,
distributions written as $\sts{\termone}$ and $\bts{\termone}$. By
Theorem~\ref{thm:eq} we have $\sts{\termone} = \bts{\termone}$. The
proof of the theorem needs certain properties
(Lemmas~\ref{lemma:sstk}, \ref{lemma:derapp}, and \ref{lemma:invapp})
that build on compositionality results for sub-probability kernels
\cite{panangaden99markovkernels} from the measure theory
literature. We apply the distribution-based semantics in
Section~\ref{sec:app} to show an equation between hard and soft constraints.

Finally, we reconcile the two rather different styles of semantics:
Theorem~\ref{thm:sampling-distribution} establishes
that~$\sbd{\termone} = \sts{\termone}$.

\subsection{Problem 2: Correctness of Trace MCMC}

The second problem we address is implementation correctness. As recent
work shows \cite{corsamp,problems-kiselov16}, subtle errors in
inference algorithms for probabilistic languages are a motivation for
correctness proofs for probabilistic inference.

\emph{Markov chain Monte Carlo} (MCMC) is an important class of
inference methods, exemplified by the Metropolis-Hastings (MH)
algorithm \cite{metropolis53,hastings70:MH}, that accumulates samples
from a target distribution by exploring a Markov chain. The original
work on \Church\ introduced the implementation technique called
\emph{trace MCMC} \cite{DBLP:conf/uai/GoodmanMRBT08}. Given
a closed term~$M$, trace MCMC generates a Markov chain of
traces,~$s_0$, $s_1$, $s_2$, \dots. The MH algorithm is parametric in a 
\emph{proposal kernel}~$Q$: 
a function that maps a trace~$s$ of~$M$ to a probability distribution over traces, 
used to sample the next trace in the Markov chain.

Our final result, Theorem~\ref{thm:sampling-trace}, asserts that the
Markov chain generated by trace MCMC converges to a stationary
distribution, and that the induced distribution on values is equal to
the semantics~$\sts{\termone}$ conditional on success, that is, that
the computation terminates and yields a value (not
$\mathtt{fail}$). We formalize the algorithm rigorously, defining our
proposal kernel as a Lebesgue integral of a corresponding density
function, that we show to be measurable with respect to
the~$\sigma$-algebra on program traces. We
show that the resulting Markov chain satisfies standard criteria: 
\emph{aperiodicity} and \emph{irreducibility}. 
Hence, Theorem~\ref{thm:sampling-trace} follows from 
a classic result of \citet{tierney1994} 
together with Theorem~\ref{thm:sampling-distribution}.

\subsection{Contributions of the Paper}
We make the following original contributions:
\begin{varenumerate}
\item
  Definition of an untyped~$\lambda$-calculus with continuous
  distributions capable of encoding the core of \Church.
\item
  Development of both sampling-based and distribution-based semantics,
  shown equivalent (Theorems~\ref{thm:sample-small-big-eq},
  \ref{thm:eq}, and~\ref{thm:sampling-distribution}).
\item
  First proof of correctness of trace MCMC for a $\lambda$-calculus
  (Theorem~\ref{thm:sampling-trace}).
\end{varenumerate}

The only previous work on formal semantics of~$\lambda$-calculi with
constraints and continuous distributions is recent work by
\citet{DBLP:journals/corr/StatonYHKW16}.
Their main contribution is
an elegant denotational semantics for a simply typed
$\lambda$-calculus with continuous distributions and both hard and
soft constraints, but without recursion.  They do not consider MCMC
inference.  Their work does not apply to the recursive functions (such
as the geometric distribution in Section~\ref{sec:background}) or data
structures (such as lists) typically found in \Church\ programs.  For
our purpose of conferring formal semantics on \Church-family
languages, we consider it advantageous to rely on untyped techniques.

The only previous work on correctness of trace MCMC, and an important
influence on our work, is a recent paper by \citet{corsamp} which
proves correct an algorithm for computing an MH Markov chain.  Key
differences are that we work with higher-order languages and soft
constraints, and that we additionally give a proof that our Markov chain always converges, 
via the correctness criteria of \citet{tierney1994}.

\subsection{Structure of the Paper}
The rest of the paper is organized as follows.

Section~\ref{sec:syntax} defines the syntax of our probabilistic
$\lambda$-calculus with draws from continuous distributions, and
defines a deterministic \emph{sampling-based} operational semantics for our
calculus. The semantics is based on the explicit consumption of a
program trace~$s$ of random draws and production of an explicit weight
$w$ for each outcome. 

\ADG{why useful to have both big-step and small-step?}
Section~\ref{sect:samplingsemantics} is concerned
with a more in-depth treatment of sampling-based semantics,
given in two standard styles: big-step
semantics, $M \Downarrow^s_w G$, and small-step semantics, $(M,w,s)
\rightarrow (M',w',s')$, which are equivalent by Theorem~\ref{thm:sample-small-big-eq}.
We define a distribution $\tsq{\termone}$ on outcomes of $\termone$ by integrating
the weights with respect to a measure on traces and applying a measure transformation.

\ADG{ditto, why useful to have both big-step and small-step?}
Section~\ref{sec:operational-semantics}
defines our step-indexed \emph{distribution-based} operational
semantics, in both small-step ($\sisss{\termone}{n}{\distrone}$) and big-step
($\sibss{M}{n}{\distrone}$) styles, which by Theorem~\ref{thm:eq} are
equivalent, and define the meaning~$\tsq{\termone}$ of a term~$M$ to be the supremum of the step-indexed semantics.
We end by linking the semantics of this section with those of Section~\ref{sect:samplingsemantics}:
Theorem~\ref{thm:sampling-distribution} establishes that~$\sbd{\termone} = \tsq{\termone}$

Section~\ref{sec:inference}
formalizes trace MCMC for our calculus, in the spirit of \citet{corsamp}.
Theorem~\ref{thm:sampling-trace} shows equivalence between
the distribution computed by the algorithm and the semantics of the
previous sections. Hence, Theorem~\ref{thm:sampling-trace} is the
first correctness theorem for trace MCMC for a
$\lambda$-calculus.

Section~\ref{sec:related} describes related work
and Section~\ref{sec:conc} concludes.

Appendix~\ref{section:proof-of-measurability} collects various proofs of measurability.

\section{A Foundational Calculus for \Church}\label{sec:syntax}
\newcommand{\ertone}{T}
\newcommand{\erttwo}{R}

In this section, we describe the syntax of our calculus and equip it with
an intuitive semantics relating program outcomes to the sequences
of random choices made during evaluation. By translating \Church\ constructs
to this calculus, we show that it serves as a foundation for Turing-complete
probabilistic languages.

To simplify the presentation, we do not include primitives for
discrete distributions in the calculus, as they can be encoded 
using the uniform distribution on the unit interval and inverse 
mass functions. However, it would be easy to extend the calculus with
primitive discrete random distributions, represented by their probability 
mass functions rather than densities. In Section~\ref{sec:measure-space-on-program-traces}, we 
explain how the semantics could be adapted in this case.

\subsection{Syntax of the Calculus}
We represent scalar data as real numbers~$c \in \mathbb{R}$. We use
$0$ and~$1$ to represent~$\tfalse$ and~$\ttrue$, respectively.
Let~$\distset$ be a countable set of \emph{distribution identifiers}
(or simply \emph{distributions}). Metavariables for distributions are
$\distone,\disttwo$. Each distribution identifier~$\distone$ has an
integer \emph{arity}~$\Abs{\distone} \ge 0$, and defines a density function
$\operatorname{pdf}_{\distone}:\Real^{\Abs{\distone}+1}\to[0,\infty)$
of a sub-probability kernel.
For example, a draw (\textsf{rnd}()) from the uniform distribution on the unit interval has density
$\operatorname{pdf}_{\textsf{rnd}}(c)=1$ if~$c\in[0,1]$ and otherwise 0,
while a draw \textsf{(Gaussian}$(m,v)$) 
from the Gaussian distribution with mean~$m$ and variance~$v$
has density $\operatorname{pdf}_{\textsf{Gaussian}}(m,v,c)=1/(e^{\frac{(c-m)^{2}}{2v}}\sqrt{2v\pi})$ 
if~$v>0$ and otherwise 0.

Let~$g$ be a metavariable ranging over a countable set of
\emph{function identifiers} each with an integer \emph{arity}~$|g| >
0$ and with an interpretation as a total measurable function
$\intfun{g}: \mathbb{R}^{|g|} \to \mathbb{R}$.  Examples of function
identifiers include addition~$+$, comparison~$>$, and equality~$=$; 
they are often written in infix notation.
We define the \emph{values}~$V$ and \emph{terms}~$M$ as follows, where
$\varone$ ranges over a denumerable set of variables~$\varset$.
  \begin{align*}
    V &\bnf c \midd x \midd  \lambda x.M \\
    M, N &\bnf V \midd M\ N \midd \distone(V_1,\dots,V_{|\distone|}) \midd g(V_1,\dots,V_{|g|})\\
    & \midd \ite{V}{M}{N} \midd \score{V} \midd \fail
  \end{align*}
The term~$\fail$ acts as an exception and models a failed hard constraint.
The term~$\score{c}$ models a soft constraint, 
and is parametrized on a positive probability~$c\in(0,1]$.
As usual, free occurrences of~$\varone$ inside~$\termone$ are bound by
$\abstr{\varone}{\termone}$. Terms are taken modulo renaming of bound
variables. Substitution of all free occurrences of~$\varone$ by a
value~$\valone$ in~$\termone$ is defined as usual, and denoted
$\sbst{\termone}{\varone}{\valone}$. This can be easily generalized to
$\sbst{\termone}{\vec{\varone}}{\vec{\valone}}$, where~$\vec{\varone}$
is a sequence of variables and~$\vec{\valone}$ is a sequence of values
(of the same length). Let $\lamterms$ denote the set of all terms,
and~$\cterms$ the set of \emph{closed} terms.  The set of all
\emph{closed values} is $\valset$, and we write $\valset_\lambda$ for
$\valset\setminus\RR$. \emph{Generalized values}~$\gvalone$,
$\gvaltwo$ are elements of the set~$\gvalset=\valset\cup\{\fail \}$,
\ie, generalized values are either values or~$\fail$. Finally,
\emph{erroneous redexes}, ranged over by metavariables like
$\ertone,\erttwo$, are closed terms in one of the following five
forms:
\begin{varitemize}
\item
  $c\ \termone$.
\item
  $\distone(V_1,\dots,V_{|\distone|})$ where at least one of the $V_i$ is
  a $\lambda$-abstraction.
\item
  $g(V_1,\dots,V_{|g|})$ where at least one of the $V_i$ is
  a $\lambda$-abstraction.
\item
  $\ite{V}{M}{N}$, where $V$ is neither $\ttrue$ nor 
  $\tfalse$.
\item
  $\mathtt{score}(V)$, where $V \notin (0,1]$.
\end{varitemize}

\subsection{Big-step Sampling-based Semantics}\label{sect:bssbs}
In defining the first semantics of the calculus, we use the classical
observation~\cite{DBLP:conf/focs/Kozen79} that a probabilistic program
can be interpreted as a deterministic program parametrized by the
sequence of random draws made during the evaluation. We write $M
\Downarrow^s_w V$ to mean that evaluating~$M$ with the outcomes of
random draws as listed in the sequence~$s$ yields the value~$V$,
together with the \emph{weight}~$w$ that expresses how likely this
sequence of random draws would be if the program was just evaluated
randomly. Because our language has continuous distributions,~$w$ is a
probability \emph{density} rather than a probability mass.  Similarly,
$M \Downarrow^s_w \fail$ means that evaluation of~$M$ with the random
sequence~$s$ fails. In either case, the finite trace $s$ consists of
exactly the random choices made during evaluation, with no unused
choices permitted.

\newcommand{\tr}[1]{\textsc{(#1)}}
Formally, we define \emph{program traces}~$s$, $t$ to be finite
sequences $[c_1,\dots,c_n]$ of reals of arbitrary length. 
We let~$M \Downarrow^s_w G$ be the least relation closed under the rules in
Figure~\ref{fig-sampling}.  
The \tr{Eval Random} rule replaces a random draw from a distribution~$\distone$ 
parametrized by a vector~$\vec{c}$ with the first (and only) element~$c$ of the trace,
presumed to be the outcome of the random draw, 
and sets the weight to the value of the density of~$\distone(\vec{c})$ at~$c$. 
\tr{Eval Random Fail} throws an exception if~$c$ is outside the support of
the corresponding distribution. 
Meanwhile, \tr{Eval Score}, applied to~$\mathtt{score}(c)$, 
sets the weight to~$c$ and returns a dummy value.  
The applications of soft constraints using $\mathtt{score}$ 
are described in Section~\ref{sec:soft-conditioning}.
\begin{SHORT}
\begin{figure*}
\begin{center}
\fbox{
\begin{minipage}{.97\textwidth}
\vspace{5pt}
\[
\AxiomC{$\gvalone \in \gvalset$}
\UnaryInfC{$\gvalone \Downarrow^{[]}_1 \gvalone$}
\DisplayProof{\;\tr{Eval Val}}
\qquad
\AxiomC{$w = \pdf{\distone}(\vec{c}, c)$}
\AxiomC{$w > 0$}
\BinaryInfC{$\distone (\vec{c}) \Downarrow^{[c]}_{w} c$}
\DisplayProof{\;\tr{Eval Random}}
\]
\[
\AxiomC{$\pdf{\distone}(\vec{c}, c)=0$}
\UnaryInfC{$\distone (\vec{c}) \Downarrow^{[c]}_{0} \fail$}
\DisplayProof{\;\tr{Eval Random Fail}}
\qquad
\AxiomC{}
\UnaryInfC{$g(\vec{c}) \Downarrow^{[]}_{1} \intfun{g}(\vec{c})$}
\DisplayProof{\;\tr{Eval Prim}}
\]
\[
\AxiomC{$M \Downarrow^{s_1}_{w_1} \lambda x . P$}
\AxiomC{$N \Downarrow^{s_2}_{w_2} V$}
\AxiomC{$P[V/x] \Downarrow^{s_3}_{w_3} G$}
\TrinaryInfC{$M\ N \Downarrow^{s_1@s_2@s_3}_{w_1\cdot w_2\cdot w_3} G$}
\DisplayProof{\;\tr{Eval Appl}}
\qquad
\AxiomC{$M \Downarrow^{s}_{w} \fail$}
\UnaryInfC{$M\ N \Downarrow^s_{w} \fail$}
\DisplayProof{\;\tr{Eval Appl Raise1}}
\]
\[
\AxiomC{$M \Downarrow^{s}_{w} c$}
\UnaryInfC{$M\ N \Downarrow^s_{w} \fail$}
\DisplayProof{\;\tr{Eval Appl Raise2}}
\qquad
\AxiomC{$M \Downarrow^{s_1}_{w_1} \lambda x . P$}
\AxiomC{$N \Downarrow^{s_2}_{w_2} \fail$}
\BinaryInfC{$M\ N \Downarrow^{s_1@s_2}_{w_1\cdot w_2} \fail$}
\DisplayProof{\;\tr{Eval Appl Raise3}}
\]
\[
\AxiomC{$M\Downarrow^{s}_{w} G$}
\UnaryInfC{$\ite{\ttrue}{M}{N} \Downarrow^{s}_{w} G$}
\DisplayProof{\;\tr{Eval If True}}
\qquad
\AxiomC{$N\Downarrow^{s}_{w} G$}
\UnaryInfC{$\ite{\tfalse}{M}{N} \Downarrow^{s}_{w} G$}
\DisplayProof{\;\tr{Eval If False}}
\]
\[
  \AxiomC{$c \in (0, 1]$}
  \UnaryInfC{$\mathtt{score}(c) \Downarrow^{[]}_c \ttrue$}
  \DisplayProof{\;\tr{Eval Score}}
\qquad
  \AxiomC{\mbox{$T$ is an erroneous redex}}
  \UnaryInfC{$T \Downarrow^{[]}_1 \fail$}
  \DisplayProof{\;\tr{Eval Fail}}
\]
\vspace{5pt}
\end{minipage}}
\condnocr
\caption{Sampling-Based Big Step Semantics}\label{fig-sampling}
\end{center}
\end{figure*}
\end{SHORT}

\begin{LONG}
\begin{figure*}
\begin{center}
\fbox{
\begin{minipage}{.97\textwidth}
\vspace{5pt}
\[
\AxiomC{$\gvalone \in \gvalset$}
\UnaryInfC{$\gvalone \Downarrow^{[]}_1 \gvalone$}
\DisplayProof{\;\tr{Eval Val}}
\qquad
\AxiomC{$w = \pdf{\distone}(\vec{c}, c)$}
\AxiomC{$w > 0$}
\BinaryInfC{$\distone (\vec{c}) \Downarrow^{[c]}_{w} c$}
\DisplayProof{\;\tr{Eval Random}}
\]
\[
\AxiomC{$\pdf{\distone}(\vec{c}, c)=0$}
\UnaryInfC{$\distone (\vec{c}) \Downarrow^{[c]}_{0} \fail$}
\DisplayProof{\;\tr{Eval Random Fail}}
\qquad
\AxiomC{}
\UnaryInfC{$g(\vec{c}) \Downarrow^{[]}_{1} \intfun{g}(\vec{c})$}
\DisplayProof{\;\tr{Eval Prim}}
\]
\[
\AxiomC{$M \Downarrow^{s_1}_{w_1} \lambda x . P$}
\AxiomC{$N \Downarrow^{s_2}_{w_2} V$}
\AxiomC{$P[V/x] \Downarrow^{s_3}_{w_3} G$}
\TrinaryInfC{$M\ N \Downarrow^{s_1@s_2@s_3}_{w_1\cdot w_2\cdot w_3} G$}
\DisplayProof{\;\tr{Eval Appl}}
\]
\[
\AxiomC{$M \Downarrow^{s}_{w} \fail$}
\UnaryInfC{$M\ N \Downarrow^s_{w} \fail$}
\DisplayProof{\;\tr{Eval Appl Raise1}}
\qquad
\AxiomC{$M \Downarrow^{s}_{w} c$}
\UnaryInfC{$M\ N \Downarrow^s_{w} \fail$}
\DisplayProof{\;\tr{Eval Appl Raise2}}
\]
\[
\AxiomC{$M \Downarrow^{s_1}_{w_1} \lambda x . P$}
\AxiomC{$N \Downarrow^{s_2}_{w_2} \fail$}
\BinaryInfC{$M\ N \Downarrow^{s_1@s_2}_{w_1\cdot w_2} \fail$}
\DisplayProof{\;\tr{Eval Appl Raise3}}
\]
\[
\AxiomC{$M\Downarrow^{s}_{w} G$}
\UnaryInfC{$\ite{\ttrue}{M}{N} \Downarrow^{s}_{w} G$}
\DisplayProof{\;\tr{Eval If True}}
\]
\[
\AxiomC{$N\Downarrow^{s}_{w} G$}
\UnaryInfC{$\ite{\tfalse}{M}{N} \Downarrow^{s}_{w} G$}
\DisplayProof{\;\tr{Eval If False}}
\]
\[
  \AxiomC{$c \in (0, 1]$}
  \UnaryInfC{$\mathtt{score}(c) \Downarrow^{[]}_c \ttrue$}
  \DisplayProof{\;\tr{Eval Score}}
\qquad
  \AxiomC{\mbox{$T$ is an erroneous redex}}
  \UnaryInfC{$T \Downarrow^{[]}_1 \fail$}
  \DisplayProof{\;\tr{Eval Fail}}
\]
\vspace{5pt}
\end{minipage}}
\condnocr
\caption{Sampling-Based Big Step Semantics}\label{fig-sampling}
\end{center}
\end{figure*}
\end{LONG}

All the other rules are standard for a call-by-value lambda-calculus,
except that they allow the traces to be split between subcomputations
and they multiply the weights yielded by subcomputations to obtain the
overall weight.
\subsection{Encoding \Church}
We now demonstrate the usefulness and expressive power of the calculus
via a translation of \Church, an untyped higher-order functional
probabilistic language.

The syntax of \Church's \emph{expressions}, \emph{definitions} and 
\emph{queries} is described as follows:
\newcommand{\tw}[1]{\texttt{#1}}
\newcommand{\llet}[3]{\mathtt{let}\ #1 = #2\ \mathtt{in}\ #3}
\newcommand{\lfix}[4]{\mathtt{let}\ #1 = \fixop #1\ \lambda #2. #3\ \mathtt{in}\ #4}
\newcommand{\trce}[1]{\langle #1 \rangle_e}
\newcommand{\trc}[1]{\langle #1 \rangle}
\newcommand{\fv}[1]{\mathit{fv}(#1)}
\newcommand{\fixop}{\mathit{fix}}
\begin{align*}
e\;::=\;&
  c\midd
  x\midd
  \tw{(}g\ e_1\dots  e_n\tw{)} \midd
  \tw{(}\distone\ e_1\dots e_n\tw{)}\midd
  \tw{(if}\ e_1\ e_2\ e_3\tw{)}
\\
  \midd& \tw{(lambda}\ (x_1 \dots x_n)\ e\tw{)} 
  \midd \tw{(}e_1\ e_2\dots e_n\tw{)}
  \\
\tw{d}\;::=\;&\tw{(define}\ x\ e\tw{)}\\
\tw{q}\;::=\;&\tw{(query}\ d_1 \dots d_n\ e\  e_{cond}\tw{)}
\end{align*}
To make the translation more intuitive, it is convenient to add to
the target language a let-expression of the form~$\llet{x}{M}{N}$,
that can be interpreted as syntactic sugar for~$(\lambda x . N)\ M$,
and sequencing $M;N$ that stands for $\lambda\!\star\!.N\ M$ where $\star$
as usual stands for a variable that does not appear free in any of the
terms under consideration.

\begin{SHORT}
\begin{figure}
\begin{center}
\fbox{
\begin{minipage}{.442\textwidth}
  \vspace{5pt}
  \footnotesize
\[
\begin{prog}
   \trce{c} = c \\
   \trce{x} = x \\
   \trce{g\ e_1, \dots, e_n} =\\
   \qquad  \llet{x_1}{e_1}{\dots \llet{x_n}{e_n}{g(x_1, \dots, x_n)} }\\
   \qquad \text{where}\ x_1, \dots, x_n \notin \fv{e_1} \cup \dots \cup \fv{e_n}\\
   \trce{\distone\ e_1, \dots e_n} = \\
   \qquad \llet{x_1}{e_1}{\dots \llet{x_n}{e_n}{\distone(x_1, \dots, x_n)} } \\
   \qquad \text{where}\ x_1, \dots, x_n \notin \fv{e_1} \cup \dots \cup \fv{e_n}\\
   \trce{\mathtt{lambda}\ ()\ e} = \lambda x . \trce{e} 
     \quad\text{where}\ x \notin \fv{e} \\
   \trce{\mathtt{lambda}\ x\ e} = \lambda x . \trce{e} \\
   \trce{\mathtt{lambda}\ (x_1\ \dots\ x_n)\ e} = \lambda x_1 . \trce{\mathtt{lambda}\ (x_2\ \dots\ x_n)\ e } \\
   \trce{e_1\ e_2} = \trce{e_1}\ \trce{e_2} \\
   \trce{e_1\ e_2\ \dots\ e_n} = \trce{(e_1\ e_2)\ \dots\ e_n} \\
   \trce{\mathtt{if}\ e_1\ e_2\ e_3} = \llet{x}{e_1}{(\ite{x}{\trce{e_2}}{\trce{e_3}})} \\
  \qquad \text{where}\ x \notin \fv{e_2} \cup \fv{e_3} \\
  \\
  \trc{\mathtt{query}\ (\mathtt{define}\ x_1\ e_1) \dots 
  (\mathtt{define}\ x_n\ e_n)\ e_{out}\ e_{cond}}=\\
   \qquad \llet{x_1}{(\fixop\ x_1 . \trce{e_1}) }{ }\\
   \qquad \dots \\
   \qquad \llet{x_n}{(\fixop\ x_n . \trce{e_n}) }{ } \\
   \qquad \llet{b}{e_{cond}}{ }\\
   \qquad \ite{b}{e_{out}}{\fail}
\end{prog}
\]
\vspace{5pt}
\end{minipage}
}
\condnocr
\caption{Translation of \Church}\label{fig:translation}
\end{center}
\end{figure}
\end{SHORT}

\begin{LONG}
\begin{figure}
\begin{center}
\fbox{
\begin{minipage}{.520\textwidth}
  \vspace{5pt}
  \footnotesize
\[
\begin{prog}
   \trce{c} = c \\
   \trce{x} = x \\
   \trce{g\ e_1, \dots, e_n} =\\
   \qquad  \llet{x_1}{e_1}{\dots \llet{x_n}{e_n}{g(x_1, \dots, x_n)} }\\
   \qquad \text{where}\ x_1, \dots, x_n \notin \fv{e_1} \cup \dots \cup \fv{e_n}\\
   \trce{\distone\ e_1, \dots e_n} = \\
   \qquad \llet{x_1}{e_1}{\dots \llet{x_n}{e_n}{\distone(x_1, \dots, x_n)} } \\
   \qquad \text{where}\ x_1, \dots, x_n \notin \fv{e_1} \cup \dots \cup \fv{e_n}\\
   \trce{\mathtt{lambda}\ ()\ e} = \lambda x . \trce{e} 
     \quad\text{where}\ x \notin \fv{e} \\
   \trce{\mathtt{lambda}\ x\ e} = \lambda x . \trce{e} \\
   \trce{\mathtt{lambda}\ (x_1\ \dots\ x_n)\ e} = \lambda x_1 . \trce{\mathtt{lambda}\ (x_2\ \dots\ x_n)\ e } \\
   \trce{e_1\ e_2} = \trce{e_1}\ \trce{e_2} \\
   \trce{e_1\ e_2\ \dots\ e_n} = \trce{(e_1\ e_2)\ \dots\ e_n} \\
   \trce{\mathtt{if}\ e_1\ e_2\ e_3} = \llet{x}{e_1}{(\ite{x}{\trce{e_2}}{\trce{e_3}})} \\
  \qquad \text{where}\ x \notin \fv{e_2} \cup \fv{e_3} \\
  \\
  \trc{\mathtt{query}\ (\mathtt{define}\ x_1\ e_1) \dots 
  (\mathtt{define}\ x_n\ e_n)\ e_{out}\ e_{cond}}=\\
   \qquad \llet{x_1}{(\fixop\ x_1 . \trce{e_1}) }{ }\\
   \qquad \dots \\
   \qquad \llet{x_n}{(\fixop\ x_n . \trce{e_n}) }{ } \\
   \qquad \llet{b}{e_{cond}}{ }\\
   \qquad \ite{b}{e_{out}}{\fail}
\end{prog}
\]
\vspace{5pt}
\end{minipage}
}
\condnocr
\caption{Translation of \Church}\label{fig:translation}
\end{center}
\end{figure}
\end{LONG}

The rules for translating \Church\ expressions to the calculus are
shown in Figure \ref{fig:translation}, where~$\fv{e}$ denotes the set
of free variables in expression~$e$ and $\fixop\ x.\termone$ is a call-by-value fixpoint combinator
$\lambda y.\termtwo_\fixop \termtwo_\fixop (\lambda x.\termone) y$
where $\termtwo_\fixop$ is $\lambda z.\lambda w.w(\lambda y.((zz)w)y)$.
Observe that $(\fixop\ x.\termone)\valone$ evaluates
to $\termone\{(\fixop\ x.\termone)/x\}\valone$ deterministically.
We assume that for each distribution identifier $\distone$ of arity $k$, there
is a deterministic function $\pdf{\distone}$ of arity $k+1$ that calculates the
corresponding density at the given point.
%
  
%

%

%

In addition to expressions presented here, \Church\ also supports 
\emph{stochastic memoization} \cite{DBLP:conf/uai/GoodmanMRBT08} by means of
a~$\tw{mem}$ function, which, applied to any given function,
produces a version of it that always returns the same value
when applied to the same arguments. This feature allows for
functions of integers to be treated as infinite lazy lists of random 
values, and is useful in defining some nonparametric models, such as the
Dirichlet Process. 

It would be straightforward to add support for memoization in our encoding
by changing the translation to state-passing style,
but we omit this standard extension for the sake of brevity.
\subsection{Example: Geometric Distribution} \label{subsec:geom}
To illustrate the sampling-based semantics, recall the geometric
distribution example from Section~\ref{sec:intro}. 
It translates to
the following program in the core calculus:
\begin{center}
\begin{minipage}{.4\textwidth}
\begin{flushleft}
\begin{prog}
\llet{\textit{flip}}{\lambda x. (\textsf{rnd}() < x)}{}\\
\llet{\textit{geometric}}
{\\ \quad (\fixop\ g. \\ \qquad \quad \ \lambda p.\ 
(\llet{y}{\textsf{rnd}() < p}{\\ \qquad \qquad \ite{y}{0}{1 + (g\ p)}}) )}{} \\
\llet{n}{\fixop\ n'.\textit{geometric}\ 0.5}{}\\
\llet{b}{n > 1}{\\ \ite{b}{n}{\fail}}
\end{prog}
\end{flushleft}
\end{minipage}
\end{center}
Suppose we want to evaluate this program on the random trace $s =
[0.7, 0.8, 0.3]$. By \tr{Eval Appl}, we can substitute the definitions
of~$\textit{flip}$ and~$\textit{geometric}$ in the remainder of the
program, without consuming any elements of the trace nor changing the
weight of the sample.  Then we need to
evaluate~$\textit{geometric}\ 0.5$.

It can be shown (by repeatedly applying \tr{Eval Appl}) that for
any lambda-abstraction~$\lambda x . M$,
$\termone\{(\fixop\ x.\termone)/x\}\ \valone \Downarrow^s_w G $ 
if and only if $(\fixop\ x.\termone)\ \valone \Downarrow^s_w G$, which allows us
to unfold the recursion. Applying the unfolded definition of $\textit{geometric}$
to the argument $0.5$ yields an expression of the form
\begin{center}
\begin{minipage}{.4\textwidth}
\begin{flushleft}
\begin{prog}
\llet{y}{\textsf{rnd}() < 0.5}{\\ \quad \ite{y}{0}{1 + (\dots)}}.
\end{prog}
\end{flushleft}
\end{minipage}
\end{center}
For the first random draw, we have~\textsf{rnd}$()
\Downarrow^{[0.7]}_1 0.7$ by \tr{Eval Random} (because the density
of~\textsf{rnd} is~$1$ on the interval~$[0,1]$) and so \tr{Eval Prim}
gives~\textsf{rnd}$() < 0.5 \Downarrow^{[0.7]}_1~\tfalse$. After
unfolding the recursion two more times, evaluating the subsequent
``flips'' yields \textsf{rnd}$() < 0.5 \Downarrow^{[0.8]}_1~\tfalse$
and \textsf{rnd}$() < 0.5 \Downarrow^{[0.3]}_1~\ttrue$. By \tr{Eval If
  True}, the last if-statement evaluates to~$0$, terminating the
recursion. Combining the results by \tr{Eval Appl}, \tr{Eval If False}
and \tr{Eval Prim}, we arrive at $\textit{geometric}\ 0.5
\Downarrow^{[0.7, 0.8, 0.3]}_1~2$.

At this point, it is straightforward to see that the condition
in the if-statement on the final line is satisfied, and hence the program 
reduces with the given trace to the value~$2$ with weight~$1$.

This program actually yields weight~$1$ for every trace that returns an integer value.
This may seem counter-intuitive, because clearly not all outcomes have the same probability.
However, the probability of a given outcome is given by 
  an integral over the space of traces, 
  as described in Section~\ref{sec:from-trace-semantics}.
\subsection{Soft Constraints and \texttt{score}}\label{sec:soft-conditioning}
The geometric distribution example in Section \ref{subsec:geom} uses
a \emph{hard constraint}: program execution
fails and the value of $n$ is discarded whenever the Boolean
predicate $n > 1$ is not satisfied.
In many machine learning applications we want to use a different kind
of constraint that models noisy data.  For instance, if
$c$ is the known output of a sensor that shows an approximate value
of some unknown quantity $x$ that is computed by the program, 
we want to assign higher probabilities to values of $x$ that are closer to $c$. 
This is sometimes known as a \emph{soft constraint}.

One naive way to implement a soft constraint is to use a hard constraint with a success probability based on $\lvert x - c\rvert$, for instance,
\[
\textit{condition}\ x\ c\ M := \ite{\textit{flip}(\textit{exp}({-(x-c)^2}))} M {\fail}.
\]
Then $\textit{condition}\ x\ c\ M$ has the effect of continuing as
$M$ with probability $\textit{exp}({-(x-c)^2})$, and
otherwise terminating execution.
In the context of a sampling-based semantics, it has the effect of 
adding a uniform sample from $[0,\textit{exp}({-(x-c)^2}))$ to any successful trace,
in addition to introducing more failing traces.

Instead, our calculus includes a primitive \texttt{score}, that avoids both 
adding dummy samples and introducing more failing traces.
It also admits the possibility of using efficient gradient-based methods of inference~(e.g.,~\citet{Homan2014NUTS}).   
Using \texttt{score}, the above conditioning operator can be redefined as
\[\textit{score-condition}\ x\ c\ M := \score{\textit{exp}({-(x-c)^2})};M\]

%

\subsection{Example: Linear Regression}
\label{sec:exampl-line-regr}
For an example of soft constraints, consider the ubiquitous linear
regression model $y = m\cdot x + b + \mathit{noise}$, where $x$ is
often a known feature and $y$ an observable outcome variable.  We can
model the noise as drawn from a Gaussian distribution with mean 0 and
variance 1/2 by letting the success probability be given by the
function \lstinline{squash} below.

The following
query\footnote{Cf.~\url{http://forestdb.org/models/linear-regression.html}.}
predicts the $y$-coordinate for $x=4$, given observations of four
points: $(0,0)$, $(1,1)$, $(2,4)$, and $(3,6)$.  (We use the
abbreviation \lstinline{(define ($f$ $x_1$ $\dots$ $x_n$) $e$)} for
\lstinline{(define $f$ (lambda ($x_1$ $\dots$ $x_n$) $e$)}, and use
\lstinline{and} for multiadic conjunction.)

\begin{lstlisting}
(query
   (define (sqr x) (* x x)))
   (define (squash x y) (exp(- (sqr(- x y)))))
   (define (flip p) (< (rnd) p))
   (define (softeq x y) (flip (squash x y))))

   (define m (gaussian 0 2))
   (define b (gaussian 0 2))
   (define (f x) (+ (* m x) b))
   
   (f 4)  ;; predict y for x=4

   (and (softeq (f 0) 0) (softeq (f 1) 1)
        (softeq (f 2) 4) (softeq (f 3) 6))
\end{lstlisting}
The model described above puts independent Gaussian priors on $m$ and $b$.
The condition of the query states that all observed $y$s are (soft) equal to $k\cdot x+m$.
Assuming that \lstinline{softeq} is used only to define constraints (i.e., positively),
we can avoid the nuisance parameter that arises from each \lstinline{flip} by
redefining \lstinline{softeq} as follows (given a \lstinline{score} primitive in \Church, mapped to $\score{-}$ in our $\lambda$-calculus):
\begin{lstlisting}
   (define (softeq x y) (score (squash x y)))
\end{lstlisting}







   




%
%
%
%
%
%
%
%
%
%
%
%
%
%

%
%
%
%
%
%
%
%
\section{Sampling-Based Operational Semantics}\label{sect:samplingsemantics}
\newcommand{\pectxone}{A}
\newcommand{\pectxtwo}{B}
\newcommand{\rdxone}{R}
\newcommand{\app}{\mathit{app}}
\newcommand{\deapp}{\mathit{deapp}}
\newcommand{\distapp}{\mathit{distapp}}
\newcommand{\ectxs}[1]{\mathcal{C}}
\newcommand{\letbe}[3]{\mathtt{let}\;#2=#1\;\mathtt{in}\;#3}

\newcommand{\skset}{\mathsf{SK}}
\newcommand{\tmsofsk}[1]{\mathsf{TM}(#1)}

In this section, we further investigate sampling-based semantics for our calculus.
First, we introduce \emph{small}-step sampling-based
semantics and prove it equivalent to its big-step sibling as introduced
in Section \ref{sect:bssbs}. 
Then, we associate to any closed term $M$ two sub-probability distributions: 
one on the set of random traces, and the other on the set of return values. 
This requires some measure theory, recalled in Section \ref{sect:smtp}.
\subsection{Small-step Sampling-based Semantics}
We define small-step call-by-value evaluation. \emph{Evaluation
  contexts} are defined as follows:
\[
\ectxone\bnf\hole\midd\ectxone\termone\midd(\lambda\varone.\termone)\ectxone 
\]
We let $\ectxs{}$ be the set of all closed evaluation contexts, i.e.,
where every occurrence of a variable $x$ is as a subterm of $\lambda
x.M$. The term obtained by replacing the only occurrence of $\hole$ in
$\ectxone$ by $\termone$ is indicated as $\ct{\ectxone}{\termone}$.
\emph{Redexes} are generated by the following grammar:
\begin{align*}
  \rdxone\;\bnf\;&(\abstr{\varone}{\termone}) \valone
  \midd\distone(\vec{c})\midd g({\vec{c}})\midd \mathtt{score}(c)\\
  &\midd\fail\midd\ite{\ttrue}{\termone}{\termtwo}\\
  &\midd\ite{\tfalse}{\termone}{\termtwo}\midd\ertone
\end{align*}
\emph{Reducible terms} are those closed terms $\termone$ 
that can be written as $\ct{\ectxone}{\rdxone}$.
\begin{lemma}\label{lemma:unique}
  For every closed term $\termone$, either $\termone$ is a generalized
  value or there are unique $\ectxone,\rdxone$ such that
  $\termone=\ct{\ectxone}{\rdxone}$. 
  Moreover, if $\termone$ is not a generalized value and $\rdxone=\fail$,
  then $\ectxone$ is proper, that is, $\ectxone \neq \hole$.
\end{lemma}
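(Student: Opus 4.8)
The plan is to proceed by structural induction on the closed term $\termone$. When $\termone$ is a value (a constant $c$ or a closed abstraction $\abstr{\varone}{\termfour}$) or is $\fail$, it belongs to $\gvalset$ and the first disjunct holds; these are base cases with nothing to decompose. In every other case $\termone$ is not a generalized value, and I must exhibit a decomposition $\termone=\ct{\ectxone}{\rdxone}$ and prove it unique. The structural fact that organizes the whole argument is that evaluation contexts descend only through applications: the grammar for $\ectxone$ offers $\ectxone\termone$ and $(\abstr{\varone}{\termone})\ectxone$ and nothing else, while the term grammar forces the arguments of $\distone(\cdot)$, $g(\cdot)$, $\ite{\cdot}{\cdot}{\cdot}$ and $\score{\cdot}$ to be values. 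Hence for each of these four non-application forms the only context that can match is $\hole$, and the whole term is already a redex: an ordinary redex when the value arguments have the expected shape, and an erroneous redex $\ertone$ otherwise (a $\distone$ or $g$ applied to an abstraction, an $\ite{\cdot}{\cdot}{\cdot}$ whose guard is neither $\ttrue$ nor $\tfalse$, or a $\score{\cdot}$ whose argument lies outside $(0,1]$). Since a closed value is either a constant or a closed abstraction, these cases are all immediate.

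The only substantive case is the application $\termone=\termtwo\ \termthree$, where I case-split on $\termtwo$. If $\termtwo$ is not a generalized value, the induction hypothesis gives a unique $\termtwo=\ct{\ectxone'}{\rdxone}$ and I take $\ectxone=\ectxone'\termthree$. If $\termtwo=\fail$, then $\termone=\fail\ \termthree$ is not itself a redex and I take $\ectxone=\hole\ \termthree$ with $\rdxone=\fail$. If $\termtwo$ is a constant $c$, then $\termone=c\ \termthree$ is an erroneous redex, so $\ectxone=\hole$. Finally, if $\termtwo=\abstr{\varone}{\termfour}$ I sub-split on $\termthree$: when $\termthree$ is not a generalized value the induction hypothesis decomposes it and I take $\ectxone=(\abstr{\varone}{\termfour})\ectxone''$; when $\termthree$ is a value, $\termone$ is the redex $(\abstr{\varone}{\termfour})\termthree$ with $\ectxone=\hole$; and when $\termthree=\fail$ I take $\ectxone=(\abstr{\varone}{\termfour})\hole$ with $\rdxone=\fail$.

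The main obstacle is uniqueness, which I settle by matching the syntactic shape of $\termone$ against the three possible forms of $\ectxone$ in each case. Two facts make this go through: first, a value is never a redex and, not being an application, admits no decomposition $\ct{\ectxone}{\rdxone}$ at all; second, an application $\termtwo\ \termthree$ is a top-level redex exactly when $\termtwo$ is an abstraction with $\termthree$ a value, or $\termtwo$ is a constant. Thus, in the subcase where $\termtwo$ is not a generalized value, a competing decomposition cannot use $\ectxone=\hole$ (the term is not a redex, as $\termtwo$ is neither abstraction nor constant), nor the form $(\abstr{\varone}{\termone})\ectxone$ (which would force $\termtwo$ to be an abstraction); it must therefore have the form $\ectxone''\termthree$, and the induction hypothesis for $\termtwo$ pins it down. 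Each remaining subcase is dispatched the same way, using whether the head $\termtwo$ is or is not a value—and, when it is a value, whether it is an abstraction or a constant—to exclude all but one context shape and, for $\fail$ in head or argument position, invoking the fact that $\fail$ has the unique decomposition $\ct{\hole}{\fail}$.

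The final clause is then immediate: if $\ectxone=\hole$ and $\rdxone=\fail$ then $\termone=\ct{\hole}{\fail}=\fail$, which is a generalized value, contradicting the standing assumption that $\termone$ is not one; hence $\ectxone\neq\hole$.
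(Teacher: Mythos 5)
Your proof is correct and follows exactly the route the paper intends: the paper's entire proof is ``an easy induction on the structure of $\termone$,'' and your argument is that structural induction carried out in full, with the key observations (contexts descend only through applications, values and $\fail$ admit no proper decomposition, and an application is a top-level redex only when its head is an abstraction applied to a value or a constant) supplying the uniqueness analysis. Nothing is missing; yours is simply the detailed version of the paper's one-line proof.
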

\begin{proof}
  This is an easy induction on the structure of $\termone$.
  \qed
\end{proof}

\emph{Deterministic reduction} is the relation $\detred$ on closed terms
defined in Figure~\ref{fig:detred}.
\begin{figure}
\begin{center}
\fbox{
\begin{minipage}{.442\textwidth}
\begin{align*}
E[g(\vec{c})] &\detred E[\sigma_g(\vec{c})]\\
E[(\lambda x .  M)\ V] &\detred E[M\{V/x\}]\\
E[\ite{1}{M_2}{M_3}] &\detred E[M_2]\\
E[\ite{0}{M_2}{M_3}] &\detred E[M_3]\\
E[T] &\detred E[\fail]\\
E[\fail] &\detred \mathtt \fail \quad \text{if $E$ is not $[\cdot]$}
\end{align*}
\end{minipage}}
\condnocr
\caption{Deterministic Reduction.}\label{fig:detred}
\end{center}
\end{figure}
\begin{lemma} \label{lemma:det-reduction-deterministic}
If $M \detred M'$ and $M \detred M''$ then $M' = M''$.
\end{lemma}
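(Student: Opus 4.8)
The plan is to reduce determinism of $\detred$ to the uniqueness of redex decomposition already proved in Lemma~\ref{lemma:unique}. The first observation is that every rule in Figure~\ref{fig:detred} has a left-hand side of the shape $\ct{\ectxone}{\rdxone}$ for an evaluation context $\ectxone$ and a redex $\rdxone$; hence if $\termone \detred \termone'$ then $\termone$ is a reducible term, and in particular not a generalized value. Applying Lemma~\ref{lemma:unique} to $\termone$ therefore produces a \emph{unique} pair $(\ectxone, \rdxone)$ with $\termone = \ct{\ectxone}{\rdxone}$. Since any reduction from $\termone$ must instantiate a rule whose left-hand side is $\ct{\ectxone}{\rdxone}$ for this same unique pair, both $\termone \detred \termone'$ and $\termone \detred \termone''$ are governed by it, and it remains only to check that $(\ectxone, \rdxone)$ determines the right-hand side uniquely.

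This is a finite case analysis on the shape of $\rdxone$. I would first note that two redex forms, namely $\distone(\vec{c})$ and $\score{c}$ with $c \in (0,1]$, have no rule in Figure~\ref{fig:detred}, so they contribute no reductions and cannot create a conflict. For each of the remaining forms the applicable rule and its result are plainly functional in $(\ectxone,\rdxone)$: $g(\vec{c})$ rewrites to $\ct{\ectxone}{\sigma_g(\vec{c})}$ with $\sigma_g$ a total function; $(\abstr{\varone}{\termone})\valone$ rewrites to $\ct{\ectxone}{\sbst{\termone}{\varone}{\valone}}$ with substitution deterministic; $\ite{1}{\termtwo}{\termthree}$ and $\ite{0}{\termtwo}{\termthree}$ select $\ct{\ectxone}{\termtwo}$ and $\ct{\ectxone}{\termthree}$ respectively; an erroneous redex $\ertone$ rewrites to $\ct{\ectxone}{\fail}$; and $\fail$ rewrites to $\fail$, where the side condition $\ectxone \neq \hole$ is guaranteed by the second clause of Lemma~\ref{lemma:unique} (since $\termone$ is not a generalized value). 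In each case the outcome is a function of $(\ectxone,\rdxone)$ alone, so $\termone' = \termone''$.

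The only point requiring genuine care, and the step I expect to be the most delicate, is verifying that no \emph{concrete} redex simultaneously matches two clauses. Here I would exploit the syntactic disjointness of the shapes for a fixed term: an application $g(\vec{c})$ all of whose arguments are constants is by definition not an erroneous redex, whereas $g(V_1,\dots,V_{|g|})$ with some $V_i$ a $\lambda$-abstraction is erroneous and matches only the $\ertone$ clause; similarly $\ite{V}{\termtwo}{\termthree}$ falls under the two if-rules exactly when $V \in \{0,1\}$ and is erroneous otherwise; and $\score{c}$ is erroneous precisely when $c \notin (0,1]$, which is disjoint from the rule-free case $c \in (0,1]$. Finally, $\fail$ is not among the five erroneous-redex forms, so the $\fail$ clause never overlaps the $\ertone$ clause. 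These conditions partition the concrete redexes, so at most one rule applies to $\rdxone$; combined with the uniqueness of $(\ectxone,\rdxone)$, this yields $\termone' = \termone''$, and the whole argument is a routine case analysis once Lemma~\ref{lemma:unique} is in hand.
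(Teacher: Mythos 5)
Your proof is correct and follows essentially the same route as the paper's: appeal to Lemma~\ref{lemma:unique} for the unique decomposition $M = E[R]$ (using its properness clause for the $R = \fail$ case), then observe that the reduction rules determine the result uniquely from $(E,R)$. Your version merely spells out the rule-by-rule disjointness check that the paper compresses into ``by inspection of the reduction rules.''
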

\begin{proof} 
    Since $M \detred M'$ implies that $M$ is not a generalized value,
    Lemma \ref{lemma:unique} states that $M = E[R]$ for some unique
    $E$, $R$. If $R = \fail$, then $E$ is proper and $E[R]$ can only
    reduce to $\fail$. Otherwise, it follows immediately by
    inspection of the reduction rules that $E[R] \detred E[N]$ for
    some $N$ that is uniquely determined by the redex $R$.
\qed
\end{proof}
Let us define composition of contexts $E \circ E'$ inductively as:
\begin{eqnarray*}
 [\ ] \circ E' &\deq& E'\\
 (E\ M) \circ E' &\deq& (E \circ E')\ M\\
 ((\lambda x . M)\ E) \circ E' &\deq& (\lambda x . M)\ (E \circ E')
\end{eqnarray*}

\begin{lemma} \label{lemma:context-composition-wdef}
$(E \circ E')[M] = E[E'[M]]$.
\end{lemma}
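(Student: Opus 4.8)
The plan is to proceed by structural induction on the evaluation context $E$, following the grammar $\ectxone\bnf\hole\midd\ectxone\termone\midd(\lambda\varone.\termone)\ectxone$. The key observation making this work smoothly is that both the plugging operation $\ct{E}{\cdot}$ and the composition $E \circ E'$ are defined by recursion on exactly this grammar, so the three cases of the induction line up one-to-one with the three defining clauses of $\circ$. Each case then reduces to unfolding the two definitions and applying the induction hypothesis.

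In the base case $E = \hole$, the definition gives $\hole \circ E' = E'$, so the left-hand side is $\ct{(\hole \circ E')}{M} = \ct{E'}{M}$, while the right-hand side is $\ct{\hole}{\ct{E'}{M}} = \ct{E'}{M}$ since plugging into the hole simply returns its argument. For the inductive step there are two subcases. When $E = E_1\, N$, the definition of composition gives $(E_1\, N) \circ E' = (E_1 \circ E')\, N$; plugging in $M$ and using $\ct{(E_1\, N)}{M} = \ct{E_1}{M}\, N$ yields $\ct{(E_1 \circ E')}{M}\, N$, which by the induction hypothesis applied to $E_1$ equals $\ct{E_1}{\ct{E'}{M}}\, N = \ct{(E_1\, N)}{\ct{E'}{M}}$. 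The case $E = (\lambda x.N)\, E_1$ is entirely analogous, using the clause $((\lambda x.N)\, E_1) \circ E' = (\lambda x.N)\, (E_1 \circ E')$ together with the induction hypothesis on $E_1$.

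There is no genuine obstacle here: this is a routine structural induction whose only content is matching each clause in the definition of $\circ$ against the corresponding clause in the definition of context plugging. The single point worth care is that in both inductive cases the recursion descends into the immediate subcontext $E_1$, so the induction hypothesis must be invoked at $E_1$ with the same auxiliary context $E'$ and term $M$ held fixed; once this is set up correctly, the three cases each collapse to a one-line computation.
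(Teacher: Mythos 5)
Your proof is correct and takes exactly the approach the paper uses: a structural induction on $E$, with the three cases matching the defining clauses of $\circ$ and of context plugging (the paper simply states ``by induction on the structure of $E$'' without spelling out the cases). Your spelled-out cases and the handling of the induction hypothesis are all as expected.
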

\begin{proof}
By induction on the structure of $E$.
\qed
\end{proof}
\begin{lemma}\label{lemma:remove-context-det}
If $E[R] \detred E[N]$, then $R \detred N$.
\end{lemma}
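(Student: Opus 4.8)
The plan is to reduce the statement to a short case analysis on the shape of the redex~$\rdxone$, using the unique-decomposition lemma (Lemma~\ref{lemma:unique}) to identify which redex is actually contracted and the determinism lemma (Lemma~\ref{lemma:det-reduction-deterministic}) together with injectivity of hole-filling to pin down~$N$. The one fact I would record at the outset is that filling is injective: if $\ct{\ectxone}{N}=\ct{\ectxone}{N'}$ then $N=N'$, since the hole occurs exactly once in~$\ectxone$.

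First I would argue that the reduction step named in the hypothesis really operates on~$\rdxone$ at the hole of~$\ectxone$. Since $\ct{\ectxone}{\rdxone}$ reduces, it is not a generalized value, and since $\rdxone$ is a redex and $\ectxone$ an evaluation context, $(\ectxone,\rdxone)$ is \emph{the} unique decomposition of $\ct{\ectxone}{\rdxone}$ by Lemma~\ref{lemma:unique}. Consequently the rule of Figure~\ref{fig:detred} witnessing $\ct{\ectxone}{\rdxone}\detred \ct{\ectxone}{N}$ must be the one matching the shape of~$\rdxone$.

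Next comes the routine dispatch. For each of the five ``reduce-in-context'' rules (primitive application, $\beta$-reduction, the two conditionals, and the erroneous-redex rule), the rule is stated for an arbitrary context, so it reads $\ct{\ectxone}{\rdxone}\detred\ct{\ectxone}{N_0}$ with $N_0$ the specific contractum determined by~$\rdxone$. By Lemma~\ref{lemma:det-reduction-deterministic} we get $\ct{\ectxone}{N}=\ct{\ectxone}{N_0}$, hence $N=N_0$ by injectivity; and the very same rule instantiated at the empty context $\hole$ gives $\rdxone=\ct{\hole}{\rdxone}\detred\ct{\hole}{N_0}=N=N_0$, i.e.\ $\rdxone\detred N$, as required.

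The only non-uniform case, and the point I expect to require care, is $\rdxone=\fail$, because the failure rule $\ct{\ectxone}{\fail}\detred\fail$ collapses the context instead of reducing inside it. Here the argument is that the hypothesis is vacuous rather than yielding a counterexample: if $\ectxone=\hole$ then $\ct{\ectxone}{\rdxone}=\fail$ is a generalized value and does not reduce at all, while if $\ectxone\neq\hole$ then the unique reduct is $\fail$, and $\ct{\ectxone}{N}=\fail$ is impossible because filling a proper evaluation context always yields an application, which is syntactically distinct from $\fail$. Thus no $N$ satisfies the premise, so the implication holds trivially, and this is exactly where the side condition ``$\ectxone$ is not $\hole$'' on the failure rule and the properness clause of Lemma~\ref{lemma:unique} do their work.
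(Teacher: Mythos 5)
Your proof is correct and takes the same route as the paper, whose entire proof of this lemma is the one-line remark that it follows by case analysis on the deterministic reduction rules. Your write-up just makes explicit the bookkeeping that the paper leaves implicit: unique decomposition via Lemma~\ref{lemma:unique}, injectivity of hole-filling, determinism via Lemma~\ref{lemma:det-reduction-deterministic}, and the vacuity of the $\rdxone=\fail$ case, which is indeed the only place where care is needed.
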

\begin{proof}
By case analysis on the deterministic reduction rules.
\end{proof}
\begin{lemma} \label{lemma:change-context-det}
For any $E$ and $M$ such that $M \neq E'[\fail]$, if $M \detred M'$ then $E[M] \detred E[M']$.
\end{lemma}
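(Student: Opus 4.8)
The plan is to reduce the arbitrary reduction $M \detred M'$ to the contraction of its underlying redex, and then to observe that every reduction rule \emph{except} the $\fail$-propagation rule $E[\fail] \detred \fail$ keeps its surrounding context intact, so that an additional outer context $E$ can be pushed inside freely. The side condition $M \neq E'[\fail]$ is exactly what rules out that one problematic rule, which is the only one that discards its context.

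First I would apply Lemma~\ref{lemma:unique}: since $M \detred M'$, the term $M$ is reducible and hence not a generalized value, so it decomposes uniquely as $M = E_0[R]$ for an evaluation context $E_0$ and a redex $R$. The key claim is that $R \neq \fail$. Indeed, if $R = \fail$ then $M = E_0[\fail]$ would be of the forbidden shape $E'[\fail]$ (taking $E' = E_0$), contradicting the hypothesis. Consequently the reduction $M \detred M'$ cannot be an instance of the context-collapsing rule $E[\fail] \detred \fail$, and must instead be an instance of one of the five rules of the form $E_0[R] \detred E_0[N]$, where $N$ is the contractum determined by the redex $R$. By Lemma~\ref{lemma:det-reduction-deterministic} the target of reduction is unique, so $M' = E_0[N]$.

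Next I would reintroduce the outer context $E$. Using Lemma~\ref{lemma:context-composition-wdef} twice gives $E[M] = E[E_0[R]] = (E \circ E_0)[R]$ and $E[M'] = E[E_0[N]] = (E \circ E_0)[N]$. A routine induction on the definition of $\circ$, following the grammar of evaluation contexts, shows that $E \circ E_0$ is again an evaluation context. Since each of the five non-$\fail$ rules is stated uniformly for an \emph{arbitrary} evaluation context, the very rule that contracts $R$ to $N$ applies again with $E \circ E_0$ in place of $E_0$, yielding $(E \circ E_0)[R] \detred (E \circ E_0)[N]$, that is, $E[M] \detred E[M']$.

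The only genuine subtlety—the step I would flag as the crux—is the justification that $R \neq \fail$, i.e., recognizing that the side condition $M \neq E'[\fail]$ is precisely the hypothesis needed to exclude the single rule that does not commute with wrapping in an additional context. Once that is settled, the rest is bookkeeping with context composition together with a re-application of the same reduction rule in the enlarged context. Note in particular that the case $R = T$ (an erroneous redex, whose contractum is $N = \fail$) needs no special handling, since the target $E[M'] = (E \circ E_0)[\fail]$ is still exactly what the rule $E[T] \detred E[\fail]$ produces in the context $E \circ E_0$.
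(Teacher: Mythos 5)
Your proof is correct and follows essentially the same route as the paper's: decompose $M = E_0[R]$ via Lemma~\ref{lemma:unique}, use the hypothesis to conclude $R \neq \fail$ so the context-collapsing rule is excluded, identify $M' = E_0[N]$ by Lemma~\ref{lemma:det-reduction-deterministic}, and then use context composition (Lemma~\ref{lemma:context-composition-wdef}) to re-apply the same rule in the enlarged context $E \circ E_0$. The only cosmetic difference is that the paper routes the last step through Lemma~\ref{lemma:remove-context-det} (extracting $R \detred N$ and then doing a case analysis), whereas you re-instantiate the reduction rule directly with $E \circ E_0$; these amount to the same argument.
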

\begin{proof}\mbox{ } Standard, using Lemmas \ref{lemma:unique} and \ref{lemma:det-reduction-deterministic}.

Since $M \detred M'$, $M$ is not a generalized value. 
By Lemma \ref{lemma:unique}, $M = E'[R]$ for some $E'$, $R$.

By assumption, $R \neq \fail$, so by inspection of the reduction rules
$E'[R] \detred E'[N]$ for some $N$. By Lemma 
 \ref{lemma:det-reduction-deterministic}, $E'[N] = M'$.
 By Lemma \ref{lemma:context-composition-wdef},
$E[M] = (E \circ E')[R]$ and $E[M'] = (E \circ E')[N]$.

Since Lemma \ref{lemma:remove-context-det} gives $R \detred N$,
by case analysis on the derivation of $R \detred N$ we can show 
that $(E \circ E')[R] \detred (E \circ E')[N]$, which implies
$E[M] \detred E[M']$.

\qed
\end{proof}
\begin{figure*}
\begin{center}
\fbox{
\begin{minipage}{.97\textwidth}
\vspace{5pt}
\[
\AxiomC{$M \detred N$}
\UnaryInfC{$(M, w, s) \rightarrow (N, w, s)$}
\DisplayProof{\;\tr{Red Pure}}
\quad
  \AxiomC{$c \in (0,1]$}
  \UnaryInfC{$(E[\mathtt{score}(c)], w, s) \rightarrow (E[\ttrue], c \cdot w , s)$}
\DisplayProof{\;\tr{Red Score}}
\]
\[
\AxiomC{$w' = \pdf{\distone }(\vec{c}, c)$}
\AxiomC{$w' > 0$}
\BinaryInfC{$(E[\distone (\vec{c})], w,  c \mathrel{::} s) \rightarrow (E[c], w \cdot w', s)$}
\DisplayProof{\;\tr{Red Random}}
\]
\[
\AxiomC{$\pdf{\distone }(\vec{c}, c)=0$}
\UnaryInfC{$(E[\distone (\vec{c})], w,  c \mathrel{::} s) \rightarrow (E[\fail], 0, s)$}
\DisplayProof{\;\tr{Red Random Fail}}
\]
\vspace{5pt}
\end{minipage}}
\condnocr
\caption{Small-step sampling-based operational semantics}
\label{fig:small-step-sampling}
\end{center}
\end{figure*}
\begin{lemma} \label{lemma:small-step-determined}
If $(M,w,s) \rightarrow (M',w',s')$ and $(M,w,s) \rightarrow (M'',w'',s'')$,
then $M'=M''$, $w'=w''$ and $s''=s'$.
\end{lemma}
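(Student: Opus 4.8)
The plan is to derive determinism of the small-step relation from two facts already available: the unique-decomposition property of closed non-values (Lemma~\ref{lemma:unique}) and determinism of $\detred$ (Lemma~\ref{lemma:det-reduction-deterministic}). The idea is that the \emph{shape} of the unique redex appearing in $M$ selects exactly one of the four rules of Figure~\ref{fig:small-step-sampling}, and that the chosen rule determines the triple on the right uniquely.

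First I would note that every rule in Figure~\ref{fig:small-step-sampling} requires its source term to be reducible, hence not a generalized value; so by Lemma~\ref{lemma:unique} the common source $M$ factors as $M = \ct{E}{R}$ for a \emph{unique} evaluation context $E$ and redex $R$. I then proceed by case analysis on the form of $R$. If $R$ is an application redex $(\abstr{x}{P})\valone$, a primitive application $g(\vec{c})$, an if-redex, $\fail$, or an erroneous redex $T$ (the latter subsuming $\score{c}$ with $c \notin (0,1]$, and $\distone(\dots)$ or $g(\dots)$ with a $\lambda$-abstraction among the arguments), then $M = \ct{E}{R}$ is $\detred$-reducible, so only \tr{Red Pure} can fire; the remaining three rules are excluded since their redexes have incompatible shapes. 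Uniqueness of $M'$ is then exactly Lemma~\ref{lemma:det-reduction-deterministic}, while $w' = w$ and $s' = s$ are immediate. If $R = \score{c}$ with $c \in (0,1]$, then $R$ is not $\detred$-reducible, so \tr{Red Pure} is excluded and only \tr{Red Score} applies, forcing $M' = \ct{E}{\ttrue}$, $w' = c\cdot w$, $s' = s$. Finally, if $R = \distone(\vec{c})$ with all arguments real, only \tr{Red Random} and \tr{Red Random Fail} are candidates; both require $s = c \mathrel{::} s'$, so firing of the relation forces $s$ to be non-empty and fixes its head $c$ and tail $s'$, whence the density $\pdf{\distone}(\vec{c},c)$ is determined and its mutually exclusive tests ($>0$ versus $=0$) select a single rule, giving either $(\ct{E}{c}, w\cdot\pdf{\distone}(\vec{c},c), s')$ or $(\ct{E}{\fail}, 0, s')$.

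The main obstacle is not any computation but the bookkeeping that guarantees the four rules never overlap on a given $\ct{E}{R}$. Two points deserve care: $\score{c}$ is a redex governed by \tr{Red Score} precisely when $c \in (0,1]$, yet is an erroneous redex (and so handled by $\detred$, i.e.\ by \tr{Red Pure}) when $c \notin (0,1]$, so these two regimes must be separated by the side condition; and the two random rules are disjoint only because of the density dichotomy $\pdf{\distone}(\vec{c},c) > 0$ versus $\pdf{\distone}(\vec{c},c) = 0$. Once unique decomposition pins down $E$ and $R$ and the case analysis exhibits a single applicable rule with a determined output, the three equalities $M' = M''$, $w' = w''$ and $s' = s''$ follow at once.
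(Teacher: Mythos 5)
Your proof is correct and follows essentially the same route as the paper's: both rest on the unique decomposition $M = \ct{E}{R}$ from Lemma~\ref{lemma:unique}, the pairwise disjointness of the four rules' side conditions, and Lemma~\ref{lemma:det-reduction-deterministic} for the \tr{Red Pure} case. The only difference is bookkeeping --- you case-split on the shape of the unique redex $R$ while the paper case-splits on the rule deriving the first reduction --- and your explicit separation of $\score{c}$ with $c \in (0,1]$ from the erroneous-redex regime is, if anything, slightly more careful than the paper's wording.
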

\begin{proof}
By case analysis. 
Since there is no rule that reduces generalized values,
$(M,w,s) \rightarrow (M',w',s')$ implies that $M \notin \gvalset$, so
by Lemma~\ref{lemma:unique}, $M = E[R]$ for some unique $E$, $R$.
\begin{varitemize}
   \item If $(M,w,s) \rightarrow (M',w',s')$ was derived with \tr{Red Pure},
    then $M=E[R]$, where $R \neq \distone(\underline{c})$ and $R \neq \mathtt{score}(c)$,
    which implies that $(M,w,s) \rightarrow (M'',w'',s'')$ must also have been derived
    with \tr{Red Pure}. Hence, we have $w'' = w' = w$, $s'' = s' = s$, $M \detred M'$
    and $M \detred M''$. By Lemma \ref{lemma:det-reduction-deterministic}, $M'' = M'$, as required.
   \item If $(M,w,s) \rightarrow (M',w',s')$ was derived with \tr{Red Random}, then
   $M = E[\distone(\vec{c})]$, $s = c \mathrel{::} s^*$ and $\pdf{\distone}(\vec{c},c) > 0$.
   Hence, $(M,w,s) \rightarrow (M'',w'',s'')$ must also have been derived with \tr{Red Random},
   and so $M''=M'=E[c]$, $s''=s'=s^*$ and $w''=w'=w\pdf{\distone}(\vec{c},c)$, as required.
   The \tr{Red Random Fail} case is analogous.
   \item If $(M,w,s) \rightarrow (M',w',s')$ was derived with \tr{Red Score},
   then $M=E[\mathtt{score}(c)]$ and $c \in (0,1]$, so $(M,w,s) \rightarrow (M'',w'',s'')$
   must also have been derived with \tr{Red Score}. Hence $M''=M'=E[\ttrue]$,
   $w''=w'=c\cdot w$ and $s''=s'=s$. 
\end{varitemize}
\qed
\end{proof}
%
Rules of small-step reduction are given in Figure \ref{fig:small-step-sampling}.
We let \emph{multi-step reduction} be the inductively defined relation 
$(M,w,s) \Rightarrow (M',w',s')$ if and only if $(M,w,s) = (M',w',s')$ or
$(M,w,s) \to (M'',w'',s'') \Rightarrow (M',w',s')$ for some $M'',w'',s''$.
As can be easily verified, the multi-step reduction of a term to a generalized value is
deterministic once the underlying trace is kept fixed:
\begin{lemma}\label{lemma:small-steps-determined}
If both $(M,w,s) \Rightarrow (G',w',s')$ and $(M,w,s) \Rightarrow (G'',w'',s'')$,
then $G'=G''$, $w'=w''$ and $s'=s''$.
\end{lemma}
\begin{proof}
By induction on the derivation of $(M,w,s) \Rightarrow (G',w',s')$,
with appeal to Lemma \ref{lemma:small-step-determined}.
\begin{varitemize}
\item Base case: $(M,w,s) = (G',w',s')$. Generalized values do not reduce, so $G''=G'=G$, $w''=w'=w$
and $s''=s'=s$.
\item Induction step:  $(M,w,s) \rightarrow (\hat{M}, \hat{w}, \hat{s}) \Rightarrow (G',w',s')$.
Since $M \neq G''$, we also have $(M,w,s) \rightarrow (M^*, w^*, s^*) \Rightarrow (G'',w'',s'')$.

By Lemma \ref{lemma:small-step-determined}, $(M^*, w^*, s^*) = (\hat{M}, \hat{w}, \hat{s})$,
and so by induction hypothesis, $(G'',w'',s'') = (G',w',s')$, as required.

%
\end{varitemize}
\qed
\end{proof}
\begin{lemma} \label{lemma:change-context}
For any $E$ and $M$ such that $M \neq E'[\fail]$,
if
$(M, w, s) \rightarrow (M', w', s')$ then
$(E[M], w, s) \rightarrow (E[M'], w', s')$
\end{lemma}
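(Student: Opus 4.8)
The plan is to follow the pattern of the deterministic statement, Lemma~\ref{lemma:change-context-det}, proceeding by case analysis on which of the four rules of Figure~\ref{fig:small-step-sampling} was used to derive $(M,w,s)\rightarrow(M',w',s')$. In every case the goal is the same: to re-fire the very rule that was used, after pushing the outer context $E$ inside to sit around the redex. The only genuine bookkeeping tool needed is the context-composition identity of Lemma~\ref{lemma:context-composition-wdef}.

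First I would dispatch the \tr{Red Pure} case. Here $w'=w$ and $s'=s$, and the premise is $M\detred M'$. This is exactly the situation handled by Lemma~\ref{lemma:change-context-det}: since the hypothesis gives $M\neq E'[\fail]$, we obtain $E[M]\detred E[M']$, and a single application of \tr{Red Pure} yields $(E[M],w,s)\rightarrow(E[M'],w,s)$. So this case reduces entirely to the already-proved deterministic lemma, and is where the whole side condition $M\neq E'[\fail]$ is consumed.

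For the remaining three rules the term $M$ has the shape $F[R]$, where $F$ is an evaluation context and $R$ is a $\mathtt{score}$ or distribution redex. The key algebraic step is to rewrite the nested contexts via Lemma~\ref{lemma:context-composition-wdef}, $E[F[R]]=(E\circ F)[R]$, so that $E\circ F$ is again a single evaluation context with the redex $R$ at its hole. For \tr{Red Score}, with $M=F[\score{c}]$ and $c\in(0,1]$, we fire \tr{Red Score} on the composed context to get $((E\circ F)[\score{c}],w,s)\rightarrow((E\circ F)[\ttrue],c\cdot w,s)$, and rewriting back gives $E[M']$ with $M'=F[\ttrue]$ and weight $c\cdot w$, as required. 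The cases \tr{Red Random} and \tr{Red Random Fail} are identical in structure: $M=F[\distone(\vec{c})]$, the trace splits as $c\mathrel{::}s'$, and after the same context manipulation we re-apply the corresponding rule so that $c$ is consumed and the weight is multiplied by $\pdf{\distone}(\vec{c},c)$ (respectively the result becomes $(E\circ F)[\fail]=E[F[\fail]]$ with weight $0$).

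I expect no real obstacle beyond this bookkeeping. The one conceptual point worth noting is exactly why the side condition is needed only once: \tr{Red Pure} is the single rule that can route through the collapsing deterministic step $E[\fail]\detred\fail$, for which context preservation genuinely fails, whereas in the \tr{Red Score}, \tr{Red Random}, and \tr{Red Random Fail} cases the redex of $M$ is a $\mathtt{score}$ or distribution term, so by uniqueness of the decomposition (Lemma~\ref{lemma:unique}) the hypothesis $M\neq E'[\fail]$ holds automatically and no collapse can occur. In particular the $\fail$ produced by \tr{Red Random Fail} appears strictly inside the context $F$, so $E[F[\fail]]$ is the honest one-step result rather than a collapsed $\fail$. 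The entire argument is thus a short case analysis resting on Lemmas~\ref{lemma:change-context-det} and~\ref{lemma:context-composition-wdef}.
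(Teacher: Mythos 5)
Your proof is correct and follows essentially the same route as the paper's: a case analysis on the rule deriving $(M,w,s)\rightarrow(M',w',s')$, discharging \tr{Red Pure} via Lemma~\ref{lemma:change-context-det} (the only place the side condition $M\neq E'[\fail]$ is used) and handling \tr{Red Random}, \tr{Red Random Fail}, and \tr{Red Score} by composing contexts with Lemma~\ref{lemma:context-composition-wdef} and re-firing the same rule. Your added remark that uniqueness of decomposition (Lemma~\ref{lemma:unique}) makes the side condition automatic in the non-\tr{Red Pure} cases is a correct observation the paper leaves implicit.
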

\begin{proof} By inversion of $\rightarrow$, using Lemma~\ref{lemma:change-context-det}.
  \begin{varitemize}
  \item If $(M, w, s) \rightarrow (M', w', s')$ was derived with
  \tr{Red Pure}, then $M \detred M'$, so by Lemma \ref{lemma:change-context-det},
  $E[M] \detred E[M']$, and by \tr{Red Pure}, $(E[M],w,s) \red (E[M'],w',s')$.

  \item If $(M, w, s) \rightarrow (M', w', s')$ was derived with
  \tr{Red Random}, then $M = E'[\distone(\vec{c})]$, $M' = E'[c]$,
  $s = c \mathrel{::} s'$ and $w' = w \pdf{\distone}(\vec{c},c)$,
  where $\pdf{\distone}(\vec{c},c) > 0$. By \tr{Red Random} and Lemma
  \ref{lemma:context-composition-wdef}, we can derive
  $(E[M], w, s) \rightarrow (E[M'], w', s')$. Cases \tr{Red Random Fail}
  and \tr{Red Score}
  are anaologous.
  \end{varitemize}
%
\qed
\end{proof}
\begin{lemma} \label{lemma:remove-context}
If $(E[R], w, s) \rightarrow (E[N], w', s')$ then $(R, w, s) \rightarrow (N, w', s')$
\end{lemma}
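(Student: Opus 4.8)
The statement is the converse of Lemma~\ref{lemma:change-context}, and the plan is to prove it by inversion on the single derivation of $(E[R], w, s) \rightarrow (E[N], w', s')$. The pivotal observation is that $R$ is a redex, so $E[R]$ is reducible --- in particular not a generalized value --- and hence by Lemma~\ref{lemma:unique} has a \emph{unique} decomposition into an evaluation context and a redex. Since $(E, R)$ is already such a decomposition, it must be \emph{the} decomposition, so whatever rule fires on $E[R]$ is forced to contract precisely the redex $R$ occupying the hole of $E$. This is what lets us strip the context and conclude that $R$ alone reduces to $N$ with the same weight and trace bookkeeping.

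I would then split on the rule applied. For \tr{Red Pure} the premise is $E[R] \detred E[N]$ with $w' = w$ and $s' = s$; here Lemma~\ref{lemma:remove-context-det} applies directly to give $R \detred N$, and a second use of \tr{Red Pure} yields $(R, w, s) \rightarrow (N, w, s)$. For \tr{Red Score}, \tr{Red Random}, and \tr{Red Random Fail}, the left-hand term produced by the rule has the shape $F[\score{c}]$ or $F[\distone(\vec{c})]$ for some evaluation context $F$; matching this against $E[R]$ and invoking uniqueness forces $F = E$ together with $R = \score{c}$ or $R = \distone(\vec{c})$, and injectivity of context-filling in the hole then identifies $N$ (as $\ttrue$, $c$, or $\fail$) and the updated weight/trace (e.g.\ $s = c \mathrel{::} s'$ and $w' = w \cdot \pdf{\distone}(\vec{c}, c)$ for \tr{Red Random}). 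In each case, re-applying the same rule with the trivial context $[\cdot]$ in place of $E$ delivers $(R, w, s) \rightarrow (N, w', s')$.

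The only point requiring a little care is the degenerate redex $R = \fail$. If $E = [\cdot]$ then $E[R] = \fail$ is a generalized value and does not reduce, while if $E$ is proper then $E[\fail] \detred \fail$ escapes the context, so the result $\fail$ is not of the form $E[N]$; in both situations the hypothesis of the lemma cannot hold and the case is vacuous. I expect the main (and essentially only) obstacle to be making the uniqueness argument precise --- i.e.\ justifying that the firing rule contracts $R$ and not some other redex --- which is exactly where Lemma~\ref{lemma:unique} is used; once that is in hand, every case is a mechanical re-application of the corresponding reduction rule with the empty context.
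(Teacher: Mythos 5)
Your proposal is correct and takes essentially the same approach as the paper's proof: a case analysis on which rule derived $(E[R],w,s)\rightarrow(E[N],w',s')$, handling \textsc{(Red Pure)} via Lemma~\ref{lemma:remove-context-det} and the \textsc{(Red Random)}, \textsc{(Red Random Fail)} and \textsc{(Red Score)} cases by matching the redex shape and re-applying the same rule with the empty context. If anything your treatment is slightly tighter than the paper's, which leaves the appeal to Lemma~\ref{lemma:unique} and the injectivity of hole-filling implicit, omits the vacuous $R=\fail$ case, and in its \textsc{(Red Pure)} case cites Lemma~\ref{lemma:change-context-det} where Lemma~\ref{lemma:remove-context-det} (the direction you use) is clearly the one intended.
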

\begin{proof} By case analysis.

  \begin{varitemize}
  \item If $(E[R],w,s) \red (E[N],w',s')$ was derived with \tr{Red Pure}, then $E[R] \detred E[N]$,
  so by Lemma \ref{lemma:change-context-det}, $R \detred N$, which implies
  $(M, w, s) \rightarrow (M', w', s')$.
  \item If $(E[R], w, s) \rightarrow (E[N], w', s')$ was derived with \tr{Red Random},
  then $R = \distone(\vec{c})$, $N= c$, $s = c \mathrel{::} s'$ and $w' = w \pdf{\distone}(\vec{c},c)$,
  where $\pdf{\distone}(\vec{c},c) > 0$.
  
  Hence, with \tr{Red Random}, we can derive
  $(\distone(\vec{c}), w, s) \rightarrow (c, w', s')$
  
  Cases \tr{Red Random Fail}
  and \tr{Red Score}
  are anaologous.
  
\end{varitemize}
\end{proof}
Reduction can take place in any evaluation context, provided the result is not
a failure. Moreover, multi-step reduction is a transitive relation. This is
captured by the following lemmas.
\begin{lemma} \label{lemma:congr-closure}
For any $E$, if
$(M,w,s) \Rightarrow (M',w',s')$ and $M' \neq \fail$, then we have $(E[M],w,s) \Rightarrow (E[M'],w',s')$.
\end{lemma}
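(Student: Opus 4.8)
The plan is to prove the statement by induction on the derivation of the multi-step reduction $(M,w,s) \Rightarrow (M',w',s')$, following its inductive definition (equivalently, induction on the number of single steps). The induction hypothesis is the full statement, quantified over \emph{all} evaluation contexts $E$, applied to any strictly shorter multi-step reduction. In the base case $(M,w,s) = (M',w',s')$ we immediately have $(E[M],w,s) = (E[M'],w',s')$, so the reflexive clause defining $\Rightarrow$ gives the conclusion, with the hypothesis $M' \neq \fail$ unused.

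For the induction step, write $(M,w,s) \to (\hat{M},\hat{w},\hat{s}) \Rightarrow (M',w',s')$, where $M' \neq \fail$. The crux is to apply the single-step congruence result, Lemma~\ref{lemma:change-context}, to the first step, which requires verifying its side condition $M \neq E'[\fail]$. I claim this holds whenever $M' \neq \fail$. Suppose toward a contradiction that $M = E'[\fail]$ for some context $E'$; since $M$ reduces it is not a generalized value, and as $\fail$ itself is an irreducible generalized value we must have $E'$ proper, i.e.\ $E' \neq \hole$. Then the deterministic rule $E'[\fail] \detred \fail$ fires, so by \tr{Red Pure} we have $(M,w,s) \to (\fail,w,s)$, and by determinism of single steps (Lemma~\ref{lemma:small-step-determined}) the actual reduct satisfies $\hat{M} = \fail$. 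But $\fail$ is a generalized value and does not reduce, so $(\hat{M},\hat{w},\hat{s}) \Rightarrow (M',w',s')$ can only be the base case, forcing $M' = \fail$ and contradicting the hypothesis. Hence $M \neq E'[\fail]$, and Lemma~\ref{lemma:change-context} yields $(E[M],w,s) \to (E[\hat{M}],\hat{w},\hat{s})$.

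Since $M' \neq \fail$, the induction hypothesis applies to the shorter reduction $(\hat{M},\hat{w},\hat{s}) \Rightarrow (M',w',s')$ with the same context $E$, giving $(E[\hat{M}],\hat{w},\hat{s}) \Rightarrow (E[M'],w',s')$. Prepending the single step established above and invoking the inductive clause defining $\Rightarrow$, we conclude $(E[M],w,s) \Rightarrow (E[M'],w',s')$, as required. The one genuinely nontrivial point—and the sole place where the assumption $M' \neq \fail$ is used—is the side-condition check in the induction step: without it, a term of the form $E'[\fail]$ would collapse to $\fail$ inside the larger context $E$ as well (since $(E \circ E')[\fail] \detred \fail$), breaking exactly the congruence that Lemma~\ref{lemma:change-context} guarantees only away from $\fail$. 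I expect this to be the main obstacle, and it is handled cleanly by the observation that reaching $\fail$ at any intermediate configuration is irreversible and therefore incompatible with a non-$\fail$ final result.
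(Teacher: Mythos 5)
Your proof is correct and follows essentially the same route as the paper's: induction on the number of steps in $(M,w,s) \Rightarrow (M',w',s')$, applying Lemma~\ref{lemma:change-context} to the first step and the induction hypothesis to the rest. The only difference is one of detail: the paper merely asserts that ``no expression in the derivation chain (other than the last one) can be of the form $E'[\fail]$'' when $M' \neq \fail$, whereas you prove this side condition explicitly via the irreversibility-of-$\fail$ argument, which is exactly the justification the paper leaves implicit.
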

\begin{proof}
By induction on the number of steps in the derivation of 
$(M,w,s) \Rightarrow (M',w',s')$, with appeal to Lemma \ref{lemma:change-context}.

Since $M' \neq \fail$, no expression in the 
derivation chain (other than the last one) can be of the form $E'[\fail]$.
\qed
\end{proof}
\begin{lemma} \label{lemma:congr-closure-exc}
For any $E$, if
$(M,w,s) \Rightarrow (\fail,w',s')$ then \\$(E[M],w,s) \Rightarrow (\fail,w',s')$.
\end{lemma}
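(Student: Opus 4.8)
The plan is to proceed by induction on the number of steps in the derivation of $(M,w,s) \Rightarrow (\fail,w',s')$, paralleling the proof of Lemma~\ref{lemma:congr-closure} but taking care of the moment at which the surrounding context is discarded. The reason this needs a separate argument from Lemma~\ref{lemma:congr-closure} is that Lemma~\ref{lemma:change-context} only lets us push a context $E$ through a single step $(M,w,s)\red(M',w',s')$ when $M$ is \emph{not} of the form $E''[\fail]$; and precisely at the step where a failure is propagated out of its context, that side-condition is violated.

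For the base case the chain has length zero, so $M=\fail$, $w=w'$ and $s=s'$, and we must show $(E[\fail],w,s)\Rightarrow(\fail,w,s)$. If $E=\hole$ this is reflexivity; if $E$ is proper then $E[\fail]\detred\fail$ by the last rule of Figure~\ref{fig:detred}, so a single \tr{Red Pure} step followed by reflexivity suffices.

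For the inductive step, write $(M,w,s)\red(M_1,w_1,s_1)\Rightarrow(\fail,w',s')$ and distinguish two cases according to the unique decomposition $M=E'[R]$ given by Lemma~\ref{lemma:unique}. If $R\neq\fail$ (equivalently, $M$ is not of the form $E''[\fail]$ for any $E''$), then Lemma~\ref{lemma:change-context} applies and gives $(E[M],w,s)\red(E[M_1],w_1,s_1)$; the induction hypothesis applied to the shorter chain yields $(E[M_1],w_1,s_1)\Rightarrow(\fail,w',s')$, and prefixing the single step closes this case. If instead $R=\fail$, then $E'$ is proper (since $M$ reduces and so is not a generalized value), the only available reduction is $M=E'[\fail]\detred\fail$, and since $\fail$ does not reduce we in fact have $(M,w,s)\Rightarrow(\fail,w,s)$, so by Lemma~\ref{lemma:small-steps-determined} the whole chain satisfies $w'=w$ and $s'=s$. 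Here the context is thrown away, so we argue directly: by Lemma~\ref{lemma:context-composition-wdef}, $E[M]=(E\circ E')[\fail]$, and since $E'$ is proper, inspection of the three clauses defining $\circ$ shows $E\circ E'$ is proper as well; hence $(E\circ E')[\fail]\detred\fail$ and one \tr{Red Pure} step gives $(E[M],w,s)\red(\fail,w,s)=(\fail,w',s')$, as required.

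I expect the only genuinely delicate point to be this second case: one must notice that although $M$ reaches $\fail$ by erasing its own context $E'$, the enlarged term $E[M]=(E\circ E')[\fail]$ erases the \emph{composite} context $E\circ E'$ in a single step to the same $\fail$, and that $E\circ E'$ remains proper whenever $E'$ is (a one-line check on the definition of $\circ$). Everything else is a routine transcription of the induction used for Lemma~\ref{lemma:congr-closure}.
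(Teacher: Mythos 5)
Your proof is correct and follows essentially the same route as the paper's: induction on the length of the reduction sequence, using Lemma~\ref{lemma:change-context} to push $E$ through ordinary steps, and collapsing the composite context $E \circ E'$ in a single \tr{Red Pure} step when the redex is $\fail$. In fact your treatment of the failure case is slightly more explicit than the paper's, which asserts $((E \circ E')[\fail],w,s) \red (\fail,w',s')$ without spelling out that determinism (Lemma~\ref{lemma:small-steps-determined}) forces $w'=w$ and $s'=s$ there.
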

\begin{proof} By induction on the number of steps in the derivation, 
  using Lemmas~\ref{lemma:change-context} and \ref{lemma:congr-closure}.
  If $E = [\ ]$, the result holds trivially, so let us assume $E \neq [\ ]$.
    If $(M,w,s)
    \Rightarrow (\fail,w',s')$ was derived in $0$ steps, then $M =
    \fail$, $w' = w'$ and $s' =
    s$, so by \tr{Red Pure}, $(E[\fail],w,s) \red
    (\fail,w,s)$, as required.

    If $(M,w,s)
    \Rightarrow (\fail,w',s')$ was derived in
    $1$
    or more steps, then:
    \begin{varitemize}
    \item If $M = E'[\fail]$ and $E' \neq [\ ]$, then 
    $((E \circ E')[\fail],w,s) \red (\fail,w',s')$ by \tr{Red Pure}.
    \item
      Otherwise, there exist $\hat{M}$,
    $\hat{w}$,
    $\hat{s}$
    such that $(M,w,s)
    \red (\hat{M}, \hat{w}, \hat{s}) \Rightarrow
    (\fail,w',s')$, where $M \notin \gvalset$.
    By induction hypothesis, $(E[\hat{M}], \hat{w}, \hat{s}) \Rightarrow
    (\fail,w',s')$ for any $E$, and by Lemma \ref{lemma:change-context},
    $(E[M], w, s) \red (E[\hat{M}], \hat{w}, \hat{s})$.
    \end{varitemize}
%
%
%
  \qed
\end{proof}

\begin{lemma} \label{lemma:w-greater-zero}
If $(M, w, s) \Rightarrow (M',w',s')$ and $w \geq 0$, then $w' \geq 0$.
\end{lemma}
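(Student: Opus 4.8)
The plan is to proceed by induction on the number of single steps composing the multi-step reduction $(M,w,s) \Rightarrow (M',w',s')$, the key observation being that no single reduction step can turn a nonnegative weight into a negative one. First I would dispatch the base case, where the reduction is performed in zero steps: then $(M,w,s) = (M',w',s')$, so $w' = w \geq 0$ immediately from the hypothesis.

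For the inductive step, I would use the definition of $\Rightarrow$ to decompose the reduction as $(M,w,s) \to (M'',w'',s'') \Rightarrow (M',w',s')$, and first establish that $w'' \geq 0$ by a case analysis on which of the four rules of Figure~\ref{fig:small-step-sampling} justifies the initial single step. Under \tr{Red Pure} we have $w'' = w$; under \tr{Red Score} we have $w'' = c \cdot w$ with $c \in (0,1]$; under \tr{Red Random} we have $w'' = w \cdot \pdf{\distone}(\vec{c},c)$ with the side condition $\pdf{\distone}(\vec{c},c) > 0$; and under \tr{Red Random Fail} we have $w'' = 0$. In each case $w \geq 0$ yields $w'' \geq 0$, since we are respectively leaving $w$ unchanged, scaling it by a strictly positive factor, or replacing it by $0$. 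I would then apply the induction hypothesis to $(M'',w'',s'') \Rightarrow (M',w',s')$ together with the freshly established $w'' \geq 0$ to conclude $w' \geq 0$.

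There is no genuine obstacle here: the only rule that can decrease the weight, namely \tr{Red Random Fail}, sets it to exactly $0$, while every other rule either fixes the weight or multiplies it by a positive number, so nonnegativity is manifestly preserved at each step. The entire content of the lemma is therefore the bookkeeping of this single-step case analysis, lifted across the transitive closure by the induction on derivation length.
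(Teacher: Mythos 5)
Your proof is correct and follows exactly the same route as the paper's: induction on the number of steps in the multi-step reduction, with the zero-step base case and a four-way case analysis on the rule (\textsc{Red Pure}, \textsc{Red Score}, \textsc{Red Random}, \textsc{Red Random Fail}) justifying the first step, followed by an appeal to the induction hypothesis. No differences worth noting.
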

\begin{proof}
By induction on the number of steps in the derivation.
\begin{varitemize}
\item If $(M, w, s) \Rightarrow (M',w',s')$ was derived in $0$ steps, then $w' = w$, so $w' \geq 0$.
\item If $(M, w, s) \Rightarrow (M',w',s')$ was derived in $1$ or more steps, then
$(M, w, s) \red (M^*, w^*, s^*) \Rightarrow (M',w',s')$. 

If $(M, w, s) \rightarrow (M^*,w^*,s^*)$ was derived with \tr{Red Pure}, 
then $w^* = w \geq 0$.

If $(M, w, s) \rightarrow (M^*,w^*,s^*)$ was derived with \tr{Red Random},
then $w^* = w\cdot w''$ for some $w''>0$, so $w^* \geq 0$.

If $(M, w, s) \rightarrow (M^*,w^*,s^*)$ was derived with \tr{Red Score},
then $w^* = w\cdot c$ for some $c>0$, so $w' \geq 0$.

If $(M, w, s) \rightarrow (M^*,w^*,s^*)$ was derived with \tr{Red Random Fail},
then $w^* = 0$.

In either case, $w^* \geq 0$, so  by induction hypothesis, $w' \geq0$.
\end{varitemize}
\qed
\end{proof}
\begin{lemma} \label{lemma:w-greater-zero-strict}
If $(M, w, s) \red (M',w',s')$ was not derived with \tr{Red Random Fail} and $w > 0$, then $w' > 0$.
\end{lemma}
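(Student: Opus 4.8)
The plan is to proceed by a direct case analysis on the rule used to derive the single step $(M,w,s) \red (M',w',s')$. Since the statement concerns one reduction step rather than a multi-step derivation, no induction is needed: it suffices to inspect each of the four rules in Figure~\ref{fig:small-step-sampling} and track how the weight is transformed. By hypothesis the step is \emph{not} derived with \tr{Red Random Fail}, so only three rules remain to consider.

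First I would dispatch the \tr{Red Pure} case, where the weight is left unchanged, $w' = w$, so that $w' > 0$ is immediate from $w > 0$. Next, in the \tr{Red Score} case we have $w' = c \cdot w$, and the side condition of the rule guarantees $c \in (0,1]$, hence $c > 0$; the product of two strictly positive reals is strictly positive. Finally, in the \tr{Red Random} case the new weight is $w' = w \cdot \pdf{\distone}(\vec{c}, c)$, and here the rule carries the explicit side condition $\pdf{\distone}(\vec{c}, c) > 0$, so the second factor is strictly positive and the product is again positive.

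I anticipate no genuine obstacle here. The only rule capable of producing a zero weight is exactly \tr{Red Random Fail}, which the hypothesis rules out; every other rule either preserves the weight or multiplies it by a factor that its own side condition forces to be strictly positive. In effect this is the single-step, strict-inequality refinement of the computation already carried out for Lemma~\ref{lemma:w-greater-zero}, so the argument is entirely routine and reduces to reading off the weight component in each of the three relevant rules.
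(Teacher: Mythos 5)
Your proof is correct and is essentially the paper's own argument: the paper dismisses this lemma ``by inspection (similar to the inductive step in the proof of Lemma~\ref{lemma:w-greater-zero})'', and your case analysis on \tr{Red Pure}, \tr{Red Score}, and \tr{Red Random} is precisely that inspection spelled out, with the correct positivity side conditions ($c \in (0,1]$ and $\pdf{\distone}(\vec{c},c) > 0$) invoked in each multiplicative case.
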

\begin{proof}
By inspection (similar to the inductive step in the proof of Lemma \ref{lemma:w-greater-zero}).
\end{proof}
\begin{lemma} \label{lemma:multiply}
If $(M,w,s) \rightarrow (M',w',s')$, then for any $w^* \geq 0$,
$(M,w w^*,s) \rightarrow (M',w' w^*,s')$
\end{lemma}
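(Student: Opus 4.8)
The plan is to prove Lemma~\ref{lemma:multiply} by a case analysis on the rule used to derive the single step $(M,w,s) \rightarrow (M',w',s')$, exactly mirroring the four-way split that appears in the proof of Lemma~\ref{lemma:small-step-determined}. The statement says nothing more than: scaling the input weight by a nonnegative factor $w^*$ scales the output weight by the same factor, while leaving the term and trace components untouched. Since each reduction rule either leaves the weight unchanged or multiplies it by a rule-specific factor, this is essentially the observation that weight-update is multiplicative and hence commutes with multiplication by a constant.

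Concretely, I would proceed as follows. First, by inversion on $\rightarrow$, the derivation of $(M,w,s) \rightarrow (M',w',s')$ used exactly one of \tr{Red Pure}, \tr{Red Random}, \tr{Red Random Fail}, or \tr{Red Score}. In the \tr{Red Pure} case we have $w' = w$ and $s'=s$; the same rule applies to $(M, w w^*, s)$ since its side condition $M \detred M'$ is independent of the weight, yielding $(M, w w^*, s) \rightarrow (M', w w^*, s')$ with $w w^* = w' w^*$. In the \tr{Red Random} case, $M = E[\distone(\vec c)]$, $s = c \mathrel{::} s'$, and $w' = w \cdot w''$ where $w'' = \pdf{\distone}(\vec c, c) > 0$; the side conditions again do not mention the input weight, so \tr{Red Random} applies to $(M, w w^*, s)$ and gives output weight $(w w^*) \cdot w'' = (w \cdot w'') w^* = w' w^*$. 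The \tr{Red Random Fail} case is immediate: the output weight is $0$ regardless of the input, and $0 = 0 \cdot w^*$. Finally, in the \tr{Red Score} case, $M = E[\mathtt{score}(c)]$ with $c \in (0,1]$ and $w' = c \cdot w$; applying \tr{Red Score} to $(M, w w^*, s)$ gives output weight $c \cdot (w w^*) = (c \cdot w) w^* = w' w^*$.

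There is essentially no main obstacle here; the only points requiring any care are purely bookkeeping. One must check in each case that the \emph{side conditions} of the rule (e.g. $\pdf{\distone}(\vec c, c) > 0$ for \tr{Red Random}, or $c \in (0,1]$ for \tr{Red Score}) depend only on the term and trace and not on the input weight, so that the same rule remains applicable after rescaling; this is evident by inspection of Figure~\ref{fig:small-step-sampling}. The remaining content is the associativity and commutativity of real multiplication, used to push $w^*$ through each weight update. The hypothesis $w^* \geq 0$ is in fact not even needed for the arithmetic to go through—any real $w^*$ would do—but it is retained because the lemma is only ever applied with genuine (nonnegative) weights, and keeping it makes the invariant $w \geq 0 \Rightarrow w' \geq 0$ of Lemma~\ref{lemma:w-greater-zero} easy to combine with this scaling.
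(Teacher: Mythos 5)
Your proposal is correct and takes exactly the approach the paper intends: the paper's own proof of this lemma is simply ``By case analysis,'' and your four cases (\textsc{Red Pure}, \textsc{Red Random}, \textsc{Red Random Fail}, \textsc{Red Score}) with the observation that side conditions are weight-independent and weight updates are multiplicative are precisely the details that one-line proof leaves implicit. Your remark that $w^* \geq 0$ is not actually needed for the arithmetic is also accurate.
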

\begin{proof}
By case analysis.
\qed
\end{proof}
\begin{lemma} \label{lemma:add-suffix}
If $(M,w,s) \rightarrow (M',w',s')$, then for any $s^*$,
$(M,w,s @ s^*) \rightarrow (M',w',s' @ s^*)$
\end{lemma}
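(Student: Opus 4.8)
The plan is to proceed by case analysis on the rule of Figure~\ref{fig:small-step-sampling} used to derive the single step $(M,w,s) \rightarrow (M',w',s')$, observing that each rule either ignores the trace entirely or inspects only its head, so that appending a suffix $s^*$ never disturbs the part of the trace that the rule examines.

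First I would handle the two trace-preserving rules. If the step was derived by \tr{Red Pure} or \tr{Red Score}, then $s' = s$ and the premises of the rule ($M \detred M'$, or $c \in (0,1]$ together with the shape $M = E[\score{c}]$) make no reference to the trace at all. Hence the very same rule instance applies with $s$ replaced by $s @ s^*$, yielding $(M, w, s @ s^*) \rightarrow (M', w', s @ s^*)$, and since $s' = s$ we have $s @ s^* = s' @ s^*$, as required. Next I would treat the two trace-consuming rules. If the step was derived by \tr{Red Random} or \tr{Red Random Fail}, then $M = E[\distone(\vec{c})]$ and the trace has the form $s = c \mathrel{::} s'$, with the head $c$ consumed and the tail $s'$ passed on. The key observation is that concatenation commutes with head destructuring, namely $(c \mathrel{::} s') @ s^* = c \mathrel{::} (s' @ s^*)$. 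The premise of the rule depends only on $\pdf{\distone}(\vec{c}, c)$, hence only on the head $c$, which is unaffected by appending $s^*$. Therefore the same rule instance fires on the input $(M, w, c \mathrel{::} (s' @ s^*))$, consuming $c$ and leaving $s' @ s^*$; this gives $(M, w, s @ s^*) \rightarrow (M', w', s' @ s^*)$.

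Since these four cases are exhaustive (by Lemma~\ref{lemma:unique} the step must arise from one of the rules, and these are all of them), the lemma follows. I do not expect any real obstacle: the only content is the elementary list identity $(c \mathrel{::} s') @ s^* = c \mathrel{::} (s' @ s^*)$ together with the fact that each reduction rule reads at most the head of the trace, so that a suffix can always be carried along unchanged. If anything requires care, it is merely being explicit that the trace-independent premises of \tr{Red Pure} and \tr{Red Score} and the head-only premises of the \tr{Red Random} rules are preserved verbatim under appending $s^*$.
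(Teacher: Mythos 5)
Your proof is correct and takes essentially the same approach as the paper: the paper's entire proof of this lemma is the one-line ``By case analysis,'' and your four-way case split on the rules \textsc{(Red Pure)}, \textsc{(Red Score)}, \textsc{(Red Random)}, and \textsc{(Red Random Fail)} --- noting that each rule either ignores the trace or inspects only its head, so a suffix passes through unchanged --- is exactly the intended expansion of that case analysis.
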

\begin{proof}
By case analysis.
\qed
\end{proof}

\begin{lemma} \label{lemma:remove-suffix}
If $(M,w,s) \rightarrow (M',w',s')$, then
there is $s^*$ such that $s = s^* @ s'$
and $(M,w,s^*) \rightarrow (M',w',[])$
\end{lemma}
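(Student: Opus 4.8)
The plan is to proceed by case analysis on the rule of Figure~\ref{fig:small-step-sampling} used to derive the single step $(M,w,s)\rightarrow(M',w',s')$. The guiding observation is that every small-step rule consumes at most one element from the \emph{front} of the trace and otherwise leaves the remainder untouched, so the suffix $s'$ appearing in the conclusion is exactly the unread tail of $s$. The required $s^*$ is therefore the (empty or singleton) prefix of $s$ recording whatever the step actually read, and in each case the identical rule can be re-fired with the trace truncated to $s^*$ so that its tail becomes $[]$.

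Concretely, there are two groups of cases. If the step was derived with \tr{Red Pure} or \tr{Red Score}, then neither rule inspects the trace and we have $s'=s$; I take $s^*=[]$, so that $s=[]@s'=s'$, and the same rule applies verbatim to the empty trace, yielding $(M,w,[])\rightarrow(M',w',[])$. If instead the step was derived with \tr{Red Random} or \tr{Red Random Fail}, then by inversion of the rule $M=\ct{E}{\distone(\vec c)}$ and the premise forces $s=c\mathrel{::}s'$ for some real $c$ (with $\pdf{\distone}(\vec c,c)>0$ in the first case and $=0$ in the second); here I take $s^*=[c]$, so that $s=[c]@s'$, and applying the same rule with trace $c\mathrel{::}[]=[c]$ gives $(M,w,[c])\rightarrow(M',w',[])$ with the identical resulting term and weight.

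This lemma is the exact converse of Lemma~\ref{lemma:add-suffix}: where that result shows a trailing suffix may be appended without affecting a step, this one shows the suffix not read by a step may be stripped off. The only point requiring (entirely routine) verification is that re-firing a rule on the shortened trace reproduces the same term $M'$ and weight $w'$, which is immediate since the premises of each rule depend only on the redex, the evaluation context, and at most the head element $c$ of the trace. I do not anticipate any real obstacle; the content is purely the bookkeeping fact that a single reduction step is ``trace-local,'' reading only a bounded prefix and forwarding the rest.
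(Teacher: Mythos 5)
Your proof is correct and matches the paper's own argument, which is exactly the same case analysis on the four rules of Figure~\ref{fig:small-step-sampling} (the paper simply states ``by case analysis'' without spelling out the cases). Your working-out is complete: \tr{Red Pure} and \tr{Red Score} leave the trace untouched, so $s^*=[]$ works, while \tr{Red Random} and \tr{Red Random Fail} consume exactly the head element, so $s^*=[c]$ works.
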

\begin{proof}
By case analysis.
\qed
\end{proof}

\begin{lemma} \label{lemma:multiply-closure}
If $(M,w,s) \rightarrow^k (M',w',s')$, then for any $w^* \geq 0$,
$(M,w w^*,s) \rightarrow^k (M',w' w^*,s')$
\end{lemma}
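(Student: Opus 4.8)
The plan is to proceed by induction on the number of steps $k$, using the single-step result Lemma~\ref{lemma:multiply} as the engine of the inductive step. The key structural observation is that scaling the running weight by a fixed constant $w^* \geq 0$ commutes with every single reduction step: whatever transformation a step applies to the weight---leaving it unchanged under \tr{Red Pure}, multiplying it by a density under \tr{Red Random}, by a score $c$ under \tr{Red Score}, or zeroing it under \tr{Red Random Fail}---scaling the input weight by $w^*$ simply scales the output weight by the same $w^*$. This is exactly what Lemma~\ref{lemma:multiply} asserts for one step, so the induction never needs to re-inspect the individual reduction rules; it merely threads the scalar $w^*$ through.

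For the base case $k = 0$, the relation $(M,w,s) \rightarrow^0 (M',w',s')$ forces $(M',w',s') = (M,w,s)$, and in particular $w' = w$, so $(M, w w^*, s) \rightarrow^0 (M', w' w^*, s')$ holds trivially since both triples coincide. For the inductive step, suppose $(M,w,s) \rightarrow^{k+1} (M',w',s')$. By definition of the iterated relation there is an intermediate configuration $(M'', w'', s'')$ with $(M,w,s) \rightarrow (M'', w'', s'')$ and $(M'', w'', s'') \rightarrow^{k} (M',w',s')$. First I would apply Lemma~\ref{lemma:multiply} to the leading single step to get $(M, w w^*, s) \rightarrow (M'', w'' w^*, s'')$. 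Then I would apply the induction hypothesis to the remaining $k$ steps, using the same $w^*$, to obtain $(M'', w'' w^*, s'') \rightarrow^{k} (M', w' w^*, s')$. Composing these two yields $(M, w w^*, s) \rightarrow^{k+1} (M', w' w^*, s')$, as required.

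I expect no genuine obstacle here. The only point requiring a little care is ensuring that the very same scalar $w^*$ is used for both the leading step and the tail: this is legitimate because the intermediate weight $w''$ appears unscaled in the original derivation, and the induction hypothesis is then instantiated on that derivation with the chosen $w^*$. The side condition $w^* \geq 0$ is inherited unchanged from Lemma~\ref{lemma:multiply} and is never strengthened anywhere in the argument, so it carries through to the $k$-step conclusion without modification.
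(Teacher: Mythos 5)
Your proof is correct and follows exactly the paper's approach: induction on the step count $k$, with Lemma~\ref{lemma:multiply} handling the leading single step and the induction hypothesis handling the remaining tail. The paper states this in one line; your writeup simply spells out the same argument in full detail.
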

\begin{proof}
By induction on $k$, with appeal to Lemma \ref{lemma:multiply}.
\qed
\end{proof}

\begin{lemma} \label{lemma:add-suffix-closure}
If $(M,w,s) \rightarrow^k (M',w',s^*)$, then for any $s'$,
$(M,w,s @ s') \rightarrow^k (M',w',s^* @ s')$
\end{lemma}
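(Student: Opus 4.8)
The plan is to proceed by induction on $k$, the number of reduction steps, in exactly the style of Lemma~\ref{lemma:multiply-closure}, but appealing to the single-step suffix lemma (Lemma~\ref{lemma:add-suffix}) in place of the multiplication lemma. The underlying intuition is that appending a fixed suffix $s'$ to the end of the trace never interferes with an individual reduction step: each step either consumes nothing from the trace (the \tr{Red Pure} and \tr{Red Score} cases) or consumes exactly one element from its front (the \tr{Red Random} and \tr{Red Random Fail} cases), so the suffix $s'$ sits untouched at the far end throughout the whole reduction.

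For the base case $k=0$ we have $(M,w,s) = (M',w',s^*)$, hence $M'=M$, $w'=w$, and $s^*=s$, so that $(M,w,s@s') = (M',w',s^*@s')$ and the zero-step reduction holds trivially. For the inductive step, suppose $(M,w,s) \rightarrow^{k+1} (M',w',s^*)$; then there are $\hat M, \hat w, \hat s$ with
\[
(M,w,s) \rightarrow (\hat M, \hat w, \hat s) \rightarrow^{k} (M',w',s^*).
\]
First I would apply Lemma~\ref{lemma:add-suffix} to the leading single step, obtaining $(M,w,s@s') \rightarrow (\hat M, \hat w, \hat s @ s')$; then I would apply the induction hypothesis to the remaining $k$ steps, obtaining $(\hat M, \hat w, \hat s @ s') \rightarrow^{k} (M',w',s^*@s')$. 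Composing these gives $(M,w,s@s') \rightarrow^{k+1} (M',w',s^*@s')$, as required.

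I do not expect any genuine obstacle here: the argument is a routine induction. The only point deserving a moment's care is that the trace produced after the first step, $\hat s @ s'$, is exactly the trace to which the induction hypothesis is applied; this matches because Lemma~\ref{lemma:add-suffix} yields precisely $\hat s @ s'$ and the hypothesis is stated for an arbitrary intermediate trace. In particular, associativity of list concatenation is not even required, since the fixed suffix $s'$ is always appended at the very end and is carried through each step verbatim.
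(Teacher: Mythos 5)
Your proof is correct and takes essentially the same route as the paper, which proves this lemma by induction on $k$ with appeal to Lemma~\ref{lemma:add-suffix} for the leading step. Your write-up simply spells out the base case and the composition of the single-step lemma with the induction hypothesis, which is exactly the intended argument.
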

\begin{proof}
By induction on $k$, with appeal to Lemma \ref{lemma:add-suffix}.
\end{proof}

\begin{lemma} \label{lemma:small-step-comp}
If both $(M,1,s) \Rightarrow (M',w',[])$ and $(M',1,s') \Rightarrow (M'',w'',[])$,
then $(M,1,s@s') \Rightarrow (M'', w'\cdot w'', [])$.
\end{lemma}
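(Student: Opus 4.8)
The plan is to splice the two given reduction sequences together, first padding the trace of the first computation and then rescaling the weight of the second, using the two ``closure'' lemmas already established.

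First I would pass from the multi-step relation $\Rightarrow$ to the indexed relation $\rightarrow^k$. By the inductive definition of $\Rightarrow$, a configuration reduces to another via $\Rightarrow$ exactly when it does so via $\rightarrow^k$ for some $k \ge 0$. Hence the hypotheses supply numbers $k_1, k_2$ with $(M,1,s) \rightarrow^{k_1} (M',w',[])$ and $(M',1,s') \rightarrow^{k_2} (M'',w'',[])$. I would then append the suffix $s'$ to the first sequence: applying Lemma~\ref{lemma:add-suffix-closure} with suffix $s'$ to $(M,1,s) \rightarrow^{k_1} (M',w',[])$ yields $(M,1,s@s') \rightarrow^{k_1} (M',w',[]@s')$, and since $[]@s' = s'$ this is $(M,1,s@s') \rightarrow^{k_1} (M',w',s')$. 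Intuitively, the first computation never consumes the trailing samples $s'$, so they are carried along untouched and remain as the residual trace at the very point where $M'$ is reached.

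Next I would rescale the weight of the second sequence. Note that $w' \ge 0$: the initial weight is $1 \ge 0$, so by Lemma~\ref{lemma:w-greater-zero} the weight stays nonnegative throughout, giving $w' \ge 0$. Applying Lemma~\ref{lemma:multiply-closure} with $w^* = w'$ to $(M',1,s') \rightarrow^{k_2} (M'',w'',[])$ then gives $(M', 1 \cdot w', s') \rightarrow^{k_2} (M'', w'' \cdot w', [])$, that is, $(M',w',s') \rightarrow^{k_2} (M'', w' w'', [])$. Concatenating the two indexed sequences, $(M,1,s@s') \rightarrow^{k_1} (M',w',s') \rightarrow^{k_2} (M'', w' w'', [])$, so $(M,1,s@s') \rightarrow^{k_1+k_2} (M'', w' w'', [])$, which unfolds to $(M,1,s@s') \Rightarrow (M'', w' \cdot w'', [])$, as required.

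I do not expect any serious obstacle: the whole argument reduces to the two closure lemmas together with the nonnegativity of weights. The only point requiring a little care is ensuring that the first computation genuinely leaves $s'$ as an untouched suffix, so that the intermediate configuration has trace exactly $s'$ and the second computation can start from it unchanged. This is precisely what Lemma~\ref{lemma:add-suffix-closure} guarantees, and it is why the hypotheses insist the residual traces be empty ($[]$).
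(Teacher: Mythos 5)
Your proof is correct and follows essentially the same route as the paper's: both pad the first reduction sequence with the suffix $s'$ via Lemma~\ref{lemma:add-suffix-closure}, establish $w'\geq 0$ via Lemma~\ref{lemma:w-greater-zero}, and rescale the second sequence via Lemma~\ref{lemma:multiply-closure} before concatenating. The only difference is that you make the step-index bookkeeping ($\rightarrow^{k_1}$, $\rightarrow^{k_2}$) explicit, which the paper leaves implicit.
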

\begin{proof}
By Lemma \ref{lemma:add-suffix-closure}, $(M,1,s @ s') \Rightarrow (M',w',s')$
and by lemma \ref{lemma:w-greater-zero}, $w' \geq 0$.
Hence, by Lemma \ref{lemma:multiply-closure}, $(M',w',s') \Rightarrow (M'', w' w'', [])$,
which gives $(M,1,s@s') \Rightarrow (M'', w' w'', [])$.
%
\qed
\end{proof}

\begin{lemma} \label{lemma:eval-fail}
For any $E$, $E[\fail] \Downarrow^{[]}_1 \fail$.
\end{lemma}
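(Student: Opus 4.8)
The plan is to prove this by structural induction on the evaluation context $E$, whose grammar is $E \bnf \hole \mid E\,\termone \mid (\lambda\varone.\termone)\,E$, exhibiting in each case an explicit derivation of $E[\fail] \Downarrow^{[]}_1 \fail$ using the rules of Figure~\ref{fig-sampling}. The key observation guiding the case analysis is that the two ways the grammar can place the hole inside an application correspond exactly to the two failure-propagation rules \tr{Eval Appl Raise1} and \tr{Eval Appl Raise3}.

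For the base case $E = \hole$ we have $E[\fail] = \fail$, and since $\fail \in \gvalset$, the rule \tr{Eval Val} immediately yields $\fail \Downarrow^{[]}_1 \fail$. For the case $E = E'\,\termone$ we have $E[\fail] = E'[\fail]\,\termone$; the induction hypothesis gives $E'[\fail] \Downarrow^{[]}_1 \fail$, and applying \tr{Eval Appl Raise1} (with trace $[]$ and weight $1$) directly produces $E'[\fail]\,\termone \Downarrow^{[]}_1 \fail$.

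For the remaining case $E = (\lambda\varone.\termone)\,E'$ we have $E[\fail] = (\lambda\varone.\termone)\,E'[\fail]$. Here I would first note that $\lambda\varone.\termone \in \valset \subseteq \gvalset$, so \tr{Eval Val} gives $\lambda\varone.\termone \Downarrow^{[]}_1 \lambda\varone.\termone$; the induction hypothesis gives $E'[\fail] \Downarrow^{[]}_1 \fail$; and \tr{Eval Appl Raise3} combines these two premises into $(\lambda\varone.\termone)\,E'[\fail] \Downarrow^{[]@[]}_{1\cdot 1} \fail$. Since $[]@[] = []$ and $1 \cdot 1 = 1$, this is exactly the required judgement.

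I expect essentially no obstacle: the argument is a direct structural induction with a one-to-one correspondence between the three context shapes and the three rules used. The only bookkeeping is to verify that, in the compound rule \tr{Eval Appl Raise3}, the concatenation of the empty sub-traces is again empty and the product of the sub-weights is $1$ — both immediate. The single subtlety worth flagging explicitly is that $\fail \in \gvalset$, which is what licenses the use of \tr{Eval Val} in the base case and the evaluation of the abstraction in the last case.
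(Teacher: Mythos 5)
Your proof is correct and follows essentially the same structural induction on $E$ as the paper's own proof (base case via \tr{Eval Val}, the two application-context cases via the failure-propagation rules). One small point in your favour: in the case $E = (\lambda\varone.\termone)\,E'$ you correctly combine \tr{Eval Val} with \tr{Eval Appl Raise3}, whereas the paper's proof cites \tr{Eval Appl Raise2} there --- a rule whose premise requires the function position to evaluate to a real constant $c$, which $\lambda\varone.\termone$ does not; the paper's rule citation is evidently a typo, and your derivation is the right one.
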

\begin{proof}
By induction on the structure of $E$.

\begin{varitemize}
\item Base case: $E = [\ ]$, the result follows by \tr{Eval Val}.
\item Induction step:
\begin{varitemize}
\item Case $E = (\lambda x .L)\ E'$:
By induction hypothesis, $E'[\fail] \Downarrow^{[]}_1 \fail$,
and by \tr{Eval Appl Raise2}, $(\lambda x .L)\ E'[\fail] \Downarrow^{[]}_1 \fail$,
as required.

\item Case $E = E'\ L$:
By induction hypothesis, $E'[\fail] \Downarrow^{[]}_1 \fail$,
so by \tr{Eval Appl Raise1}, we get $E'[\fail]\ L \Downarrow^{[]}_1 \fail$.
\end{varitemize}
\end{varitemize}
\qed
\end{proof}
\begin{lemma}\label{lemma:eval-random-fail}
For any $E$, if $\pdf{D}(\vec{c},c) = 0$, then $E[D(\vec{c})] \Downarrow^{[c]}_0 \fail$.
\end{lemma}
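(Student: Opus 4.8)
The plan is to proceed by induction on the structure of the evaluation context $E$, exactly mirroring the proof of Lemma~\ref{lemma:eval-fail}. The only differences are that the base case fires \tr{Eval Random Fail} instead of \tr{Eval Val}, and that the trace $[c]$ and the weight $0$ must be threaded unchanged through the induction. Recall that contexts are generated by $E \bnf \hole \mid E\,M \mid (\lambda x.M)\,E$, so there is one base case and two inductive shapes to handle.

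For the base case $E = \hole$, the term $E[D(\vec{c})]$ is just $D(\vec{c})$, and since $\pdf{D}(\vec{c},c)=0$ by hypothesis, the rule \tr{Eval Random Fail} gives precisely $D(\vec{c}) \Downarrow^{[c]}_0 \fail$. In the inductive step, when $E = E'\,L$ the induction hypothesis yields $E'[D(\vec{c})] \Downarrow^{[c]}_0 \fail$; since $E[D(\vec{c})] = E'[D(\vec{c})]\,L$, rule \tr{Eval Appl Raise1} immediately produces $E[D(\vec{c})] \Downarrow^{[c]}_0 \fail$. When $E = (\lambda x.L)\,E'$, the induction hypothesis again gives $E'[D(\vec{c})] \Downarrow^{[c]}_0 \fail$; combining this with $\lambda x.L \Downarrow^{[]}_1 \lambda x.L$ (obtained from \tr{Eval Val}) via \tr{Eval Appl Raise3} yields $(\lambda x.L)\,E'[D(\vec{c})] \Downarrow^{[]@[c]}_{1\cdot 0} \fail$, which is exactly $E[D(\vec{c})] \Downarrow^{[c]}_0 \fail$ after noting $[]@[c]=[c]$ and $1\cdot 0=0$.

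I do not anticipate any substantial obstacle: the argument is a routine structural induction, entirely parallel to Lemma~\ref{lemma:eval-fail}. The only point requiring a little care is the left-application case $E=(\lambda x.L)\,E'$, where one must reach for \tr{Eval Appl Raise3} rather than \tr{Eval Appl Raise1}, explicitly evaluate the abstraction $\lambda x.L$ to itself by \tr{Eval Val}, and then check that the weight product $1\cdot 0$ and the trace concatenation $[]@[c]$ collapse to the required weight $0$ and trace $[c]$.
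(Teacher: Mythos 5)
Your proof is correct and follows essentially the same structural induction on $E$ as the paper's. In fact, your handling of the case $E = (\lambda x.L)\,E'$ is more careful than the paper's own text, which cites \tr{Eval Appl Raise2} (the rule requiring a \emph{constant} in function position) and writes the resulting trace as $[]$ — apparently slips for \tr{Eval Appl Raise3} and $[c]$, i.e., exactly the rule (combined with \tr{Eval Val} for the abstraction) and the trace/weight bookkeeping that you spell out.
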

\begin{proof}
By induction on the structure of $E$.

\begin{varitemize}
\item Base case: $E = [\ ]$,  the result follows by \tr{Eval Random Fail}.
\item Induction step:
\begin{varitemize}
\item Case $E = (\lambda x .L)\ E'$:
By induction hypothesis, $E'[D(\vec{c})] \Downarrow^{[c]}_0 \fail$,
and by \tr{Eval Appl Raise2}, $(\lambda x .L)\ E'[D(\vec{c})] \Downarrow^{[]}_0 \fail$,
as required.

\item Case $E = E'\ L$:
By induction hypothesis, $E'[D(\vec{c})] \Downarrow^{[c]}_0 \fail$,
so by \tr{Eval Appl Raise1}, we get $E'[D(\vec{c})]\ L \Downarrow^{[]}_0 \fail$.
\end{varitemize}
\end{varitemize}
\qed
\end{proof}
The following directly relates the small-step and big-step semantics, saying
that the latter is invariant on the former:
\begin{lemma} \label{lemma:split}
If $(M,1,s) \rightarrow (M', w, [])$ and $M' \Downarrow^{s'}_{w'}G$,
then $M \Downarrow^{s@s'}_{w\cdot w'}G$.
\end{lemma}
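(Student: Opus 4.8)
The plan is to case-split on the small-step rule deriving $(M,1,s)\rightarrow(M',w,[])$ and, in each case, graft the given big-step derivation of $M'$ as a subderivation of a big-step derivation for $M$ that additionally accounts for the redex contracted by the small step. Since the result trace is empty, \tr{Red Pure} and \tr{Red Score} force $s=[]$, while \tr{Red Random} and \tr{Red Random Fail} force $s=[c]$ for a single real $c$; in every case $M=E[R]$ for the unique context/redex pair given by Lemma~\ref{lemma:unique}, and the relevant reduct is obtained via Lemma~\ref{lemma:remove-context-det}.

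The technical core is a big-step context-congruence claim, which I would prove separately by induction on $E$: if $R$ and $N$ are closed terms, $s_0$ a trace and $w_0\ge 0$ a weight such that $N\Downarrow^{t}_{u}G$ implies $R\Downarrow^{s_0@t}_{w_0\cdot u}G$ for all $t,u,G$, then for every evaluation context $E$ the same prefixing property holds for $E[N]$ and $E[R]$, that is, $E[N]\Downarrow^{t}_{u}G$ implies $E[R]\Downarrow^{s_0@t}_{w_0\cdot u}G$. The base case $E=\hole$ is exactly the hypothesis. For $E=E'\,L$ and $E=(\lambda x.L)\,E'$ I would invert the big-step derivation of $E[N]$, which must end in one of \tr{Eval Appl}, \tr{Eval Appl Raise1}, \tr{Eval Appl Raise2}, \tr{Eval Appl Raise3}, apply the induction hypothesis to the premise concerning $E'[N]$, and then reassemble the very same rule; associativity of $@$ and of weight multiplication turn the prefix $(s_0,w_0)$ on the subderivation into a prefix on the whole conclusion.

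With this claim in hand, each non-fail case of the lemma reduces to a one-step base absorption for the contracted redex $R$ with reduct $N$ and prefix $(s_0,w_0)$. Concretely: \tr{Red Random} uses \tr{Eval Random} with $N=c$ and $(s_0,w_0)=([c],\pdf{\distone}(\vec c,c))$; \tr{Red Score} uses \tr{Eval Score} with $N=\ttrue$ and $(s_0,w_0)=([],c)$; and the non-fail sub-cases of \tr{Red Pure} use \tr{Eval Appl} for $\beta$-reduction, \tr{Eval If True}/\tr{Eval If False} for conditionals, \tr{Eval Prim} for $g(\vec c)\mapsto\intfun{g}(\vec c)$, and \tr{Eval Fail} for an erroneous redex, all with $(s_0,w_0)=([],1)$. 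Every base claim is immediate from the single indicated rule, using the inversion of \tr{Eval Val} that a value $V$ satisfies $V\Downarrow^{t}_{u}G$ only when $t=[]$, $u=1$, $G=V$ (so the hypotheses of \tr{Eval Appl} on operator and argument are met by $\lambda x.P\Downarrow^{[]}_1\lambda x.P$ and $V\Downarrow^{[]}_1V$).

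Two fail situations fall outside the congruence claim and are handled directly. The fail-propagation sub-case of \tr{Red Pure} ($R=\fail$, $E\ne\hole$, $M'=\fail$) needs only that $\fail\Downarrow^{s'}_{w'}G$ forces $s'=[]$, $w'=1$, $G=\fail$, after which $M=E[\fail]\Downarrow^{[]}_1\fail$ is exactly Lemma~\ref{lemma:eval-fail}; and \tr{Red Random Fail} (where $M'=E[\fail]$ and $w=0$) first requires strengthening Lemma~\ref{lemma:eval-fail} to an inversion statement—by induction on $E$, using that $E'[\fail]$ never evaluates to a value to exclude every application rule except the one propagating $\fail$—to conclude that $E[\fail]\Downarrow^{s'}_{w'}G$ forces $s'=[]$ and $G=\fail$, whence $w\cdot w'=0$ and the goal $E[\distone(\vec c)]\Downarrow^{[c]}_0\fail$ is exactly Lemma~\ref{lemma:eval-random-fail}. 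I expect the main obstacle to be the context-congruence claim itself, and within it the context $(\lambda x.L)\,E'$, where one must use that the operator $\lambda x.L$ evaluates only via \tr{Eval Val} to itself in order to exclude the \tr{Eval Appl Raise1} and \tr{Eval Appl Raise2} inversions, while keeping the bookkeeping of concatenated traces and multiplied weights consistent across all four application rules.
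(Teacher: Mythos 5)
Your proof is correct, and its mathematical core coincides with the paper's: an induction over the structure of the evaluation context in which each application node is handled by inverting the four big-step application rules (using that a $\lambda$-abstraction evaluates only to itself via \tr{Eval Val} to kill the \tr{Eval Appl Raise1}/\tr{Eval Appl Raise2} branches) and reassembling the same rule, with redexes discharged by the matching big-step axioms and the fail cases by Lemmas~\ref{lemma:eval-fail} and~\ref{lemma:eval-random-fail}. The difference is the factoring. The paper runs a single structural induction on $M$ whose induction hypothesis is the lemma statement itself; inside it, the small step is pushed under the context with Lemmas~\ref{lemma:change-context} and~\ref{lemma:remove-context}, and \tr{Red Random Fail} is absorbed by that same induction with no special treatment. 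You instead extract the redex-level data $(s_0,w_0)$ first, by a case split on the small-step rule, and run the context induction as a standalone, purely big-step prefixing lemma; this buys a congruence statement that mentions only $\Downarrow$ (no small-step reasoning threaded through the induction), at the price of extra top-level bookkeeping. One simplification you missed: \tr{Red Random Fail} actually fits your own congruence pattern, with $R=\distone(\vec{c})$, $N=\fail$ and prefix $(s_0,w_0)=([c],0)$ --- the required absorption fact is exactly \tr{Eval Random Fail}, since $\fail$ evaluates only via \tr{Eval Val}, and the zero weight is harmless because your claim only multiplies weights --- so the strengthened inversion form of Lemma~\ref{lemma:eval-fail} is not needed for that case; only the fail-collapse subcase of \tr{Red Pure} (where $E[\fail]\detred\fail$ with $E$ proper) genuinely falls outside the scheme, and there your direct argument matches the paper's.
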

\begin{proof}
By induction on the structure of $M$.

If $M = E[\fail]$ for some $E \neq [\ ]$, the result follows immediately by Lemma 
\ref{lemma:eval-fail}. Now, let us assume that $M \neq E[\fail]$.
\begin{varitemize}
\item Base case: $M = R$:
\begin{varitemize}
\item If $M = g(\vec{c})$ or  $M = c\ V$
or $M = T$, then  
$M$ reduces to a generalized value in 1 step, so the result holds trivially (by one of the evaluation rules).
\item Case $M = \ite{\ttrue}{M_2}{M_3}$: We have 
$(\ite{\ttrue}{M_2}{M_3}, 1, []) \red (M_2, 1, [])$.
By assumption, $M_2 \Downarrow^{s'}_{w'} G$. Thus, the desired result holds by \tr{Eval If True}.
\item Case $M = \ite{\tfalse}{M_2}{M_3}$: analogous to the previous case.
\item Case $M = (\lambda x . N_1)\ V$: We have 
$((\lambda x . N_1)\ V, 1, []) \red (N_1 \{V/x\}, 1, [])$.
Since $ (\lambda x . N_1)$ and $V$ are already values and $N_1 \{V/x\} \Downarrow^{s'}_{w'} G$
by assumption, \tr{Eval Appl} yields $ (\lambda x . N_1)\ V  \Downarrow^{s'}_{w'} G$.
\item Case $M = \distone(\vec{c})$: $(M,1,s) \rightarrow (M', w, [])$ must have been
derived with \tr{Red Random} or \tr{Red Random Fail}. In the former case,
$s = [c]$, $M' = c$, and $w = \pdf{D}(\vec{c},c)$, where $c > 0$.
The second assumption then takes the form $c \Downarrow^{[]}_1 c$, so the
required result follows from \tr{Eval Random}. The \tr{Red Random Fail} case is
similar, with the result following from \tr{Eval Random Fail}.
\item Case $M = \mathtt{score}(c)$, $c\in (0,1]$:
 $(M,1,s) \rightarrow (M', w, [])$ must have been
derived with \tr{Red Score}.
so
$M' = \ttrue$,
$w = c$ and $s = []$. Thus, the result then follows from \tr{Eval Score}.

\end{varitemize}
\item Induction step:
$M = E[R]$, $E \neq [\ ]$, $R \neq \fail$:
\begin{varitemize}
\item Case $E = (\lambda x . L)\ E'$: $M = (\lambda x . L)\ E'[R]$.
%

We have
$((\lambda x . L)\ E'[R], 1, s) \red ((\lambda x . L)\ E'[N], w, [])$ for some $N$,
so by lemmas \ref{lemma:change-context} and \ref{lemma:remove-context},
$(E'[R], 1, s) \red (E'[N], w, [])$. By assumption, $(\lambda x . L)\ E'[N] \Downarrow^{s'}_{w'} G$.

\begin{varitemize}
\item If $(\lambda x . L)\ E'[N] \Downarrow^{s'}_{w'} G$ was derived with \tr{Eval Appl}, then 
$E'[N] \Downarrow^{s_1}_{w_1} V$ and $(\lambda x . L)\ V \Downarrow^{s_2}_{w_2} G$,
where $ w' = w_1 w_2$ and $s' = s_1 @ s_2$. By induction hypothesis, 
$E'[R] \Downarrow^{s@s_1}_{w w_1} V$, so \tr{Eval Appl} gives $(\lambda x . L)\ E'[R] \Downarrow^{s@s'}_{w w'} G$,
as required.
\item If $(\lambda x . L)\ E'[N] \Downarrow^{s'}_{w'} G$ was derived with \tr{Eval Appl Raise3},
then $G = \fail$ and $E'[N] \Downarrow^{s'}_{w'} \fail$. By induction hypothesis,
$E'[R] \Downarrow^{s@s'}_{w w'} \fail$, so by \tr{Eval Appl Raise3}, 
$(\lambda x . L)\ E'[R] \Downarrow^{s@s'}_{w w'} \fail$
\end{varitemize}

\item Case $E = E'\ L$: $M = E'[M^*]\ L$:

We have
$(E'[R]\ L, 1, s) \red (E'[N]\ L, w, [])$ for some $N$,
so by lemmas \ref{lemma:change-context} and \ref{lemma:remove-context},
$(E'[R], 1, s) \red (E'[N], w, [])$. By assumption, $E'[N]\ L \Downarrow^{s'}_{w'} G$.

\begin{varitemize}
\item If $E'[N]\ L \Downarrow^{s'}_{w'} G$ was derived with \tr{Eval Appl}, then 
$E'[N] \Downarrow^{w_1}_{s_1} (\lambda x . N')$,
$L~\Downarrow^{w_2}_{s_2}~V$ and $N'[V/x] \Downarrow^{w_3}_{s_3} G$,
where $ w' = w_1 w_2 w_3$ and $s' = s_1 @ s_2 @ s_3$. 
By induction hypothesis, 
$E'[R] \Downarrow^{s@s_1}_{w w_1} (\lambda x . N')$, so \tr{Eval Appl} gives $E'[R]\ L \Downarrow^{s@s'}_{w w'} G$,
as required.

\item If $E'[N]\ L \Downarrow^{s'}_{w'} G$ was derived with \tr{Eval Appl Raise1},
then $G = \fail$ and $E'[N] \Downarrow^{s'}_{w'} \fail$. By induction hypothesis,
$E'[R] \Downarrow^{s @ s'}_{w w'} \fail$, so by  \tr{Eval Appl Raise1}, $E'[R]\ L \Downarrow^{s@s'}_{w w'} \fail$

\item If $E'[N]\ L \Downarrow^{s'}_{w'} G$ was derived with \tr{Eval Appl Raise3}, then 
$E'[N] \Downarrow^{s_1}_{w_1} (\lambda x . N')$ and
$L \Downarrow^{s_2}_{w_2} \fail$,
where $ w' = w_1 w_2$ and $s' = s_1 @ s_2$. 
By induction hypothesis, 
$E'[R] \Downarrow^{s@ s_1}_{w w_1} (\lambda x . N')$, so \tr{Eval Appl Raise3} 
gives $E'[R]\ L \Downarrow^{s@s'}_{w w'} \fail$, as required.

\item If $E'[N]\ L \Downarrow^{s'}_{w'} G$ was derived with \tr{Eval Appl Raise1},
then $G = \terror$ and $N_1' \Downarrow^{s'}_{w'} c$. By induction hypothesis,
$E'[R] \Downarrow^{s@s'}_{w w'} c$, so by  \tr{Eval Appl Raise1},
$E'[R]\ L \Downarrow^{s@s'}_{w w'} \terror$.
\end{varitemize}
\end{varitemize}


\end{varitemize}
\qed
\end{proof}

%
%

Finally, we have all the ingredients to show that the small-step and
the big-step sampling-based semantics both compute the same traces with the
same weights.
\begin{theorem} \label{thm:sample-small-big-eq}
$M \Downarrow^s_w G$ if and only if $(M,1,s) \Rightarrow (G, w, [])$.
\end{theorem}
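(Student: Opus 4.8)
The plan is to prove the two implications separately, both by induction, with the heavy lifting done by the bridge Lemma~\ref{lemma:split} in the ``if'' direction and by the congruence and composition Lemmas~\ref{lemma:congr-closure}, \ref{lemma:congr-closure-exc}, and~\ref{lemma:small-step-comp} in the ``only if'' direction.

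For the ``if'' direction (small-step implies big-step) I would argue by induction on the number $k$ of steps in $(M,1,s) \Rightarrow (G,w,[])$. The base case $k=0$ forces $M = G \in \gvalset$, $s = []$, and $w = 1$, so \tr{Eval Val} applies. For the inductive step, write $(M,1,s) \rightarrow (M^*,w^*,s^*) \Rightarrow (G,w,[])$ with the tail reduction of length $k-1$. By Lemma~\ref{lemma:remove-suffix} there is a prefix $s_1$ with $s = s_1 @ s^*$ and $(M,1,s_1) \rightarrow (M^*,w^*,[])$, so the single step consumes exactly its part of the trace, which is the shape required by Lemma~\ref{lemma:split}. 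The subtlety is that the tail reduction starts from weight $w^*$ rather than $1$. If $w^* > 0$ (every rule except \tr{Red Random Fail}), I rescale using Lemma~\ref{lemma:multiply-closure} with factor $1/w^*$ to get $(M^*,1,s^*) \Rightarrow (G,w/w^*,[])$, apply the induction hypothesis to obtain $M^* \Downarrow^{s^*}_{w/w^*} G$, and conclude $M \Downarrow^{s}_{w} G$ by Lemma~\ref{lemma:split}, since $w^* \cdot (w/w^*) = w$. The degenerate case $w^* = 0$ can only come from \tr{Red Random Fail}, where $M = E[\distone(\vec{c})]$ with $\pdf{\distone}(\vec{c},c)=0$ and $M^* = E[\fail]$; here the tail reduction consumes no further trace and can only reach $(\fail,0,[])$, forcing $s^* = []$, $G = \fail$, and $w = 0$, so the conclusion follows directly from Lemma~\ref{lemma:eval-random-fail}.

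For the ``only if'' direction (big-step implies small-step) I would induct on the derivation of $M \Downarrow^s_w G$ and inspect the last rule. The axioms \tr{Eval Val}, \tr{Eval Random}, \tr{Eval Random Fail}, \tr{Eval Prim}, \tr{Eval Score}, and \tr{Eval Fail} each correspond to a zero- or one-step reduction, using \tr{Red Pure} for the deterministic redexes and the matching small-step rule otherwise. The interesting cases are the application rules. For \tr{Eval Appl} the induction hypotheses give $(M,1,s_1) \Rightarrow (\lambda x.P, w_1, [])$, $(N,1,s_2) \Rightarrow (V, w_2, [])$, and $(P[V/x],1,s_3) \Rightarrow (G,w_3,[])$; I lift the first two into the evaluation contexts $\hole\,N$ and $(\lambda x.P)\,\hole$ via Lemma~\ref{lemma:congr-closure}, prepend the $\beta$-step $((\lambda x.P)V,1,[]) \rightarrow (P[V/x],1,[])$ (rule \tr{Red Pure}) to the third, and stitch the three segments together with Lemma~\ref{lemma:small-step-comp}, which concatenates the traces and multiplies the weights exactly as the big-step rule prescribes. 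The raise rules \tr{Eval Appl Raise1} and \tr{Eval Appl Raise3} are handled the same way but with Lemma~\ref{lemma:congr-closure-exc} for the failing subterm, while \tr{Eval Appl Raise2} additionally uses that $c\,N$ is an erroneous redex reducing to $\fail$ by \tr{Red Pure}. The conditionals \tr{Eval If True} and \tr{Eval If False} reduce in one \tr{Red Pure} step and then invoke the induction hypothesis through Lemma~\ref{lemma:small-step-comp}.

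The main obstacle is the weight bookkeeping in the ``if'' direction: a single small step carries only the local weight contribution, so to reuse the induction hypothesis one must normalise the running weight back to $1$, and the rescaling breaks down precisely when a \tr{Red Random Fail} step zeroes the weight. Isolating that degenerate case (where the computation is already doomed to fail) and dispatching it through Lemma~\ref{lemma:eval-random-fail} rather than through the generic inductive machinery is the one place where the argument needs genuine care; everything else is routine rule-by-rule bookkeeping supported by the congruence and composition lemmas.
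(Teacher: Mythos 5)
Your proposal is correct and follows essentially the same route as the paper's proof: the big-step-to-small-step direction is the same rule-by-rule induction using Lemmas~\ref{lemma:congr-closure}, \ref{lemma:congr-closure-exc}, and~\ref{lemma:small-step-comp}, and the converse is the same induction on the length of the reduction sequence, rescaling by $1/w^*$ via Lemma~\ref{lemma:multiply-closure}, splitting the trace with Lemma~\ref{lemma:remove-suffix}, and concluding with Lemma~\ref{lemma:split}. Your isolation of the \tr{Red Random Fail} case (handled via Lemma~\ref{lemma:eval-random-fail}) is exactly the paper's treatment of the zero-weight degeneracy.
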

\begin{proof}\mbox{ }
The left to right implication is an induction on the derivation of $M
\Downarrow^s_w G$.
  The most interesting case is definitely the following:
  \begin{center}
    \restaterule{Eval Appl}
                {M \Downarrow^{s_1}_{w_1} \lambda x . M' \qquad
                  N \Downarrow^{s_2}_{w_2} V\qquad
                  M'[V/x] \Downarrow^{s_3}_{w_3} G}
                {M\ N \Downarrow^{s_1@s_2@s_3}_{w_1\cdot w_2\cdot w_3} G}
  \end{center}
  By induction hypothesis, $(M,1,s_1) \Rightarrow (\lambda x .  M',
  w_1, [])$, $(N,1,s_2) \Rightarrow (V, w_2, [])$ and
  $(M'[V/x],1,s_3) \Rightarrow (G,w_3, [])$.  By Lemma
  \ref{lemma:congr-closure} (for $E = [\ ]\ N$), $(M\ N,1,s_1)
  \Rightarrow ((\lambda x .  M')\ N, w_1, [])$.  By Lemma
  \ref{lemma:congr-closure} again (for $E = (\lambda x .
  M')\ [\ ]$), $((\lambda x .  M')\ N,1,s_2) \Rightarrow ((\lambda x
  .  M')\ V, w_2, [])$.  By Lemma \ref{lemma:small-step-comp},
  $(M\ N,1,s_1@s_2) \Rightarrow ((\lambda x .  M')\ V, w_1 w_2, [])$
  By \tr{Red Pure}, $((\lambda x .  M')\ V, w_1\cdot w_2, [])
  \rightarrow (M'[V/x],w_1\cdot w_2,[])$, which implies
  $(M\ N,1,s_1@s_2) \Rightarrow ((\lambda x .  M')\ V, w_1 w_2, [])$
  Thus, the desired result follows by Lemma \ref{lemma:small-step-comp}.
  \begin{varitemize}
  \item Case:
    \restaterule{Eval Val}
                {G \in \gvalset}
                {G \Downarrow^{[]}_1 G }
                
                Here, $M = V$, $w = 1$ and $s=[]$. so $(M,w_0,s_0)$ reduces to 
                $(V, w_0, s_0)$ in 0 steps by the small-step semantics.
  \item Case:
    \restaterule{Eval Random}
                            {w = \pdf{\distone }(\vec{c}, c) \\ w > 0}
                            {\distone (\vec{c}) \Downarrow^{[c]}_{w} c}
                            
                  By \tr{Red Random} (taking $E = [\ ]$), $(\distone (\vec{c}), 1, [c]) \rightarrow (c, w, [])$.
  \item Case:
      \restaterule{Eval Random Fail}
                  {\pdf{\distone }(\vec{c}, c) = 0}
                  {\distone (\vec{c}) \Downarrow^{[c]}_{0} \fail}
                  
                  By \tr{Red Random Fail} (taking $E = [\ ]$), $(\distone (\vec{c}), 1, [c]) \rightarrow (\fail, 0, [])$.
                  
    \item Case:
      \restaterule{Eval Prim}
                  {}
                  {g(\vec{c}) \Downarrow^{[]}_{1} \intfun{g}(\vec{c})}
                  
                  By \tr{Red Pure} (taking $E = [\ ]$), $(g(\vec{c}), 1, [])  \rightarrow (\intfun{g}(\vec{c}), 1, [])$.
                  
    \item Case:
      \restaterule{Eval Score}
                  {c \in (0,1]}
                    {\mathtt{score}(c) \Downarrow^{[]}_c \ttrue}
                    
                    By \tr{Red Score} (taking $E = [\ ]$), $(\distone (\vec{c}), 1, []) \rightarrow (c, w, [])$.
                    
                    %
    \item Case:
      \restaterule{Eval Appl}
                  {M \Downarrow^{s_1}_{w_1} \lambda x . M' \\
                    N \Downarrow^{s_2}_{w_2} V\\
                    M'[V/x] \Downarrow^{s_3}_{w_3} G}
                  {M\ N \Downarrow^{s_1@s_2@s_3}_{w_1\cdot w_2\cdot w_3} G}
                  
                  By induction hypothesis, 
                  $(M,1,s_1) \Rightarrow (\lambda x .  M', w_1, [])$,
                  $(N,1,s_2) \Rightarrow (V, w_2, [])$
                  and $(M'[V/x],1,s_3) \Rightarrow (G,w_3, [])$.
                  
                  By Lemma \ref{lemma:congr-closure} (for $E = [\ ]\ N$), 
                  $(M\ N,1,s_1) \Rightarrow ((\lambda x .  M')\ N, w_1, [])$.
                  
                  By Lemma \ref{lemma:congr-closure} again (for $E = (\lambda x .  M')\ [\ ]$), 
                  $((\lambda x .  M')\ N,1,s_2) \Rightarrow ((\lambda x .  M')\ V, w_2, [])$.
                  
                  By Lemma \ref{lemma:small-step-comp},
                  $(M\ N,1,s_1@s_2) \Rightarrow ((\lambda x .  M')\ V, w_1 w_2, [])$
                  
                  By \tr{Red Pure}, $((\lambda x .  M')\ V, w_1\cdot w_2, []) \rightarrow (M'[V/x],w_1\cdot w_2,[])$,
                  which implies $(M\ N,1,s_1@s_2) \Rightarrow ((\lambda x .  M')\ V, w_1 w_2, [])$
                  
                  Thus, the desired result follows by Lemma \ref{lemma:small-step-comp}.
                  
    \item Case:
      \restaterule{Eval Appl Raise1}
                  {M \Downarrow^{s}_{w} \fail }
                  {M\ N \Downarrow^s_{w} \fail}
                  
                  By induction hypothesis, $(M, 1, s) \Rightarrow (\fail, w, [])$.
                  
                  %
                  
                  By Lemma \ref{lemma:congr-closure-exc} 
                  (with $E = [\ ]\ N) $), $(M\ N, 1, s) \Rightarrow (\fail, w, [])$.
                  
                  
                  
    \item Case:
      \restaterule{Eval Appl Raise2}
                  {M \Downarrow^{s}_{w} c }
                  {M\ N \Downarrow^s_{w} \fail}
                  
                  By induction hypothesis, $(M, 1, s) \Rightarrow (c, w, [])$.
                  By Lemma \ref{lemma:congr-closure} (with $E = [\ ]\ N) $), $(M\ N, 1, s) \Rightarrow (c\ N, w, [])$.
                  
                  By \tr{Red Pure},
                  $(c\ N, w, []) \rightarrow (\fail, w, [])$.
                  
                  Thus, $(M\ N, 1, s) \Rightarrow (\fail, w, [])$.
                  
    \item Case:
      \restaterule{Eval Appl Raise3}
                  {M \Downarrow^{s_1}_{w_1} \lambda x . M' \\
                    N \Downarrow^{s_2}_{w_2} \fail}
                  {M\ N \Downarrow^{s_1@s_2}_{w_1\cdot w_2} \fail}
                  
                  By induction hypothesis, 
                  $(M,1,s_1) \Rightarrow (\lambda x .  M', w_1, [])$,
                  and $(N,1,s_2) \Rightarrow (\fail, w_2, [])$.
                  
                  By Lemma \ref{lemma:congr-closure},
                  $(M\ N,1,s_1) \Rightarrow ((\lambda x . M')\ N, w_1, [])$.
                  
                  By Lemma \ref{lemma:congr-closure-exc},
                  $((\lambda x.M')\ N,1, s_2) \Rightarrow (\fail, w_2, [])$.
                  
                  Thus, by Lemma \ref{lemma:small-step-comp},
                  $(M\ N,1 ,s_1 @ s_2) \Rightarrow (\fail, w_1\cdot w_2, [])$.

                  %
                  %
                  %
                  %
                  
   \item Case:
     \restaterule{Eval If True}
                 { M_2 \Downarrow^{s}_{w} G}
                 {\ite{\ttrue}{M_2}{M_3} \Downarrow^{s}_{w} G}
                 
                 By \tr{Red Pure} (taking $E = [\ ]$), $(\ite{\ttrue}{M_2}{M_3}, 1, s) \rightarrow (M_2, 1, s)$.
                 By induction hypothesis, $(M_2, 1, s) \Rightarrow (G, w, [])$. 
                 
                 Hence $(\ite{1}{M_2}{M_3}, 1, s) \Rightarrow (G, w, [])$
               \item Case \tr{Eval If False}: analogous to \tr{Eval If True}
                 
               \item Case:
                 \restaterule{Eval Fail}
                             {}
                             {T \Downarrow^{[]}_{1} \fail}
                             
                             By \tr{Red Pure}, $(T, 1, []) \rightarrow (\fail, 1, [])$.
  \end{varitemize}
    The right to left implication can be proved by an induction on the
    length of the derivation of $(M,1,s) \Rightarrow (G,w, [])$.
    \begin{varitemize}
    \item Base case: If $(M,1,s) = (G,w, [])$, then $M \Downarrow^w_s G$ by \tr{Eval Val}.
    \item Induction step:
      assume $(M,1,s) \rightarrow (M', w', s')\rightarrow^{n} (G,w, [])$. 
      %
      If $(M,1,s) \rightarrow (M', w', s')$ was derived with \tr{Red Random Fail}, then
      $M = E[D(\vec{c})]$, $n=1$, $s = [c]$, $G = \fail$ and $w = w' = \pdf{D}(\vec{c},c) = 0$.
      By Lemma \ref{lemma:eval-random-fail}, we have $M \Downarrow^[c]_0 \fail$, as required.
      
      Otherwise, by Lemma \ref{lemma:w-greater-zero-strict}, $w' > 0$, so
      by Lemma \ref{lemma:multiply-closure}, $(M', 1, s') \rightarrow^n (G, w / w', [])$.
      By induction hypothesis, $M' \Downarrow^{s'}_{w / w'} G$.
      By Lemma \ref{lemma:remove-suffix}, $(M,1,s^*) \rightarrow (M', w', [])$,
      where $s = s^* @ s'$.
      
      Therefore, by Lemma \ref{lemma:split}, $M \Downarrow^{s^* @ s'}_{w} G$, and so
      $M \Downarrow^{s}_{w} G$.\qed
    \end{varitemize}
\end{proof}

As a corollary of Theorem~\ref{thm:sample-small-big-eq} and Lemma~\ref{lemma:small-steps-determined} we obtain:
\begin{lemma} \label{lemma:big-step-determined}
If $M \Downarrow^s_w G$ and $M \Downarrow^{s}_{w'} G'$ then $w=w'$ and $G=G'$.
\end{lemma}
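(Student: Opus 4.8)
The plan is to read this off directly as a corollary, using the big-step/small-step equivalence to transport both hypotheses into the small-step world, where determinism has already been established. Concretely, I would first apply the forward direction of Theorem~\ref{thm:sample-small-big-eq} to each hypothesis: from $M \Downarrow^s_w G$ I obtain $(M,1,s) \Rightarrow (G, w, [])$, and from $M \Downarrow^{s}_{w'} G'$ I obtain $(M,1,s) \Rightarrow (G', w', [])$.

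The key observation is that both multi-step reductions start from exactly the same configuration $(M,1,s)$: the term, the initial weight $1$, and the trace $s$ all coincide (the shared trace $s$ in the statement is precisely what makes this work). Both reductions terminate in a generalized value with the empty residual trace $[]$. Hence Lemma~\ref{lemma:small-steps-determined} applies verbatim, yielding $G = G'$, $w = w'$, and $[] = []$. The last equality is vacuous, so we are left with exactly the two conclusions $w = w'$ and $G = G'$.

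I expect no real obstacle here, since the determinism of sampling-based evaluation (once the trace is fixed) was the whole point of Lemma~\ref{lemma:small-steps-determined}, and Theorem~\ref{thm:sample-small-big-eq} is precisely the bridge that lets that determinism be inherited by the big-step relation. The only point worth double-checking is that the two big-step judgements genuinely share the same starting trace $s$ rather than merely producing the same value; since the statement fixes $s$ in both hypotheses, this is immediate and no weight-scaling or trace-splitting argument is needed.

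\begin{proof}
By Theorem~\ref{thm:sample-small-big-eq}, $M \Downarrow^s_w G$ gives $(M,1,s) \Rightarrow (G, w, [])$ and $M \Downarrow^{s}_{w'} G'$ gives $(M,1,s) \Rightarrow (G', w', [])$. Both reductions start from the common configuration $(M,1,s)$, so Lemma~\ref{lemma:small-steps-determined} yields $G = G'$ and $w = w'$.
\qed
\end{proof}
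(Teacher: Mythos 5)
Your proof is correct and matches the paper's own argument exactly: the paper derives this lemma precisely as a corollary of Theorem~\ref{thm:sample-small-big-eq} and Lemma~\ref{lemma:small-steps-determined}, transporting both big-step judgements to small-step reductions from the common configuration $(M,1,s)$ and invoking determinism of multi-step reduction. Nothing to add.
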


At this point, we have defined intuitive operational semantics based
on the consumption of an explicit trace of randomness, but we have defined
no distributions.  In the rest of this section we show
that this semantics indeed associates a sub-probability distribution
with each term. Before proceeding, however, we need some measure theory.

\subsection{Some Measure-Theoretic Preliminaries.}\label{sect:smtp}
We begin by recapitulating some standard definitions for
sub-probability distributions and kernels over metric spaces.
For a more complete, tutorial-style introduction to measure theory,
see \citet{billingsley95}, \citet{panangaden2009labelled}, or another standard textbook or
lecture notes.

A $\sigma$-algebra (over a set $X$) is a set $\mAlg$ of subsets of $X$ that contains $\0$, and is closed
under complement and countable union (and hence is closed under
countable intersection). Let the $\sigma$-algebra \emph{generated} by
$S$, written $\sigma(S)$, be the set $\sigma(S)$ 
  $S\subseteq\Powerset{X}$, that is the least $\sigma$-algebra
over $\cup S$ that is a superset of $S$.
  In other words, $\sigma(S)$ is the least set such that:
  \begin{enumerate}
  \item we have $S \subseteq \sigma(S)$ and $\0 \in \sigma(S)$;
    and 
  \item $((\cup S) \setminus A) \in \sigma(S)$ if $A \in \sigma(S)$;
    and
  \item $\cup_{i \in \Nat} A_i \in \sigma(S)$ if each
    $A_i \in \sigma(S)$.
  \end{enumerate}
  An equivalent definition is that
  $\sigma(S) \deq \bigcap \Set{\Sigma \mid \mbox{$S \subseteq \Sigma$
      and $\Sigma$ is a $\sigma$-algebra}}$.

We write $\RealInf$ for $[0,\infty]$ and $\RRp$ for the interval
$[0,1]$.
A \emph{metric space} is a set $\setone$ with a symmetric
\emph{distance function} $\delta:\setone\times\setone\to\RealInf$ that
satisfies the triangle inequality $\delta(x,z)\le\delta(x,y) +
\delta(y,z)$ and the axiom $\delta(x,x)=0$.  We write
$\mathbf{B}(x,r)\deq\Set{y\mid\delta(x,y)< r}$ for the open ball
around~$x$ of radius $r$. We equip $\RealInf$ and $\RRp$ with the
standard metric $\delta(x,y)=\Abs{x-y}$, and products of metric spaces
with the Manhattan metric (e.g., $\delta((x_{1},x_{2}), (y_{1},y_{2}))
= \delta (x_{1},y_{1}) + \delta(x_{2},y_{2})$).
The \emph{Borel} $\sigma$-\emph{algebra} on a metric space
$(\setone,\delta)$ is $\Borel(\setone,
\delta)\deq\sigma(\Set{\mathbf{B}(x,r)\mid x\in\setone\land r>0})$.
We often omit the arguments to $\Borel$ when they are clear from the
context.

A \emph{measurable space} is a pair $(X,\mAlg)$ where $X$ is a set of
possible outcomes, and $\mAlg \subseteq \Powerset{X}$ is a
$\sigma$-algebra of \emph{measurable sets}. As an example, consider the extended positive
real numbers $\RealInf$ equipped with the Borel $\sigma$-algebra
$\mathcal{R}$, i.e.~the set
$\sigma(\Set{(a,b)\mid a,b\ge 0})$ which is the smallest
$\sigma$-algebra containing all open (and closed) intervals. We can
create finite products of measurable spaces by iterating the construction
$(X, \mAlg)\times(X', \mAlg') = (X\times X', \sigma(A\times B\mid A\in
\mAlg\land B\in\mAlg'))$. If $(X, \mAlg)$ and $(X', \mAlg')$ are measurable spaces, then the
function $f : X \to X'$ is \emph{measurable} if and only if for all
$A\in\mAlg'$, $f^{-1}(A)\in\mAlg$, where the \emph{inverse image}
$f^{-1} : \Powerset{X'} \to \Powerset{X}$ is given by $f^{-1}(A) \deq
\{x \in X \mid f(x) \in A\}$.

A \emph{measure} $\mu$ on $(X,\mAlg)$ is a function from $\Sigma$ to
$\RealInf$, that is (1) zero on the empty set, that is,
$\mu(\emptyset)=0$, and (2) countably additive, that is,
$\mu(\cup_iA_i)=\Sigma_i\mu(A_i)$ if $A_1,A_2,\dots$ are pair-wise
disjoint.  The measure $\mu$ is called a \emph{(sub-probability) distribution} if $\mu(X) \le 1$
and \emph{finite} if $\mu(X)\neq\infty$.
If $\mu,\nu$ are finite measures and $c\ge0$, 
we write $c\cdot\mu$ for the finite measure $A\mapsto c\cdot(\mu(A))$ 
and $\mu + \nu$ for the finite measure $A\mapsto \mu(A)+\nu(A)$.  
 We write $\emdistr$ for the zero measure $A \mapsto 0$.
For any element $x$ of $X$, the Dirac measure $\dirac x$ is defined
as follows:
$$
\dirac x(A)=
\left\{
\begin{array}{ll}
  1 & \mbox{if $x\in A$;}\\
  0 & \mbox{otherwise.}
\end{array}
\right.
$$

A \emph{measure space} is a triple $\msone=(\setone,\mAlg,\mu)$ where
$\mu$ is a measure on the measurable space $(X,\mAlg)$. Given a
measurable function $\funone : \setone \to \RealInf$, the
\emph{integral} of $\funone$ over $\msone$ can be defined following
Lebesgue's theory and denoted as either of
$$
\int\!\funone\;d\mu = \int\!\funone(x)\;\mu(dx)\in\RealInf.
$$
The Iverson brackets $[P]$ are 1 if predicate $P$ is true, and 0 otherwise.
We then write
$$
\int_A\funone\;d\mu\deq
\int\! \funone(x)\cdot\indfun Ax \,\mu(dx).
$$

We equip some measurable spaces $(\setone,\algone)$ with a \emph{stock
  measure}~$\mu$.  We then
write $\int\!f(s)\,ds$ (or shorter, $\int\!f$) for $\int\!f\,d\mu$ when $f$ is measurable $f:\setone\to\RealInf$.
In particular, we let the stock measure on $(\Real^{n},\Borel)$ be the Lebesgue measure $\lambda_n$.

A function $f$ is a \emph{density} of a measure $\nu$ (with respect
to the measure $\mu$) if $\nu(A) = \int_A f\,d\mu$ for all measurable
$A$.

Given a measurable set $\tsetone$ from $(\setone,\algone)$,
we write $\restr{\algone}{\tsetone}$ for the \emph{restriction} of
$\algone$ to elements \emph{in} $\tsetone$, \ie,
$\restr{\algone}{\tsetone}=\{\tsettwo\cap\tsetone
\mid\tsettwo\in\algone\}$. Then
$(\tsetone,\restr{\algone}{\tsetone})$ is a measurable space. 
Any  distribution $\mu$ on $(\setone,\algone)$ trivially yields a distribution $\restr{\mu}{\tsetone}$ on $(\tsetone,\restr{\algone}{\tsetone})$ by 
$\restr{\mu}{\tsetone}(\tsettwo)= \mu(\tsettwo)$.  

\subsection{Measure Space of Program Traces}\label{sec:measure-space-on-program-traces}
In this section, we construct a measure space on the set $\sampseq$
of program traces: (1) we define a measurable space $(\sampseq, {\cal S})$
and (2) we equip it with a stock measure $\mu$ to obtain our measure
space $(\sampseq, {\cal S}, \mu)$.

\paragraph{The Measurable Space of Program Traces}
To define the semantics of a program as a measure on the space of random choices, 
we first need to define a measurable space of program traces. 
Since a program trace is a sequence of real numbers
of an arbitrary length (possibly 0), the set of all program traces is
$\sampseq = \biguplus_{n \in \mathbb{N}} \mathbb{R}^n$. Now, let us define the $\sigma$-algebra
$\mathcal{S}$ on $\sampseq$ as follows:
let $\Borel_n$ be the Borel $\sigma$-algebra on $\mathbb{R}^n$ (we take $\Borel_0$ to be $\{\{[] \}, \{\} \}$).
Consider the class of sets $\mathcal{S}$ of the form:
\[
  A = \biguplus_{n \in \mathbb{N}} H_n
\]
where $H_n \in \Borel_n$ for all $n$. Then $\mathcal{S}$ is a $\sigma$-algebra, and so $(\sampseq,\mathcal{S})$ is a measurable space.

%

\begin{lemma}
$\mathcal{S}$ is a 
$\sigma$-algebra on $\sampseq$.
\end{lemma}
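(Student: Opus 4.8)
The plan is to verify the three defining closure properties of a $\sigma$-algebra directly, leaning on a single structural observation: since $\sampseq = \biguplus_{n \in \mathbb{N}} \mathbb{R}^n$ is a \emph{disjoint} union, every Boolean operation on sets of the special form $A = \biguplus_{n} H_n$ (with $H_n \subseteq \mathbb{R}^n$) acts \emph{componentwise} on the slices. Concretely, I would first record the identities $\sampseq \setminus A = \biguplus_n (\mathbb{R}^n \setminus H_n)$ and, for a countable family $A^{(k)} = \biguplus_n H_n^{(k)}$, the identity $\bigcup_k A^{(k)} = \biguplus_n \big(\bigcup_k H_n^{(k)}\big)$. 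These hold precisely because the $\mathbb{R}^n$ are pairwise disjoint: the slice decomposition $A = \biguplus_n (A \cap \mathbb{R}^n)$ is unique, so no element of one slice can affect membership in another, and each operation can be computed index by index.

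Granting these identities, the three checks reduce to the corresponding facts for the individual $\Borel_n$. For the empty set, $\emptyset = \biguplus_n \emptyset$ with each empty slice lying in $\Borel_n$, so $\emptyset \in \mathcal{S}$. For closure under complement, the componentwise identity above gives $\sampseq \setminus A = \biguplus_n (\mathbb{R}^n \setminus H_n)$, and since each $\Borel_n$ is a $\sigma$-algebra on $\mathbb{R}^n$, every complemented slice $\mathbb{R}^n \setminus H_n$ lies in $\Borel_n$, whence $\sampseq \setminus A \in \mathcal{S}$. For closure under countable union, the second identity gives $\bigcup_k A^{(k)} = \biguplus_n \big(\bigcup_k H_n^{(k)}\big)$, and closure of each $\Borel_n$ under countable union places every slice $\bigcup_k H_n^{(k)}$ in $\Borel_n$, so $\bigcup_k A^{(k)} \in \mathcal{S}$.

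The argument is essentially the standard fact that a countable coproduct of measurable spaces is again a measurable space, so I do not expect a genuine obstacle; the only points requiring care are bookkeeping with the disjoint-union indexing. Specifically, I would make sure that complements are taken inside $\sampseq$ (i.e.\ against $\biguplus_n \mathbb{R}^n$) rather than within a fixed $\mathbb{R}^n$, and that the degenerate case $n=0$ — where $\mathbb{R}^0 = \{[]\}$ carries the trivial $\sigma$-algebra $\Borel_0 = \{\{[]\},\emptyset\}$ — behaves consistently under the componentwise operations. Both are immediate once the two slice identities above are established.
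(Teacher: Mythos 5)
Your proof is correct and follows essentially the same route as the paper's: both verify the $\sigma$-algebra axioms componentwise, using the disjointness of the slices $\mathbb{R}^n$ so that complement and countable union commute with the decomposition $A=\biguplus_n H_n$ and reduce to closure of each Borel $\sigma$-algebra $\Borel_n$. The only cosmetic difference is that you check $\emptyset\in\mathcal{S}$ while the paper checks $\sampseq\in\mathcal{S}$; these are interchangeable given closure under complement.
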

  \begin{proof}
    We have $\sampseq = \biguplus_{n \in \mathbb{N}}\mathbb{R}^n$ and
    $\mathbb{R}^n \in \mathcal{R}^n$ for all $n$, so
    $\sampseq \in \mathcal{S}$.

    If $A$ is defined as above, then
    $\sampseq - A = \biguplus_{n \in \mathbb{N}} (\mathbb{R}^n -
    H_n)$,
    where $\mathbb{R}^n - H_n \in \mathcal{R}^n$ for all $n$, so
    $\sampseq - A \in \mathcal{S}$.

    For closure under countable union, take
    $A_i = \biguplus_{n \in \mathbb{N}}H_{in}$ for all
    $i \in \mathbb{N}$, where $H_{in} \in \mathcal{R}^n$ for all $i$,
    $n$.  Then
    $\bigcup_{i \in \mathbb{N}} A_i = \bigcup_{i \in \mathbb{N}}
    \biguplus_{n \in \mathbb{N}} H_{in} = \biguplus_{n \in \mathbb{N}}
    (\bigcup_{i \in \mathbb{N}} H_{in}) \in \mathcal{S}$,
    because $\bigcup_{i \in \mathbb{N}} H_{in} \in \mathcal{R}^n$.

    Thus, $\mathcal{S}$ is a $\sigma$-algebra on $E$.
    \qed
  \end{proof}
\paragraph{Stock Measure on Program Traces}
Since each primitive distribution $\distone$ has a density, 
the probability of each random value (and thus of each trace of random values) is zero. 
Instead, we define the trace and transition probabilities in terms of densities,
with respect to the stock measure $\mu$ on $(\sampseq, \mathcal{S})$ defined as follows:
\[
\mu\left(\biguplus_{n \in \mathbb{N}} H_n\right) = \sum_{n\in\Nat} \lambda_n(H_n)
\]
\noindent
where $\lambda_0 = \delta({[]})$ and $\lambda_n$ is the Lebesgue measure on $\mathbb{R}^n$ for $n>0$. 
\begin{lemma}
$\mu$ is a measure on $(\sampseq, \mathcal{S})$.
\end{lemma}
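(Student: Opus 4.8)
The plan is to verify the two defining properties of a measure---vanishing on the empty set and countable additivity---after first checking that $\mu$ is well-defined. For well-definedness, I would observe that because $\sampseq = \biguplus_{n \in \Nat} \mathbb{R}^n$ is a \emph{disjoint} union, every $A \in \mathcal{S}$ admits a unique decomposition $A = \biguplus_{n} H_n$ with $H_n = A \cap \mathbb{R}^n \in \Borel_n$; hence the defining sum $\sum_n \lambda_n(H_n)$ is unambiguous, and since every summand lies in $\RealInf = [0,\infty]$ the series always converges in $\RealInf$. Vanishing on the empty set is then immediate: $\emptyset = \biguplus_n \emptyset$, so $\mu(\emptyset) = \sum_n \lambda_n(\emptyset) = 0$, using that each $\lambda_n$ is a measure (in particular $\lambda_0 = \dirac{[]}$ gives $\lambda_0(\emptyset) = 0$).

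For countable additivity, I would take a pairwise disjoint family $(A_i)_{i \in \Nat}$ in $\mathcal{S}$ and write $A_i = \biguplus_{n} H_{in}$. Commuting the two unions (legitimate for the disjoint union over $n$) gives $\bigcup_i A_i = \biguplus_{n} \left(\bigcup_i H_{in}\right)$, which is again of the canonical form, so $\mu\left(\bigcup_i A_i\right) = \sum_n \lambda_n\left(\bigcup_i H_{in}\right)$. The crucial observation is that disjointness of the $A_i$ forces, for each fixed $n$, disjointness of the fibers $(H_{in})_i$ inside $\mathbb{R}^n$: if some $x \in \mathbb{R}^n$ lay in both $H_{in}$ and $H_{jn}$ with $i \neq j$, then $x \in A_i \cap A_j$, contradicting disjointness. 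Applying countable additivity of the measure $\lambda_n$ on each fiber therefore yields $\lambda_n\left(\bigcup_i H_{in}\right) = \sum_i \lambda_n(H_{in})$, and hence $\mu\left(\bigcup_i A_i\right) = \sum_n \sum_i \lambda_n(H_{in})$.

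The final step, and the only place requiring care, is to interchange the order of the double summation, $\sum_n \sum_i \lambda_n(H_{in}) = \sum_i \sum_n \lambda_n(H_{in}) = \sum_i \mu(A_i)$. This is justified because every term is a nonnegative element of $[0,\infty]$, so the double series is unconditionally rearrangeable---this is Tonelli's theorem for counting measures, or simply the fact that the supremum of finite partial sums is independent of the summation order for nonnegative summands. I expect this interchange to be the main (though essentially routine) obstacle, since everything else reduces to the defining properties of the Lebesgue measures $\lambda_n$ together with the bookkeeping enforced by the disjoint-union structure of $\sampseq$.
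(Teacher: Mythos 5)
Your proof is correct and follows essentially the same route as the paper's: decompose each set fiber-wise over $\mathbb{R}^n$, apply countable additivity of each $\lambda_n$, and interchange the double sum of nonnegative terms via Tonelli's theorem for series. The only difference is that you spell out the well-definedness of the decomposition and the disjointness of the fibers, details the paper uses implicitly.
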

  \begin{proof}
    We check the three properties:
    \begin{enumerate}
    \item Since for all $n \in \mathbb{N}$ and
      $H_n \in \mathcal{R}^n$, we have
      $\lambda_n(H_n) \in [0, \infty]$, obviously
      $\mu(\biguplus_{n \in \mathbb{N}} H_n) = \sum_{n=1}^{\infty}
      \lambda_n(H_n) \in [0, \infty]$
    \item If $H= \biguplus_{n \in \mathbb{N}} H_n =\emptyset$, then
      $H_n=\emptyset$ for all $n$, so
      $\mu(H) = \sum_{n=1}^{\infty} \lambda_n(\emptyset) = 0$.
    \item Countable additivity: if
      $H_1= \biguplus_{n \in \mathbb{N}} H_{1 n} , H_2 =\biguplus_{n
        \in \mathbb{N}} H_{2n},\dots$
      is a sequence of disjoint sets in $\mathcal{S}$, then:

\begin{eqnarray*}
  \mu(\biguplus_{m =1}^{\infty}H_{m})
  &=& \mu(\biguplus_{m =1}^{\infty} \biguplus_{n =0}^{\infty} H_{m n})\\ 
  &=& \mu(\biguplus_{n =0}^{\infty} \biguplus_{m =1}^{\infty} H_{m n})\\ 
  &=& \sum_{n =0}^{\infty}  \lambda_n(\biguplus_{m =1}^{\infty}H_{m n})\\
  &=& \sum_{n =0}^{\infty} \sum_{m=1}^{\infty} \lambda_n(H_{m n})\\
  &=& \sum_{m =1}^{\infty} \sum_{n=0}^{\infty} \lambda_n(H_{m n})\\
  &=& \sum_{m =1}^{\infty} \mu(H_m)
\end{eqnarray*}

where the equality
$\sum_{n =0}^{\infty} \sum_{m=1}^{\infty} \lambda_n(H_{m n}) = \sum_{m
  =1}^{\infty} \sum_{n=0}^{\infty} \lambda_n(H_{m n})$
follows from Tonelli's theorem for
series (see \cite{tao2011measure}).

\end{enumerate}
\qed
\end{proof}

A measure $\mu$ on $(X, \Sigma)$ is $\sigma$-finite if
$X = \bigcup_{i} A_i$ for some countable (finite or infinite) sequence
of sets $A_i \in \Sigma$ such that $\mu(A_i) < \infty$. If $\mu$ is a
$\sigma$-finite measure on $(X, \Sigma)$, the measure space
$(X, \Sigma, \mu)$ is also called $\sigma$-finite.
 $\sigma$-finite measure spaces behave better with respect to integration than those who are not.\MS{P1 substantiate}

In the following, let $[a,b]^n = \{(x_1, \dots, x_n)\ |\ 
x_i \in [a,b]\quad \forall i \in 1..n  \}$.

\begin{lemma}
The measure $\mu$ on $(\sampseq, \mathcal{S})$ is $\sigma$-finite.
\end{lemma}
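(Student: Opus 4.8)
The plan is to exhibit an explicit countable family of measurable sets of finite measure whose union is all of $\sampseq$, directly from the definition of $\sigma$-finiteness recalled just above. The one point to watch is that $\sampseq = \biguplus_{n \in \mathbb{N}} \mathbb{R}^n$ is a disjoint union of countably many spaces, and for $n \geq 1$ the length-$n$ component already has infinite $\mu$-measure (on that component $\mu$ agrees with the Lebesgue measure $\lambda_n$, and $\lambda_n(\mathbb{R}^n) = \infty$). So I cannot simply take $A_n = \mathbb{R}^n$; instead I must subdivide each component into bounded cubes, using the notation $[a,b]^n$ introduced immediately before the statement.

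Concretely, for each pair $(n,k) \in \mathbb{N} \times \mathbb{N}$ I would define $A_{n,k} \in \mathcal{S}$ to be the trace set whose length-$n$ component is the closed cube $[-k,k]^n$ and whose every other component is empty; that is, $A_{n,k} = \biguplus_{m \in \mathbb{N}} H_m$ with $H_n = [-k,k]^n$ and $H_m = \emptyset$ for $m \neq n$. Each cube $[-k,k]^n$ is Borel, so $A_{n,k} \in \mathcal{S}$, and by the definition of $\mu$ its measure is
\[
\mu(A_{n,k}) = \lambda_n([-k,k]^n) = (2k)^n < \infty
\]
for $n \geq 1$, while for $n = 0$ we have $\mu(A_{0,k}) = \lambda_0(\{[]\}) = 1 < \infty$. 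Thus every $A_{n,k}$ has finite measure.

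It then remains to check the two clauses of $\sigma$-finiteness. Countability of the family $\{A_{n,k}\}$ is immediate, since it is indexed by $\mathbb{N} \times \mathbb{N}$, which is countable. For the covering property, take any trace $s \in \sampseq$; it lies in a unique component $\mathbb{R}^n$, say $s = (x_1, \dots, x_n)$. Since $s$ has only finitely many coordinates, each a real number, I can pick $k \in \mathbb{N}$ with $k \geq |x_i|$ for all $i$ (and $s = []$ lies in $A_{0,0}$ when $n = 0$), so that $s \in [-k,k]^n = A_{n,k}$. Hence $\sampseq = \bigcup_{n,k} A_{n,k}$, and the family witnesses $\sigma$-finiteness.

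The argument is essentially routine; there is no genuine obstacle beyond recognizing that each infinite-measure component $\mathbb{R}^n$ must be exhausted by a countable family of bounded cubes, and that a countable union of such countable families remains countable.
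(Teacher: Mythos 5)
Your proof is correct and follows essentially the same route as the paper: both exhaust each component $\mathbb{R}^n$ by the bounded cubes $[-k,k]^n$, observe $\mu([-k,k]^n) = \lambda_n([-k,k]^n) = (2k)^n < \infty$, and use countability of the index set $\mathbb{N} \times \mathbb{N}$. Your version is merely a bit more explicit about embedding each cube as an element of $\mathcal{S}$ and about the $n=0$ case, which the paper glosses over.
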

\begin{proof}
For every $n \in \mathbb{N}$, we have that $\mathbb{R}^n = \bigcup_{k \in 
\mathbb{N}} [-k, k]^n$.
Hence, $\sampseq = \biguplus_{n \in \mathbb{N}} \mathbb{R}^n
= \biguplus_{n \in \mathbb{N}}\bigcup_{k \in \mathbb{N}} [-k, k]^n$ is 
a countable union of sets in $\mathcal{S}$ of the form  $[-k, k]^n$.
Finally,  for all $k, n \in \mathbb{N}$ we have $\mu([-k, k]^n) = \lambda_n([-k, k]^n) = (2k)^{n}< \infty$.
\qed
\end{proof}

It follows that $(\sampseq, \mathcal{S},\mu)$ is a $\sigma$-finite measure space.

\paragraph{Discrete Random Variables}
We have taken the sample space to be the real numbers, but any
complete separable metric space will do.  For example, in order to add
discrete distributions to the language we can change $\mathbb{S}$ to
$\biguplus_{n \in \mathbb{N}} (\mathbb{R} \uplus \mathbb{N}) ^n$.
  The measurable sets $H_n$ become
  $(\mathcal{R} \uplus \mathcal{P}(\mathbb{N}))^n$.  The stock measure
  $\mu$ would become $\sum_{n=0}^{\infty}(\lambda,\mu_{\#})^n (H_n)$,
  where $\mu_{\#}$ is the counting measure on $\mathbb{N}$
  (that is, $\mu_{\#}(A) = |A|$ if $A$ is finite, otherwise $\mu_{\#}(A) = \infty$).

Discrete distributions have probability mass functions,
that is, densities with respect to the counting measure on $\mathbb{N}$, 
which are trivially zero-extended to densities with respect to $\mu$.
Given a measurable injection function $\Inject{\cdot}:\mathbb{R} \uplus \mathbb{N}\to\valset$ (e.g., mapping natural numbers to the corresponding reals, or to Church numerals),
it is easy to update the different semantics to a more general sample space, e.g., 
\[
\AxiomC{$w = \pdf{\distone}(\vec{c}, z)$}
\AxiomC{$w > 0$}
\BinaryInfC{$\distone (\vec{c}) \Downarrow^{[z]}_{w} \Inject{z}$}
\DisplayProof{\;\tr{Eval Random}}
\]
\subsection{Distributions $\Tracedist{\termone}$ and $\sbd{\termone}$ Given by Sampling-Based Semantics}
\label{sec:from-trace-semantics}
The result of a closed term $M$ on a given trace is 
\begin{align*}
  \obssmp{\termone}{s}&=\left\{
  \begin{array}{ll}
    \gvalone& \mbox{if $\termone\Downarrow^s_w\gvalone$ for some $w\in\RealInf$}\\
    \fail & \mbox{otherwise.}
  \end{array}
  \right.
\end{align*}
 The density of termination of a closed term $M$ on a given trace is defined as follows.
\begin{align*}
  \termsmp{\termone}{s}&=\left\{
  \begin{array}{ll}
    w & \mbox{if $\termone\Downarrow^s_w \gvalone$ for some $\gvalone\in\gvalset$}\\
    0 & \mbox{otherwise}
  \end{array}
  \right.
\end{align*}
This density function induces a distribution $\Tracedist{\termone}$ on traces defined as
$\Tracedist{\termone}(\tsetone):=\int_{\tsetone}\mathbf{P}_{\termone}$.

By inverting the result function $\Obssmp\termone$, 
we also obtain a distribution $\sbd{\termone}$ over
generalised values (also called a \emph{result distribution}).
It can be computed by integrating the density of termination over all traces that yield the generalised values of interest.
$$
\sbd{\termone}(\tsetone):=\Tracedist M(\mathbf{O}^{-1}_M(A))=\int\termsmp{\termone}{s}\cdot\indfun{\tsetone}{\obssmp{\termone}{s}}\,ds.
$$

As an example, for the geometric distribution example of Section~\ref{subsec:geom} we have 
$\obssmp{\mathit{geometric}\ 0.5}s=n$ if $s\in[0.5,1]^n[0,0.5)$, and otherwise 
$\obssmp{\mathit{geometric}\ 0.5}s=\mathtt{fail}$. Similarly, we have 
$\termsmp{\mathit{geometric}\ 0.5}{s}=1$ if $s\in[0.5,1]^{n}[0,0.5)$ for some $n$, and otherwise~$0$.
We then obtain 
\begin{align*}
\Tracedist{\mathit{geometric}\ 0.5}(A)={}&\sum_{n\in\mathbb{N}}\lambda_{n+1}(A\cap\{[0.5,1]^{n}[0,0.5)\})
\text{~and~}\\
\sbd{\mathit{geometric}\ 0.5}(\{n\})={}&\int\![s\in\{[0.5,1]^{n}[0,0.5)\}]\,ds=\frac 1{2^{n+1}}.
\end{align*}

As seen above, we use the exception $\fail$ to model the failure of a hard constraint.  
To restrict attention to normal termination, we modify~$\mathbf{P}_{M}$ as follows.
  \begin{align*}
        \termsmv{\termone}{s}&=\left\{
      \begin{array}{ll}
        w & \mbox{if $\termone\Downarrow^s_w\valone$ for some $\valone\in\valset$}\\
        0 & \mbox{otherwise.}
      \end{array}
      \right.
  \end{align*}
As above, this density function generates distributions over
traces and values as, respectively
\[
  \Tracevaldist{\termone}(A) := \int_\tsetone \mathbf{P}^{\valset}_{\termone} = \Tracedist{\termone}(\tsetone \cap \mathbf{O}_\termone^{-1}(\valset) )
\]
\[
  \restr{(\sbd{\termone})}{\valset}(\tsetone)
:= \Tracevaldist M(\mathbf{O}^{-1}_M(A)) = \int\termsmv{\termone}{s}\cdot[\obssmp{\termone}{s}\in\tsetone]\,ds
\]

\begin{figure}
\begin{center}
\fbox{
\begin{minipage}{.442\textwidth}
\begin{center}
\begin{align*}
  d(\varone,\varone) &{}=0\\
  d(c,d) &{}=\Abs{c-d}\\
  d(\termone \termtwo,\termthree\termfour) &{}=d({\termone},{\termthree}) + d({\termtwo},{\termfour}) \\
  d(g(V_1,\dots,V_n), &g(W_1,\dots,W_n))\\
   &{} = d(V_1, W_1) + \dots + d(V_n, W_n) \\
  d(\abstr{\varone}{\termone},\abstr{\varone}{\termtwo})&{}=d({\termone},{\termtwo}) \\
  d(\distone(V_1,\dots,V_n),&\distone({W_1,\dots,W_n}))\\
    &{}=d(V_1, W_1) + \dots + d(V_n, W_n) \\
  d(\mathtt{score}(V),\mathtt{score}(W))
    &{}=d(V, W) \\
  d(\ite{\valone}{\termone}{\termtwo},&\ite{\valtwo}{\termthree}{\termfour})\\
   &{}= d(\valone,\valtwo) + d(\termone,\termthree)+d(\termtwo,\termfour) \\
  d(\fail, \fail) &{} = 0 \\
d({\termone},{\termtwo}) &{}=\infty \text{ otherwise}
\end{align*}
\end{center}
\end{minipage}}
\condnocr
\caption{Metric $d$ on terms.}\label{fig:metric}
\end{center}
\end{figure}
To show that the above definitions make sense measure-theo\-retically, 
we first define the measurable space of terms $(\lamterms,\measterms)$,
where $\measterms$ is the set of
Borel-measurable sets of terms with respect to the recursively defined
metric $d$ in Figure~\ref{fig:metric}.

\begin{lemma} \label{lemma:pi-measurable}
  For any closed term $M$, the functions $\mathbf{P}_{M}$, $\mathbf{O}_M$ and~$\mathbf{P}_{M}^\valset$ are all measurable; 
$\Tracedist{\termone}$ and $\Tracevaldist\termone$ are measures on $(\sampseq, \mathcal{S})$; 
$\sbd{\termone}$ is a measure on $(\gvalset,\restr\measterms\gvalset)$; 
and $\restr{(\sbd{\termone})}\valset$ is a measure on $(\valset,\restr\measterms\valset)$.

\end{lemma}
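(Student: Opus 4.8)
The plan is to split the statement into two layers. The three measurability claims for $\Termsmp{M}$, $\Obssmp{M}$, and $\Termsmv{M}$ carry all the real content, while the four ``is a measure'' claims follow from them by standard measure theory. So I would establish the measurability claims first and then harvest the measure claims. Throughout I rely on the fact that, by Theorem~\ref{thm:sample-small-big-eq} together with Lemma~\ref{lemma:big-step-determined}, evaluation of the closed term $M$ on a trace $s$ is \emph{deterministic}: both the weight and the result are functions of $s$ alone, so $\Termsmp{M}$, $\Obssmp{M}$, and $\Termsmv{M}$ are genuinely well-defined functions on $\sampseq$.

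To prove measurability I would stratify $\sampseq$ according to the deterministic reduction. For each $k$ set $\Termsmp{M}^{(k)}(s)=w$ when $(M,1,s)\to^{k}(G,w,[])$ for some $G\in\gvalset$ (termination in exactly $k$ steps consuming all of $s$) and $0$ otherwise. Since single steps are deterministic (Lemma~\ref{lemma:small-step-determined}) and generalized values are irreducible, a terminating computation reaches its value after a unique number of steps, whence $\Termsmp{M}=\sum_{k}\Termsmp{M}^{(k)}$; a countable sum of non-negative measurable functions is measurable, so it suffices to treat each summand. Fixing $k$, a length-$k$ computation follows one of countably many \emph{reduction paths}, recording step by step which rule fires and, at random steps, which distribution is drawn. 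These paths partition $\sampseq$ into countably many pieces (the remaining, non-terminating part being the measurable complement, on which $\Termsmp{M}=0$ and $\Obssmp{M}=\fail$ are constant). On the piece following a fixed path the outcome $\Obssmp{M}(s)$ is obtained by substituting the consumed trace coordinates into a fixed term skeleton, a continuous and hence measurable map into $(\lamterms,\measterms)$, and the weight is a finite product of density evaluations, each factor a measurable $\pdf{\distone}$ composed with measurable functions of the trace coordinates. A function measurable on each block of a countable measurable partition is measurable, giving measurability of $\Termsmp{M}^{(k)}$ and of $\Obssmp{M}$; the argument for $\Termsmv{M}$ is identical, restricting to paths that end in a proper value rather than $\fail$.

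The main obstacle is showing that the blocks of this partition are themselves measurable. The difficulty is that the parameters $\vec{c}$ of the $i$-th random draw are computed from the earlier trace elements by the intervening deterministic reduction, so the region of traces following a given path is carved out by the dynamics rather than being a product box. I would discharge this by induction on $k$ using the single-step analysis of Figure~\ref{fig:small-step-sampling}: a pure step ($\detred$) neither consumes the trace nor alters the weight; a random step peels off the head of the trace and multiplies the weight by a density evaluated at that head and at parameters that are a measurable function of the already-fixed prefix, so the guard $\pdf{\distone}(\vec{c},c)>0$ cuts out a measurable subset; and a score step multiplies by a constant. Propagating measurability of the prefix map and the accumulated weight through one step yields the $(k{+}1)$-step blocks from the $k$-step ones. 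This is exactly the bookkeeping collected in \Appref{section:proof-of-measurability}, which I would cite.

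Given the three measurability facts, the measure claims are routine. The assignment $A\mapsto\int_A f\,d\mu$ is a measure whenever $f:\sampseq\to\RealInf$ is non-negative and measurable, being null on $\emptyset$ and countably additive by monotone convergence; applying this to the non-negative measurable $\Termsmp{M}$ and $\Termsmv{M}$ shows $\Tracedist{M}$ and $\Tracevaldist{M}$ are measures on $(\sampseq,\mathcal{S})$. For the result distributions I would use the second equalities in their definitions, which exhibit them through the measurable map $\Obssmp{M}\colon\sampseq\to\gvalset$. The pushforward $\Tracedist{M}\circ\Obssmp{M}^{-1}$ is then a measure on $(\gvalset,\restr{\measterms}{\gvalset})$, and this is precisely $\sbd{M}$. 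Likewise $\Tracevaldist{M}\circ\Obssmp{M}^{-1}$ is a measure on $(\gvalset,\restr{\measterms}{\gvalset})$; since $\{\fail\}$ is an isolated point of the metric $d$ and hence Borel, $\valset=\gvalset\setminus\{\fail\}$ is measurable in $\gvalset$, so the restriction of this pushforward to $\valset$ is a measure on $(\valset,\restr{\measterms}{\valset})$ by the restriction fact recorded in the excerpt, and this restriction is exactly $\restr{(\sbd{M})}{\valset}$. This completes the plan.
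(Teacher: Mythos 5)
Your proposal is correct, but it reaches the measurability of $\Termsmp{M}$, $\Obssmp{M}$, $\Termsmv{M}$ by a genuinely different decomposition than the paper. You stratify by termination time and reduction path: $\Termsmp{M}=\sum_k\Termsmp{M}^{(k)}$, with each summand handled by a countable partition of $\sampseq$ into path-blocks, whose measurability you establish by induction on the step count. The paper instead never mentions paths: it first proves that a \emph{single} small-step function $g$ on triples $(M,w,s)$ is measurable (this is where the metric estimates for substitution, the measurability of the redex-classifying sets, and the measurability of $\pdf{\distone}$ and $\intfun{g}$ live), then observes that $\gvalset$ and $\RealInf$ are flat $\oCPO$s, defines $\Termsmp{M}$ and $\Obssmp{M}$ as suprema of the $\omega$-chains $\Theta^n_w(\bot_w)$ and $\Theta^n_\Lambda(\bot_\Lambda)$ obtained by iterating $g$, and invokes a lemma that the supremum of an $\omega$-chain of measurable functions into a flat $\oCPO$ (with measurable bottom) is measurable. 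What each approach buys: the paper's fixed-point route makes the iteration essentially free—composition of measurable functions plus one soft supremum lemma—whereas your route is more elementary (no CPO machinery) but shifts the entire burden onto proving the path-blocks measurable, which amounts to re-deriving step-by-step composition of the same one-step facts; both ultimately rest on the identical technical core in the appendix, which you correctly point to. The harvesting of the four measure claims (density integration, pushforward along $\Obssmp{M}$, restriction to the measurable set $\valset$) coincides with the paper's.

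One caution, not fatal but worth fixing: on a fixed path-block you describe $\Obssmp{M}$ as ``substituting the consumed trace coordinates into a fixed term skeleton, a continuous and hence measurable map.'' Continuity fails in general, because the reals occurring in the resulting term are not the raw coordinates but their images under the primitive functions $\intfun{g}$, which the calculus only assumes to be \emph{measurable}, not continuous (e.g.\ for $M=g(\mathsf{rnd}())$ the result is $\intfun{g}(s_1)$). Your subsequent inductive bookkeeping—propagating measurability, not continuity, of the prefix-to-term map through each step—is the right argument and does not need the continuity claim, so the statement should simply be weakened to ``measurable.''
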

  \begin{proof}
    See \Appref{section:proof-of-measurability}.
    \qed
  \end{proof}

\section{Distribution-Based Operational Semantics}\label{sec:operational-semantics}
In this section we introduce small- and big-step distribution-based operational semantics, 
where the small-step semantics is a generalisation of \citet{jones90:PhD} to continuous distributions. 
We prove correspondence between the semantics using some non-obvious properties of kernels. 
Moreover, we will prove that the distribution-based semantics are equivalent to
the sampling-based semantics from Section \ref{sect:samplingsemantics}.
A term will correspond to a distribution over generalised values,
below called a result distribution.  A term $\termone$ is said to be
\emph{skeleton} iff no real number occurs in $\termone$, and each
variable occurs \emph{at most once} in $\termone$. The set of
skeletons is $\skset$. Any closed term $\termone$ can be written as
$\sbst{\termtwo}{\vec{\varone}}{\vec{c}}$, where $\termtwo$ is a
skeleton. The set of closed terms corresponding this way to a skeleton
$\termone\in\skset$ is denoted as $\tmsofsk{\termone}$.  If the
underlying term is a skeleton, substitution can be defined also when
the substituted terms are \emph{sets} of values rather than mere
values, because variables occurs at most once; in that case, we will
used the notation $\sbst{\termone}{\varone}{\setone}$, where $\setone$
is any set of values.

\subsection{Sub-Probability Kernels}
If $(\setone,\algone)$ and $(\settwo,\algtwo)$ are measurable spaces,
then a function $Q:\setone\times\algtwo\rightarrow \RRp$ is called a
\emph{(sub-probability) kernel} (from $(\setone,\algone)$ to
$(\settwo,\algtwo)$) if
\begin{varenumerate}
\item for every $x \in \setone$, $Q(x, \cdot)$ is a sub-probability distribution on $(\settwo,\algtwo)$; and
\item for every $A \in \algtwo$, $Q(\cdot,A)$ is a non-negative measurable function
  $\setone\to\RRp$.
\end{varenumerate}
The measurable function $q:\setone\times\settwo\to\RealInf$ is said to be a \emph{density} of
kernel $Q$ with respect to a measure $\mu$ on $(\settwo,\algtwo)$ 
if $Q(v,A) = \int_A q(v,y)\,\mu(dy)$ for all $v\in\setone$ and $A\in\algtwo$.
 When~$Q$ is a kernel, note that $\int\!f(y)\,Q(x,dy)$ denotes the
integral of~$f$ with respect to the measure $Q(x,\cdot)$.

Kernels can be composed in the following ways: If $Q_1$ is a kernel
from $(\setone_{1},\algone_{1})$ to $(\setone_{2},\algone_{2})$ and
$Q_2$ is a kernel from $(\setone_{2},\algone_{2})$ to
$(\setone_{3},\algone_{3})$, then $Q_{2}\circ Q_{1}:(x,A) \mapsto \int
Q_{2}(y,A)\,Q_{1}(x,dy)$ is a kernel from $(\setone_{1},\algone_{1})$
to $(\setone_{3},\algone_{3})$.  Moreover, if $Q_1$ is a kernel from
$(\setone_{1},\algone_{1})$ to $(\setone_{2},\algone_{2})$ and $Q_2$
is a kernel from $(\setone'_{1},\algone'_{1})$ to
$(\setone'_{2},\algone'_{2})$, then $Q_{1}\times Q_{2}:((x,y),(A\times
B)) \mapsto Q_{1}(x,A)\cdot Q_{2}(y,B)$ uniquely extends to a kernel
from $(\setone_{1},\algone_{1})\times (\setone'_{1},\algone'_{1})$ to
$(\setone_{2},\algone_{2})\times (\setone'_{2},\algone'_{2})$.

\subsection{Approximation Small-Step Semantics}
The first thing we need to do is to generalize deterministic reduction
into a relation between closed terms and term \emph{distributions}.
If $\mu$ is a measure on terms and $\ectxone$ is an evaluation
context, we let $\ctm{\ectxone}{\mu}$ be the push-forward measure
$A\mapsto \mu(\{\termone\mid\ct{\ectxone}{\termone}\in A\})$.

\emph{One-step evaluation} is a relation $M \red \distrone$ between closed terms $M$
and distributions $\distrone$ on terms, defined as follows:
\begin{align*}
  \ct{\ectxone}{\distone(\vec{c})}&\red\ctm{\ectxone}{\mu_{\distone(\vec{c})}}\\
  \ct{\ectxone}{M}&\red\dirac{\ct{\ectxone}{N}}\text{~if~}M\detred N\\
  \ct{\ectxone}{\score{c}}&\red c\cdot\dirac{\ct{\ectxone}{\texttt{true}}}\text{~if~}0<c\leq 1
\end{align*}
We first of all want to show that one-step reduction is essentially
deterministic, and that we have a form of deadlock-freedom.
\begin{lemma}\label{lemma:funct}
  For every closed term $\termone$, either $\termone$ is a generalized value
  or there is a unique $\distrone$ such that $\termone\red\distrone$.
\end{lemma}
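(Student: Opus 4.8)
The statement has two halves: \emph{existence} of a $\distrone$ with $\termone\red\distrone$ whenever $\termone$ is not a generalized value, and \emph{uniqueness} of that $\distrone$. The plan is to push both halves onto the unique context--redex decomposition of Lemma~\ref{lemma:unique} and then read off which of the three defining clauses of $\red$ fires.

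First I would invoke Lemma~\ref{lemma:unique}: a closed $\termone$ is either a generalized value, leaving nothing to prove, or it factors uniquely as $\termone=\ct{\ectxone}{\rdxone}$ with $\rdxone$ a redex, and with $\ectxone$ proper whenever $\rdxone=\fail$. I would then do a case analysis on the shape of $\rdxone$, matching each redex to exactly one clause:
\begin{varitemize}
\item if $\rdxone=\distone(\vec c)$ with all arguments real, only the first clause fires, giving $\distrone=\ctm{\ectxone}{\mu_{\distone(\vec c)}}$;
\item if $\rdxone=\score c$ with $c\in(0,1]$, only the third clause fires, giving $\distrone=c\cdot\dirac{\ct{\ectxone}{\ttrue}}$;
\item otherwise $\rdxone$ is one of $(\abstr{\varone}{\termone'})\valone$, $g(\vec c)$, $\ite{\ttrue}{\cdot}{\cdot}$, $\ite{\tfalse}{\cdot}{\cdot}$, $\fail$, or an erroneous redex $\ertone$ (the latter subsuming $\score c$ with $c\notin(0,1]$ and $\distone(\vec V)$ with some $V_i$ a $\lambda$-abstraction); each of these is reduced by the deterministic relation of Figure~\ref{fig:detred}, so the second clause fires with $\distrone=\dirac{\termone'}$ where $\termone\detred\termone'$.
\end{varitemize}

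Existence follows because these cases exhaust the redex grammar. For uniqueness I would observe that the clauses are mutually exclusive on the redex: neither $\distone(\vec c)$ nor $\score c$ (for $c\in(0,1]$) is $\detred$-reducible, so they are reached only through clauses one and three, while every other redex is $\detred$-reducible and reached only through clause two; since the pair $(\ectxone,\rdxone)$ is unique, so is the resulting $\distrone$. In the second clause the target is pinned down by Lemma~\ref{lemma:det-reduction-deterministic}, which makes the $\detred$-reduct unique.

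The main obstacle is making this case split airtight. Two points need care. The first is the classification of $\score$ and $\distone$ applications: the side condition $c\in(0,1]$ must cleanly separate clause three from the erroneous-redex branch of clause two, and $\distone$ applied to a $\lambda$-abstraction must be routed to the erroneous-redex case rather than to clause one. The second, and most delicate, is the $\fail$ redex: I must check that clause two assigns a target that does not depend on how the outer context is split off, which reduces to the determinism of $\detred$ (Lemma~\ref{lemma:det-reduction-deterministic}); here the properness of $\ectxone$ supplied by Lemma~\ref{lemma:unique} is exactly what licenses $\ct{\ectxone}{\fail}\detred\fail$ and fixes the reduct uniquely.
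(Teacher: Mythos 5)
Your strategy coincides with the paper's: its entire proof of Lemma~\ref{lemma:funct} is the one-liner ``an easy consequence of Lemma~\ref{lemma:unique}'', and your case analysis on the unique context--redex decomposition is precisely the elaboration that is intended. The existence half is correct, as is your observation that the three clauses are mutually exclusive on the redex (a term whose redex is $\distone(\vec c)$ or $\score{c}$ with $c\in(0,1]$ has no $\detred$-reducible subterm, so clause two cannot interfere with clauses one and three).

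However, the point you yourself single out as most delicate is where the proof has a genuine gap. You claim that the clause-two target ``does not depend on how the outer context is split off'' and that this ``reduces to the determinism of $\detred$''. It does not: Lemma~\ref{lemma:det-reduction-deterministic} says that the $\detred$-reduct of a \emph{fixed} term is unique, whereas different splits apply $\detred$ to \emph{different} subterms, and for the $\fail$ redex the resulting targets genuinely disagree. Concretely, let $\termone=(\abstr{\varone}{\varone})\,((\abstr{\vartwo}{\vartwo})\,\fail)$. Reading clause two literally, you may take the outer context to be $\hole$ and use $\termone\detred\fail$, obtaining $\termone\red\dirac{\fail}$; or you may take the outer context to be $(\abstr{\varone}{\varone})\,\hole$ and use $(\abstr{\vartwo}{\vartwo})\,\fail\detred\fail$, obtaining $\termone\red\dirac{(\abstr{\varone}{\varone})\,\fail}$. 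These are distinct distributions, so split-independence is false exactly in the case you flagged (it does hold for every other redex, because all $\detred$-rules except the context-collapsing rule $\ct{\ectxone}{\fail}\detred\fail$ commute with composition of contexts). What your argument actually establishes is only that the split with empty outer context yields a unique target. To close the gap one must pin down the reading of clause two---e.g., take it to apply $\detred$ to the entire term (outer context $\hole$), which loses no reductions since $\detred$ already acts under evaluation contexts; this is evidently the paper's intent, since under the liberal reading Lemma~\ref{lemma:funct} itself would fail on the term above. With that reading fixed, uniqueness does follow from Lemma~\ref{lemma:unique} together with Lemma~\ref{lemma:det-reduction-deterministic}, exactly along the lines you set up.
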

\begin{proof}
  An easy consequence of Lemma~\ref{lemma:unique}.
  \qed
\end{proof}
We need to prove the just introduced notion of one-step reduction
to support composition. This is captured by the following result.
\begin{lemma}\label{lemma:redsk}
  $\red$ is a sub-probability kernel.
\end{lemma}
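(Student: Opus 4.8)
The plan is to read $\red$ as a total function $Q$ on the measurable space of (closed) terms: extend the three defining clauses, which by Lemma~\ref{lemma:funct} cover exactly the reducible terms, by fixing some value of $Q(\gvalone,\cdot)$ on the remaining (generalized-value) terms. Then I verify the two defining conditions of a sub-probability kernel. Condition~(1), that $Q(M,\cdot)$ is a sub-probability distribution for each fixed $M$, is routine clause by clause: the pure clause gives $\dirac{\ct{\ectxone}{N}}$ of total mass $1$; the score clause gives $c\cdot\dirac{\ct{\ectxone}{\ttrue}}$ of mass $c\le 1$; and the random clause gives the push-forward $\ctm{\ectxone}{\mu_{\distone(\vec{c})}}$, whose mass equals that of $\mu_{\distone(\vec{c})}$, namely $\int \pdf{\distone}(\vec{c},d)\,dd\le 1$, since $\pdf{\distone}$ is by assumption the density of a sub-probability kernel. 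On a generalized value $Q(M,A)$ is either $0$ or $\indfun{A}{M}$ depending on the convention chosen, both of mass $\le 1$. The real content of the lemma is condition~(2): for every $A\in\measterms$ the map $M\mapsto Q(M,A)$ is $\measterms$-measurable.

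The strategy for condition~(2) is to reduce measurability on terms to Borel measurability on Euclidean space via \emph{skeletons}. The decisive structural facts are that $\cterms=\biguplus_{N\in\skset}\tmsofsk{N}$ is a \emph{countable} disjoint union, and that for a fixed skeleton $N$ with $m$ variable-occurrences the plug-in map $\iota_N:\vec{c}\mapsto\sbst{N}{\vec{\varone}}{\vec{c}}$ is an isometry from $(\RR^m,\ell^1)$ onto $\tmsofsk{N}$ with respect to the metric $d$ of Figure~\ref{fig:metric}: since $N$ contains no reals, the constant leaves of $\iota_N(\vec c)$ are exactly the $c_i$, so $d(\iota_N(\vec c),\iota_N(\vec c'))=\sum_i|c_i-c_i'|$, while terms of distinct shape lie at distance $\infty$. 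Consequently each $\tmsofsk{N}$ is a measurable piece, $\iota_N$ is a continuous (hence $(\Borel_m,\measterms)$-measurable) bijection onto it, and a function on $\cterms$ is $\measterms$-measurable if and only if each restriction $\vec{c}\mapsto Q(\iota_N(\vec{c}),A)$ is $\Borel_m$-measurable on $\RR^m$. It therefore suffices to fix a skeleton $N$ and establish this Borel measurability.

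With $N$ fixed, I decompose by the shape of the redex. By Lemma~\ref{lemma:unique} the evaluation context and the \emph{shape} of the redex in $\iota_N(\vec{c})$ are determined by $N$ alone, independently of $\vec{c}$; only the real constants inside the context and redex vary. Hence all of $\tmsofsk{N}$ falls under a single reduction clause, possibly split into finitely many subcases cut out by Borel conditions on $\vec{c}$ (for instance the if-guard equal to $1$, to $0$, or to neither; the score argument in $(0,1]$ or not; an argument being a $\lambda$-abstraction or a real). On each such Borel piece I read off $Q(\iota_N(\vec{c}),A)$ explicitly. In the pure case it is $\indfun{A}{\ct{\ectxone}{N'}}$, which is measurable because $\vec{c}\mapsto\ct{\ectxone}{N'}$ is again a plug-in map into a fixed skeleton, hence continuous. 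In the score case it is $c\cdot\indfun{A}{\ct{\ectxone}{\ttrue}}$, a product of a coordinate projection and a measurable indicator. In the random case it is $\int_\RR \indfun{A}{\ct{\ectxone}{d}}\cdot\pdf{\distone}(\vec{c}_R,d)\,dd$. For a generalized value it is the constant $0$ or $\indfun{A}{\iota_N(\vec{c})}$. Since a finite sum of products $\indfun{B}{\vec{c}}\cdot(\text{measurable function})$ over the Borel pieces $B$ is measurable, condition~(2) follows once the random case is settled.

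The one genuinely non-trivial point, and the expected main obstacle, is the random case, where both the context and the integrating measure depend on $\vec{c}$. Here I treat $(\vec{c},d)\mapsto\ct{\ectxone}{d}$ as a single plug-in map $\RR^{m+1}\to\lamterms$ into the fixed skeleton obtained by excising the redex $\distone(\dots)$ from $N$ and inserting a fresh real-placeholder in its position; this map factors through coordinate projections and an isometry, hence is measurable, so $(\vec{c},d)\mapsto\indfun{A}{\ct{\ectxone}{d}}$ is jointly $\Borel_m\otimes\Borel_1$-measurable for each $A\in\measterms$. Multiplying by the given jointly measurable density $\pdf{\distone}$ yields a nonnegative jointly measurable integrand, and Tonelli's theorem makes the parametrized integral $\vec{c}\mapsto\int_\RR \indfun{A}{\ct{\ectxone}{d}}\,\pdf{\distone}(\vec{c}_R,d)\,dd$ a $\Borel_m$-measurable function of $\vec{c}$. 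These continuity and measurability facts for plug-in maps are precisely the infrastructure developed for Lemma~\ref{lemma:pi-measurable} in Appendix~\ref{section:proof-of-measurability}, which I would invoke rather than reprove.
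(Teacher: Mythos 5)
Your proof is correct and follows essentially the same route as the paper's: read $\red$ as a total function vanishing on generalized values, check the per-term sub-probability mass clause by clause, and reduce measurability in the term argument to a countable union over skeletons, with the case analysis (value / deterministic redex / sampling redex) determined by the skeleton alone, the sampling case being the hard one. The only divergence is in how that last case is finished — you apply Tonelli's theorem to the explicit parametrization of the skeleton's instances by $\RR^{m}$, whereas the paper reaches the same conclusion by composing and multiplying measurable maps and kernels ($\mathit{app}$, $\mathit{deapp}$, $\mu_{\distone}\times I$ from the appendix) — and your version is, if anything, more explicit about the Borel subcases (score bounds, if-guards) that the paper's three-way split glosses over.
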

\newcommand{\OS}[2]{\mathit{OS}(#1,#2)}
Let $\redterms=\{E[R]\in\cterms\}$ be the set of all closed reducible terms.
  \begin{proof}
    Lemma~\ref{lemma:funct} already tells us that $\red$ can be seen
    as a function $\hat{\red}$ defined as follows:
    $$
    \hat{\red}(M,A)=\left\{
    \begin{array}{ll}
      \distrone(A) & \mbox{if $\termone\red\distrone$;}\\
      $0$ & \mbox{otherwise.}
    \end{array}
    \right.
    $$
    The fact that $\hat{\red}(\termone,\cdot)$ is a distribution is easily
    verified. On other hand, the fact that $\hat{\red}(\cdot,A)$ is measurable
    amounts to proving that $\OS{A}{B}=(\hat{\red}(\cdot,A))^{-1}(B)$ is a measurable set
    of terms whenever $B$ is a measurable set of real numbers. We will do that
    by showing that for every skeleton $\termtwo$, the set
    $\OS{A}{B}\cap\tmsofsk{\termtwo}$ is measurable. 
    The thesis then follows by observing that
    $$
    \OS{A}{B}=\bigcup_{\termtwo\in\skset}\OS{A}{B}\cap\tmsofsk{\termtwo}
    $$
    and that $\skset$ is countable. Now, let us observe that for every
    skeleton $\termtwo$, the nature of any term $\termthree$
    in $\tmsofsk{\termtwo}$
    as for if being a value, or containing a deterministic redex, or
    containing a sampling redex, only depends on $\termtwo$ and not on
    the term $\termthree$. As an example, terms in 
    $\tmsofsk{\varone\vartwo}$ are nothing but deterministic redexes
    (actually, all of them rewrites deterministically to 
    $\dirac{\terror}$. This allows us to proceed by distinguishing
    three cases:
    \begin{varitemize}
    \item
      If all terms in $\tmsofsk{\termtwo}$ are values, then
      it can be easily verified that
      $$
      \OS{A}{B}\cap\tmsofsk{\termtwo}=
      \left\{
      \begin{array}{ll}
         \tmsofsk{\termtwo} &\mbox{ if $0\in B$;}\\
         \emptyset &\mbox{ if $0\not\in B$.}
      \end{array}
      \right.
      $$
      Both when $0\in B$ and when $0\not\in B$, then, $\OS{A}{B}\cap\tmsofsk{\termtwo}$
      is indeed measurable.
    \item
      If all terms in $\tmsofsk{\termtwo}$ contain deterministic
      redexes, then
      $$
      \OS{A}{B}\cap\tmsofsk{\termtwo}=
      \left\{
      \begin{array}{ll}
        \rightarrow^{-1}(A)\cap\tmsofsk{\termtwo} &\mbox{ if $1\in B$}\\
        \emptyset &\mbox{ if $1\not\in B$.}
      \end{array}
      \right.
      $$
      Since deterministic reduction $\rightarrow$ is known to be measurable,
      then both when $1\in B$ and when $1\not\in B$, the set
      $\OS{A}{B}\cap\tmsofsk{\termtwo}$ is measurable.
    \item
      The hardest case is when $\termtwo$ is of the form
      $\ct{\ectxthree}{\distone(\vec{\varone})}$, where
      $\ectxthree$ is an evaluation context. In this case,
      however, we can proceed by decomposing the function we
      want to prove measurable into three measurable functions:
       \begin{varitemize}
       \item 
         The function
         $\app:\ectxs{}\times\cterms\rightarrow\cterms$,
         that given an evaluation context $\ectxone$ and a term $\termone$, 
         returns the term $\ct{\ectxone}{\termone}$. This is proved measurable
         in the Appendix.
       \item
         The function $\deapp:\redterms\rightarrow\ectxs{\emptyset}\times\cterms$
         that ``splits'' a term in $\redterms$ into an evaluation context $\ectxone$
         and a closed term $\termone$. This is proved measurable in the Appendix.
       \item
         For every distribution identifier $\distone$, the function $\distapp_\distone:\RR^n
         \rightarrow\cterms$ (where $n$ is the arity of $\distone$) that,
         given a tuple of real numbers $x$, returns the term $\distone(x)$. 
         This function is a continuous function between two metric spaces, so measurable.
       \item
         We know that for every distribution identifier $\distone$, there is a 
         kernel $\mu_\distone:\RR^n\times\Sigma_\RR\rightarrow\RRp$. Moreover, one
         can also consider the Dirac kernel on evaluation contexts, namely
         $I:\ectxs{\emptyset}\times\Sigma_{\ectxs{\emptyset}}\rightarrow\RRp$ where $I(E,A)=\indfun EA$. 
         Then, the product $\mu_\distone\times I$ is also a kernel, so measurable. 
       \end{varitemize}
    \end{varitemize}
    This concludes the proof.
    \qed
  \end{proof}

Given a family $\{\distrone_\termone\}_{\termone\in\tsetone}$ of
distributions indexed by terms in a measurable set $\tsetone$ of terms, 
and a measurable set $\tsettwo$, we often write, with an abuse of notation, 
$\distrone_\termone(\tsettwo)$ for the function that assigns to any term $\termone\in\tsetone$ 
the real number $\distrone_\termone(\tsettwo)$. 
The \emph{step-indexed approximation small-step semantics} is the family of $n$-indexed relations $\sisss{\termone}{n}{\distrone}$ between terms and result distributions
inductively defined in Figure \ref{fig:smallstepsem}.
\begin{SHORT}
\begin{figure}
  \fbox{
    \begin{minipage}{.442\textwidth}
      \[
      \AxiomC{$n>0$}
      \UnaryInfC{$\sisss{\gvalone}{n}{\dirac{\gvalone}}$}
      \DisplayProof{\;\tr{DRed Val}}
      \quad
      \AxiomC{}
      \UnaryInfC{$\sisss{\termone}{0}{\emdistr}$}
      \DisplayProof{\;\tr{DRed Empty}}
      \]
      \[
      \AxiomC{$\termone\red\distrone$}
      \AxiomC{$\{\sisss{\termtwo}{n}{\distrtwo_{\termtwo}}\}_{\termtwo\in\supp{\distrone}}$}
      \BinaryInfC{$\sisss{\termone}{n+1}{A\mapsto\int\distrtwo_{\termtwo}(A)\;\distrone(d\termtwo)}$}
      \DisplayProof{\;\tr{DRed Step}}
      \]
  \end{minipage}}
  \condnocr
  \caption{Step-Indexed Approximation Small-Step Semantics.}\label{fig:smallstepsem}
\end{figure}
\end{SHORT}
\begin{LONG}
\begin{figure}
\begin{center}
  \fbox{
    \begin{minipage}{.63\textwidth}
      \[
      \AxiomC{$n>0$}
      \UnaryInfC{$\sisss{\gvalone}{n}{\dirac{\gvalone}}$}
      \DisplayProof{\;\tr{DRed Val}}
      \quad
      \AxiomC{}
      \UnaryInfC{$\sisss{\termone}{0}{\emdistr}$}
      \DisplayProof{\;\tr{DRed Empty}}
      \]
      \[
      \AxiomC{$\termone\red\distrone$}
      \AxiomC{$\{\sisss{\termtwo}{n}{\distrtwo_{\termtwo}}\}_{\termtwo\in\supp{\distrone}}$}
      \BinaryInfC{$\sisss{\termone}{n+1}{A\mapsto\int\distrtwo_{\termtwo}(A)\;\distrone(d\termtwo)}$}
      \DisplayProof{\;\tr{DRed Step}}
      \]
  \end{minipage}}
  \condnocr
  \caption{Step-Indexed Approximation Small-Step Semantics.}\label{fig:smallstepsem}
\end{center}
\end{figure}
\end{LONG}
Since generalised values have no transitions (there is no $\distrone$ such that $G\to\distrone$), 
the rules above are disjoint and so there is at most one $\distrone$ such that $M\to_{n}\distrone$.
Compared to the discrete case \cite{jones90:PhD}, \JB{P0 supp is undefined.}
the step-index $n$ is needed to ensure that the integral in \textsc{(DRed Step)} is defined.
\begin{lemma}\label{lemma:sstk}
  For every $n\in\NN$, the function $\redss_n$ is a kernel.
\end{lemma}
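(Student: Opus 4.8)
The plan is to prove the statement by induction on $n$, reducing it to two facts already in hand: that one-step evaluation $\red$ is a sub-probability kernel (Lemma~\ref{lemma:redsk}), and that sub-probability kernels are closed under the composition $Q_2\circ Q_1:(x,A)\mapsto\int Q_2(y,A)\,Q_1(x,dy)$ recalled in the subsection on sub-probability kernels. For the base case $n=0$, only \tr{DRed Empty} applies, so $\redss_0(M,\cdot)=\emdistr$ for every closed $M$. Thus $\redss_0(M,\cdot)$ is the zero sub-probability distribution for each $M$, and $\redss_0(\cdot,A)\equiv 0$ is (trivially) measurable, so $\redss_0$ is a kernel.

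For the inductive step, assume $\redss_n$ is a kernel. By Lemma~\ref{lemma:unique} every closed term is either a generalized value or of the unique form $E[R]$, so $\cterms$ is partitioned as $\cterms=\gvalset\uplus\redterms$. The key observation I would exploit is that for every closed $M$ and every measurable set of terms $A$,
\[
\redss_{n+1}(M,A)=\indfun{\gvalset}{M}\cdot\dirac{M}(A)+(\redss_{n}\circ\red)(M,A),
\]
where $\redss_n\circ\red$ denotes the kernel composition $(M,A)\mapsto\int\redss_n(N,A)\,\red(M,dN)$. This identity reproduces the rules exactly: if $M=\gvalone\in\gvalset$ then $\red$ assigns $\gvalone$ the zero measure (generalized values have no transition, by Lemma~\ref{lemma:funct}), so the composition term vanishes and the right-hand side equals $\dirac{\gvalone}(A)$, matching \tr{DRed Val} since $n+1>0$; and if $M\in\redterms$ with $M\red\distrone$, then $\indfun{\gvalset}{M}=0$ and the composition term equals $\int\redss_n(N,A)\,\distrone(dN)$, which is precisely the measure produced by \tr{DRed Step}.

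It then remains to check that each summand is a kernel. The composition $\redss_n\circ\red$ is a sub-probability kernel because $\red$ is one (Lemma~\ref{lemma:redsk}) and $\redss_n$ is one (induction hypothesis), and kernels compose to kernels; this is also exactly the point where the step-index earns its keep, since the integrand $N\mapsto\redss_n(N,A)$ is measurable — making the integral in \tr{DRed Step} well defined — only by virtue of $\redss_n$ already being known to be a kernel. For the first summand, $M\mapsto\indfun{\gvalset}{M}$ is the indicator of the measurable set $\gvalset$ and $M\mapsto\dirac{M}(A)=\indfun{A}{M}$ is the indicator of the measurable set $A$, so their product is measurable in $M$ for each fixed $A$; hence $\redss_{n+1}(\cdot,A)$ is measurable as a sum of measurable functions. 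Finally, for each fixed $M$ the measure $\redss_{n+1}(M,\cdot)$ is either $\dirac{M}$ (when $M\in\gvalset$) or $(\redss_n\circ\red)(M,\cdot)$ (when $M\in\redterms$), each of which is a sub-probability distribution, so both kernel conditions hold and $\redss_{n+1}$ is a kernel.

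The only nonroutine ingredient is the measurability of the partition $\cterms=\gvalset\uplus\redterms$, which is needed to justify that $\indfun{\gvalset}{M}$ is measurable; I expect this to be the main obstacle, but it is precisely the same measurability machinery — measurability of $\redterms$ and of the context/redex decomposition of reducible terms — already used in the proof of Lemma~\ref{lemma:redsk}, so I would invoke those facts from \Appref{section:proof-of-measurability} rather than reprove them. Everything else is a direct application of the composition result together with the induction hypothesis.
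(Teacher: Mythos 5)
Your proof is correct and follows essentially the same route as the paper's: induction on $n$, the zero kernel for the base case, and for the inductive step a decomposition of $\redss_{n+1}$ over the measurable partition $\cterms=\gvalset\uplus\redterms$ into a Dirac part on generalized values and the kernel composition $\redss_n\circ\red$, concluded via Lemma~\ref{lemma:redsk}, the induction hypothesis, and closure of sub-probability kernels under composition. The only cosmetic difference is that you package the case analysis as a single pointwise sum identity, whereas the paper argues measurability by splitting the preimage $(\hat{\red}_{n+1}(\cdot,A))^{-1}(B)$ into its intersections with $\gvalset$ and $\cterms\setminus\gvalset$; the content is the same.
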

  \begin{proof}
    By induction on $n$:
    \begin{varitemize}
    \item
      $\redss_0$ can be seen as the function $\hat{\red}_0$ that attributes $0$ to
      any pair $(\termone,A)$. This is clearly a kernel.
    \item
      $\redss_{n+1}$ can be seen as the function $\hat{\redss_{n+1}}$ defined as follows:
      $$
      \hat{\red}_{n+1}(\termone,A)=
      \left\{
        \begin{array}{ll}
          1 & \mbox{if $\termone\in\gvalset$ and $\termone\in A$};\\
          0 & \mbox{if $\termone\in\gvalset$ and $\termone\not\in A$};\\
          (\int\hat{\red}_n(\termtwo,A)\,\distrone(dN)) & \mbox{if $\termone\red\distrone$}.
        \end{array}
      \right.
      $$
      The fact that $\hat{\red}_{n+1}(\termone,\cdot)$ is a measure for every $\termone$
      is clear, and can be proved by case distinction on $\termone$. On the other hand,
      if $B$ is a measurable set of reals, then:
      \begin{align*}
      (\hat{\red}_{n+1}(\cdot,A))^{-1}(B)=&\;
        (\hat{\red}_{n+1}(\cdot,A))^{-1}(B)\cap\gvalset\;\cup\\
        &\;(\hat{\red}_{n+1}(\cdot,A))^{-1}(B)\cap(\cterms-\gvalset).
      \end{align*}
      Now, the fact that
      $(\hat{\red}_{n+1}(\cdot,A))^{-1}(B)\cap\gvalset$ is a measurable
      set of terms is clear: it is $A\cap\gvalset$ if $1\in B$ and
      $\emptyset$ otherwise.  But how about
      $(\hat{\red}_{n+1}(\cdot,A))^{-1}(B)\cap(\cterms-\gvalset)$?
      In that case, we just need to notice that
      $$
      (\hat{\red}_{n+1}(\cdot,A))^{-1}(B)\cap(\cterms-\gvalset)=(\hat{\red}_{n}\circ\red)^{-1}(B)
      $$
      where $\hat{\red}_n$ and $\red$ are kernels (the
      former by induction hypothesis, the latter by
      Lemma~\ref{lemma:redsk}). Since kernels compose, this concludes the proof.
    \end{varitemize}
    \qed
  \end{proof}
\begin{lemma}
  For every closed term $\termone$ and for every $n\in\NN$
  there is a unique distribution $\distrone$ such that
  $\termone\redss_n\distrone$.
\end{lemma}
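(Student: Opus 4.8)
The plan is to proceed by induction on the step index $n$, at each stage pinning down exactly which of the three rules of Figure~\ref{fig:smallstepsem} can fire. The two key inputs are Lemma~\ref{lemma:funct}, which tells us that a closed term is either a generalised value or admits a unique one-step reduct $\termone\red\distrone$, and the observation already recorded in the text that generalised values have no one-step transition at all (there is no $\distrone$ with $\gvalone\red\distrone$).

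For the base case $n=0$, I would note that \textsc{(DRed Val)} carries the side condition $n>0$ and that \textsc{(DRed Step)} only ever concludes at a strictly positive index, so the sole applicable rule is \textsc{(DRed Empty)}. Hence $\termone\redss_0\emdistr$ holds and $\emdistr$ is forced, which gives both existence and uniqueness at index $0$.

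For the inductive step, assuming the statement at index $n$, I would fix a closed term $\termone$ and split according to Lemma~\ref{lemma:funct}. If $\termone=\gvalone$ is a generalised value, then \textsc{(DRed Val)} applies (since $n+1>0$) and yields $\gvalone\redss_{n+1}\dirac{\gvalone}$; uniqueness follows because \textsc{(DRed Empty)} is excluded by the index being nonzero, while \textsc{(DRed Step)} is excluded because $\gvalone$ has no transition. If instead $\termone$ is not a generalised value, Lemma~\ref{lemma:funct} supplies a unique $\distrone$ with $\termone\red\distrone$, and only \textsc{(DRed Step)} can apply (the nonzero index excludes \textsc{(DRed Empty)}, and $\termone$ not being a value excludes \textsc{(DRed Val)}). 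By the induction hypothesis each $\termtwo\in\supp{\distrone}$ determines a unique $\distrtwo_\termtwo$ with $\termtwo\redss_n\distrtwo_\termtwo$, so the premises of \textsc{(DRed Step)} are completely fixed and the conclusion $A\mapsto\int\distrtwo_\termtwo(A)\,\distrone(d\termtwo)$ is uniquely determined.

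The one point requiring care — and the only genuine obstacle — is verifying that this last expression really is a well-defined (sub-probability) distribution rather than a merely formal integral: the integrand $\termtwo\mapsto\distrtwo_\termtwo(A)$ must be measurable for the integral to be meaningful. This is exactly what Lemma~\ref{lemma:sstk} delivers, since $\distrtwo_\termtwo(A)=\redss_n(\termtwo,A)$ and $\redss_n$ is a kernel, so $\redss_n(\cdot,A)$ is measurable; the countable additivity and the sub-probability bound of the result then transfer from the $\distrtwo_\termtwo$ through the integral (for the latter, $\distrone$ is itself a sub-probability measure). With well-definedness secured, the induction closes.
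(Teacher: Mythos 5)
Your proof is correct, and it rests on the same two pillars as the paper's: Lemma~\ref{lemma:funct} (deadlock-freedom and determinism of $\red$) and Lemma~\ref{lemma:sstk} (the kernel property of $\redss_n$). The difference is one of organization rather than substance. The paper disposes of this lemma in one line, as an ``easy consequence'' of Lemma~\ref{lemma:sstk}: the proof of that lemma already constructs, by induction on $n$, the function $\hat{\red}_n$ whose graph is exactly the relation $\redss_n$, so existence and uniqueness are read off directly from the fact that $\redss_n$ \emph{is} a function into sub-probability distributions. You instead re-run that induction explicitly at the level of the inference rules of Figure~\ref{fig:smallstepsem}, using the side conditions ($n>0$ versus $n=0$) and the absence of transitions from generalised values to show the three rules are mutually exclusive, and invoking Lemma~\ref{lemma:sstk} only at the single point where it is genuinely indispensable: the measurability of $\termtwo\mapsto\distrtwo_\termtwo(A)$, which makes the integral in \textsc{(DRed Step)} well defined. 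Your version is more self-contained and makes visible the rule-inversion reasoning the paper compresses; the paper's version avoids duplicating an induction that has already been carried out inside the proof of the kernel lemma. Either way the logical dependencies are identical, so there is nothing to object to.
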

\begin{proof}
  This is an easy consequence of Lemma~\ref{lemma:sstk}.\qed
\end{proof}
\subsection{Approximation Big-Step Semantics}
The \emph{step-indexed approximation big-step semantics}
$\sibss{M}{n}{\distrone}$ is the $n$-indexed family of relations
between terms and result distributions
inductively defined by the rules in Figure~\ref{fig:bigstepsem}.
\begin{SHORT}
\begin{center}
  \begin{figure*}
    \begin{center}
    \fbox{
      \begin{minipage}{.97\textwidth}
        \begin{center}
          \[
          \AxiomC{$n>0$}
          \UnaryInfC{$\sibss{\gvalone}{n}{\dirac{\gvalone}}$}
          \DisplayProof{\;\tr{DEval Val}}
          \qquad
          \AxiomC{}
          \UnaryInfC{$\sibss{\termone}{0}{\emdistr}$}
          \DisplayProof{\;\tr{DEval Empty}}
          \qquad
          \AxiomC{$n>0$}
          \UnaryInfC{$\sibss{\ertone}{n}{\dirac{\terror}}$}
          \DisplayProof{\;\tr{DEval Fail}}
          \]
          \[
          \AxiomC{$n>0$}
          \UnaryInfC{$\sibss{\distone(\vec{c})}{n}{\mu_{\distone(\vec{c})}}$}
          \DisplayProof{\;\tr{DEval Samp}}
          \qquad
          \AxiomC{$n>0$}
          \UnaryInfC{$\sibss{g(\vec{c})}{n}{\dirac{\intfun{g}(\vec{c})}}$}
          \DisplayProof{\;\tr{DEval Fun}}
          \qquad
          \AxiomC{$0<c\leq 1$}
          \AxiomC{$n>0$}
          \BinaryInfC{$\sibss{\score{c}}{n}{c\cdot\dirac{\texttt{true}}}$}
          \DisplayProof{\;\tr{DEval Score}}
          \]
          \[
          \AxiomC{$\sibss{\termone}{n}{\distrone}$}
          \UnaryInfC{$\sibss{\ite{\ttrue}{\termone}{\termtwo}}{n+1}{\distrone}$}
          \DisplayProof{\;\tr{DEval If True}}  
          \qquad
          \AxiomC{$\sibss{\termtwo}{n}{\distrone}$}
          \UnaryInfC{$\sibss{\ite{\tfalse}{\termone}{\termtwo}}{n+1}{\distrone}$}
          \DisplayProof{\;\tr{DEval If False}}
          \]
          \[
          \AxiomC{$\sibss{\termone}{n}{\distrone}$}
          \AxiomC{$\sibss{\termtwo}{n}{\distrtwo}$}
          \AxiomC{$\{\sibss{\sbst{\termthree}{\varone}{\valone}}{n}{\distrtwo_{\termthree,\valone}}\}_{(\abstr{\varone}{\termthree})\in\supp{\distrone},\valone\in\supp{\distrtwo}}$}
          \TrinaryInfC{$\sibss{\termone\termtwo}{n+1}{A\mapsto \restr{\distrone}{\expset}(A)+\distrone(\RR)\cdot\indfun{A}{\terror}+\distrone(\valset_\lambda)\cdot\restr{\distrtwo}{\expset}(A)+\iint\distrtwo_{\termthree,\valone}(A)\,\restr{\distrone}{\valset_\lambda}(\abstr{\varone}{d\termthree})\,\restr{\distrtwo}{\valset}(d\valone)}$}
          \DisplayProof{\;\tr{DEval Appl}}
          \]
        \end{center}
    \end{minipage}}
    \end{center}
    \condnocr
    \caption{Step Indexed Approximation Big-Step Semantics.}\label{fig:bigstepsem}
  \end{figure*}
  \end{center}
  \end{SHORT}
\begin{LONG}
\begin{center}
  \begin{figure*}
    \begin{center}
    \fbox{
      \begin{minipage}{.97\textwidth}
        \begin{center}
          \[
          \AxiomC{$n>0$}
          \UnaryInfC{$\sibss{\gvalone}{n}{\dirac{\gvalone}}$}
          \DisplayProof{\;\tr{DEval Val}}
          \qquad
          \AxiomC{}
          \UnaryInfC{$\sibss{\termone}{0}{\emdistr}$}
          \DisplayProof{\;\tr{DEval Empty}}
          \qquad
          \AxiomC{$n>0$}
          \UnaryInfC{$\sibss{\ertone}{n}{\dirac{\terror}}$}
          \DisplayProof{\;\tr{DEval Fail}}
          \]
          \[
          \AxiomC{$n>0$}
          \UnaryInfC{$\sibss{\distone(\vec{c})}{n}{\mu_{\distone(\vec{c})}}$}
          \DisplayProof{\;\tr{DEval Samp}}
          \qquad
          \AxiomC{$n>0$}
          \UnaryInfC{$\sibss{g(\vec{c})}{n}{\dirac{\intfun{g}(\vec{c})}}$}
          \DisplayProof{\;\tr{DEval Fun}}
          \]
          \[
          \AxiomC{$0<c\leq 1$}
          \AxiomC{$n>0$}
          \BinaryInfC{$\sibss{\score{c}}{n}{c\cdot\dirac{\texttt{true}}}$}
          \DisplayProof{\;\tr{DEval Score}}
          \]
          \[
          \AxiomC{$\sibss{\termone}{n}{\distrone}$}
          \UnaryInfC{$\sibss{\ite{\ttrue}{\termone}{\termtwo}}{n+1}{\distrone}$}
          \DisplayProof{\;\tr{DEval If True}}  
          \]
          \[
          \AxiomC{$\sibss{\termtwo}{n}{\distrone}$}
          \UnaryInfC{$\sibss{\ite{\tfalse}{\termone}{\termtwo}}{n+1}{\distrone}$}
          \DisplayProof{\;\tr{DEval If False}}
          \]
          \[
          \AxiomC{$\sibss{\termone}{n}{\distrone}$}
          \AxiomC{$\sibss{\termtwo}{n}{\distrtwo}$}
          \AxiomC{$\{\sibss{\sbst{\termthree}{\varone}{\valone}}{n}{\distrtwo_{\termthree,\valone}}\}_{(\abstr{\varone}{\termthree})\in\supp{\distrone},\valone\in\supp{\distrtwo}}$}
          \TrinaryInfC{$\sibss{\termone\termtwo}{n+1}{\begin{prog} A\mapsto \restr{\distrone}{\expset}(A)+\distrone(\RR)\cdot\indfun{A}{\terror}+\distrone(\valset_\lambda)\cdot\restr{\distrtwo}{\expset}(A)\\
          \qquad \qquad +\iint\distrtwo_{\termthree,\valone}(A)\,\restr{\distrone}{\valset_\lambda}(\abstr{\varone}{d\termthree})\,\restr{\distrtwo}{\valset}(d\valone) \end{prog}}$}
          \DisplayProof{\;\tr{DEval Appl}}
          \]
        \end{center}
    \end{minipage}}
    \end{center}
    \condnocr
    \caption{Step Indexed Approximation Big-Step Semantics.}\label{fig:bigstepsem}
  \end{figure*}
  \end{center}
  \end{LONG}
Above, the rule for applications is the most complex, with the
resulting distribution consisting of three exceptional terms in
addition to the normal case.  To better understand this rule, one
can study what happens if we replace general applications with a let
construct plus application of values to values. Then we would end up
having the following three rules, instead of the rule for
application above:
\[
\AxiomC{$\sibss{\termone}{n}{\distrone}$}
\AxiomC{$\{\sibss{\sbst{\termtwo}{\varone}{\valone}}{n}{\distrtwo_{\valone}}\}_{\valone\in\supp{\distrone}}$}
\BinaryInfC{$\sibss{\letbe{\termone}{\varone}{\termtwo}}{n+1}{A\mapsto 
    \left(
    \begin{array}{l}
      \restr{\distrone}{\expset}(A)+\distrone(\RR)\cdot\indfun{A}{\terror}\\
      +\int\distrtwo_{\valone}(A)\,\restr{\distrone}{\valset}(d\valone)}
    \end{array}
    \right)$}
\DisplayProof
\]
\[
\AxiomC{$\sibss{\sbst{\termone}{\varone}{\valone}}{n}{\distrtwo}$}
\UnaryInfC{$\sibss{(\abstr{\varone}{\termone})\valone}{n+1}{\distrtwo}$}
\DisplayProof
\qquad
\AxiomC{$n>0$}
\UnaryInfC{$\sibss{c\;\valone}{n}{\dirac{\terror}}$}
\DisplayProof  
\]
The existence of the integral in rule $\tr{DEval Appl}$ is
guaranteed by a lemma analogous to
Lemma \ref{lemma:sstk}.
\begin{lemma}\label{lemma:bstk}
  For every $n\in\NN$, the function $\Downarrow_n$ is a kernel.
\end{lemma}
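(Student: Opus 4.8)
The plan is to argue by induction on $n$, in close analogy with Lemma~\ref{lemma:sstk}. As a preliminary, note that for each $n$ the rules of Figure~\ref{fig:bigstepsem} assign at most one result distribution to a term: by Lemma~\ref{lemma:unique} its outermost shape is fixed, and wherever two rules can fire on the same term --- for instance \tr{DEval Fail} and \tr{DEval Appl} on $c\,\termtwo$ --- they agree. Thus $\Downarrow_n$ may be viewed as a partial function $\hat{\Downarrow}_n(M,A)=\distrone(A)$ whenever $\sibss{M}{n}{\distrone}$; its totality is established simultaneously with the kernel property, since the recursive rules \tr{DEval If True}, \tr{DEval If False} and \tr{DEval Appl} at index $n+1$ appeal only to $\Downarrow_n$, total by the induction hypothesis. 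The base case $n=0$ is immediate: only \tr{DEval Empty} applies, so $\hat{\Downarrow}_0(M,A)=0$, the zero kernel.

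For the inductive step I check the two defining properties of a kernel for $\hat{\Downarrow}_{n+1}$. That $\hat{\Downarrow}_{n+1}(M,\cdot)$ is a sub-probability distribution is a routine case analysis; the only case needing a computation is \tr{DEval Appl}, where, writing $\distrone,\distrtwo$ for the distributions of the two subterms and using $\distrtwo_{\termthree,\valone}(\gvalset)\le 1$, the total mass is at most $\distrone(\expset)+\distrone(\RR)+\distrone(\valset_\lambda)\cdot\distrtwo(\gvalset)\le\distrone(\gvalset)\le 1$. The work lies in the measurability property: for each measurable set $A$ of generalised values, $M\mapsto\hat{\Downarrow}_{n+1}(M,A)$ must be measurable. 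Following the strategy of Lemma~\ref{lemma:redsk}, I partition $\cterms$ into the countably many measurable classes $\tmsofsk{\termtwo}$ indexed by skeletons $\termtwo\in\skset$, refining each by the finitely many measurable conditions on the substituted constants that select the applicable rule (the guard of a conditional, the argument of $\score{\cdot}$, and whether the arguments of a $\distone$- or $g$-application are all constants). On each resulting piece a single rule applies, and since the pieces are measurable and cover $\cterms$ it suffices to treat them one at a time.

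On the non-application pieces, measurability reduces to that of constant or subterm extraction together with the induction hypothesis: on a value piece the function is the indicator $\indfun{A}{M}$; on an erroneous piece it is the constant $\indfun{A}{\terror}$; on a $\distone(\vec{c})$ piece it is $\mu_{\distone}$ precomposed with the measurable extraction $\distapp_{\distone}$ of the arguments, exactly as in the hardest case of Lemma~\ref{lemma:redsk}; on a $g(\vec{c})$ piece it is $\indfun{A}{\intfun{g}(\vec{c})}$; on a $\score{c}$ piece it is $c\mapsto c\cdot\indfun{A}{\ttrue}$; and on the two conditional pieces it is the induction-hypothesis kernel $\hat{\Downarrow}_n(\cdot,A)$ precomposed with the measurable projection onto the selected branch.

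The application piece is the main obstacle: there $\hat{\Downarrow}_{n+1}(M,A)$ is the four-summand expression of \tr{DEval Appl} for $M=\termone\termtwo$. The projections $\pi_1:M\mapsto\termone$ and $\pi_2:M\mapsto\termtwo$ are $1$-Lipschitz for the metric of Figure~\ref{fig:metric}, hence measurable, so by the induction hypothesis $\termone\mapsto\distrone=\hat{\Downarrow}_n(\termone,\cdot)$ and $\termtwo\mapsto\distrtwo=\hat{\Downarrow}_n(\termtwo,\cdot)$ are kernels, and the first three summands are measurable in $M$ as sums and products of the functions $M\mapsto\hat{\Downarrow}_n(\pi_iM,B)$ for fixed measurable $B$. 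The fourth, double-integral summand carries the real content, and I would exhibit it as a composition of kernels in the sense recalled above in the discussion of sub-probability kernels. Restricting the induction-hypothesis kernel to the measurable sets $\valset_\lambda$ and $\valset$ gives kernels $\termone\mapsto\restr{\distrone}{\valset_\lambda}$ and $\termtwo\mapsto\restr{\distrtwo}{\valset}$, whose product is a kernel from pairs $(\termone,\termtwo)$ to $\valset_\lambda\times\valset$; precomposing $\Downarrow_n$ with the substitution map $(\abstr{\varone}{\termthree},\valone)\mapsto\sbst{\termthree}{\varone}{\valone}$ --- measurable by an argument analogous to those for $\app$ and $\deapp$ in Appendix~\ref{section:proof-of-measurability} --- gives a kernel $(\abstr{\varone}{\termthree},\valone)\mapsto\hat{\Downarrow}_n(\sbst{\termthree}{\varone}{\valone},A)$. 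The fourth summand is precisely the composition of these two kernels, hence measurable in $(\termone,\termtwo)$ and, after composition with $(\pi_1,\pi_2)$, in $M$. Summing the four measurable summands disposes of the application piece, and reassembling all pieces completes the induction.
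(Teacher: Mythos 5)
Your proposal is correct and takes essentially the same approach as the paper: the paper's own proof of Lemma~\ref{lemma:bstk} is only a two-sentence sketch (induction on $n$, with the application case as the hard one, resolved by composition and product properties of kernels), and your argument fills in exactly those details, reusing the skeleton-partition technique of Lemma~\ref{lemma:redsk} for the measurability of the non-application cases and exhibiting the double integral in \tr{DEval Appl} as a composition of a product of restricted kernels with the substitution-precomposed kernel. No gaps.
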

This can be proved by induction on $n$, with the most difficult case
being precisely the one of applications. Composition and product 
properties of kernels are the key ingredients there.
\subsection{Beyond Approximations}
The set of result distributions with the pointwise order forms an
$\oCPO$, and thus any denumerable, directed set of result
distributions has a least upper bound.  One can define the
\emph{small-step semantics} and the \emph{big-step semantics} as,
respectively, the two distributions
\begin{align*}
  \sts{\termone}&=\sup \{\distrone\mid{\termone}\to_n{\distrone}\}\\
  \qquad\qquad
  \bts{\termone}&=\sup \{\distrone\mid\sibss{\termone}{n}{\distrone}\}
\end{align*}
It would be quite disappointing if the two object above were
different. Indeed, this section is devoted to proving the following
theorem:
\begin{theorem}\label{thm:eq}
  For every term $\termone$, $\sts{\termone}=\bts{\termone}$.
\end{theorem}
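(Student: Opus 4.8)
The plan is to realize both $\sts{\termone}$ and $\bts{\termone}$ as suprema of \emph{monotone} chains and then prove the two inclusions $\bts{\termone}\leq\sts{\termone}$ and $\sts{\termone}\leq\bts{\termone}$. First I would show, by an easy induction on $n$ that exploits monotonicity of integration in rules \tr{DRed Step} and \tr{DEval Appl}, that $\sisss{\termone}{n}{\distrone}$ and $\sisss{\termone}{n+1}{\distrtwo}$ imply $\distrone\leq\distrtwo$, and symmetrically for $\Downarrow_n$. Together with Lemmas~\ref{lemma:sstk} and~\ref{lemma:bstk} (which guarantee these relations are well-defined kernels with a unique $\distrone$ per index), this makes $\{\distrone\mid\sisss{\termone}{n}{\distrone}\}$ and $\{\distrone\mid\sibss{\termone}{n}{\distrone}\}$ increasing chains in the result-distribution $\oCPO$, so both suprema exist. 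It then suffices to show that every big-step approximant lies below $\sts{\termone}$ and every small-step approximant lies below $\bts{\termone}$.

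For $\bts{\termone}\leq\sts{\termone}$ I would show by induction on $n$ that $\sibss{\termone}{n}{\distrone}$ implies $\distrone\leq\sts{\termone}$. Every rule except \tr{DEval Appl} is immediate, since it mirrors a single one-step reduction $\termone\red\distrtwo$ followed by \tr{DRed Step}. The application case needs a small-step compositionality lemma — the distribution-based analogue of Lemma~\ref{lemma:split} — stating that multi-step small-step reduction of $\termone\termtwo$ factors as: reduce $\termone$ inside the context $\ct{\hole}{\termtwo}$, then reduce $\termtwo$ inside $(\lambda\varone.\termthree)\hole$, then perform the $\beta$-step and continue. Concretely this is a kernel-composition identity relating the reduction of $\termone\termtwo$ to the composite of the kernels for the subterms, with the push-forwards $\ctm{\ectxone}{\cdot}$ through evaluation contexts handled as in Lemma~\ref{lemma:redsk}. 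Feeding the inductive hypotheses for $\termone$, $\termtwo$, and $\sbst{\termthree}{\varone}{\valone}$ into this factorization, and matching the three exceptional summands of \tr{DEval Appl} (argument-less divergence of the function to $\fail$, a constant function yielding $\fail$, and the argument yielding $\fail$) against the corresponding deterministic small-step outcomes, gives $\distrone\leq\sts{\termone\termtwo}$.

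The reverse inclusion $\sts{\termone}\leq\bts{\termone}$ I would prove by induction on $n$, showing $\sisss{\termone}{n}{\distrone}$ implies $\distrone\leq\bts{\termone}$. The only nontrivial step is \tr{DRed Step}, where $\termone\red\distrtwo$ and each $\termtwo\in\supp{\distrtwo}$ reduces with strictly smaller index, so by the inductive hypothesis its approximant is below $\bts{\termtwo}$. What is needed is a soundness-of-one-step lemma for big-step evaluation, the counterpart of the inversion used in Theorem~\ref{thm:sample-small-big-eq}: if $\termone\red\distrtwo$ and $\sibss{\termtwo}{m}{\distrtwo_\termtwo}$ for every $\termtwo\in\supp{\distrtwo}$, then $\sibss{\termone}{m'}{\distrthree}$ for some $m'$ with $\distrthree\geq\int\distrtwo_\termtwo\,\distrtwo(d\termtwo)$. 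This in turn rests on a context-decomposition property of $\Downarrow_n$ — that evaluation of $\ct{\ectxone}{\rdxone}$ can be reassembled from evaluation of the redex and of the surrounding context — which I would prove by induction on the structure of $\ectxone$ using the \tr{DEval Appl} rule, exactly as the sampling-based context lemmas (Lemmas~\ref{lemma:congr-closure} and~\ref{lemma:congr-closure-exc}) were used there.

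The main obstacle is the interaction between the complex \tr{DEval Appl} rule and the measure-theoretic manipulations. The two iterated integrals over $\restr{\distrone}{\valset_\lambda}$ and $\restr{\distrtwo}{\valset}$ must be rearranged using Tonelli's theorem, and passing the suprema over the step-index through these integrals — turning $\sup_n\int\cdots$ into $\int\sup_n\cdots$ — requires the monotone convergence theorem, which is exactly why the chains must first be shown monotone. Keeping the three exceptional summands of \tr{DEval Appl} in exact correspondence with the deterministic small-step outcomes (via $\detred$ on $c\,\valone$ and the $\fail$-propagation rules), while simultaneously tracking the push-forwards $\ctm{\ectxone}{\cdot}$ through evaluation contexts, is where the bookkeeping is heaviest; the composition and product properties of kernels underlying Lemmas~\ref{lemma:redsk} and~\ref{lemma:sstk} are the tools that make it go through.
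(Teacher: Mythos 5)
Your proposal is correct, and half of it takes a genuinely different route from the paper. The skeleton---monotone step-indexed chains, then two dominations between finite approximants of one semantics and the other semantics---matches the paper, and your direction $\bts{\termone}\leq\sts{\termone}$ is essentially the paper's Lemmas~\ref{lemma:derapp} and~\ref{lemma:sshibs}: your ``factorization'' lemma is exactly the admissibility of the big-step application rule inside the small-step semantics. It is on $\sts{\termone}\leq\bts{\termone}$ that you part ways. The paper proves an \emph{inversion} lemma for the small-step semantics of applications (Lemma~\ref{lemma:invapp}: from $\sisss{\termone\termtwo}{n+1}{\distrone}$ one extracts small-step approximants of $\termone$, $\termtwo$ and the substituted bodies whose \tr{DEval Appl}-style combination dominates $\distrone$), and then Lemma~\ref{lemma:bshiss} inducts on the index with case analysis on the term, reassembling via \tr{DEval Appl}. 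You instead keep the induction aligned with \tr{DRed Step} and prove a \emph{subject-expansion} property of big-step evaluation---closure of $\Downarrow_n$, up to the right inequality, under prepending a one-step reduction---which you reduce to a context-decomposition lemma for $\Downarrow_n$ proved by induction on the evaluation context, using syntax-directed inversion of big-step derivations. Both routes are viable and lean on the same kernel machinery (Lemmas~\ref{lemma:redsk}, \ref{lemma:sstk} and~\ref{lemma:bstk}) to make the integrands measurable, together with Tonelli and monotone convergence. The paper's route buys quantitative control---explicit indices $n+m+p$, hence the $3^n$ bound, with suprema taken only at the very end---and confines the hard work to a single application lemma per direction. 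Your route buys the familiar shape of the classical small-step-to-big-step expansion argument from deterministic and discrete probabilistic settings, mirroring \tr{DRed Step} directly; the price is a nested induction (on the index, then on the evaluation context, with big-step inversion inside) and an earlier, heavier reliance on the monotone convergence theorem, since you compare against the suprema $\sts{\cdot}$ and $\bts{\cdot}$ rather than against finite approximants throughout.
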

The following is a fact which will be quite useful in the following:
\begin{lemma}[Monotonicity]
  If $\sisss{\termone}{n}{\distrone}$, $m\geq n$ and $\sisss{\termone}{m}{\distrtwo}$,
  then $\distrtwo\geq\distrone$.
\end{lemma}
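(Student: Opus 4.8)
The plan is to prove the statement by induction on $n$, with the induction hypothesis universally quantified over the term and over all $m \geq n$; equivalently, one inducts on the derivation of $\sisss{\termone}{n}{\distrone}$. The guiding intuition is that the step index acts as fuel: a larger budget can only accumulate more probability mass, never less, and once a generalised value is reached the distribution has already stabilised.

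For the base case $n = 0$, the only applicable rule is \tr{DRed Empty}, so $\distrone = \emdistr$. Since every result distribution is a non-negative measure, $\distrtwo \geq \emdistr = \distrone$ holds for any $m \geq 0$.

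For the inductive step I would write $n = k+1$ and case on the last rule used to derive $\sisss{\termone}{n}{\distrone}$; note that $m \geq k+1 \geq 1$, so $m = l+1$ with $l \geq k$. If \tr{DRed Val} was used, then $\termone = \gvalone \in \gvalset$ and $\distrone = \dirac{\gvalone}$. Because generalised values admit no $\red$-transition, the derivation of $\sisss{\termone}{m}{\distrtwo}$ at index $m \geq 1$ must also conclude by \tr{DRed Val}, whence $\distrtwo = \dirac{\gvalone} = \distrone$ and the inequality holds with equality. If instead \tr{DRed Step} was used, then $\termone$ is reducible, say $\termone \red \distrthree$, and $\distrone = A \mapsto \int \distrtwo_{\termtwo}(A)\, \distrthree(d\termtwo)$ where $\sisss{\termtwo}{k}{\distrtwo_{\termtwo}}$ for each $\termtwo \in \supp{\distrthree}$. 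Since $\termone$ is not a generalised value, $\sisss{\termone}{m}{\distrtwo}$ must likewise be concluded by \tr{DRed Step}, and by Lemma~\ref{lemma:funct} the one-step reduct $\distrthree$ is the same in both derivations; thus $\distrtwo = A \mapsto \int \distrfour_{\termtwo}(A)\, \distrthree(d\termtwo)$ with $\sisss{\termtwo}{l}{\distrfour_{\termtwo}}$ for each $\termtwo \in \supp{\distrthree}$. Applying the induction hypothesis to each such $\termtwo$ (with indices $k \leq l$) gives $\distrfour_{\termtwo} \geq \distrtwo_{\termtwo}$ pointwise, and monotonicity of the Lebesgue integral then yields, for every measurable $A$,
\[
\distrtwo(A) = \int \distrfour_{\termtwo}(A)\, \distrthree(d\termtwo) \geq \int \distrtwo_{\termtwo}(A)\, \distrthree(d\termtwo) = \distrone(A),
\]
which is exactly $\distrtwo \geq \distrone$.

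The case split and the final integral estimate are routine; the one point requiring care is the matching of the two derivations in the \tr{DRed Step} case. There one must invoke the determinacy of one-step reduction (Lemma~\ref{lemma:funct}) so that both derivations decompose through the \emph{same} distribution $\distrthree$ over terms, allowing the induction hypothesis to be applied under a common integrator and the pointwise inequality to be lifted through the integral. I do not anticipate any genuine difficulty beyond this bookkeeping.
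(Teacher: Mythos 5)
Your proof is correct. The paper actually states this Monotonicity lemma as a fact without giving any proof, and your induction on $n$ --- using the disjointness of the rules, the determinacy of one-step reduction (Lemma~\ref{lemma:funct}) to match the two \tr{DRed Step} derivations through the same $\distrthree$, and monotonicity of integration to lift the pointwise inequality --- is exactly the argument the authors leave implicit, with no gaps (measurability of the integrands being supplied by Lemma~\ref{lemma:sstk}).
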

Theorem \ref{thm:eq} can be proved by showing that any big-step approximation
can itself over-approximated with small-step, and vice versa. Let us start
by showing that, essentially, the big-step rule for applications is
small-step-admissible:
\begin{lemma}\label{lemma:derapp}
  If $\sisss{\termone}{n}{\distrone}$,
  $\sisss{\termtwo}{m}{\distrtwo}$, and for all $\termthree$ and $\valone$,
  $\sisss{\sbst{\termthree}{\varone}{\valone}}{p}{\distrtwo_{\termthree,\valone}}$,
  then
  $\sisss{\termone\termtwo}{n+m+p}{\distrthree}$
  such that for all $A$
  \begin{align*}
    \distrthree(A)\geq{}&
    \restr{\distrone}{\expset}(A)+\distrone(\RR)\cdot\indfun{A}{\terror}+\distrone(\valset_\lambda)\cdot\restr{\distrtwo}{\expset}(A)\\
    &+\iint\distrtwo_{\termthree,\valone}(A)\,\restr{\distrone}{\valset_\lambda}(\abstr{\varone}{d\termthree})\,\restr{\distrtwo}{\valset}(d\valone).
  \end{align*}
\end{lemma}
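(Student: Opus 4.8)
The plan is to simulate in the step-indexed small-step semantics the staged evaluation of $\termone\,\termtwo$, reading off the four summands of the claimed bound as the four outcomes of this staging. First I would drive $\termone$ to a generalized value inside the evaluation context $\hole\,\termtwo$; in the branches where $\termone$ yields a $\lambda$-abstraction $\abstr{\varone}{\termthree}$ I would then drive $\termtwo$ to a value inside $(\abstr{\varone}{\termthree})\,\hole$; and finally I would take the $\beta$-step $(\abstr{\varone}{\termthree})\,\valone\red\dirac{\sbst{\termthree}{\varone}{\valone}}$ and evaluate the body. The summand $\restr{\distrone}{\expset}(A)$ then comes from the mass $\distrone$ places on $\fail$, propagated because $\fail\,\termtwo=\ct{\ectxone}{\fail}$ with $\ectxone=\hole\,\termtwo$ proper reduces to $\fail$; the summand $\distrone(\RR)\cdot\indfun{A}{\terror}$ from the mass on constants $c$, since $c\,\termtwo$ is an erroneous redex reducing to $\fail$; the summand $\distrone(\valset_\lambda)\cdot\restr{\distrtwo}{\expset}(A)$ from $\lambda$-branches in which $\distrtwo$ places mass on $\fail$, since $(\abstr{\varone}{\termthree})\,\fail$ again reduces to $\fail$; and the double integral from the branch where $\termone$ yields an abstraction and $\termtwo$ a value.

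To make this rigorous I would first prove two congruence lemmas for the distribution-based one-step relation, the analogues of Lemma~\ref{lemma:congr-closure}: that $\sisss{\termone}{n}{\distrone}$ can be replayed inside $\hole\,\termtwo$, and that $\sisss{\termtwo}{m}{\distrtwo}$ can be replayed inside $(\abstr{\varone}{\termthree})\,\hole$. Each \tr{DRed Step} for $\termone$ matches a \tr{DRed Step} for $\termone\,\termtwo$ reducing the same redex in the enlarged context $\ectxone\,\termtwo$ (using Lemma~\ref{lemma:unique} and Lemma~\ref{lemma:funct} for the unique decomposition and determinism of one-step reduction); the only wrinkle is that a branch that has reached a generalized value $\gvalone$ in the $\termone$-evaluation is not a value of $\termone\,\termtwo$, so it must take one further context step — to $\fail$ when $\gvalone\in\RR\cup\expset$, or into the next stage when $\gvalone$ is an abstraction. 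The distributions produced by the successive stages are glued together by the integral in \tr{DRed Step}; I would appeal to Lemma~\ref{lemma:sstk} and Lemma~\ref{lemma:redsk} to ensure that $\redss_n$ and $\red$ are kernels, so that the iterated integral against $\restr{\distrone}{\valset_\lambda}$ and $\restr{\distrtwo}{\valset}$ defines a genuine measure and that kernel composition justifies the rearrangement into the four displayed terms.

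The main obstacle is the step-index accounting needed to reach exactly index $n+m+p$. A naive composition spends an extra \tr{DRed Step} at each stage boundary — most visibly for the $\beta$-redex — which would only give index $n+m+p+1$. The reason $n+m+p$ suffices is the slack between reduction length and approximation level: $\gvalone\in\supp{\distrone}$ at level $n$ forces a reduction path to $\gvalone$ of length at most $n-1$, so one index unit remains to pay for the first step of the following stage. I would therefore prove the congruence lemmas with this tight count and use the Monotonicity Lemma to lift each stage's distribution to the common level $n+m+p$ before adding the four contributions under a single $\distrthree$. The places where this bookkeeping is tightest are precisely the exceptional-propagation branches, where a generalized value must take one additional context step before being recognised; getting the indices to align there — rather than the measure theory, which is routine once the kernel lemmas are in hand — is where the real care is required.
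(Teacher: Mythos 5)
Your proposal is correct and takes essentially the same route as the paper's proof: the paper first proves a ``let-style'' sub-lemma for $(\abstr{\varone}{\termthree})\termtwo$ at index $n+m$ by induction on the step index (this fuses your second congruence lemma with the $\beta$-step and body evaluation), and then runs a second induction for the outer context $\hole\,\termtwo$, using exactly the value/$\fail$/reducible case analysis, push-forward of distributions through evaluation contexts, monotonicity for index alignment, and the kernel lemmas that you invoke. Your index-accounting observation --- that the unit consumed by \textsc{(DRed Val)} at the end of each stage pays for the dispatch step at the stage boundary, with the exception-propagation branches being the tightest --- is precisely the slack the paper exploits via its Monotonicity lemma.
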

\begin{proof}
First of all, one can prove that if $\sisss{\termtwo}{n}{\distrone}$
and
$\sisss{\sbst{\termthree}{\varone}{\valone}}{m}{\distrtwo_{\valone}}$ for all $\valone$
then $\sisss{(\abstr{\varone}{\termthree})\termtwo}{n+m}{\distrthree}$
where
$\distrthree(A)\geq\restr{\distrone}{\expset}(A)+\int\distrtwo_{\valone}(A)\,\restr{\distrone}{\valset}(d\valone)$ for all $A$.  This is an induction on $n$.
\begin{varitemize}
\item
  If $n=0$, then $\distrone$ is necessarily the zero distribution $A\mapsto 0$.  
  Then 
  $\distrthree(A)\ge 0=\restr{\distrone}{\expset}(A)+\int\distrtwo_{\valone}(A)\,\restr{\distrone}{\valset}(d\valone))$.
\item
  Suppose the thesis holds for $n$, and let's try to prove the thesis
  for $n+1$. We proceed by further distinguishing some subcases:
  \begin{varitemize}
  \item
    If $\termtwo$ is a value $\valtwo$, then
    $\distrone=\dirac{\valtwo}$, $\restr{\distrone}{\expset}$ is the
    zero distribution and thus
    $$
    (\abstr{\varone}{\termthree})\termtwo\redss_{m+1}(A\mapsto
    \restr{\distrone}{\expset}(A)+\int\distrtwo_{\valone}(A)\,\restr{\distrone}{\valset}(d\valone)).
    $$
    The thesis follows by monotonicity.
  \item
    If $\termtwo$ is an exception $\fail$, then $\distrone=\dirac{\fail}$,
    and since $(\abstr{\varone}{\termthree})\fail\red\fail$, we can conclude
    that, since $\restr{\distrone}{\valset}$ is the zero distribution,
    $$
    (\abstr{\varone}{\termthree})\termtwo\redss_{2}\dirac{\fail}=
    (A\mapsto\restr{\distrone}{\expset}(A)+\int\distrtwo_{\valone}(A)\,\restr{\distrone}{\valset}(d\valone)).
    $$
    The thesis again follows by monotonicity.
  \item
    If $\termtwo$ is not a generalized value, then, necessarily
    $\distrone(A)=\int\distrfour_\termfour(A)\,\distrfive(d\termfour)$,
    where $\termtwo\red\distrfive$ and $\termfour\redss_{n}\distrfour_\termfour$
    for every $\termfour$. By induction hypothesis, there are distributions
    $\distrsix_\termfour$ such that 
    $(\abstr{\varone}{\termthree})\termfour\redss_{n+m}\distrsix_\termfour$,
    and, for all $A$,
    $$
    \distrsix_\termfour(A)\geq \restr{\distrfour_\termfour}{\expset}+
      \int \distrtwo_\valone(A)\,\restr{\distrfour_\termfour}{\valset}(d\valone)
    $$
    Let now $\ectxone$ be the evaluation context $(\abstr{\varone}{\termthree})\hole$.
    Then, it holds that $(\abstr{\varone}{\termthree})\termtwo\red\ctm{\ectxone}{\distrfive}$
    and thus: 
    $$
    (\abstr{\varone}{\termthree})\termtwo\redss_{n+m+1}(A\mapsto\int\distrsix_\termfour(A)\,(\ctm{\ectxone}{\distrfive}((\abstr{\varone}{\termthree})d\termfour))).
    $$
    We can now observe that:
    \begin{align*}
      \int\distrsix_\termfour(A)&\,(\ctm{\ectxone}{\distrfive}((\abstr{\varone}{\termthree})d\termfour))=
         \int\distrsix_\termfour(A)\,\distrfive(d\termfour)\\
         &\geq\int\restr{\distrfour_\termfour}{\expset}(A)\,\distrfive(d\termfour)+
         \iint\distrtwo_\valone(A)\,\restr{\distrfour_\termfour}{\valset}(d\valone)\,\distrfive(d\termfour)\\
         &=\restr{\distrone}{\expset}(A)+
         \iint\distrtwo_\valone(A)\,\restr{\distrfour_\termfour}{\valset}(d\valone)\,\distrfive(d\termfour)\\
         &=\restr{\distrone}{\expset}(A)+
         \int\distrtwo_\valone(A)\,\restr{\distrone}{\valset}(d\valone).
    \end{align*}
  \end{varitemize}
\end{varitemize}
Then one can prove the statement of the lemma, again by induction on $n$, following
the same strategy as above.
\qed
\end{proof}
\begin{lemma}\label{lemma:sshibs}
  \mbox{If $\sibss{\termone}{n}{\distrone}$ there is $\distrtwo$ s.t.~$\sisss{\termone}{3^n}{\distrtwo}$
  and $\distrtwo\geq\distrone$.}
\end{lemma}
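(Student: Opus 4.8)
The plan is to proceed by induction on the derivation of $\sibss{\termone}{n}{\distrone}$. The exponential bound $3^{n}$ is dictated by the rule \tr{DEval Appl}, whose three premises all carry the big-step index $n$; to simulate it in small-step style I will invoke Lemma~\ref{lemma:derapp} with all three of its indices set to $3^{n}$, producing a single small-step judgement at index $3^{n}+3^{n}+3^{n}=3^{n+1}$. The two auxiliary facts I rely on throughout are the Monotonicity lemma above (so that a judgement at a small index can always be re-derived, with a larger distribution, at any greater index) and the existence-and-uniqueness of a small-step distribution $\distrtwo$ with $\sisss{\termone}{m}{\distrtwo}$ for every $m$.

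For the axiom schemes \tr{DEval Val}, \tr{DEval Fail}, \tr{DEval Samp}, \tr{DEval Fun} and \tr{DEval Score}, the side condition $n>0$ gives $3^{n}\ge 3$, which leaves enough budget. In each case the term either is a generalized value, so that \tr{DRed Val} already yields $\sisss{\termone}{3^{n}}{\dirac{\termone}}$, or it performs exactly one one-step reduction $\termone\red\distrone$ (namely $\ertone\red\dirac{\terror}$, $\distone(\vec{c})\red\mu_{\distone(\vec{c})}$, $g(\vec{c})\red\dirac{\intfun{g}(\vec{c})}$, or $\score{c}\red c\cdot\dirac{\ttrue}$) to a distribution supported on generalized values; applying \tr{DRed Step} followed by \tr{DRed Val} on that support (legal since $3^{n}-1>0$) reproduces exactly $\distrone$, so I may take $\distrtwo=\distrone$. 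The case \tr{DEval Empty} is immediate: $\distrone=\emdistr$, and the unique $\distrtwo$ with $\sisss{\termone}{3^{0}}{\distrtwo}=\sisss{\termone}{1}{\distrtwo}$ satisfies $\distrtwo\geq\emdistr$ trivially. For \tr{DEval If True} (and symmetrically \tr{DEval If False}), the induction hypothesis on the premise $\sibss{\termone}{n}{\distrone}$ gives $\distrtwo\geq\distrone$ with $\sisss{\termone}{3^{n}}{\distrtwo}$; since $\ite{\ttrue}{\termone}{\termtwo}\red\dirac{\termone}$, rule \tr{DRed Step} yields $\sisss{\ite{\ttrue}{\termone}{\termtwo}}{3^{n}+1}{\distrtwo}$, and because $3^{n}+1\le 3^{n+1}$ the Monotonicity lemma lets me inflate the index to $3^{n+1}$ at the cost of only enlarging the distribution.

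The crux is \tr{DEval Appl}. Writing its conclusion as $\sibss{\termone\termtwo}{n+1}{\distrthree}$ with premises $\sibss{\termone}{n}{\distrone}$, $\sibss{\termtwo}{n}{\distrtwo}$ and $\{\sibss{\sbst{\termthree}{\varone}{\valone}}{n}{\distrtwo_{\termthree,\valone}}\}$, the induction hypothesis supplies small-step approximants $\distrone'\geq\distrone$, $\distrtwo'\geq\distrtwo$ and $\distrtwo'_{\termthree,\valone}\geq\distrtwo_{\termthree,\valone}$, each derivable at index $3^{n}$. Feeding these into Lemma~\ref{lemma:derapp} (with all three indices equal to $3^{n}$) produces $\distrfour$ with $\sisss{\termone\termtwo}{3^{n+1}}{\distrfour}$ whose value $\distrfour(A)$ dominates the application formula built from the primed distributions. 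It then remains to observe that this formula is monotone in $\distrone'$, $\distrtwo'$ and $\distrtwo'_{\termthree,\valone}$, so that replacing the primed distributions by the smaller unprimed ones can only decrease it; hence $\distrfour(A)\geq\distrthree(A)$ for all $A$, and $\distrfour$ is the required $\distrtwo$.

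The main obstacle is precisely this monotonicity verification for the application formula. Each of its four summands must be shown non-decreasing: the terms $\restr{\distrone}{\expset}(A)$ and $\distrone(\RR)\cdot\indfun{A}{\terror}$ are monotone because restriction and evaluation on a fixed measurable set are; the product $\distrone(\valset_\lambda)\cdot\restr{\distrtwo}{\expset}(A)$ is monotone in each factor; and the iterated integral $\iint\distrtwo_{\termthree,\valone}(A)\,\restr{\distrone}{\valset_\lambda}(\abstr{\varone}{d\termthree})\,\restr{\distrtwo}{\valset}(d\valone)$ is the genuinely delicate part, where one argues that the integral of a non-negative function increases both when the integrand grows pointwise and when each of the two sub-probability measures grows, the latter handled one integration at a time by Tonelli, exactly as in the composition-of-kernels reasoning underpinning Lemmas~\ref{lemma:sstk} and \ref{lemma:derapp}.
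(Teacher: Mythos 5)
Your proof is correct and follows essentially the same route as the paper's: induction on the big-step derivation, with the application case discharged by Lemma~\ref{lemma:derapp} instantiated at three equal indices $3^{n}$ (whence the bound $3^{n+1}$), together with monotonicity of the application formula to compare against the big-step distribution. The paper's own proof is a two-line sketch of exactly this argument; your write-up merely supplies the index bookkeeping, the base cases, and the monotonicity verification that it leaves implicit.
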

\begin{proof}
  By induction on $n$, we can prove that if $\sibss{\termone}{n}{\distrone}$, then
  $\sisss{\termone}{\distrtwo}$ where $\distrtwo\geq\distrone$.
  The only interesting case is when $\termone$ is an application,
  and there we simply use Lemma \ref{lemma:derapp}.\qed
\end{proof}
At this point, we already know that $\sts{\termone}\geq\bts{\termone}$.
The symmetric inequality can be proved by showing that the big-step rule for
applications can be \emph{inverted} in the small-step:
\begin{lemma}\label{lemma:invapp}
  If $\sisss{\termone\termtwo}{n+1}{\distrone}$, then
  $\sisss{\termone}{n}{\distrtwo}$, $\sisss{\termtwo}{n}{\distrthree}$ and 
  for all $\termfour$ and $\valone$, 
  $\sisss{\sbst{\termfour}{\varone}{\valone}}{n}{\distrfour_{\termfour,\valone}}$
  such that for all $A$,
  \begin{align*}
    \distrone(A)\leq{}&\restr{\distrtwo}{\expset}(A)+
    \distrtwo(\RR)\cdot\indfun{A}{\terror}+
    \distrtwo(\valset_\lambda)\cdot\restr{\distrthree}{\expset}(A)\\
    &+\iint\distrfour_{\termfour,\valone}(A)
    \,\restr{\distrtwo}{\valset_\lambda}(\abstr{\varone}{d\termfour})
    \,\restr{\distrthree}{\valset}(d\valone).
  \end{align*}
\end{lemma}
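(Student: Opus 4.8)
The plan is to prove this as the exact mirror of Lemma~\ref{lemma:derapp}: that lemma showed the big-step application clause to be a \emph{lower} bound attainable in the small-step system, and here I must show the \emph{same} expression is an \emph{upper} bound for every small-step approximant of $\termone\termtwo$. Following the structure of Lemma~\ref{lemma:derapp}, I would first treat the case where the function is already a $\lambda$-abstraction and then bootstrap to arbitrary $\termone$; both parts are inductions on the step index whose engine is to strip off the first one-step reduction $\red$ and feed the remaining steps to the induction hypothesis. Throughout I use that $\redss_n$ is a function (a consequence of Lemma~\ref{lemma:sstk}), so that the distributions attached to $\termtwo$ and to each $\sbst{\termfour}{\varone}{\valone}$ depend only on the term and the index, never on the intermediate term reached during reduction.

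First I would prove the auxiliary claim: if $\sisss{(\abstr{\varone}{\termthree})\termtwo}{n+1}{\distrone}$, then $\sisss{\termtwo}{n}{\distrthree}$ and $\sisss{\sbst{\termthree}{\varone}{\valone}}{n}{\distrfour_\valone}$ for all $\valone$, with $\distrone(A)\leq\restr{\distrthree}{\expset}(A)+\int\distrfour_\valone(A)\,\restr{\distrthree}{\valset}(d\valone)$. This is an induction on $n$; the case $n=0$ is trivial since then $\distrone=\emdistr$, so assume $n\geq 1$ and analyse $\termtwo$. If $\termtwo$ is a value $\valtwo$, then $(\abstr{\varone}{\termthree})\valtwo\red\dirac{\sbst{\termthree}{\varone}{\valtwo}}$, so \tr{DRed Step} forces $\distrone$ to equal the $\redss_n$-image of $\sbst{\termthree}{\varone}{\valtwo}$ and the bound holds with equality taking $\distrthree=\dirac{\valtwo}$. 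If $\termtwo=\terror$, then $(\abstr{\varone}{\termthree})\terror\red\dirac{\terror}$ and $\distrone=\dirac{\terror}=\restr{\dirac{\terror}}{\expset}$. If $\termtwo$ is reducible with $\termtwo\red\distrfive$, then the same one-step rule fires under the context $(\abstr{\varone}{\termthree})\hole$, giving $(\abstr{\varone}{\termthree})\termtwo\red\ctm{(\abstr{\varone}{\termthree})\hole}{\distrfive}$, and \tr{DRed Step} writes $\distrone(A)=\int\distrsix_\termfive(A)\,\distrfive(d\termfive)$ with $\sisss{(\abstr{\varone}{\termthree})\termfive}{n}{\distrsix_\termfive}$. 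The induction hypothesis bounds each $\distrsix_\termfive$, and integrating these bounds against $\distrfive$---while setting $\distrthree$ to the $\redss_n$-image of $\termtwo$, itself an integral of the per-$\termfive$ argument distributions against $\distrfive$---yields the claim.

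For the general statement I again induct on $n$ (the base $n=0$ being trivial) and split on $\termone$. When $\termone$ is a value the decisive subcases are $\termone=\abstr{\varone}{\termthree}$, where I invoke the auxiliary claim and take $\distrtwo=\dirac{\abstr{\varone}{\termthree}}$ so that $\distrtwo(\valset_\lambda)=1$ collapses the general bound to the auxiliary one; $\termone=c$, where $\termone\termtwo$ is an erroneous redex reducing to $\dirac{\terror}$, matched by the summand $\distrtwo(\RR)\cdot\indfun{A}{\terror}$ with $\distrtwo=\dirac{c}$; and $\termone=\terror$, matched by $\restr{\distrtwo}{\expset}$. When $\termone$ is reducible with $\termone\red\distrfive$, the one-step rule fires under $\hole\,\termtwo$, giving $\termone\termtwo\red\ctm{\hole\,\termtwo}{\distrfive}$, and \tr{DRed Step} expresses $\distrone(A)=\int\distrsix_\termfive(A)\,\distrfive(d\termfive)$ over $\sisss{\termfive\termtwo}{n}{\distrsix_\termfive}$. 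The induction hypothesis bounds each $\distrsix_\termfive$; integrating against $\distrfive$ recombines the function distributions into $\sisss{\termone}{n}{\distrtwo}$ with $\distrtwo(A)=\int\distrtwo_\termfive(A)\,\distrfive(d\termfive)$, while, by functionality of $\redss_{n-1}$, the distributions $\distrthree$ of $\termtwo$ and $\distrfour_{\termfour,\valone}$ of $\sbst{\termfour}{\varone}{\valone}$ are common to all $\termfive$ and factor out. Finally I raise every index obtained this way up to $n$ using the monotonicity lemma; this only enlarges $\distrtwo,\distrthree,\distrfour_{\termfour,\valone}$, and since the right-hand side is monotone in all of them the inequality is preserved.

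The main obstacle is the measure-theoretic recombination in the two reducible cases: I must commute the outer integral against $\distrfive$ with the restrictions $\restr{\cdot}{\expset}$ and $\restr{\cdot}{\valset_\lambda}$, the set-measures $\cdot(\RR)$ and $\cdot(\valset_\lambda)$, and the inner double integral that defines the application formula. Each term of the bound is affine in the per-term distributions, so the key identities are $\restr{\bigl(A\mapsto\int\distrsix_\termfive(A)\,\distrfive(d\termfive)\bigr)}{S}=\bigl(A\mapsto\int\restr{\distrsix_\termfive}{S}(A)\,\distrfive(d\termfive)\bigr)$ for $S\in\{\expset,\valset_\lambda,\valset\}$, together with Tonelli's theorem (all integrands are non-negative) to reorder the outer integral past the inner double integral. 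Verifying that these exchanges are legitimate, and that the common factors $\distrthree$ and $\distrfour_{\termfour,\valone}$ really can be pulled outside---both resting on the kernel statements of Lemmas~\ref{lemma:redsk} and~\ref{lemma:sstk}---is where the real work lies; the case analysis itself is routine once these measure-theoretic steps are in place.
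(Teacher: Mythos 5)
Your proposal is correct and takes essentially the same approach as the paper: both arguments proceed by induction on the step index, strip off a single one-step reduction $\red$ under the appropriate evaluation context ($\hole\,\termtwo$ or $(\abstr{\varone}{\termthree})\hole$), apply the induction hypothesis (relying on functionality of $\redss_n$ so the argument- and substitution-distributions are common across the integral, and on monotonicity to realign indices), and then recombine the per-term bounds using the fact that every summand of the application formula is linear in the component distributions, with Tonelli-type exchanges of the outer integral against the inner double integral backed by the kernel lemmas. The only difference is organizational: you factor out the $\lambda$-abstraction-head case as a separate auxiliary claim, mirroring the paper's proof of Lemma~\ref{lemma:derapp}, whereas the paper's proof of Lemma~\ref{lemma:invapp} inlines everything in one induction and only exhibits the two ``most relevant'' cases --- so your packaging has the minor advantage of explicitly treating an abstraction applied to a reducible argument, a case the paper leaves implicit.
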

  \begin{proof}
    By induction on $n$.
    \begin{varitemize}
    \item
      If $n=0$, then $\distrone$ is the zero distribution, and so are $\distrtwo,\distrthree$ and all $\distrfour_{\termfour,\valone}$.
    \item
      Suppose the thesis holds for every natural number smaller than
      $n$ and prove it for $n$. Let us distinguish a few cases, and
      examine the most relevant ones:
      \begin{varitemize}
      \item
        If $\termone$ is an abstraction $\abstr{\varone}{\termthree}$
        and $\termtwo$ is a value $\valtwo$, then 
        $\termone\red\dirac{\sbst{\termthree}{\varone}{\valtwo}}$ and
        $\sbst{\termthree}{\varone}{\valtwo}\redss_n\distrone$.
        We can then observe that
        \begin{align*}
          \distrtwo&=\dirac{\abstr{\varone}{\termthree}}\\
          \distrthree&=\dirac{\valtwo}\\
          \distrfour_{\termfour,\valone}&=\distrone\mbox{ whenever }\termfour=\termthree\mbox{ and }\valone=\valtwo
        \end{align*}
        Just observe that
        $$
        \distrone(A)=\iint\distrfour_{\termfour,\valone}(A)\,\distrtwo(\abstr{\varone}{d\termfour})\,\distrthree(d\valone)
        $$
        and that $\restr{\distrtwo}{\expset}=\restr{\distrtwo}{\RR}=\restr{\distrthree}{\expset}=\emdistr$.
      \item
        If none of $\termone$ and $\termtwo$ are values, then
        $\termone\red\distrseven$ and thus $\termone\termtwo\red\ctm{\ectxone}{\distrseven}$
        where $\ectxone=\hole\termtwo$. Moreover,
        $\termthree\termtwo\redss_n\distrfive_\termthree$,
        where
        $$
        \distrone(A)=\int\distrfive_\termthree(A)\,\ctm{\ectxone}{\distrseven}((d\termthree)\termtwo)
           =\int\distrfive_\termthree(A)\,\distrseven(d\termthree)
        $$ 
        We apply the induction hypothesis (and monotonicity) to each of the
        $\termthree\termtwo\redss_n\distrfive_\termthree$, and we obtain
        that $\termthree\redss_{n-1}\distrsix_\termthree$,
        $\termtwo\redss_{n}\distrthree$ and 
        $\sisss{\sbst{\termfour}{\varone}{\valone}}{n}{\distrfour_{\termfour,\valone}}$,
        where
        \begin{align*}
        \distrfive_\termthree(A)\leq{}&\restr{\distrsix_\termthree}{\expset}(A)+
        \distrsix_\termthree(\RR)\cdot\indfun{A}{\terror}+
        \distrsix_\termthree(\valset_\lambda)\cdot\restr{\distrthree}{\expset}(A)\\ 
        &{}+ \iint\distrfour_{\termfour,\valone}(A)
        \,\restr{\distrsix_\termthree}{\valset_\lambda}(\abstr{\varone}{d\termfour})
        \,\restr{\distrthree}{\valset}(d\valone)
        \end{align*}
        Now let $\distrtwo$ be the distribution
        $$
        A\mapsto\int\distrsix_\termthree(A)\,\distrseven(d\termthree).
        $$
        Clearly, $\termone\redss_{n}\distrtwo$. Moreover,
        \begin{align*}
          \distrone(A)&=\int\distrfive_\termthree(A)\,\distrseven(d\termthree)\\
          &\leq\int\restr{\distrsix_\termthree}{\expset}(A)\,\distrseven(d\termthree)
             +\int\distrsix_\termthree(\RR)\cdot\indfun{A}{\terror}\,\distrseven(d\termthree)\\
          &\qquad+\int\distrsix_\termthree(\valset_\lambda)\cdot\restr{\distrthree}{\expset}(A)\,\distrseven(d\termthree)\\
          &\qquad+\iiint\distrfour_{\termfour,\valone}(A)
             \,\restr{\distrsix_\termthree}{\valset_\lambda}(\abstr{\varone}{d\termfour})
             \,\restr{\distrthree}{\valset}(d\valone)\,\distrseven(d\termthree)\\
          &=\restr{\distrtwo}{\expset}(A)+
            \distrtwo(\RR)\cdot\indfun{A}{\terror}+
            \distrtwo(\valset_\lambda)\cdot\restr{\distrthree}{\expset}(A)\\
          &\qquad+\iint\distrfour_{\termfour,\valone}(A)
             \,\restr{\distrtwo}{\valset_\lambda}(\abstr{\varone}{d\termfour})
             \,\restr{\distrthree}{\valset}(d\valone)
        \end{align*}
      \end{varitemize}
    \end{varitemize}
    \qed
  \end{proof}
\begin{lemma}\label{lemma:bshiss}
  If $\sisss{\termone}{n}{\distrone}$, then there is $\distrtwo$ such that $\sibss{\termone}{n}{\distrtwo}$
  and $\distrtwo\geq\distrone$.
\end{lemma}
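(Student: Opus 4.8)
The plan is to run an induction on the step-index $n$, exactly mirroring the proof of Lemma~\ref{lemma:sshibs} but with the inversion Lemma~\ref{lemma:invapp} playing the role that the admissibility Lemma~\ref{lemma:derapp} played there. For $n=0$ the only derivation of $\sisss{\termone}{0}{\distrone}$ is by \tr{DRed Empty}, so $\distrone=\emdistr$, and \tr{DEval Empty} gives $\sibss{\termone}{0}{\emdistr}$ with $\emdistr\geq\distrone$. For the inductive step I assume the claim at $n$ and take $\sisss{\termone}{n+1}{\distrone}$. If $\termone$ is a generalized value $\gvalone$ then $\distrone=\dirac{\gvalone}$ by \tr{DRed Val}, and \tr{DEval Val} yields $\sibss{\gvalone}{n+1}{\dirac{\gvalone}}$. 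Otherwise, by Lemma~\ref{lemma:funct} there is a unique $\distrfive$ with $\termone\red\distrfive$, and \tr{DRed Step} gives $\distrone(A)=\int\distrsix_{\termfive}(A)\,\distrfive(d\termfive)$ with $\sisss{\termfive}{n}{\distrsix_{\termfive}}$ for each $\termfive\in\supp{\distrfive}$. Before splitting into cases I would record the trivial observation that a generalized value $W$ with $\sisss{W}{m}{\distrseven}$ always satisfies $\distrseven\leq\dirac{W}$ (it is $\emdistr$ for $m=0$ and $\dirac{W}$ otherwise).

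Every non-application redex is then discharged against the matching big-step axiom, using only the above observation. If $\termone$ is an erroneous redex then $\termone\red\dirac{\terror}$, so $\distrone\leq\dirac{\terror}$, which is exactly what \tr{DEval Fail} delivers; similarly $\distone(\vec{c})\red\mu_{\distone(\vec{c})}$ forces $\distrone\leq\mu_{\distone(\vec{c})}$ (\tr{DEval Samp}), $g(\vec{c})\red\dirac{\intfun{g}(\vec{c})}$ forces $\distrone\leq\dirac{\intfun{g}(\vec{c})}$ (\tr{DEval Fun}), and $\score{c}\red c\cdot\dirac{\ttrue}$ forces $\distrone\leq c\cdot\dirac{\ttrue}$ (\tr{DEval Score}). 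For $\termone=\ite{\ttrue}{\termtwo}{\termthree}$ we have $\distrfive=\dirac{\termtwo}$ and hence $\distrone=\distrsix_{\termtwo}$ with $\sisss{\termtwo}{n}{\distrsix_{\termtwo}}$; the induction hypothesis supplies $\sibss{\termtwo}{n}{\distrtwo}$ with $\distrtwo\geq\distrsix_{\termtwo}=\distrone$, and \tr{DEval If True} lifts this to $n+1$. The \tr{DEval If False} case is symmetric.

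The only genuinely delicate case is the application $\termone\termtwo$. Here I would apply Lemma~\ref{lemma:invapp} to $\sisss{\termone\termtwo}{n+1}{\distrone}$, obtaining $\sisss{\termone}{n}{\distrtwo}$, $\sisss{\termtwo}{n}{\distrthree}$ and $\sisss{\sbst{\termfour}{\varone}{\valone}}{n}{\distrfour_{\termfour,\valone}}$ for all $\termfour,\valone$, together with the bound of $\distrone(A)$ by the \tr{DEval Appl} functional evaluated at $\distrtwo,\distrthree,\distrfour_{\termfour,\valone}$. Feeding each of these judgements through the induction hypothesis produces dominating big-step judgements $\sibss{\termone}{n}{\distrtwo'}$, $\sibss{\termtwo}{n}{\distrthree'}$ and $\sibss{\sbst{\termfour}{\varone}{\valone}}{n}{\distrfour'_{\termfour,\valone}}$, and \tr{DEval Appl} then produces $\sibss{\termone\termtwo}{n+1}{\distrtwo''}$, well defined by Lemma~\ref{lemma:bstk}. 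To close the case I must check that the \tr{DEval Appl} functional is monotone in all three arguments: $\restr{(\cdot)}{\expset}(A)$ is monotone since restriction and evaluation on a fixed set are; the terms $\distrtwo(\RR)\cdot\indfun{A}{\terror}$ and $\distrtwo(\valset_\lambda)\cdot\restr{\distrthree}{\expset}(A)$ are products of monotone nonnegative quantities; and the double integral $\iint\distrfour_{\termfour,\valone}(A)\,\restr{\distrtwo}{\valset_\lambda}(\abstr{\varone}{d\termfour})\,\restr{\distrthree}{\valset}(d\valone)$ grows when the nonnegative integrand and the two integrating measures grow. Monotonicity then gives $\distrtwo''\geq\distrone$.

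I expect this last monotonicity verification to be the main obstacle: one has to be sure that raising all three component distributions at once cannot shrink any summand of the application functional, which ultimately rests on the standard fact that $\nu\geq\mu$ implies $\int f\,d\nu\geq\int f\,d\mu$ for nonnegative measurable $f$. The remaining cases are routine bookkeeping against the one-step reduction rules, together with the observation that a generalized value never reduces to more than its Dirac mass.
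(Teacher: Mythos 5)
Your proof is correct and takes essentially the same route as the paper: an induction on the step index $n$ in which Lemma~\ref{lemma:invapp} does the work for applications and all the other redex cases are discharged against the matching big-step axioms. The paper's own proof is exactly this argument stated in one sentence, so your additional details (the observation $\distrseven\leq\dirac{W}$ for generalized values and the monotonicity of the (\textsc{DEval Appl}) functional in all three arguments) are a faithful and sound elaboration of it.
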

\begin{proof}
  Again, this is an induction on $n$ that makes essential use, this time,
  of Lemma \ref{lemma:invapp}.\qed
\end{proof}
\begin{restate}{Theorem~\ref{thm:eq}}
  For all $\termone$, $\sts{\termone}=\bts{\termone}$.
\end{restate}
\begin{proof}
This is a consequence of Lemma~\ref{lemma:sshibs} and Lemma~\ref{lemma:bshiss}.\qed
\end{proof}

In subsequent sections we let $\tsq{\termone}$ stand for
$\sts{\termone}$ or $\bts{\termone}$.
\subsection{Geometric Distribution, Revisited}
Let's consider again the geometric distribution of Section \ref{subsec:geom}.
There is a monotonically increasing
map $f:\NN\rightarrow\NN$ such that for every $n$, it holds that
$$
\sisss{(\mathit{geometric}\;0.5)}{f(n)}{\sum_{i=0}^n\frac{1}{2^{i+1}}\dirac{i}}
$$
As a consequence, $\tsq{\mathit{geometric}\;0.5}=\sum_{i=0}^\infty\frac{1}{2^{i+1}}\dirac{i}$.
\subsection{Distribution-based and Sampling-based Semantics are Equivalent}
This section is a proof of the following theorem.
\begin{theorem}\label{thm:sampling-distribution}
  For every term $\termone$, $\sbd{\termone} = \tsq{\termone}$.
\end{theorem}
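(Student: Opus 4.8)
The plan is to connect the two semantics one reduction step at a time. The crucial observation is that a single distribution-based step $\termone\red\distrone$ \emph{integrates out} exactly one random draw (the \textsc{(DRed Step)} instance for $\ct{\ectxone}{\distone(\vec{c})}\red\ctm{\ectxone}{\mu_{\distone(\vec{c})}}$ spreads mass according to the density $\pdf{\distone}(\vec{c},\cdot)$), whereas the small-step sampling semantics keeps that draw in the trace and records its density as a factor of the weight. Since $\tsq{\termone}=\sts{\termone}$ by Theorem~\ref{thm:eq}, and since by Theorem~\ref{thm:sample-small-big-eq} the weights $\termsmp{\termone}{s}$ defined from $\Downarrow$ coincide with those of the multi-step relation $\redss$, it suffices to compare $\sts{\termone}$ and $\sbd{\termone}$ through their step-indexed approximations. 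For each $n$ I define a step-indexed sampling distribution on generalized values by
$$\sbdp{\termone}{n}(A)=\int \mathbf{P}^{<n}_{\termone}(s)\cdot\indfun{A}{\obssmp{\termone}{s}}\,ds,$$
where $\mathbf{P}^{<n}_{\termone}(s)=w$ if $(\termone,1,s)\redss(\gvalone,w,[])$ in fewer than $n$ steps with $\gvalone\in\gvalset$, and $\mathbf{P}^{<n}_{\termone}(s)=0$ otherwise. The strict bound aligns the indices, since reaching a value costs the extra unit demanded by the side condition $n>0$ in \textsc{(DRed Val)}. Measurability of all integrands is inherited from Lemma~\ref{lemma:pi-measurable} and the determinism of multi-step reduction (Lemma~\ref{lemma:small-steps-determined}).

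A preliminary ingredient is a disintegration of the stock measure along the first sample: for every nonnegative measurable $f:\sampseq\to\RealInf$ that vanishes on the empty trace,
$$\int f(s)\,ds=\int_{\RR}\Bigl(\int f(c\mathrel{::}s')\,ds'\Bigr)dc.$$
This follows directly from $\mu(\biguplus_n H_n)=\sum_n\lambda_n(H_n)$ by applying Tonelli's theorem to each factorization $\RR^n=\RR\times\RR^{n-1}$ and interchanging the outer sum with the integral (Tonelli for series), all justified by nonnegativity.

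The heart of the argument is the claim that $\sisss{\termone}{n}{\distrone}$ implies $\distrone=\sbdp{\termone}{n}$, proved by induction on $n$. The case $n=0$ is immediate ($\emdistr$ on both sides). For $n+1$, if $\termone=\gvalone$ is a generalized value then $\distrone=\dirac{\gvalone}$, and on the sampling side $(\gvalone,1,s)$ terminates only for $s=[]$, giving $\sbdp{\gvalone}{n+1}(A)=\mu(\{[]\})\cdot\indfun{A}{\gvalone}=\indfun{A}{\gvalone}$ since $\mu(\{[]\})=1$. If $\termone\detred\termtwo$ then $\distrone=\dirac{\termtwo}$, the sampling step $(\termone,1,s)\to(\termtwo,1,s)$ consumes nothing and keeps the weight, so $\sbdp{\termone}{n+1}=\sbdp{\termtwo}{n}$, matching the induction hypothesis; the $\score{c}$ case is identical except for the uniform factor $c$ on both sides. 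The decisive case is $\termone=\ct{\ectxone}{\distone(\vec{c})}\red\ctm{\ectxone}{\mu_{\distone(\vec{c})}}$: pushing $\mu_{\distone(\vec{c})}$ (which has Lebesgue density $\pdf{\distone}(\vec{c},\cdot)$) through the context and using the induction hypothesis on each residual $\ct{\ectxone}{c'}$ gives
$$\distrone(A)=\int_{\RR}\pdf{\distone}(\vec{c},c')\cdot\sbdp{\ct{\ectxone}{c'}}{n}(A)\,dc'.$$
On the sampling side a terminating trace has the form $c'\mathrel{::}s'$, and by \textsc{(Red Random)} with Lemma~\ref{lemma:multiply-closure} one factors $\mathbf{P}^{<n+1}_{\termone}(c'\mathrel{::}s')=\pdf{\distone}(\vec{c},c')\cdot\mathbf{P}^{<n}_{\ct{\ectxone}{c'}}(s')$ while $\obssmp{\termone}{c'\mathrel{::}s'}=\obssmp{\ct{\ectxone}{c'}}{s'}$ (the density-zero set contributes $0$ on both sides). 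Applying the disintegration above turns $\sbdp{\termone}{n+1}(A)$ into exactly the right-hand side displayed, closing the induction. I expect this sampling case, and specifically the interchange of the trace integral with the single-sample density integral, to be the main obstacle; the bookkeeping of weights and step counts relies on Lemmas~\ref{lemma:multiply-closure}, \ref{lemma:small-step-comp}, \ref{lemma:congr-closure}, and \ref{lemma:congr-closure-exc}.

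Finally I pass to the supremum. Since $\redss_{n}$ is a kernel and the approximations are monotone (by Lemma~\ref{lemma:sstk} and monotonicity of the step-indexed semantics), $\sts{\termone}(A)=\sup_n\distrone_n(A)=\sup_n\sbdp{\termone}{n}(A)$. The integrands increase pointwise, $\mathbf{P}^{<n}_{\termone}(s)\uparrow\termsmp{\termone}{s}$, because a terminating computation halts after finitely many deterministic steps (Lemma~\ref{lemma:small-steps-determined}); hence the monotone convergence theorem yields $\sup_n\sbdp{\termone}{n}(A)=\int\termsmp{\termone}{s}\cdot\indfun{A}{\obssmp{\termone}{s}}\,ds=\sbd{\termone}(A)$. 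Combining with Theorem~\ref{thm:eq} gives $\sbd{\termone}=\sts{\termone}=\tsq{\termone}$, as required.
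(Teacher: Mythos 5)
Your proof is correct, but it follows a genuinely different route from the paper's. The paper indexes its sampling-side approximations by \emph{trace length}, not by step count: $\sbdp{\termone}{n}$ there is the integral of $\termsmp{\termone}{\cdot}$ over $\sampseqp{n}$, and instead of an exact correspondence the paper proves two \emph{inequalities} --- $\distrone\leq\sbd{\termone}$ whenever $\termone\redss_{n}\distrone$ (Lemma~\ref{lemma:dist-leq-sample}), and $\sbdp{\termone}{n}\leq\tsq{\termone}$ for every $n$ (Lemma~\ref{lemma:sample-leq-dist}) --- which close into the sandwich $\sts{\termone}\leq\sbd{\termone}=\sup_{n}\sbdp{\termone}{n}\leq\tsq{\termone}=\sts{\termone}$ using Lemma~\ref{lemma:sup-sampseqn} and Theorem~\ref{thm:eq}. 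Both of the paper's inductions are organized around the trichotomy of Lemma~\ref{lemma:tracecases} (deterministic divergence, deterministic convergence to a generalized value, or deterministic convergence to some $\ct{\ectxone}{\distone(\vec{c})}$), so arbitrarily many deterministic steps are absorbed at once (Lemmas~\ref{lemma:detdiverg}, \ref{lemma:detconv}, \ref{lemma:values}) and the induction effectively advances one \emph{random draw} at a time; your induction advances one \emph{reduction step} at a time and establishes the stronger invariant that the step-indexed distribution semantics exactly \emph{equals} your step-bounded sampling distribution. What your route buys is a single induction with an exact intermediate statement, no trichotomy, and no deterministic-convergence machinery; what the paper's route buys is weaker proof obligations (bounds only, never exactness) and no need to manipulate step-bounded weight functions. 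Two small repairs to your write-up: first, your disintegration of the stock measure along the first sample is not an extra ingredient relative to the paper --- it is precisely the content of the paper's (unproved) Lemma~\ref{lemma:commutetrace}, resting on the same Tonelli-type interchange; second, your measurability citation is imprecise, since Lemma~\ref{lemma:pi-measurable} asserts measurability only of the limit functions $\termsmp{\termone}{\cdot}$ and $\obssmp{\termone}{\cdot}$, whereas what you need for your truncations $\mathbf{P}^{<n}_{\termone}$ are the iterates $\Theta^{n}_{w}(\bot_{w})$ constructed and proved measurable in Appendix~\ref{section:proof-of-measurability}, together with the kernel property of $\redss_{n}$ (Lemma~\ref{lemma:sstk}) to make the inner integral in your random-draw case well defined.
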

The way to prove Theorem \ref{thm:sampling-distribution} is by looking
at traces of \emph{bounded} length. For every $n\in\NN$, let
$\sampseqp{n}$ be the set of sample traces of length at most $n$,
which itself has the structure of a measure space with measurable sets
$\restr{\cal S}{\sampseqp{n}}$.  We define the result distribution $\sbdp{\termone}{n}$ as follows:
$$
\sbdp{\termone}{n}(\tsetone) = \int_{\sampseqp{n}}\termsmp{\termone}{s}\cdot\indfun{\tsetone}{\obssmp{\termone}{s}}\,ds
$$ 
The integral over \emph{all} traces can be seen
as the limit of all integrals over bounded-length traces:
\begin{lemma}\label{lemma:sup-sampseqn}
  If $f:\sampseq\to\RealInf$ is measurable then $\int\! f=\sup_{n}\int_{\sampseqp{n}}\! f$.
\end{lemma}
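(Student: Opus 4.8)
The plan is to recognize this as a routine instance of the Monotone Convergence Theorem, exploiting that the sets $\sampseqp{n}$ exhaust $\sampseq$ from below. First I would note that $f$ takes values in $\RealInf=[0,\infty]$, so it is a \emph{non-negative} measurable function, which is precisely the setting in which MCT applies. Recall also that, by the definition of $\int_{\sampseqp{n}}$ from the measure-theoretic preliminaries, $\int_{\sampseqp{n}} f = \int f(s)\cdot\indfun{\sampseqp{n}}{s}\,ds$.

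Next I would check that the truncated domains behave well. Each $\sampseqp{n}=\biguplus_{k\in\NN}H_k$, with $H_k=\RR^k$ for $k\le n$ and $H_k=\emptyset$ otherwise, lies in $\mathcal{S}$ (since $\RR^k\in\Borel_k$), and these sets form an increasing chain $\sampseqp{0}\subseteq\sampseqp{1}\subseteq\cdots$ whose union is all of $\sampseq$. Crucially, this last point uses that every trace has \emph{finite} length: if $s\in\RR^k$ then $s\in\sampseqp{n}$ for all $n\ge k$, so no point of $\sampseq$ escapes the chain.

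Then I would introduce the truncations $f_n\deq f\cdot\indfun{\sampseqp{n}}{\cdot}$, so that $f_n(s)=f(s)\cdot\indfun{\sampseqp{n}}{s}$. Each $f_n$ is non-negative and measurable, being the product of $f$ with the indicator of a measurable set. Because the $\sampseqp{n}$ increase to $\sampseq$, we have $f_n(s)\le f_{n+1}(s)$ for every $s$ and $f_n(s)\uparrow f(s)$ pointwise. Since $\int f_n=\int_{\sampseqp{n}} f$, the Monotone Convergence Theorem gives $\int f=\lim_n\int f_n=\sup_n\int_{\sampseqp{n}} f$, where the limit coincides with the supremum because the sequence $\int_{\sampseqp{n}} f$ is nondecreasing (from $f_n\le f_{n+1}$ and monotonicity of the integral).

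There is no substantial obstacle here; the single point requiring a moment's care is the pointwise convergence $f_n\uparrow f$, which rests entirely on traces being of finite length, so that the increasing family $\{\sampseqp{n}\}_n$ genuinely exhausts $\sampseq$.
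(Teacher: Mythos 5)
Your proof is correct and follows essentially the same route as the paper's: both define the truncations $g_n = f\cdot\indfun{\sampseqp{n}}{\cdot}$, observe that they increase pointwise to $f$ because the sets $\sampseqp{n}$ exhaust $\sampseq$, and conclude by the monotone convergence theorem. Your version merely spells out the measurability of $\sampseqp{n}$ and the role of finite trace length, which the paper leaves implicit.
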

\begin{proof}
  Let $g_{n}(s)=f(s)\cdot[s\in \sampseqp{n}]$, so $\int_{\sampseqp{n}}\! f = \int\!g_{n}$ by definition. 
  Since the $g_n$ are converging to $f$ pointwise from below, we have 
  $\int\! f=\sup_{n}\int\! g_{n}$ by the monotone convergence theorem.\qed
\end{proof}
A corollary is that $\sbd{\termone}=\sup_{n\in\NN}\sbdp{\termone}{n}$.
\begin{lemma}
  If $\termone\red\distrone$, then
  $\tsq{\termone}=A\mapsto\int\tsq{\termtwo}(A)\;\distrone(d\termtwo)$
\end{lemma}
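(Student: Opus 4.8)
The plan is to work entirely with the small-step presentation, writing $\tsq{\cdot}=\sts{\cdot}$ as justified by Theorem~\ref{thm:eq}, and to reduce the claim to a single application of the monotone convergence theorem. For each $n$ let $\distrfour_n$ be the unique distribution with $\sisss{\termone}{n}{\distrfour_n}$, and for each closed term $\termtwo$ let $\distrtwo^{\termtwo}_n$ be the unique distribution with $\sisss{\termtwo}{n}{\distrtwo^{\termtwo}_n}$; by the definition of the small-step semantics, $\sts{\termone}=\sup_n\distrfour_n$ and $\sts{\termtwo}=\sup_n\distrtwo^{\termtwo}_n$.

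Since $\termone\red\distrone$, the rule \tr{DRed Step} applies at every positive index, so for all $n$ and every measurable set $A$,
\[
  \distrfour_{n+1}(A)=\int\distrtwo^{\termtwo}_n(A)\,\distrone(d\termtwo).
\]
As $\distrfour_0=\emdistr$, taking the supremum over $n$ and reindexing gives
\[
  \sts{\termone}(A)=\sup_n\distrfour_{n+1}(A)=\sup_n\int\distrtwo^{\termtwo}_n(A)\,\distrone(d\termtwo).
\]

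The crux is to push the supremum inside the integral. Two facts make this legitimate. First, by the Monotonicity lemma the numbers $\distrtwo^{\termtwo}_n(A)$ form a nondecreasing sequence in $n$, for every fixed $\termtwo$ and $A$, with pointwise supremum $\sts{\termtwo}(A)$; hence we are taking a monotone limit rather than the supremum of an arbitrary directed family. Second, for each fixed $n$ and $A$ the integrand $\termtwo\mapsto\distrtwo^{\termtwo}_n(A)$ is exactly $\redss_n(\termtwo,A)$, which is measurable because $\redss_n$ is a kernel by Lemma~\ref{lemma:sstk}. These are precisely the hypotheses of the monotone convergence theorem, so
\[
  \sup_n\int\distrtwo^{\termtwo}_n(A)\,\distrone(d\termtwo)=\int\sup_n\distrtwo^{\termtwo}_n(A)\,\distrone(d\termtwo)=\int\sts{\termtwo}(A)\,\distrone(d\termtwo).
\]
Combining with the previous display yields $\sts{\termone}(A)=\int\sts{\termtwo}(A)\,\distrone(d\termtwo)$ for every measurable $A$, i.e.\ the claimed identity of measures.

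I expect the only genuine obstacle to be ensuring the two ingredients for the interchange of $\sup$ and $\int$ are in place: that the supremum really is a monotone limit (supplied by the Monotonicity lemma, which turns the defining directed set into an increasing chain) and that the integrand is measurable in $\termtwo$ (supplied by the kernel property of Lemma~\ref{lemma:sstk}). Everything else---the base case $\distrfour_0=\emdistr$, the uniqueness of $\distrfour_n$ and $\distrtwo^{\termtwo}_n$ at each index, and the reindexing $\sup_n\distrfour_n=\sup_n\distrfour_{n+1}$---is routine bookkeeping.
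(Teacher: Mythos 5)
Your proof is correct and takes essentially the same route as the paper's: both express $\tsq{\termone}$ as the supremum of the step-indexed small-step approximations, use the defining recurrence from \textsc{(DRed Step)} to relate index $n+1$ for $\termone$ to index $n$ for the terms $\termtwo$ in the support of $\distrone$, and then interchange supremum and integral by the monotone convergence theorem. The only difference is that you spell out the two hypotheses of that theorem (monotonicity of $\distrtwo^{\termtwo}_n(A)$ in $n$, and measurability of $\termtwo\mapsto\distrtwo^{\termtwo}_n(A)$ via Lemma~\ref{lemma:sstk}), which the paper leaves implicit.
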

\begin{proof}
  For every $n$ and for every term $\termtwo$, let
  $\distrtwo^n_\termtwo$ be the unique value distribution such that
  $\termtwo\redss_n\distrtwo^n_\termtwo$. By definition, we have that
  $$
  \distrtwo^{n+1}_\termone(A)=\int\distrtwo^{n}_\termtwo(A)\;\distrone(d\termtwo).
  $$
  By the monotone convergence theorem, then,
  \begin{align*}
    \tsq{\termone}(A)&=\sup_n\distrtwo^{n+1}_\termone(A)=\sup_n\int\distrtwo^{n}_\termtwo(A)\;\distrone(d\termtwo)\\
       &=\int(\sup_n\distrtwo^{n}_\termtwo(A))\;\distrone(d\termtwo)=\int\tsq{\termtwo}(A)\;\distrone(d\termtwo).
  \end{align*}
  \qed
\end{proof}
The following is a useful technical lemma.
\begin{lemma}\label{lemma:commutetrace}
  $\sbdp{E[\distone(\vec{c})]}{n+1}(A)=\int\sbdp{\termtwo}{n}(A)\;\ctm{E}{\mu_\distone(\vec{c})}(d\termtwo)$.
\end{lemma}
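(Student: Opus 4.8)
The plan is to reduce both sides to a single Lebesgue integral over $\RR$ against the density $\pdf{\distone}(\vec{c},\cdot)$ of $\mu_{\distone(\vec{c})}$, by peeling off the \emph{first} random draw, which is necessarily consumed by the head redex $\distone(\vec{c})$ sitting in the evaluation context $E$. Once both the left-hand side and the right-hand side are written in that form, equality is immediate.

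First I would establish a one-step unfolding fact at the level of the sampling-based semantics: for every real $c$ with $w' = \pdf{\distone}(\vec{c},c) > 0$ and every trace $s'$,
\[
\termsmp{E[\distone(\vec{c})]}{c\mathrel{::}s'} = w'\cdot\termsmp{E[c]}{s'}
\quad\text{and}\quad
\obssmp{E[\distone(\vec{c})]}{c\mathrel{::}s'} = \obssmp{E[c]}{s'},
\]
whereas for $\pdf{\distone}(\vec{c},c) = 0$ we have $\termsmp{E[\distone(\vec{c})]}{c\mathrel{::}s'} = 0$, and also $\termsmp{E[\distone(\vec{c})]}{[]} = 0$. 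To prove this I pass to small-step reduction via Theorem~\ref{thm:sample-small-big-eq}: the unique first step of $(E[\distone(\vec{c})],1,c\mathrel{::}s')$ is \tr{Red Random}, yielding $(E[c],w',s')$ when $w'>0$, and \tr{Red Random Fail} (with weight $0$) otherwise. In the first case, $E[\distone(\vec{c})]\Downarrow^{c::s'}_w G$ holds iff $(E[c],w',s') \Rightarrow (G,w,[])$ iff, by the weight-scaling Lemma~\ref{lemma:multiply-closure} (applied with $w^* = 1/w'$ and with $w^*=w'$ for the two directions), $E[c]\Downarrow^{s'}_{w/w'} G$; this gives both the weight factorization and the equality of results (the non-terminating case yielding $0$ and $\fail$ on both sides). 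In the $\pdf=0$ case the first step already forces weight $0$, and since weights never climb back above $0$ (Lemma~\ref{lemma:w-greater-zero}) every terminating run has weight $0$; the empty-trace case holds because $E[\distone(\vec{c})]$ is neither a generalized value nor an erroneous redex and cannot reduce without a draw.

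Next I would decompose the trace integral defining the left-hand side. Writing $f(s) = \termsmp{E[\distone(\vec{c})]}{s}\cdot\indfun{A}{\obssmp{E[\distone(\vec{c})]}{s}}$ and recalling that the stock measure is $\mu=\sum_k\lambda_k$, the identification $\RR^k\cong\RR\times\RR^{k-1}$ together with Tonelli's theorem gives
\[
\sbdp{E[\distone(\vec{c})]}{n+1}(A) = \int_{\sampseqp{n+1}}\! f = f([]) + \int_{\RR}\!\int_{\sampseqp{n}}\! f(c\mathrel{::}s')\,ds'\,dc.
\]
Since $f([])=0$, and since $f(c\mathrel{::}s')$ vanishes whenever $\pdf{\distone}(\vec{c},c)=0$, I substitute the unfolding fact to rewrite the inner integral, for $c$ with $\pdf{\distone}(\vec{c},c)>0$, as $\pdf{\distone}(\vec{c},c)\cdot\int_{\sampseqp{n}}\termsmp{E[c]}{s'}\cdot\indfun{A}{\obssmp{E[c]}{s'}}\,ds' = \pdf{\distone}(\vec{c},c)\cdot\sbdp{E[c]}{n}(A)$; the set $\{\pdf=0\}$ contributes $0$ under both descriptions, so
\[
\sbdp{E[\distone(\vec{c})]}{n+1}(A) = \int_{\RR}\pdf{\distone}(\vec{c},c)\cdot\sbdp{E[c]}{n}(A)\,dc.
\]

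Finally I would identify this with the right-hand side. Since $\ctm{E}{\mu_{\distone(\vec{c})}}$ is the push-forward of $\mu_{\distone(\vec{c})}$ along $c\mapsto E[c]$, the change-of-variables formula for push-forward measures gives $\int\sbdp{\termtwo}{n}(A)\,\ctm{E}{\mu_{\distone(\vec{c})}}(d\termtwo) = \int\sbdp{E[c]}{n}(A)\,\mu_{\distone(\vec{c})}(dc)$, and as $\mu_{\distone(\vec{c})}$ has density $\pdf{\distone}(\vec{c},\cdot)$ with respect to Lebesgue measure this equals $\int_{\RR}\pdf{\distone}(\vec{c},c)\cdot\sbdp{E[c]}{n}(A)\,dc$, matching the expression just derived. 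I expect the main obstacle to be the measure-theoretic bookkeeping of the middle step: justifying the Tonelli split of the trace measure $\mu=\sum_k\lambda_k$ into a first coordinate plus a tail of length at most $n$, and confirming the measurability of $c\mapsto\sbdp{E[c]}{n}(A)$ (which follows from the measurability results underpinning Lemma~\ref{lemma:pi-measurable}). The one-step unfolding and the concluding change of variables are then routine given the earlier lemmas.
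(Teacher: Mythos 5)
The paper states Lemma~\ref{lemma:commutetrace} with no proof at all (it is introduced only as ``a useful technical lemma''), so there is no official argument to compare against; judged on its own, your proof is correct and is the natural way to discharge the claim. Your three ingredients are exactly what is needed: the one-step unfolding $\termsmp{E[\distone(\vec{c})]}{c\mathrel{::}s'}=\pdf{\distone}(\vec{c},c)\cdot\termsmp{E[c]}{s'}$ (with matching results), obtained from Theorem~\ref{thm:sample-small-big-eq}, determinism of the first small step, and weight rescaling via Lemma~\ref{lemma:multiply-closure}; the Tonelli/finite-sum decomposition of the stock measure $\sum_k\lambda_k$ into a first coordinate and a tail in $\sampseqp{n}$; and the change-of-variables identity for the push-forward $\ctm{E}{\mu_{\distone(\vec{c})}}$ followed by writing $\mu_{\distone(\vec{c})}$ through its density. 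Your handling of the degenerate cases (the empty trace, and $c$ with $\pdf{\distone}(\vec{c},c)=0$, where zero weight is absorbing under all reduction rules) is also right; the only cosmetic slip is citing Lemma~\ref{lemma:w-greater-zero} for absorption of zero weight, which that lemma does not quite say, though the fact itself is immediate by inspection of the rules, and the joint measurability needed for Tonelli indeed follows from the measurability of $\mathbf{P}'$ and $\mathbf{O}'$ established in Appendix~\ref{section:proof-of-measurability}.
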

\begin{lemma}
  If $\termone\red\distrone$, then
  $\sbd{\termone}=A\mapsto\int\sbd{\termtwo}(A)\;\distrone(d\termtwo)$
\end{lemma}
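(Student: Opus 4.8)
The plan is to proceed by cases on the three clauses defining $\termone\red\distrone$, reducing each to elementary identities between the sampling densities $\termsmp{\cdot}{s}$ and result functions $\obssmp{\cdot}{s}$ of $M$ and of the terms in $\supp{\distrone}$. In every case I will use Theorem~\ref{thm:sample-small-big-eq} to pass freely between $\Downarrow$ and $\Rightarrow$, so that the single first reduction step of $(M,1,s)$ can be analysed; by Lemma~\ref{lemma:small-step-determined} this first step is unique, which is what pins down the relationship between $M$ and $\distrone$ at the level of traces.

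If the clause used is the deterministic one, then $\distrone=\dirac{M'}$ with $M\detred M'$, so the claimed right-hand side is simply $\sbd{M'}$. Since the first step $(M,1,s)\to(M',1,s)$ consumes no trace and leaves the weight unchanged, and $M$ (being reducible) is not a generalized value, Theorem~\ref{thm:sample-small-big-eq} together with uniqueness of the first step gives $M\Downarrow^s_w G \iff M'\Downarrow^s_w G$ for every $s,w,G$. Hence $\termsmp Ms=\termsmp{M'}s$ and $\obssmp Ms=\obssmp{M'}s$ pointwise, and $\sbd M=\sbd{M'}$ follows directly from the definition of $\sbd{\cdot}$. The score case $M=E[\score c]$, $\distrone=c\cdot\dirac{E[\ttrue]}$ is identical except that the first step $(E[\score c],1,s)\to(E[\ttrue],c,s)$ rescales the weight; using Lemma~\ref{lemma:multiply-closure} to factor out $c$ I get $M\Downarrow^s_w G\iff E[\ttrue]\Downarrow^s_{w/c}G$, whence $\termsmp Ms=c\cdot\termsmp{E[\ttrue]}s$ and $\obssmp Ms=\obssmp{E[\ttrue]}s$, so $\sbd M(A)=c\cdot\sbd{E[\ttrue]}(A)=\int\sbd N(A)\,\distrone(dN)$ for all $A$.

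The sampling case is the only one with genuine content, and here I lean on the technical Lemma~\ref{lemma:commutetrace}, which already establishes the desired identity for traces of bounded length: $\sbdp{E[\distone(\vec c)]}{n+1}(A)=\int\sbdp Nn(A)\,\ctm E{\mu_{\distone(\vec c)}}(dN)$. Using the corollary $\sbd{\termone}=\sup_n\sbdp{\termone}{n}$ of Lemma~\ref{lemma:sup-sampseqn} (and that shifting the index by one leaves the supremum unchanged), I take the supremum over $n$ of both sides. The right-hand side becomes $\sup_n\int\sbdp Nn(A)\,\distrone(dN)$, and since the integrands $\sbdp Nn(A)$ increase with $n$, the monotone convergence theorem lets me move the supremum inside the integral to obtain $\int\sup_n\sbdp Nn(A)\,\distrone(dN)=\int\sbd N(A)\,\distrone(dN)$. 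This mirrors exactly the argument used just above for $\tsq{\cdot}$.

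The main obstacle is this interchange of the supremum with the integral against $\distrone$ in the sampling case; it is legitimate precisely because the bounded-trace approximants $\sbdp{N}{n}(A)$ form a monotone sequence in $n$ and $N\mapsto\sbdp Nn(A)$ is measurable (the latter being already required for Lemma~\ref{lemma:commutetrace} to be well posed, and yielding measurability of $N\mapsto\sbd N(A)$ as a countable supremum, so that the target integral is defined). Everything else is bookkeeping about traces and weights licensed by Theorem~\ref{thm:sample-small-big-eq} and the determinism lemmas for the small-step relation.
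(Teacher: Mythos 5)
Your proof is correct, and there is in fact nothing in the paper to compare it against line by line: the paper states this lemma bare, with no proof, sandwiched between Lemma~\ref{lemma:commutetrace} and the lemmas on deterministic divergence and convergence. The argument you give is evidently the intended one, assembled from exactly the machinery the paper places around the statement: the sampling case $\ct{\ectxone}{\distone(\vec{c})}\red\ctm{\ectxone}{\mu_{\distone(\vec{c})}}$ via Lemma~\ref{lemma:commutetrace} plus monotone convergence, mirroring the paper's explicit proof of the analogous lemma for $\tsq{\cdot}$; and the deterministic and score cases via Theorem~\ref{thm:sample-small-big-eq}, first-step determinism (Lemma~\ref{lemma:small-step-determined}), and weight rescaling (Lemma~\ref{lemma:multiply-closure}) — the latter being a necessary supplement, since the paper's Lemma~\ref{lemma:detconv} covers only weight-preserving deterministic steps and not $\mathtt{score}$. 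The one caveat you flag, measurability of $\termtwo\mapsto\sbdp{\termtwo}{n}(A)$ (needed both for the integral against $\distrone$ to be defined and for the monotone-convergence step), is left implicit by the paper as well, so your proof is no less rigorous than the surrounding text.
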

A program $\termone$ is said to \emph{deterministically diverge} iff
$(\termone,1,s)\Rightarrow(\termtwo,w,[])$ implies that $w=1$, $s=[]$, and
$\termtwo$ is \emph{not} a generalized value.   Terms that deterministically
diverge have very predictable semantics, both distribution- and
sampling-based.  
\begin{lemma}\label{lemma:detdiverg}
  If $\termone$ deterministically diverges then
  $\tsq{\termone}=\sbd{\termone}=\emdistr$.
\end{lemma}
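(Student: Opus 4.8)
The plan is to prove the two equalities $\sbd{\termone}=\emdistr$ and $\tsq{\termone}=\emdistr$ separately and \emph{directly} from the definition of deterministic divergence, without appealing to Theorem~\ref{thm:sampling-distribution} (which this lemma is meant to support). The sampling-based equality is almost immediate; the distribution-based one needs a short structural lemma together with an induction on the step index.

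For $\sbd{\termone}=\emdistr$ I would show that the termination density $\termsmp{\termone}{s}$ is identically zero. Recall that $\termsmp{\termone}{s}=w$ exactly when $\termone\Downarrow^s_w\gvalone$ for some generalized value $\gvalone$, and $0$ otherwise, and that by Theorem~\ref{thm:sample-small-big-eq} we have $\termone\Downarrow^s_w\gvalone$ iff $(\termone,1,s)\redss(\gvalone,w,[])$. But deterministic divergence asserts that whenever $(\termone,1,s)\redss(\termtwo,w,[])$ the term $\termtwo$ is \emph{not} a generalized value; hence no derivation $(\termone,1,s)\redss(\gvalone,w,[])$ with $\gvalone\in\gvalset$ exists, for any $s$. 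Therefore $\termsmp{\termone}{s}=0$ for every $s$, and $\sbd{\termone}(\tsetone)=\int\termsmp{\termone}{s}\cdot\indfun{\tsetone}{\obssmp{\termone}{s}}\,ds=0$ for every measurable $\tsetone$, i.e.\ $\sbd{\termone}=\emdistr$.

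For $\tsq{\termone}=\emdistr$ I use $\tsq{\termone}=\sts{\termone}=\sup_n\distrone_n$, where $\distrone_n$ is the unique distribution with $\sisss{\termone}{n}{\distrone_n}$, so it suffices to show $\distrone_n=\emdistr$ for all $n$. The key is the structural claim: \emph{if $\termone$ deterministically diverges then $\termone$ is not a generalized value and $\termone\red\dirac{\termtwo}$ for some deterministically diverging $\termtwo$.} That $\termone$ is not a generalized value follows from the reflexive reduction $(\termone,1,[])\redss(\termone,1,[])$. By Lemma~\ref{lemma:funct} there is a unique $\distrone$ with $\termone\red\distrone$, and by Lemma~\ref{lemma:unique} we write $\termone=\ct{\ectxone}{R}$. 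I then rule out the two non-Dirac shapes of $\distrone$. If $R=\distone(\vec{c})$, pick a single-element trace $[c']$: when $\pdf{\distone}(\vec{c},c')>0$, \textsc{(Red Random)} gives $(\termone,1,[c'])\redss(\ct{\ectxone}{c'},w,[])$, and when the density is $0$ everywhere, \textsc{(Red Random Fail)} followed by the collapse of $\fail$ through $\ectxone$ gives $(\termone,1,[c'])\redss(\fail,0,[])$; both contradict divergence because the source trace $[c']$ is nonempty. If $R=\score{c}$, then \textsc{(Red Score)} gives $(\termone,1,[])\redss(\ct{\ectxone}{\ttrue},c,[])$, so divergence forces $c=1$ and hence $\distrone=\dirac{\ct{\ectxone}{\ttrue}}$. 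The remaining (pure-reduction) case also yields a Dirac. Finally $\termtwo$ again diverges, since in both surviving cases $(\termone,1,s)\red(\termtwo,1,s)$ preserves weight and trace, so every $(\termtwo,1,s)\redss(\termthree,w,[])$ prolongs to $(\termone,1,s)\redss(\termthree,w,[])$ and inherits the divergence conclusions.

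Given this claim the induction on $n$ is routine. The base case $\sisss{\termone}{0}{\emdistr}$ is \textsc{(DRed Empty)}. For $n+1$, since $\termone$ is not a generalized value the derivation must be \textsc{(DRed Step)} with $\termone\red\dirac{\termtwo}$; as $\supp{\dirac{\termtwo}}=\{\termtwo\}$, integrating against the Dirac gives $\distrone_{n+1}=\distrtwo$ where $\sisss{\termtwo}{n}{\distrtwo}$, and $\distrtwo=\emdistr$ by the induction hypothesis applied to the diverging term $\termtwo$. Hence every $\distrone_n=\emdistr$ and $\sts{\termone}=\emdistr$. I expect the main obstacle to be the structural claim, and specifically the exclusion of a genuine sampling reduction: one must pass from the distribution-based redex $\ct{\ectxone}{\distone(\vec{c})}$ back into the sampling-based small-step semantics and exhibit a concrete nonempty trace that drives $\termone$ to an empty trace, thereby contradicting the divergence hypothesis.
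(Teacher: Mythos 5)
Your proof is correct and takes essentially the same route as the paper's: the sampling-based equality $\sbd{\termone}=\emdistr$ by observing that the termination density is identically zero (no evaluation to a generalized value can exist), and the distribution-based equality by induction on the step index, using the fact that a deterministically diverging term is not a generalized value and one-step reduces to a Dirac measure on another deterministically diverging term. The only difference is one of detail: you explicitly prove the structural claim (ruling out sampling redexes via a nonempty-trace contradiction and forcing the score weight to be $1$), which the paper merely asserts with ``it must be that $\termone\red\dirac{\termtwo}$ where $\termtwo$ deterministically diverges.''
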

\begin{proof}
  One can easily prove, by induction on $n$, that if $\termone$
  deterministically diverges, then $\sisss{\termone}{n}{\emdistr}$:
  \begin{varitemize}
  \item
    If $n=0$, then $\sisss{\termone}{n}{\emdistr}$ by definition.
  \item
    About the inductive case, since $\termone$ cannot be a generalized value, it must
    be that $\termone\red\dirac{\termtwo}$ (where $\termtwo$
    deterministically diverges) and that
    $\sisss{\termone}{n+1}{\distrone}$, where
    $\sisss{\termtwo}{n}{\distrone}$.
    By induction hypothesis, $\distrone=\emdistr$.
  \end{varitemize}
  The fact that $\sbd{\termone}=\emdistr$ is even
  simpler to prove, since if $\termone$ deterministically diverges,
  then there cannot be any $s,w,\valone$ such that
  $\termone\Downarrow^s_w\valone$, and thus
  $\termsmp{\termone}{s}$ is necessarily $0$.
  \qed
\end{proof}

A program $\termone$ is
said to \emph{deterministically converge to a program~$\termtwo$} iff
$(\termone,1,[])\Rightarrow(\termtwo,1,[])$. 
Any term that deterministically converges
to another term has the same semantics as the latter.
\begin{lemma}\label{lemma:detconv}
  Let $\termone$ deterministically converge to $\termtwo$.
  Then:
  \begin{varitemize}
  \item
    $\distrone\leq\distrtwo$ whenever $\termone\redss_{n}\distrone$;
    and $\termtwo\redss_n\distrtwo$;
  \item
    $\sbdp{\termone}{n}=\sbdp{\termtwo}{n}$;
  \item
    $\tsq{\termone}=\tsq{\termtwo}$ and $\sbd{\termone}=\sbd{\termtwo}$
  \end{varitemize}
\end{lemma}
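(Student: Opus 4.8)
The plan is to prove the three items along two essentially independent lines, being careful to avoid Theorem~\ref{thm:sampling-distribution}: since the present lemma is one of its ingredients, using it would be circular. One line stays inside the distribution-based semantics and settles item one and the $\tsq{\termone}=\tsq{\termtwo}$ half of item three; the other stays inside the sampling-based semantics and settles item two and the $\sbd{\termone}=\sbd{\termtwo}$ half. Both lines rest on the same concrete description of deterministic convergence. Since $(\termone,1,[])\redss(\termtwo,1,[])$ starts from the empty trace, no step can fire \tr{Red Random} or \tr{Red Random Fail} (each consumes a trace element), so every step is \tr{Red Pure} or \tr{Red Score} and the trace stays $[]$. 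The weight is unchanged by \tr{Red Pure} and multiplied by a factor in $(0,1]$ by \tr{Red Score}, hence is non-increasing; as it begins and ends at $1$, every intermediate weight is $1$ and each \tr{Red Score} step uses $c=1$. Thus there is a chain $\termone=M_0\red\cdots\red M_k=\termtwo$ of single sampling steps, all with weight $1$ and trace $[]$.

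For the distribution-based line, I would observe that each step $M_i\red M_{i+1}$ above is mirrored by a \emph{Dirac} one-step reduction $M_i\red\dirac{M_{i+1}}$ of the distribution semantics: a \tr{Red Pure} step gives $M_i\detred M_{i+1}$ and hence $M_i\red\dirac{M_{i+1}}$, while a \tr{Red Score} step with $c=1$ gives $M_i=\ct{E}{\score{1}}\red 1\cdot\dirac{\ct{E}{\ttrue}}=\dirac{M_{i+1}}$; by Lemma~\ref{lemma:funct} this is the unique one-step of $M_i$. Writing $\distrone^P_m$ for the unique distribution with $\sisss{P}{m}{\distrone^P_m}$, feeding a Dirac one-step into \tr{DRed Step} collapses the integral and (as no $M_i$ with $i<k$ is a generalized value, so \tr{DRed Val} never interferes) gives $\distrone^{M_i}_{n+1}=\distrone^{M_{i+1}}_{n}$. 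Iterating yields $\distrone^{\termone}_{n}=\emdistr$ for $n\le k$ and $\distrone^{\termone}_{n}=\distrone^{\termtwo}_{n-k}$ for $n\ge k$. Item one is then immediate: for $n\le k$ we have $\distrone^{\termone}_n=\emdistr\le\distrtwo$, and for $n>k$ we have $\distrone^{\termone}_n=\distrone^{\termtwo}_{n-k}\le\distrone^{\termtwo}_n=\distrtwo$ by the Monotonicity lemma. Taking suprema, the finitely many leading $\emdistr$ terms drop out and $\tsq{\termone}=\sup_n\distrone^{\termone}_n=\sup_m\distrone^{\termtwo}_m=\tsq{\termtwo}$.

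For the sampling-based line, the pivotal step is $(\termone,1,s)\redss(\termtwo,1,s)$ for every trace $s$, which follows by applying Lemma~\ref{lemma:add-suffix-closure} (with suffix $s$) to the $k$-step reduction $(\termone,1,[])\redss(\termtwo,1,[])$. I then claim $(\termone,1,s)\redss(\gvalone,w,[])$ iff $(\termtwo,1,s)\redss(\gvalone,w,[])$ for all $s,w,\gvalone$. The forward implication is transitivity of $\redss$. For the converse I would invoke single-step determinism (Lemma~\ref{lemma:small-step-determined}): the sequence out of $(\termone,1,s)$ is unique and reaches $(\termtwo,1,s)$ at step $k$, so a reduction ending in the irreducible configuration $(\gvalone,w,[])$ cannot terminate before step $k$ and must therefore pass through $(\termtwo,1,s)$, giving $(\termtwo,1,s)\redss(\gvalone,w,[])$. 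By Theorem~\ref{thm:sample-small-big-eq} the same equivalence holds for $\Downarrow$, so (using uniqueness of weights and results, Lemma~\ref{lemma:big-step-determined}) $\termsmp{\termone}{s}=\termsmp{\termtwo}{s}$ and $\obssmp{\termone}{s}=\obssmp{\termtwo}{s}$ for every $s$. Equality of integrands gives $\sbdp{\termone}{n}=\sbdp{\termtwo}{n}$ (item two), and integrating over all traces gives $\sbd{\termone}=\sbd{\termtwo}$.

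The main obstacle is the converse direction of the big-step equivalence: turning single-step determinism into the factorisation of the unique (possibly long) reduction out of $(\termone,1,s)$ through $(\termtwo,1,s)$ requires a short induction showing that two reductions from a common configuration agree on their common prefix, plus the observation that generalized-value configurations are terminal. The remaining care is bookkeeping---keeping the index shift by $k$ consistent in the Dirac-collapse argument and checking the boundary $n=k$---and, throughout, not shortcutting item three via Theorem~\ref{thm:sampling-distribution}, which would be circular.
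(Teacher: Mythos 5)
Your proof is correct, and its skeleton coincides with the paper's: the paper likewise reduces deterministic convergence to individual steps (by transitivity of equality), establishes the index shift ``$\termone\redss_{n+1}\distrone$ iff $\termtwo\redss_n\distrone$'' --- which is exactly your Dirac-collapse computation --- and the big-step invariance ``$\termone\Downarrow^s_w\gvalone$ iff $\termtwo\Downarrow^s_w\gvalone$'', from which all three items follow. Two differences deserve note. First, you are more careful than the paper at one point: the paper asserts that a single step $(\termone,1,[])\rightarrow(\termtwo,1,[])$ means ``namely that $\termone\detred\termtwo$'', which overlooks \tr{Red Score} with $c=1$ (it also preserves weight $1$ and the empty trace, yet is not a $\detred$ step); your chain analysis handles this case explicitly, mapping it to the Dirac one-step $\ct{\ectxone}{\score{1}}\red 1\cdot\dirac{\ct{\ectxone}{\ttrue}}=\dirac{\ct{\ectxone}{\ttrue}}$, so the same shift argument goes through, closing a small gap in the published proof. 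Second, for the sampling-based items the paper checks the big-step invariance directly (``it is easy to realize''), whereas you route through the small-step relation using Lemma~\ref{lemma:add-suffix-closure}, determinism (Lemma~\ref{lemma:small-step-determined} plus the prefix-agreement induction you flag), and Theorem~\ref{thm:sample-small-big-eq}; this is legitimate and non-circular, since that theorem is proved independently of the present lemma, merely longer than the paper's direct inspection of the big-step rules. One labelling slip worth fixing: in your claimed equivalence, the direction obtained by transitivity is the right-to-left one (from $\termtwo$'s reduction to $\termone$'s), while the direction requiring determinism is left-to-right; you have the words ``forward'' and ``converse'' interchanged, but both arguments are present and correct, so this does not affect soundness.
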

\begin{proof}
  The first point is an induction on the structure of the
  proof that $\termone$ deterministically converge to $\termtwo$.
  Let us consider the second and third points.
  Since equality is transitive, we can assume, without
  losing any generality, that $(\termone,1,[])\rightarrow(\termtwo,1,[])$,
  namely that $\termone\detred\termtwo$. With the latter hypothesis,
  it is easy to realize that $\termone\Downarrow^w_s\valone$ iff
  $\termtwo\Downarrow^w_s\valone$ and that
  $\termone\redss_{n+1}\distrone$ iff $\termtwo\redss_n\distrone$.
  The thesis easily follows.
  \qed
\end{proof}
\begin{lemma}\label{lemma:values}
  For every generalized value $\gvalone$, it holds that
  $\tsq{\gvalone}=\sbd{\gvalone}=\dirac{\gvalone}$.
\end{lemma}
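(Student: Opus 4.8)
The plan is to compute $\tsq{\gvalone}$ and $\sbd{\gvalone}$ separately and show that each equals $\dirac{\gvalone}$. Both computations are short, because a generalized value admits only trivial behaviour: it neither reduces nor consumes any randomness.

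For the distribution-based side, I would read off the answer directly from the step-indexed rules. The only rules whose subject is a generalized value are \tr{DRed Empty}, giving $\sisss{\gvalone}{0}{\emdistr}$, and \tr{DRed Val}, giving $\sisss{\gvalone}{n}{\dirac{\gvalone}}$ for every $n>0$; no instance of \tr{DRed Step} applies, since by Lemma~\ref{lemma:funct} there is no $\distrone$ with $\gvalone\red\distrone$. Hence $\{\distrone\mid\gvalone\redss_n\distrone\}=\{\emdistr,\dirac{\gvalone}\}$, whose least upper bound in the pointwise order is $\dirac{\gvalone}$. Thus $\sts{\gvalone}=\dirac{\gvalone}$, and therefore $\tsq{\gvalone}=\dirac{\gvalone}$.

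For the sampling-based side, the key observation is that in Figure~\ref{fig-sampling} the only rule whose conclusion has a generalized value as its subject is \tr{Eval Val}. Consequently $\gvalone\Downarrow^s_w\gvaltwo$ holds if and only if $s=[]$, $w=1$, and $\gvaltwo=\gvalone$; I would justify this either by direct inspection of the rules or by appeal to Lemma~\ref{lemma:big-step-determined}. The two auxiliary functions then become explicit: $\termsmp{\gvalone}{s}=1$ if $s=[]$ and $0$ otherwise, while $\obssmp{\gvalone}{[]}=\gvalone$. Substituting into the definition of the result distribution gives
\[
\sbd{\gvalone}(A)=\int\termsmp{\gvalone}{s}\cdot\indfun{A}{\obssmp{\gvalone}{s}}\,ds=\int[s=[]]\cdot\indfun{A}{\gvalone}\,ds,
\]
and since the stock measure satisfies $\lambda_0=\delta([])$, so that $\mu(\{[]\})=1$, this integral evaluates to $\indfun{A}{\gvalone}=\dirac{\gvalone}(A)$. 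Well-definedness and measurability of all quantities involved are guaranteed by Lemma~\ref{lemma:pi-measurable}. Note that this argument is uniform in $\gvalone$, covering the case $\gvalone=\fail$ as well as ordinary values.

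I expect no serious obstacle. The only point requiring care is the claim that a generalized value has exactly one evaluation, namely the empty-trace, weight-one derivation of \tr{Eval Val}. This is where one must verify that none of the application, conditional, sampling, scoring, or erroneous-redex rules can fire on a bare generalized value, which is immediate once one checks that each of those rules demands a syntactic shape incompatible with membership in $\gvalset$.
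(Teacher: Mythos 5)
Your proof is correct, and it is precisely the routine verification the paper intends: the lemma is stated there without any proof, so your computation of $\sts{\gvalone}$ from the step-indexed rules and of $\sbd{\gvalone}$ from the definitions of $\termsmp{\gvalone}{\cdot}$, $\obssmp{\gvalone}{\cdot}$ and the stock measure (with $\mu(\{[]\})=1$) fills in exactly the details left implicit. The one point of care, which you flag yourself, is that ruling out evaluations of a generalized value over nonempty traces must come from inspection of the rule shapes in Figure~\ref{fig-sampling} rather than from Lemma~\ref{lemma:big-step-determined}, since that lemma only compares two derivations over the same trace.
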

If a term does not diverge deterministically, then 
it converges either to a generalized value or to a term
that performs a sampling.
\begin{lemma}\label{lemma:tracecases}
  For every program $\termone$, exactly one of the following
  conditions holds:
  \begin{varitemize}
  \item
    $\termone$ deterministically diverges;
  \item
    There is generalized value $\gvalone$ such that
    $\termone$ deterministically converges to $\gvalone$
  \item
    There are $\ectxone,\distone,c_1,\ldots,c_{|\distone|}$ such that
    $\termone$ deterministically converges to $E[\distone(c_1,\ldots,c_{|\distone|})]$.
  \end{varitemize}
\end{lemma}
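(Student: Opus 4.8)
The plan is to reduce the statement to a classification of the unique maximal deterministic reduction sequence issuing from $\termone$. By Lemma~\ref{lemma:det-reduction-deterministic} the relation $\detred$ is a partial function on closed terms, so there is a well-defined (finite or infinite) sequence $\termone = M_0 \detred M_1 \detred M_2 \detred \cdots$ that halts only at a $\detred$-normal term. The link to the configuration semantics is that each step $M_i \detred M_{i+1}$ lifts, by \tr{Red Pure}, to $(M_i, w, s) \to (M_{i+1}, w, s)$ for any $w,s$, leaving the weight and trace untouched; moreover this is the \emph{only} small-step available from such a term, since $M_i$ is reducible by a deterministic redex and the applicable rule is unique by Lemma~\ref{lemma:small-step-determined}. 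Thus the whole classification is governed by whether this sequence is infinite or, if finite, by the shape of its last term.

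First I would dispose of the terminating cases. Suppose the sequence is finite, ending at a $\detred$-normal term $M_k$. By Lemma~\ref{lemma:unique}, $M_k$ is either a generalized value or factors uniquely as $E[R]$ for a redex $R$. If $M_k \in \gvalset$, then $(\termone, 1, []) \Rightarrow (M_k, 1, [])$ along the lifted \tr{Red Pure} steps, so $\termone$ deterministically converges to the generalized value $M_k$, giving the second alternative. If $M_k$ is not a generalized value, then $R$ is a redex on which $\detred$ is undefined; inspecting Figure~\ref{fig:detred}, $\detred$ handles every redex except $\distone(\vec c)$ and $\score c$, so $R$ is one of these two. When $R = \distone(\vec c)$ we obtain $(\termone,1,[]) \Rightarrow (E[\distone(\vec c)], 1, [])$, the third alternative.

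Next I would treat the infinite case and verify exclusivity. If the sequence never terminates, then from any initial trace $s$ the reduction of $(\termone, 1, s)$ is forced (by determinism, Lemmas~\ref{lemma:small-step-determined} and~\ref{lemma:small-steps-determined}) to follow the same \tr{Red Pure} steps, which never consume a trace element, never alter the weight, and never reach a generalized value. Hence any $(\termtwo, w, [])$ with $(\termone, 1, s) \Rightarrow (\termtwo, w, [])$ must have $w = 1$, arise from $s = []$, and have $\termtwo = M_i$ a non-value; this is exactly deterministic divergence, the first alternative. Exclusivity is then immediate from determinism of reduction: the fate of $(\termone, 1, [])$ is unique, the targets of the second and third alternatives are syntactically disjoint (a generalized value is never of the form $E[\distone(\vec c)]$, by the uniqueness in Lemma~\ref{lemma:unique}), and a term that converges to $E[\distone(\vec c)]$ fails to diverge because feeding it a one-element trace $[c']$ consumes that element (via \tr{Red Random} or \tr{Red Random Fail}), producing a reachable empty-trace configuration with $s=[c']\neq[]$.

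The hard part — and the one genuine obstacle — is the remaining $\detred$-normal form $E[\score c]$. This is a non-value on which $\detred$ is undefined, yet it is neither a generalized value nor a sampling redex, so the three alternatives as literally stated are not exhaustive once scoring occurs (and note that firing \tr{Red Score} scales the weight to $c$, so such a term cannot deterministically converge with weight $1$). The clean remedy is to regard $\score c$ as a probabilistic redex on the same footing as $\distone(\vec c)$ and either fold it into the third alternative or add a parallel fourth one; with that adjustment the argument above applies verbatim to $E[\score c]$ in place of $E[\distone(\vec c)]$. I expect the remaining work to be routine bookkeeping, the only real care being to match the weight-and-trace side conditions in the definitions of deterministic divergence and convergence against the behaviour of the lifted \tr{Red Pure} sequence.
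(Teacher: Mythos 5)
Your proposal cannot be compared against the paper's reasoning in any detailed way, because the paper's entire proof of Lemma~\ref{lemma:tracecases} is the single word ``Easy''; your argument supplies what the paper omits. The mechanics you give are sound: determinism of $\detred$ (Lemma~\ref{lemma:det-reduction-deterministic}) yields a unique maximal reduction sequence, the lifting through \tr{Red Pure} together with Lemmas~\ref{lemma:small-step-determined} and~\ref{lemma:small-steps-determined} correctly ties that sequence to the configuration semantics, and your exclusivity argument --- in particular, that a term converging to $E[\distone(\vec{c})]$ cannot diverge because running it on a one-element trace $[c']$ produces a reachable empty-trace configuration with $s=[c']\neq[]$ --- is exactly right (it implicitly uses Lemma~\ref{lemma:add-suffix-closure} to transport the convergence to the larger initial trace).

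More importantly, the obstacle you flag in your last paragraph is not an artifact of your approach but a genuine error in the paper. The closed term $\score{0.5}$ satisfies none of the three alternatives: it does not deterministically diverge, since $(\score{0.5},1,[])\Rightarrow(\ttrue,0.5,[])$ has weight $0.5\neq 1$ and $\ttrue\in\gvalset$; it does not deterministically converge to a generalized value, since the only weight-$1$ configuration reachable from $(\score{0.5},1,[])$ is that configuration itself; and $\score{0.5}$ is not of the form $E[\distone(\vec{c})]$. So ``exactly one'' fails as stated, a fourth (score) alternative is indeed required, and the repair propagates to Lemmas~\ref{lemma:commutetrace}, \ref{lemma:dist-leq-sample} and~\ref{lemma:sample-leq-dist}, which case-split along this classification. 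One refinement is needed, however, precisely at the point you dismiss as ``routine bookkeeping'': the paper defines deterministic convergence via $\Rightarrow$, which includes \tr{Red Score}, so a weight-preserving score makes the naive fourth alternative overlap with the others. For example, $\score{1}$ deterministically converges both to the score redex $\score{1}$ and to the generalized value $\ttrue$; worse, a term performing an infinite alternation of $\detred$-steps and $\score{1}$-steps deterministically diverges in the paper's sense even though its maximal $\detred$-sequence is finite, so it would fall under both the first alternative and a naive fourth one. The fix is to restrict the new alternative to $E[\score{c}]$ with $c<1$ (equivalently, to classify along the weight-$1$, empty-trace portion of the $\Rightarrow$-chain, treating $\score{1}$ like a deterministic step, rather than along the $\detred$-sequence alone); with that restriction the four cases are mutually exclusive and exhaustive, and the rest of your argument goes through.
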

\begin{proof}
  Easy.
  \qed
\end{proof}
We are finally ready to give the two main lemmas that lead
to a proof of Theorem \ref{thm:sampling-distribution}. The first
one tells us that any distribution-based approximation is smaller
than the sampling based semantics:
\begin{lemma} \label{lemma:dist-leq-sample}
  If $\termone\redss_{n}\distrone$, then $\distrone\leq\sbd{\termone}$.
\end{lemma}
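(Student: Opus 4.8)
The plan is to prove Lemma~\ref{lemma:dist-leq-sample} by induction on the step-index $n$, reducing the non-trivial case to the already-established one-step decomposition of $\sbd{\cdot}$. For the base case $n=0$, the only applicable rule is \tr{DRed Empty}, so $\distrone=\emdistr$; since $\sbd{\termone}$ is a measure (Lemma~\ref{lemma:pi-measurable}), the inequality $\emdistr\leq\sbd{\termone}$ holds trivially.

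For the inductive step, suppose $\termone\redss_{n+1}\distrone$ and split on whether $\termone$ is a generalized value. If $\termone$ is a generalized value $\gvalone$, the derivation must end with \tr{DRed Val}, so $\distrone=\dirac{\gvalone}$, and by Lemma~\ref{lemma:values} we have $\sbd{\termone}=\sbd{\gvalone}=\dirac{\gvalone}$, whence $\distrone\leq\sbd{\termone}$. Otherwise $\termone$ is not a generalized value, so by Lemma~\ref{lemma:funct} there is a unique $\distrtwo$ with $\termone\red\distrtwo$, and the derivation must end with \tr{DRed Step}. Hence there is a family $\{\distrthree_{\termtwo}\}_{\termtwo\in\supp{\distrtwo}}$ with $\termtwo\redss_{n}\distrthree_{\termtwo}$ and $\distrone(A)=\int\distrthree_{\termtwo}(A)\,\distrtwo(d\termtwo)$. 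Since the premises of \tr{DRed Step} live at index $n$, the induction hypothesis applies and yields $\distrthree_{\termtwo}\leq\sbd{\termtwo}$ for every $\termtwo\in\supp{\distrtwo}$.

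Using monotonicity of the Lebesgue integral we then obtain, for every measurable $A$,
$$
\distrone(A)=\int\distrthree_{\termtwo}(A)\,\distrtwo(d\termtwo)\leq\int\sbd{\termtwo}(A)\,\distrtwo(d\termtwo)=\sbd{\termone}(A),
$$
where the final equality is exactly the earlier lemma stating that $\termone\red\distrtwo$ implies $\sbd{\termone}=A\mapsto\int\sbd{\termtwo}(A)\,\distrtwo(d\termtwo)$. This closes the induction. Essentially all of the analytic content is imported from that one-step decomposition lemma (which itself rests on Lemma~\ref{lemma:commutetrace}, Lemma~\ref{lemma:detconv}, and monotone convergence); consequently I expect the only genuine care-points here to be bookkeeping rather than substance, namely matching the step indices correctly and justifying the monotonicity step under the integral. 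The measurability needed for that step is available: $\termtwo\mapsto\distrthree_{\termtwo}(A)$ is measurable because $\redss_{n}$ is a kernel (Lemma~\ref{lemma:sstk}), and $\termtwo\mapsto\sbd{\termtwo}(A)$ is measurable by the well-definedness of the one-step decomposition lemma invoked above.
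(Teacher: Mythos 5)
Your proof is correct, but it takes a genuinely different route from the paper's. You decompose the inductive step by whether $\termone$ is a generalized value, and otherwise unfold exactly one $\red$-step via Lemma~\ref{lemma:funct} and \textsc{(DRed Step)}, apply the induction hypothesis under the integral, and fold back using the unnumbered lemma following Lemma~\ref{lemma:commutetrace}, namely that $\termone\red\distrtwo$ implies $\sbd{\termone}=A\mapsto\int\sbd{\termtwo}(A)\,\distrtwo(d\termtwo)$. The paper instead applies the trichotomy of Lemma~\ref{lemma:tracecases}: deterministic divergence is dispatched by Lemma~\ref{lemma:detdiverg}, deterministic convergence to a generalized value by Lemmas~\ref{lemma:detconv} and~\ref{lemma:values}, and only in the remaining case---deterministic convergence to some $E[\distone(\vec{c})]$---does it unfold a single sampling step, apply the induction hypothesis, and fold back via Lemma~\ref{lemma:commutetrace}; in other words, deterministic steps are fast-forwarded in bulk by Lemma~\ref{lemma:detconv} rather than consuming step-indices one at a time as in your argument. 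What your route buys is uniformity: deterministic, sampling, and $\score{c}$ steps are treated identically, so terms whose next redex is $\score{c}$ with $c<1$ are covered; such terms actually fall through the paper's trichotomy as literally stated, since Lemma~\ref{lemma:tracecases} has no \texttt{score} case, and $E[\score{c}]$ with $c<1$ neither deterministically diverges nor deterministically converges in the paper's weight-preserving sense (the \textsc{Red Score} step drops the weight below $1$). What the paper's route buys is that it leans only on instances it effectively establishes elsewhere: your proof concentrates all analytic content in the unnumbered one-step decomposition lemma, which the paper states but never proves (for a deterministic step it is essentially Lemma~\ref{lemma:detconv}, for a sampling step Lemma~\ref{lemma:commutetrace} plus monotone convergence, and for \texttt{score} it needs its own small argument), and the paper's case structure is kept exactly parallel to that of the converse bound, Lemma~\ref{lemma:sample-leq-dist}.
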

\begin{proof}
  By induction on $n$:
  \begin{varitemize}
  \item
    If $n=0$, then $\distrone$ is necessarily $\emdistr$, and we are done.
  \item
    About the inductive case, let's distinguish three cases depending on
    the three cases of Lemma \ref{lemma:tracecases}, applied to $\termone$:
    \begin{varitemize}
    \item
      If $\termone$ deterministically diverges, then by Lemma \ref{lemma:detdiverg},
      $\distrone\leq\tsq{\termone}=\sbd{\termone}$.
    \item
      If $\termone$ deterministically converges to a generalized value $\gvalone$, then
      by Lemma \ref{lemma:detconv} and Lemma \ref{lemma:values}, it holds that
      $$
      \distrone\leq\tsq{\termone}=\tsq{\gvalone}=\dirac{\gvalone}=\sbd{\gvalone}=\sbd{\termone}.
      $$
    \item
      If $\termone$ deterministically converges to $E[\distone(\vec{c})]$,
      let $\distrtwo$ be such that $E[\distone(\vec{c})]\redss_{n+1}\distrtwo$.
      By Lemma \ref{lemma:detconv} and Lemma \ref{lemma:commutetrace} we have, by
      induction hypothesis, that
      \begin{align*}
        \distrone(A)\leq\distrtwo(A)&=\int\distrthree_\termtwo(A)\;\ctm{E}{\mu_\distone(\vec{c})}(d\termtwo)\\
        &\leq\int\sbd{\termtwo}(A)\;\ctm{E}{\mu_\distone(\vec{c})}(d\termtwo)\\
        &=\sbd{\termone}
      \end{align*}
      where $\termtwo\redss_n\distrthree_\termtwo$.\qed
    \end{varitemize}
  \end{varitemize}
\end{proof}
The second main lemma tells us that if we limit our attention to traces
of length at most $n$, then we stay below distribution-based semantics:
\begin{lemma}\label{lemma:sample-leq-dist}
  For every $n\in\NN$, $\sbdp{\termone}{n}\leq\tsq{\termone}$.
\end{lemma}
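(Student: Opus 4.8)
The plan is to prove the bound $\sbdp{\termone}{n}\leq\tsq{\termone}$ by induction on the trace-length budget $n$, driving the induction with the trichotomy of Lemma~\ref{lemma:tracecases}. At each stage I apply that lemma to $\termone$, so that $\termone$ either deterministically diverges, deterministically converges to a generalized value, or deterministically converges to a sampling redex $E[\distone(\vec{c})]$. The first two cases are uniform in $n$, while only the third genuinely consumes a step of the induction.

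If $\termone$ deterministically diverges, then $\sbd{\termone}=\emdistr$ by Lemma~\ref{lemma:detdiverg}, and since $\sbdp{\termone}{n}\leq\sbd{\termone}$ (the integral over $\sampseqp{n}$ is dominated by the integral over all of $\sampseq$) we get $\sbdp{\termone}{n}=\emdistr\leq\tsq{\termone}$. If $\termone$ deterministically converges to a generalized value $\gvalone$, then Lemma~\ref{lemma:detconv} lets me replace $\termone$ by $\gvalone$ on both sides; for a generalized value only the empty trace yields a result (by \textsc{(Eval Val)}), so $\sbdp{\gvalone}{n}=\dirac{\gvalone}$, and by Lemma~\ref{lemma:values} this equals $\tsq{\gvalone}=\tsq{\termone}$, giving equality and hence the desired inequality.

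The remaining, and only substantial, case is when $\termone$ deterministically converges to $E[\distone(\vec{c})]$. By Lemma~\ref{lemma:detconv} it suffices to bound $\sbdp{E[\distone(\vec{c})]}{n}$ by $\tsq{E[\distone(\vec{c})]}$. For $n=0$ this is immediate: reducing $E[\distone(\vec{c})]$ must fire the draw, which consumes one trace element, so no evaluation $E[\distone(\vec{c})]\Downarrow^{[]}_w\gvalone$ exists, whence $\termsmp{E[\distone(\vec{c})]}{[]}=0$ and $\sbdp{E[\distone(\vec{c})]}{0}=\emdistr$. For $n=m+1$ I use the bookkeeping Lemma~\ref{lemma:commutetrace}, which says precisely that firing the draw shifts the remaining trace-length budget from $m+1$ to $m$:
\[
\sbdp{E[\distone(\vec{c})]}{m+1}(A)=\int\sbdp{\termtwo}{m}(A)\;\ctm{E}{\mu_{\distone(\vec{c})}}(d\termtwo).
\]
Applying the induction hypothesis $\sbdp{\termtwo}{m}(A)\leq\tsq{\termtwo}(A)$ pointwise under the integral (monotonicity), and then using the earlier lemma that $\tsq{\cdot}$ commutes with one-step reduction together with $E[\distone(\vec{c})]\red\ctm{E}{\mu_{\distone(\vec{c})}}$, I obtain $\int\tsq{\termtwo}(A)\,\ctm{E}{\mu_{\distone(\vec{c})}}(d\termtwo)=\tsq{E[\distone(\vec{c})]}(A)=\tsq{\termone}(A)$, closing the induction.

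The main obstacle is aligning the two independent notions of ``progress'': on the sampling side a single draw is recorded as one extra trace coordinate, while on the distribution side it is one application of $\red$. The index bookkeeping must make these coincide, and this is exactly what Lemma~\ref{lemma:commutetrace} supplies. Beyond it, the only delicate point is interchanging the supremum defining $\tsq{\cdot}$ with the integral against the sampling kernel, but this is already packaged in the commutation lemma for $\tsq{\cdot}$ (whose proof rests on monotone convergence), with the measurability underpinning all these integrals coming from the kernel Lemmas~\ref{lemma:redsk} and~\ref{lemma:sstk}. Finally, taking suprema over $n$ and using $\sbd{\termone}=\sup_n\sbdp{\termone}{n}$ yields $\sbd{\termone}\leq\tsq{\termone}$, which combined with Lemma~\ref{lemma:dist-leq-sample} gives Theorem~\ref{thm:sampling-distribution}.
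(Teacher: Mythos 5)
Your proposal is correct and follows essentially the same route as the paper's proof: induction on $n$, case split via the trichotomy of Lemma~\ref{lemma:tracecases}, with Lemma~\ref{lemma:detdiverg} handling divergence, Lemmas~\ref{lemma:detconv} and~\ref{lemma:values} handling convergence to a generalized value, and Lemma~\ref{lemma:commutetrace} together with the one-step commutation property of $\tsq{\cdot}$ closing the sampling case. The only cosmetic difference is that you fold the $n=0$ sampling subcase into the third branch rather than treating the base case separately, which changes nothing of substance.
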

\begin{proof}
  By induction on $n$:
  \begin{varitemize}
  \item
    In the base case, then let us distinguish three cases depending on
    the three cases of Lemma \ref{lemma:tracecases}, applied to $\termone$:
    \begin{varitemize}
    \item
      If $\termone$ deterministically diverges, then by Lemma \ref{lemma:detdiverg},
      $\sbdp{\termone}{0}\leq\sbd{\termone}=\tsq{\termone}$.
    \item
      If $\termone$ deterministically converges to a generalized value $\gvalone$, then
      by Lemma \ref{lemma:detconv} and Lemma \ref{lemma:values}, it holds that
      $$
      \sbdp{\termone}{0}=\sbdp{\gvalone}{0}\leq\sbd{\gvalone}=\dirac{\gvalone}=\tsq{\gvalone}=\tsq{\termone}.
      $$
    \item
      If $\termone$ deterministically converges to $E[\distone(\vec{c})]$,
      then $\sbdp{\termone}{0}=\sbdp{E[\distone(\vec{c})]}{0}=\emdistr\leq\tsq{\termone}$.
    \end{varitemize}
  \item
    About the inductive case, let us again distinguish three cases depending on
    the three cases of Lemma \ref{lemma:tracecases}, applied to $\termone$:
     \begin{varitemize}
    \item
      If $\termone$ deterministically diverges, then by Lemma \ref{lemma:detdiverg},
      $\sbdp{\termone}{n+1}\leq\sbd{\termone}=\tsq{\termone}$.
    \item
      If $\termone$ deterministically converges to a generalized value $\gvalone$, then
      by Lemma \ref{lemma:detconv} and Lemma \ref{lemma:values}, it holds that
      $$
      \sbdp{\termone}{n+1}=\sbdp{\gvalone}{n+1}\leq\sbd{\gvalone}=\dirac{\gvalone}=\tsq{\gvalone}=\tsq{\termone}.
      $$      
    \item
      If $\termone$ deterministically converges to $E[\distone(\vec{c})]$,      
      by Lemma \ref{lemma:detconv} and Lemma \ref{lemma:commutetrace} we have, by
      induction hypothesis, that
      \begin{align*}
        \sbdp{\termone}{n+1}(A)&=\sbdp{E[\distone(\vec{c})]}{n+1}(A)\\
        &=\int\sbdp{\termtwo}{n}(A)\;\ctm{E}{\mu_\distone(\vec{c})}(d\termtwo)\\
        &\leq\int\tsq{\termtwo}(A)\;\ctm{E}{\mu_\distone(\vec{c})}(d\termtwo)\\
        &=\tsq{\termone}.
      \end{align*}
     \end{varitemize}
  \end{varitemize}\qed
 \end{proof}

\begin{restate}{Theorem~\ref{thm:sampling-distribution}}
  $\sbd{\termone}=\tsq{\termone}$.
\end{restate}
\begin{proof}\belowdisplayskip=-9pt
  \begin{align*}
    \sts{\termone} 
    &= \sup_{n\in\NN}\{\distrone\mid M\to_n\distrone\} &
    \text{(by definition)}\\
    & \leq  \sbd{\termone}&
    \text{(by Lemma~\ref{lemma:dist-leq-sample})} \\
    & = \sup_{n\in\NN}\sbdp{\termone}{n}& 
    \text{(by Lemma~\ref{lemma:sup-sampseqn})} \\ 
& \leq  \tsq{\termone} &
    \text{(by Lemma~\ref{lemma:sample-leq-dist})}\\
&=    \sts{\termone} & \text{(by Theorem~\ref{thm:eq})}  
  \end{align*} \qed
\end{proof}
\begin{corollary}\label{cor:tracedist-sub-probability}
The measures $\Tracedist{\termone}$ and $\Tracevaldist\termone$ are sub-probability distributions.
\end{corollary}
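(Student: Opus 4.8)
The plan is to reduce the claim to two facts already in hand: that every step-indexed approximation $\distrone$ with $\sisss{\termone}{n}{\distrone}$ has total mass at most $1$, and that $\sbd{\termone}=\tsq{\termone}$ by Theorem~\ref{thm:sampling-distribution}. Since $\Tracedist{\termone}$ and $\Tracevaldist{\termone}$ are already known to be measures by Lemma~\ref{lemma:pi-measurable}, all that remains is to bound their total masses $\Tracedist{\termone}(\sampseq)$ and $\Tracevaldist{\termone}(\sampseq)$ by $1$.

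First I would observe that $\tsq{\termone}$ is a sub-probability distribution on $\gvalset$. By Lemma~\ref{lemma:sstk} the map $\redss_n$ is a sub-probability kernel, so for the unique $\distrone$ with $\sisss{\termone}{n}{\distrone}$ we have $\distrone(\gvalset)\le 1$. By the Monotonicity lemma the family $\{\distrone\mid\sisss{\termone}{n}{\distrone}\}_{n}$ is an increasing chain, and since the least upper bound in the $\oCPO$ of result distributions is computed pointwise, its supremum $\tsq{\termone}$ satisfies $\tsq{\termone}(\gvalset)=\sup_{n}\distrone(\gvalset)\le 1$.

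Next I would connect this bound to the trace distributions. The key elementary observation is that $\obssmp{\termone}{s}\in\gvalset$ for every trace $s$ (it is a generalized value when $\termone$ converges and $\fail$ otherwise), so $\mathbf{O}^{-1}_\termone(\gvalset)=\sampseq$ and hence, unfolding the definition of $\sbd{\termone}$,
\[
\Tracedist{\termone}(\sampseq)=\int\termsmp{\termone}{s}\,ds=\int\termsmp{\termone}{s}\cdot\indfun{\gvalset}{\obssmp{\termone}{s}}\,ds=\sbd{\termone}(\gvalset).
\]
Combining with Theorem~\ref{thm:sampling-distribution} gives $\Tracedist{\termone}(\sampseq)=\sbd{\termone}(\gvalset)=\tsq{\termone}(\gvalset)\le 1$, so $\Tracedist{\termone}$ is a sub-probability distribution. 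For $\Tracevaldist{\termone}$ I would use that its density $\termsmv{\termone}{s}$ is pointwise bounded above by $\termsmp{\termone}{s}$ (they agree on traces yielding a value and $\termsmv{\termone}{s}=0$ wherever $\termone\Downarrow^s_w\fail$), whence $\Tracevaldist{\termone}(\sampseq)=\int\termsmv{\termone}{s}\,ds\le\int\termsmp{\termone}{s}\,ds=\Tracedist{\termone}(\sampseq)\le 1$.

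I expect no serious obstacle: all the analytic content is already packaged into the kernel property (Lemma~\ref{lemma:sstk}) and the semantic coincidence (Theorem~\ref{thm:sampling-distribution}). The only points requiring a little care are the exchange of supremum and measure evaluation when passing from the approximations to $\tsq{\termone}$, which is legitimate because they form a monotone sequence, and the bookkeeping that $\obssmp{\termone}{\cdot}$ never escapes $\gvalset$, so that integrating the termination density over all of $\sampseq$ really computes the total mass $\sbd{\termone}(\gvalset)$.
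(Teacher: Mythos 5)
Your proof is correct and follows essentially the same route the paper intends: the corollary is stated immediately after Theorem~\ref{thm:sampling-distribution} precisely because $\Tracedist{\termone}(\sampseq)=\sbd{\termone}(\gvalset)=\tsq{\termone}(\gvalset)\le 1$, the last bound coming from the kernel property of the step-indexed semantics, and $\Tracevaldist{\termone}\le\Tracedist{\termone}$ by domination of densities. You have simply made explicit the routine steps (pointwise suprema along the monotone chain, $\mathbf{O}^{-1}_{\termone}(\gvalset)=\sampseq$) that the paper leaves implicit.
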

\subsection{An Application of the Distribution-Based Semantics}\label{sec:app}
It is routine to show that for all values $\valone$,
it holds that \linebreak $\restr{\tsq{\score{\valone}}}{\valset}=\restr{\tsq{\termone\valone}}{\valset}$,
where $\termone$ is the term
\begin{align*}
\lambda x.&\mathtt{if}\;(0<x)\land (x\leq 1)\\
  &\mathtt{then}\;(\ite{\mathit{flip}(x)}{\texttt{true}}{\fail})\\
  &\mathtt{else}\;\fail
\end{align*}
This shows that even though $\score{\valone}$ and $\termone\valone$
\emph{do not} have the same sampling-based semantics, they can be used
interchangeably whenever only their extensional, distribution-based
behaviour on values is important. We use the equation to our advantage by
encoding soft constraints with \texttt{score} instead of \texttt{flip}
(as discussed in Section~\ref{sec:soft-conditioning}), as the fewer the
nuisance parameters the better for inference.

Whenever the stronger equation 
$\tsq{\score{\valone}}=\tsq{\termone\valone}$ is needed, we could
replace $\termone$ by the following one:
\begin{align*}
\lambda x.&\mathtt{if}\;(0<x)\land (x\leq 1)\\
  &\mathtt{then}\;(\ite{\mathit{flip}(x)}{\texttt{true}}{\Omega})\\
  &\mathtt{else}\;\fail
\end{align*}
\subsection{Rejection Sampling} \label{sec:rejection-sampling}
%
The same (normalized) distribution on successful runs would be
obtained by re-evaluating the entire program from the beginning
whenever the Boolean predicate fails, as in:
\begin{center}
\begin{minipage}{.4\textwidth}
\begin{flushleft}
\begin{prog}\fixop\ f. \lambda \!\star\!. \mathtt{let} \dots \mathtt{in}\\
 \quad \ite{b}{n}{f\ 0}\\
f\ 0
\end{prog}
\end{flushleft}
\end{minipage}
\end{center}
This corresponds to a basic inference algorithm known as
\emph{rejection sampling}.

\subsection{Motivation for 1-Bounded Scores}\label{sec:motivation}
Recall that we only consider $\score{c}$ for $c\in(0,1]$.  Admitting $\score{2}$ (say), 
we exhibit an anomaly by constructing a recursive program that intuitively terminates with probability 1, 
but where the expected value of its score is infinite. Let 
\begin{SHORT}
\begin{multline*}
\mathit{inflate} := {}\fixop\ f\ \lambda x. \\ 
                    \ite{\mathit{flip}(0.5)}{\score 2;(f\ x)}{x}.
      \end{multline*}
\end{SHORT}
\begin{LONG}
\[
\mathit{inflate} := {}\fixop\ f\ \lambda x. \ite{\mathit{flip}(0.5)}{\score 2;(f\ x)}{x}.
\]
\end{LONG}
Since $\tsq{\score2}=\tsq{\fail}$ we have $\tsq{\mathit{inflate}\ V}=0.5\cdot\dirac{V} + 0.5\cdot\dirac{\mathtt{fail}}$ for any $V$.
However, evaluating $\mathit{inflate}\ V$ in a version of our trace semantics where the argument to \texttt{score} may be 2 yields \[\Tracedist{\mathit{inflate}\ V}(\sampseq_{n})=\sum_{k=1}^{n}1/2=n/2\]
and so there $\sbd{\mathit{inflate}\ V}(A)=\infty$ if $V\in A$, otherwise 0.

More strikingly, in the modified semantics we would also have
$\sbd{\mathit{inflate}\ \textsf{Gaussian}(0,1)}([q,r])=\infty$ for all
real numbers $q<r$. These examples show that even statically bounded
scores in combination with recursion may yield return value measures
that are not even $\sigma$-finite, causing many standard results in
measure theory not to apply.  For this reason, we restrict attention
to positive scores bounded by one.
An alternative approach would be to admit unbounded scores,
and restrict attention to those programs for which $\sbd{\termone}(\valset)<\infty$.

\section{Inference}\label{sec:inference}
In this section, we present a variant of the Metropolis-Hastings (MH) 
algorithm~\cite{metropolis53,hastings70:MH} for sampling the return 
values of a particular closed term $M \in \cterms$. 
This algorithm yields consecutive samples from a Markov
chain over $\sampseq$, such that the density of the samples $s$ converges
to $\Termsmv M (s)$ up to normalization. We can then apply the function
$\mathbf{O}_M$ to obtain the return value of $M$ for a given trace.

We prove correctness of this algorithm by showing that as the
number of samples goes to infinity, the distribution of the samples
approaches the distributional semantics of the program.
\subsection{A Metropolis-Hastings Sampling Algorithm} 
We begin by outlining a generic Metropolis-Hastings algorithm for
probabilistic programs, parametric in a proposal density function
$q(s,t)$. The algorithm consists of three steps:
\begin{varenumerate}
\item Pick an initial state $s$ with $\Termsmv M(s)\neq 0$
  (e.g., by running $M$).
\item\label{item:MHloop} 
  Draw the next state $t$ at random with probability density $q(s,t)$.
\item Compute $\alpha$ as below.
  \begin{equation}
    \alpha=\min\left(1,\frac{\Termsmv M(t)}{\Termsmv M(s)}\cdot
    \frac{q(t,s)}{q(s,t)}\right)\label{eq:alpha}
  \end{equation}
 \begin{varitemize}
 \item With probability $\alpha$, output $t$ and repeat from \ref{item:MHloop} with $s:=t$.
 \item Otherwise, output $s$ and repeat from \ref{item:MHloop} with $s$ unchanged.
 \end{varitemize}
\end{varenumerate}
The formula used for the number $\alpha$ above is often called the
Hastings \emph{acceptance probability}.  
Different probabilistic programming language implementations use
different choices for the density~$q$ above, based on pragmatics.
The trivial choice would be to let $q(s,t)=\Termsmv M(t)$ for all $s$,
which always yields $\alpha=1$ and so is equivalent to rejection sampling.
We here define another simple density function~$q$ (based
on~\citet{corsamp}), giving emphasis to the conditions that it needs
to satisfy in order to prove the convergence of the Markov chain given
by the Metropolis-Hastings algorithm
(Theorem~\ref{thm:sampling-trace}).

\subsection{Proposal Density}
In the following, let $M$ be a fixed program. 
Given a trace $s=[c_{1},\dots,c_{n}]$, we write $s_{i..j}$ for the trace $[c_i, \dots, c_j]$ when $1\le i \le j \le n$.  
Intuitively, the  following procedure describes how to obtain the
proposal kernel density ($q$ above):
\begin{varenumerate}
\item Given a trace $s$ of length $n$, let $t=[t_{1},\dots,t_{n}]$ 
  where each $t_{i}$ is drawn independently 
  from a normal distribution with mean $s_{i}$ and variance $\sigma^{2}$, and
  let $p_{i}$ be the probability density of $t_{i}$.
\item Let $k\le n$ be the largest number such that $(M,1,t_{1..k})\Rightarrow (M',w,[])$.
There are three cases:
  \begin{varitemize}
  \item If $k=n$, run $M'\Downarrow_{t'}^{w'}V$, and let $q(s,t@t')=p_{1}\dots p_{n}w'$.
  \item If $k<n$ and $M'\Downarrow_{[]}^{1}V$, let $q(s,t_{1..k})=p_{1}\dots p_{k}$.
  \item Otherwise, let $q(s,t_{1..k})=0$ and propose the trace $[]$.
  \end{varitemize}
\end{varenumerate}

To define this density formally, 
we first give a function that partially evaluates $M$ given a trace.
Let $\mathtt{peval}$ be a function taking a closed term $M$ and trace $s$
and returning the closed term $M'$ obtained after applying just as many
reduction steps to $M$ as required to use up the entire trace $s$
%
%
(or $\fail$ if this cannot be done).
\[
  \mathtt{peval}(M,s) = 
  \begin{cases}
    M & \text{if~} s=[]\\
    M' & \text{if}\ (M,1,s)\Rightarrow(M_{k},w_k,s_{k})\to(M',w',[])\\
    & \text{~for some~} M_{k},w_k,s_{k}, w' \text{~such that~}s_k\neq []\\
    \mathtt{fail} & \text{otherwise}
  \end{cases}
\]
\begin{lemma} \label{lemma:peval-assoc-step}
  For every $M \in C\Lambda, c \in \mathbb{R}, s \in \mathbb{S}$,
  $\mathtt{peval}(\mathtt{peval}(M,[c]), s) =
\mathtt{peval}(M, c \mathrel{::} s)$.
\end{lemma}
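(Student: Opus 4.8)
The plan is to reason directly from the determinism of small-step reduction (Lemmas~\ref{lemma:small-step-determined} and~\ref{lemma:small-steps-determined}) together with the ``suffix'' lemmas (Lemmas~\ref{lemma:add-suffix} and~\ref{lemma:add-suffix-closure}), which let me transport a reduction that consumes a trace into one that carries an arbitrary suffix untouched. The key observation is that, starting from a trace $[c]$ of length one, multi-step reduction can only shrink the trace, so the single step that empties it must consume $c$ via $\tr{Red Random}$ or $\tr{Red Random Fail}$. Hence $\mathtt{peval}(M,[c])$ is non-$\fail$ exactly when the reduction of $M$ on $[c]$ reaches a sampling redex $E[\distone(\vec{d})]$ with $[c]$ still intact and then consumes $c$; and it is $\fail$ exactly when $M$ first reaches a generalized value, or deterministically diverges, using only $\detred$ and $\tr{Red Score}$ steps without ever consuming $c$ (the side condition $s_k\neq[]$ in the definition pins down the step immediately before the trace is emptied, using Lemma~\ref{lemma:unique} to classify each term).

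First I would dispatch the trivial case $s=[]$, where both sides equal $\mathtt{peval}(M,[c])$. For $s\neq[]$, I would split on whether the reduction of $M$ on $[c]$ consumes $c$. In the consuming case, write $(M,1,[c])\Rightarrow(M_k,w_k,[c])\to(M',w',[])$, so that $\mathtt{peval}(M,[c])=M'$. Since $c\mathrel{::}s=[c]@s$ and the prefix $\Rightarrow$ leaves $[c]$ untouched, Lemma~\ref{lemma:add-suffix-closure} applied to the prefix and Lemma~\ref{lemma:add-suffix} to the final consuming step give $(M,1,c\mathrel{::}s)\Rightarrow(M_k,w_k,c\mathrel{::}s)\to(M',w',s)$. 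Thus the reduction of $M$ on $c\mathrel{::}s$ passes through $(M',w',s)$, and by determinism together with Lemma~\ref{lemma:multiply-closure} (to renormalise the weight $w'$ to $1$) every later configuration---and in particular the term at which the trace $s$ is first emptied---coincides for $(M,1,c\mathrel{::}s)$ and $(M',1,s)$. Hence $\mathtt{peval}(M,c\mathrel{::}s)=\mathtt{peval}(M',s)=\mathtt{peval}(\mathtt{peval}(M,[c]),s)$. This factoring is uniform: it holds whether $\mathtt{peval}(M',s)$ is a genuine term or $\fail$, and it also covers the case $M'=E[\fail]$ produced by $\tr{Red Random Fail}$.

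In the non-consuming case, $\mathtt{peval}(M,[c])=\fail$, so the left-hand side is $\mathtt{peval}(\fail,s)$; since $\fail$ is stuck and $s\neq[]$ its trace is never emptied, giving $\mathtt{peval}(\fail,s)=\fail$. It then remains to show $\mathtt{peval}(M,c\mathrel{::}s)=\fail$. Here the reduction of $M$ on $[c]$ uses only $\detred$ and $\tr{Red Score}$ steps, which ignore the trace, and either halts at a generalized value or diverges. By Lemma~\ref{lemma:add-suffix-closure} the identical step sequence is available from the initial trace $c\mathrel{::}s$, carrying $c\mathrel{::}s$ unchanged throughout, so the trace is never emptied and $\mathtt{peval}(M,c\mathrel{::}s)=\fail$ as well.

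The main obstacle I anticipate is the factoring argument in the consuming case: one must argue that the unique reduction path of $M$ on $c\mathrel{::}s$ genuinely decomposes at the configuration $(M',w',s)$ and that the definitional clause of $\mathtt{peval}$ selects the \emph{same} term on both sides. This relies on the determinism lemmas to exclude alternative reduction paths and on Lemma~\ref{lemma:multiply-closure} to confirm that the accumulated weight $w'$ is irrelevant to which term is reached. A secondary subtlety is the precise characterisation of the ``otherwise'' branch of $\mathtt{peval}$---distinguishing a stuck generalized value from deterministic divergence, and accounting for the $\tr{Red Random Fail}$ step that yields $E[\fail]$---which must be matched on both sides of the equation.
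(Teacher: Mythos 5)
Your proof is correct in substance but takes a genuinely different route from the paper. The paper does not argue operationally at all: it introduces the alternative characterisation $\mathtt{peval}'(M,s)=\sup_k\Phi^k(\bot_\Lambda)(M,1,s)$ as the least fixed point of a continuous functional $\Phi$ on the flat $\oCPO$ of functions into $\cterms$ (ordered by $\fail\leq G$), proves $\mathtt{peval}=\mathtt{peval}'$, splits the desired equality into two inequalities (Lemmas~\ref{lemma:peval-assoc-pt1} and~\ref{lemma:peval-assoc-pt2}), and establishes each by Scott induction, checking $\omega$-inductivity of the relevant predicates. You instead stay entirely inside the small-step system: determinism (Lemmas~\ref{lemma:small-step-determined} and~\ref{lemma:small-steps-determined}), the suffix lemmas (Lemmas~\ref{lemma:add-suffix} and~\ref{lemma:add-suffix-closure}), and the fact that traces shrink monotonically give the factoring of the unique reduction path of $(M,1,c\mathrel{::}s)$ through $(M',w',s)$, while an explicit case split on whether $c$ is consumed handles stuck values and divergence. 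Your route is more elementary and self-contained, needing only lemmas already proved in Section~3; the paper's route gets divergence for free from the flat-CPO structure and reuses machinery ($\Phi$, $\bot_\Lambda$) that it needs anyway for the measurability proof of $\mathtt{peval}$ (Lemma~\ref{lemma:peval-measurable}), so for the paper the Scott-induction infrastructure is a sunk cost rather than an overhead.

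One step in your consuming case needs to be restated. You invoke Lemma~\ref{lemma:multiply-closure} ``to renormalise the weight $w'$ to $1$'', i.e., to multiply by $1/w'$; this is impossible precisely when the consuming step is \textsc{(Red Random Fail)}, where $w'=0$ and $M'=E[\fail]$, the very case you claim is covered uniformly. The repair is either to apply the lemma in the opposite direction --- take the unique reduction sequence from $(M',1,s)$, multiply all its weights by $w'\ge 0$ (allowed for any non-negative factor, including $0$), and conclude by determinism that the result is \emph{the} sequence from $(M',w',s)$, so the term and trace components never depend on the carried weight --- or to treat $M'=E[\fail]$ separately: $E[\fail]$ reduces only to $\fail$ and then sticks, so from $(M',w',s)$ with $s\neq[]$ the trace can never be emptied, and both sides equal $\fail$. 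Since you already identify weight-irrelevance as the point to confirm, this is a one-line patch rather than a flaw in the strategy.
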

\begin{proof}
By splitting the equality into two inequalities, substituting the alternative definition of $\mathtt{peval}$ (described in the appendix)
and using Scott induction.\qed
\end{proof}
\begin{lemma} \label{lemma:peval-assoc}
For every $M \in C\Lambda, c \in \mathbb{R}, s \in \mathbb{S}$,
  $\mathtt{peval}(\mathtt{peval}(M,s), t) = \mathtt{peval}(M,s @ t) $.
\end{lemma}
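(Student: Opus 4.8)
The plan is to prove the identity $\mathtt{peval}(\mathtt{peval}(M,s),t)=\mathtt{peval}(M,s @ t)$ for all closed $M\in C\Lambda$ and all traces $s,t\in\mathbb{S}$ by induction on the length of the \emph{first} trace $s$, using the one-element case already discharged in Lemma~\ref{lemma:peval-assoc-step} as the engine of the inductive step. The point of this structuring is that Lemma~\ref{lemma:peval-assoc-step} has already absorbed the genuinely hard work — unfolding the alternative definition of $\mathtt{peval}$ and the Scott induction — so the present lemma should reduce to a purely syntactic bookkeeping argument about how $\mathtt{peval}$ interacts with $\mathrel{::}$ and $@$.

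For the base case $s=[]$, note that $\mathtt{peval}(M,[])=M$ directly from the first clause of the definition, and $[]@t=t$, so both sides collapse to $\mathtt{peval}(M,t)$. For the inductive step, write $s=c\mathrel{::}s'$ with $c\in\mathbb{R}$ and $s'\in\mathbb{S}$, and observe that $(c\mathrel{::}s')@t=c\mathrel{::}(s'@t)$. The computation I would carry out is the chain
\begin{align*}
\mathtt{peval}(M,(c\mathrel{::}s')@t)
 &= \mathtt{peval}(M,c\mathrel{::}(s'@t))\\
 &= \mathtt{peval}(\mathtt{peval}(M,[c]),s'@t)\\
 &= \mathtt{peval}(\mathtt{peval}(\mathtt{peval}(M,[c]),s'),t)\\
 &= \mathtt{peval}(\mathtt{peval}(M,c\mathrel{::}s'),t),
\end{align*}
where the first equality is the decomposition of $@$, the second and fourth are instances of Lemma~\ref{lemma:peval-assoc-step}, and the third is the induction hypothesis applied to the closed term $\mathtt{peval}(M,[c])$ with the two shorter traces $s'$ and $t$. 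Since the first and last expressions are the two sides of the claim for $s=c\mathrel{::}s'$, the step is complete.

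The only subtlety I anticipate — and it is minor — is that the intermediate term $\mathtt{peval}(M,[c])$ to which I apply the induction hypothesis may itself be $\fail$ (for instance when $M$ fails on the first draw, or consumes no draw and is already a generalized value). This causes no difficulty, because both Lemma~\ref{lemma:peval-assoc-step} and the statement being proved are quantified over \emph{all} closed terms in $C\Lambda$, and $\fail$ is such a term; the definition of $\mathtt{peval}$ on a generalized value with a nonempty trace simply falls into its $\fail$ case, which is consistent with the equalities above. Thus the induction goes through uniformly, and no separate case analysis on whether $\mathtt{peval}(M,[c])$ is a value, a sampling term, or $\fail$ is required at this level — all of that was handled one layer down in Lemma~\ref{lemma:peval-assoc-step}.
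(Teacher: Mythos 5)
Your proof is correct and follows essentially the same route as the paper's: induction on $\lvert s\rvert$, a trivial base case from the clause $\mathtt{peval}(M,[])=M$, and an inductive step that chains two applications of Lemma~\ref{lemma:peval-assoc-step} around one use of the induction hypothesis applied to the closed term $\mathtt{peval}(M,[c])$ — the paper writes the identical chain, merely in the opposite direction (left side to right side). Your added remark that $\mathtt{peval}(M,[c])$ may be $\fail$ and that this is harmless is a sound observation the paper leaves implicit.
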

\begin{proof}
By induction on $|s|$, with appeal to Lemma \ref{lemma:peval-assoc-step}.\qed
\end{proof}

\subsection{A Metropolis-Hastings Proposal Kernel}
We define the transition kernel $Q(s,A)$ of the Markov chain constructed
by the algorithm by integrating a density $q(s,t)$ (as a function of
$t$) over $A$ with respect to the stock measure $\mu$ on program
traces. For technical reasons, we need to ensure that $Q$ is a probability
kernel, i.e., that $Q(s,\sampseq)=1$ for all $s$.  We normalize
$q(s,\cdot)$ by giving non-zero probability $q(s,[])$ to transitions
ending in $[]$ (which is not a completed trace of $M$ by assumption).
All this is in Figure~\ref{fig:trandensitykernel}.
\begin{figure}
\begin{center}
\fbox{
\begin{minipage}{0.442\textwidth}
\begin{align*}
q(s,t)=\;&(\Pi_{i=1}^k\pdf{\mathsf{Gaussian}}(s_i,\sigma^2,t_i))\cdot\termsmv{N}{t_{k+1..\Abs{t}}}\\
&\mbox{if~}\Abs{t} \neq 0,\mbox{~where~}k=\min\Set{\Abs{s},\Abs{t}}\\
&\mbox{and~}N=\texttt{peval}(M,t_{1..k})\\
q(s,[])=\;&1-\int_A q(s, t)\, dt,
\mbox{~where~}A=\Set{t\mid \Abs{t} \neq 0}\\
Q(s,A)=\;&\int_A q(s, t)\, dt
\end{align*}
\end{minipage}}
\condnocr
\caption{Proposal Density $q(s,t)$ and Kernel $Q(s,A)$ for Program $M$}\label{fig:trandensitykernel}
\end{center}
\end{figure}

The integral $\int_A q(s,t) dt$ is well-defined if and only if $q(s,
\cdot)$ is non-negative and measurable for every $s$. In order to show
that this property is satisfied, we first need to prove that the
$\mathtt{peval}$ function, used in the definition of $q$, is
measurable:

\begin{lemma} \label{lemma:peval-measurable}
$\mathtt{peval}$ is a measurable function $\cterms  \times  \sampseq \rightarrow  \cterms$.
\end{lemma}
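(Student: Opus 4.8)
The plan is to reduce joint measurability of $\mathtt{peval}$ to the measurability of a single-sample map, and then to analyse that map by stratifying the unbounded deterministic reduction that precedes the one trace-consuming step. Since $\sampseq=\biguplus_{n\in\NN}\RR^n$ and $\mathcal{S}$ is the corresponding disjoint-union $\sigma$-algebra, a map out of $\cterms\times\sampseq$ is measurable iff each restriction $F_n:\cterms\times\RR^n\to\cterms$, $F_n(M,s)=\mathtt{peval}(M,s)$, is measurable for $\measterms\otimes\Borel_n$. First I would use Lemma~\ref{lemma:peval-assoc-step} to obtain the recursion $F_{n+1}(M,c\mathrel{::}s)=F_n(G(M,c),s)$, where $G(M,c):=\mathtt{peval}(M,[c])$ and $F_0(M,[])=M$. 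Granting $G$ measurable, each $F_n$ is measurable by induction on $n$: $F_0$ is a projection, and $F_{n+1}$ is the composite of the measurable map $(M,(c,s))\mapsto(G(M,c),s)$ with the (inductively) measurable $F_n$. So the whole problem rests on $G$.

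Analysing $\mathtt{peval}(M,[c])$: the trace $[c]$ has length one, and only $\tr{Red Random}$ and $\tr{Red Random Fail}$ consume trace elements, so $G$ reduces $M$ by non-trace-consuming steps (deterministic reduction $\detred$ together with $\mathtt{score}$) until it reaches either the first sampling redex $\ct{\ectxone}{\distone(\vec{d})}$ --- whereupon it fires the draw, returning $\ct{\ectxone}{c}$ when $\pdf{\distone}(\vec{d},c)>0$ and $\ct{\ectxone}{\fail}$ otherwise --- or a generalized value, in which case (and also if $M$ diverges) $G(M,c)=\fail$. I would therefore factor $G$ through a map $\mathtt{reach}:\cterms\to\cterms\uplus\{\star\}$ returning the first sampling-redex reduct (or a marker $\star$ for ``no sample available''), followed by a map $\mathtt{samp}$ that performs the draw.

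For $\mathtt{reach}$ the key is to defuse the unbounded reduction length by stratification. Write $\to_{\mathrm{nt}}$ for the non-trace-consuming single step on terms (rules $\tr{Red Pure}$ and $\tr{Red Score}$); it is deterministic by Lemma~\ref{lemma:small-step-determined} and measurable, since $\tr{Red Pure}$ is $\detred$ (measurable, as used in Lemma~\ref{lemma:redsk}) and the score step $\ct{\ectxone}{\score{c}}\mapsto\ct{\ectxone}{\ttrue}$ is measurable via the appendix maps $\app$ and $\deapp$. Let $S_k\subseteq\cterms$ be the measurable set of terms reaching a sampling redex within $k$ steps of $\to_{\mathrm{nt}}$ without first becoming a generalized value; the $S_k$ increase, $\mathtt{reach}$ agrees on $S_k\setminus S_{k-1}$ with the measurable $k$-fold composite $\to_{\mathrm{nt}}^{\,k}$, and $\mathtt{reach}=\star$ on the measurable complement $\cterms\setminus\bigcup_k S_k$. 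Being measurable on every block of a countable measurable partition, $\mathtt{reach}$ is measurable. Finally $\mathtt{samp}$ is measurable by the technique of Lemma~\ref{lemma:redsk}: split $\ct{\ectxone}{\distone(\vec{d})}$ via the measurable $\deapp$, read off $\vec{d}$, and rebuild $\ct{\ectxone}{c}$ or $\ct{\ectxone}{\fail}$ via the measurable $\app$; the branch condition $\pdf{\distone}(\vec{d},c)>0$ partitions the domain into two measurable pieces because $\pdf{\distone}$ is measurable. Composing these gives $G$ measurable, and with the first paragraph the lemma follows.

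The step I expect to be the main obstacle is precisely the measurability of $\mathtt{reach}$, because the number of deterministic steps before a sample is drawn is unbounded and may be infinite, so the naive description of $\mathtt{reach}$ is an ill-behaved limit. The $S_k$-stratification is exactly what converts this apparent limit into a countable union of finite composites of an already-measurable single-step relation, after which everything else is routine gluing of measurable pieces.
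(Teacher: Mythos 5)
Your proposal is correct, but it takes a genuinely different route from the paper. The paper recharacterizes $\mathtt{peval}$ domain-theoretically: it defines $\mathtt{peval}'(M,s)=\sup_k\Phi^k(\bot_\Lambda)(M,1,s)$, where $\Phi(f)$ returns $M$ when $s=[]$ and otherwise applies $f$ after the measurable one-step function $g$ on triples; it proves $\mathtt{peval}=\mathtt{peval}'$ (Lemma~\ref{lemma:peval-two-def-eqiv}), shows each iterate $\Phi^k(\bot_\Lambda)$ is measurable by induction on $k$ (Lemma~\ref{lemma:phi-measurable}), and concludes by the generic fact that the supremum of an $\omega$-chain of measurable functions into a flat $\oCPO$ is measurable (Lemma~\ref{lemma:sup-flat-measurable}). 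You instead decompose by trace length (legitimate, via Lemma~\ref{lemma:split-space}), peel off one sample at a time through Lemma~\ref{lemma:peval-assoc-step}, and only then confront unboundedness, inside the single-sample map $G(M,c)$, which you tame by the $S_k$-stratification. Three observations on the comparison. First, your stratification is morally the same device as the paper's sup lemma: in a flat order the supremum of the $k$-step approximants is determined by the least $k$ at which the chain leaves $\bot$, so measurability reduces to exactly the countable unions and intersections you invoke; the paper packages this once, generically, and reuses it for $\mathbf{P}'$, $\mathbf{O}'$, and $\mathtt{peval}'$ alike. Second, your route does not actually escape the fixpoint machinery: Lemma~\ref{lemma:peval-assoc-step}, which drives your induction on length, is itself proved in the paper by Scott induction on the characterization $\mathtt{peval}'$; using it as a black box is sound (it precedes the present lemma and its proof involves no measurability, so there is no circularity), but it means your argument reorganizes rather than eliminates that apparatus. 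Third, what your decomposition buys is locality and elementarity: all the measure theory is concentrated in $G$, where the gluing over $S_k\setminus S_{k-1}$, the $\app$/$\deapp$ handling of $\ct{\ectxone}{\distone(\vec{d})}$, and the branch on $\pdf{\distone}(\vec{d},c)>0$ (a countable union over the countably many $\distone$) are routine; the price is that you must separately justify, from determinism of the small-step relation, that $\mathtt{peval}(M,[c])$ really is reach-then-sample and that the complement of $\bigcup_k S_k$ (values and divergence) yields $\fail$ --- the analogue of the paper's Lemmas~\ref{lemma:peval-to-prim} and~\ref{lemma:prim-to-peval} --- a characterization you assert but do not prove.
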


Using this result, we can show that $q$, as a function defined on pairs
of traces, is measurable.

\begin{lemma} \label{lemma:q-measurable}
For any closed program $M$, the transition density $q(\cdot, \cdot) : 
(\sampseq \times \sampseq)
\rightarrow \mathbb{R}_{+}$ is measurable.
\end{lemma}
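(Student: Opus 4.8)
The plan is to exploit the disjoint-sum structure of the trace space. Writing $\sampseq\times\sampseq=\biguplus_{m,n\in\NN}(\RR^m\times\RR^n)$, this is a countable measurable partition, so $q$ is measurable on $\sampseq\times\sampseq$ if and only if each restriction $\restr{q}{\RR^m\times\RR^n}$ is measurable. The advantage is that on such a cell the quantity $k=\min\Set{m,n}$ is a constant, so both the case split and the index $k$ hidden inside the definition of $q$ become fixed.

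First I would dispatch the cells with $n\neq 0$. On $\RR^m\times\RR^n$ the density is
\[
q(s,t)=\Bigl(\textstyle\prod_{i=1}^{k}\pdf{\mathsf{Gaussian}}(s_i,\sigma^2,t_i)\Bigr)\cdot\termsmv{N}{t_{k+1..n}},\qquad N=\texttt{peval}(M,t_{1..k}).
\]
The first factor is a finite product of maps $(s,t)\mapsto\pdf{\mathsf{Gaussian}}(s_i,\sigma^2,t_i)$, each a coordinate projection followed by the Gaussian density, which is continuous because $\sigma^2>0$; hence the factor is continuous, so measurable. For the second factor I would split it as the composite of two maps: the map $\phi:(s,t)\mapsto(\texttt{peval}(M,t_{1..k}),\,t_{k+1..n})$ into $\cterms\times\sampseq$, and the joint evaluation map $(N,u)\mapsto\termsmv{N}{u}$. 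The map $\phi$ is measurable, since the truncations $t\mapsto t_{1..k}$ and $t\mapsto t_{k+1..n}$ are (continuous) projections $\RR^n\to\RR^k$ and $\RR^n\to\RR^{n-k}$ composed with the measurable inclusions into $\sampseq$, and $\texttt{peval}(M,\cdot)$ is measurable by Lemma~\ref{lemma:peval-measurable}. A product of finitely many non-negative measurable functions is measurable, so each $\restr{q}{\RR^m\times\RR^n}$ with $n\neq 0$ is measurable.

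For the cells with $n=0$, the density $q(s,[])=1-\int_A q(s,t)\,dt$ depends only on $s$, with $A=\Set{t\mid\Abs t\neq 0}$. Having already shown that $q$ is jointly measurable and non-negative on $\sampseq\times A$, and that the stock measure $\mu$ on traces is $\sigma$-finite, Tonelli's theorem gives that $s\mapsto\int_A q(s,t)\,dt$ is measurable, whence $q(s,[])$ is measurable. This closes the argument over the whole partition.

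The main obstacle is the joint measurability of the evaluation map $\mathbf{P}^{\valset}:(N,u)\mapsto\termsmv{N}{u}$ over $\cterms\times\sampseq$: Lemma~\ref{lemma:pi-measurable} only supplies measurability in $u$ for a \emph{fixed} term $N$, whereas here $N=\texttt{peval}(M,t_{1..k})$ itself varies with the argument. I would establish the stronger joint statement (in \Appref{section:proof-of-measurability}) by a step-indexed decomposition: for each $j\in\NN$ let $\mathbf{P}^{\valset,j}(N,u)$ be the weight $w$ when $(N,1,u)\redss(V,w,[])$ reaches a value $V$ in exactly $j$ steps, and $0$ otherwise. Each $\mathbf{P}^{\valset,j}$ is measurable by induction on $j$, using measurability of single-step reduction (and of $\detred$, via $\texttt{peval}$) to peel off one step; and since the trace fixes the reduction (Lemma~\ref{lemma:small-steps-determined}), at most one summand is non-zero, so $\mathbf{P}^{\valset}=\sum_{j}\mathbf{P}^{\valset,j}$ is a measurable countable sum. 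Composing this with $\phi$ supplies the remaining factor and completes the proof.
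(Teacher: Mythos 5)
Your proposal is correct and takes essentially the same route as the paper: the paper likewise partitions $\sampseq\times\sampseq$ into the countable cells $\RR^m\times\RR^n$ (via Lemma~\ref{lemma:split-space}), factors $q$ on each cell into the continuous Gaussian product times $\termsmv{\mathtt{peval}(M,t_{1..k})}{t_{k+1..\Abs t}}$ handled by composing $\mathtt{peval}$ (Lemma~\ref{lemma:peval-measurable}) with projections, and treats the $t=[]$ cell by Fubini/Tonelli applied to the already-established joint measurability. Even your ``missing'' joint-measurability step matches the paper's machinery: its appendix defines jointly measurable functions $\mathbf{P}'(N,u)$ and $\mathbf{O}'(N,u)$ as suprema of iterates $\Theta^j_w(\bot_w)$ of a measurable one-step reduction function (each iterate measurable by induction on $j$, the supremum measurable by the flat-\oCPO\ lemma, Lemma~\ref{lemma:sup-flat-measurable}), and then writes $\termsmv Nu=\mathbf{P}'(N,u)\cdot\indfun{\valset}{\mathbf{O}'(N,u)}$ (Lemma~\ref{lemma:pv-rewrite}) --- the same idea as your sum over exact step counts, merely packaged as a supremum instead of a sum.
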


By a well-known result in measure theory \cite[Theorem
  18.1]{billingsley95}, it follows that $q(s, \cdot)$ is measurable
for every $s \in \mathbb{S}$. To define the transition kernel for the
algorithm in terms of the proposal kernel $Q$, we need to show that
$Q$ is a probability kernel.

\begin{lemma} \label{lemma:q-kernel}
The function $Q$ is a probability kernel on $(\sampseq, \mathcal{S})$.
\end{lemma}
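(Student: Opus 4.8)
The plan is to verify the two defining clauses of a (probability) kernel for $Q$. The measurability clause, that $Q(\cdot,A)$ is measurable for each $A\in\mathcal{S}$, is the routine half: since $q(\cdot,\cdot)$ is jointly measurable by Lemma~\ref{lemma:q-measurable} and the stock measure $\mu$ is $\sigma$-finite, the map $s\mapsto\int_{A}q(s,t)\,dt$ is measurable by Tonelli's theorem, and the point-mass contribution $s\mapsto q(s,[])=1-\int_{\Set{t\mid\Abs{t}\neq 0}}q(s,t)\,dt$ is measurable for the same reason; splitting $A$ along $\Set{t\mid\Abs{t}\neq 0}$ and $\Set{[]}$ and adding these pieces gives the result. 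For the other clause I must show that each $Q(s,\cdot)$ is a probability measure. Non-negativity of $q(s,\cdot)$ on non-empty traces is immediate (a product of Gaussian densities and the value-density $\termsmv{\cdot}{\cdot}$), so $Q(s,\cdot)$, being the integral of a non-negative measurable function, is automatically a measure; and by the definition of $q(s,[])$ we get $Q(s,\sampseq)=q(s,[])+\int_{\Set{t\mid\Abs{t}\neq 0}}q(s,t)\,dt=1$ the moment we know $q(s,[])\geq 0$. Thus everything reduces to the single inequality
\[
\int_{\Set{t\mid\Abs{t}\neq 0}}q(s,t)\,dt\leq 1.
\]

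To prove this I would fix $s$ of length $n$, write $g(\tau)=\prod_{i=1}^{n}\pdf{\mathsf{Gaussian}}(s_i,\sigma^2,\tau_i)$ for the full perturbation density (a probability density on $\RR^{n}$), and decompose the integral according to the length $m=\Abs{t}$ of the proposed trace, using Tonelli freely since every integrand is non-negative. For proposals of length $m\geq n$ the perturbation part is all of $\tau\in\RR^{n}$ and the fresh suffix ranges over all traces, so summing the suffix integrals over $m$ collapses them into $\int_{\RR^{n}}g(\tau)\cdot\Tracevaldist{\mathtt{peval}(M,\tau)}(\sampseq)\,d\tau$. By Corollary~\ref{cor:tracedist-sub-probability} the factor $\Tracevaldist{\mathtt{peval}(M,\tau)}(\sampseq)$ is at most $1$, and it is exactly $0$ when $\mathtt{peval}(M,\tau)=\fail$, so this contribution is bounded by $g(B)$, where $B\subseteq\RR^{n}$ is the set of perturbations on which $M$ consumes all $n$ samples. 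For a proposal of length $k<n$ the unused perturbation coordinates $\tau_{k+1},\dots,\tau_n$ are marginalized away (each integrates to $1$), so the length-$k$ contribution equals $\int_{\RR^{n}}g(\tau)\cdot\termsmv{\mathtt{peval}(M,\tau_{1..k})}{[]}\,d\tau$; since $\termsmv{N}{[]}\leq\Tracevaldist{N}(\sampseq)\leq 1$, this is bounded by $g(B_k)$, where $B_k\subseteq\RR^{n}$ is the cylinder set of perturbations on which $M$ terminates at a value after consuming exactly $k$ samples.

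The observation that closes the argument is that $B,B_1,\dots,B_{n-1}$ are pairwise disjoint: by determinism of reduction (Lemma~\ref{lemma:small-steps-determined}), running $M$ on a fixed perturbation either consumes all $n$ samples, or terminates at a value after a unique number $k<n$ of samples, or reaches $\fail$/stalls (contributing to none of the sets); the first two possibilities are mutually exclusive. Because $g$ is a probability measure on $\RR^{n}$, additivity then gives $g(B)+\sum_{k=1}^{n-1}g(B_k)=g\!\left(B\cup\bigcup_{k}B_k\right)\leq 1$, which is exactly the inequality required, and hence $q(s,[])\geq 0$ and $Q(s,\sampseq)=1$.

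I expect the main obstacle to be the careful operational bookkeeping around $\mathtt{peval}$ rather than the measure theory: matching the three clauses of the proposal density to the behavioural sets, verifying that each $B_k$ really is a cylinder $B_k'\times\RR^{n-k}$ so that the marginalization of the discarded coordinates is legitimate, confirming that $B$ coincides with $\Set{\tau\mid\mathtt{peval}(M,\tau)\neq\fail}$, and establishing the disjointness claim directly from the semantics via Lemma~\ref{lemma:small-steps-determined}. Once these behavioural facts are pinned down, the remaining steps (Tonelli, the cancellation of the normalizing coordinates, and the appeal to $\Tracevaldist{N}(\sampseq)\leq 1$) are routine.
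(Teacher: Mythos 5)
Your proposal is correct, and it establishes the two routine kernel clauses exactly as the paper does (joint measurability of $q$ from Lemma~\ref{lemma:q-measurable} plus $\sigma$-finiteness of $\mu$, via the standard section/integration theorems). Where you genuinely diverge is on the crux, the normalization inequality $\int_{\Set{t\mid\Abs{t}\neq 0}}q(s,t)\,dt\le 1$, which the paper isolates as Lemma~\ref{lemma:q-less-one} and proves by induction on $\Abs{s}$: it introduces the auxiliary density $q^*$, derives a recursion peeling off one Gaussian factor at a time (Lemma~\ref{lemma:qstar-rec}, resting on the $\mathtt{peval}$ associativity Lemma~\ref{lemma:peval-assoc-step}), and at each stage invokes the exclusivity fact Lemma~\ref{lemma:p-or-q} (a term cannot both terminate on the empty trace and consume further samples). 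You instead make a single global decomposition by proposal length, marginalize all unused Gaussian coordinates at once, and reduce the inequality to finite additivity of the Gaussian probability measure over the pairwise disjoint behavioral sets $B,B_1,\dots,B_{n-1}$, with disjointness coming from determinism of reduction (Lemma~\ref{lemma:small-steps-determined}) --- which is exactly the same semantic fact that powers the paper's Lemma~\ref{lemma:p-or-q}, applied once globally rather than once per induction step. Both arguments bottom out in the same deep ingredient, the sub-probability property of $\Tracevaldist{\cdot}$ (Corollary~\ref{cor:tracedist-sub-probability}, i.e.\ Theorem~\ref{thm:sampling-distribution}), so neither is more elementary; the paper's induction buys locality (no reasoning about global behavioral sets, at the cost of the $q^*$ machinery), while your argument buys a cleaner conceptual reading of the bound as ``disjoint events have total probability at most one.'' The one obligation your route adds, which you correctly flag as bookkeeping, is measurability of $B$ and the cylinders $B_k$, needed for the additivity step; this is available from Lemma~\ref{lemma:peval-measurable} and Lemma~\ref{lemma:pi-measurable} (e.g.\ $B_k$ is the preimage of $(0,\infty)$ under $\tau\mapsto\termsmv{\mathtt{peval}(M,\tau_{1..k})}{[]}$ and $B$ is the preimage of $\cterms\setminus\Set{\fail}$ under $\tau\mapsto\mathtt{peval}(M,\tau)$), so the gap is fillable with results already in the paper.
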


The proofs of lemmas \ref{lemma:peval-measurable},
\ref{lemma:q-measurable} and \ref{lemma:q-kernel} can be found in the
long version of this paper~\cite{mhlambda-arxiv}.
\subsection{Transition Kernel of the Markov Chain}
We now use the proposal kernel $Q$ to construct the transition kernel
of the Markov chain induced by the algorithm. To avoid trivial cases,
we assume that $M$ has positive success probability and does not
behave deterministically, i.e., that $\tsq M(\valset) > 0$ and
$\Tracedist{\termone}(\{[]\})=0$.

Hastings' Acceptance Probability $\alpha$ is defined as in Equation~(\ref{eq:alpha}) on page~\pageref{eq:alpha},
  where we let $\alpha(s,t)=0$ if $\Termsmv M(t)=0$ and otherwise
  $\alpha(s,t)=1$ if $\Termsmv M(s) \cdot q(s,t)=0$.
Given the proposal transition kernel $Q$ and the
acceptance ratio~$\alpha$,  the Metropolis-Hastings algorithm yields a
Markov chain over traces with the following transition probability kernel.

\begin{equation} \label{equation:transition-kernel}
P(s, A) = \int_A \alpha(s,t)\, Q(s, dt) + 
 \indfun As \cdot \int (1 - \alpha(s,t))\,Q(s, dt).
\end{equation}

Define $P^n(s, A)$ to be the probability of the $n$:th element of the chain with transition kernel $P$ starting at $s$ being in $A$:
\begin{eqnarray*}
P^0(s,A) &=& \indfun As\\
P^{n+1}(s,A) &=& \int P(t,A) P^n(s,dt)
\end{eqnarray*}

\begin{lemma}\label{lem:supp-stationary}
  If $s_0\in\mathbf{O}^{-1}_M(\valset)$ then $P^{n}(s_0,\mathbf{O}^{-1}_M(\valset))=1$.
\end{lemma}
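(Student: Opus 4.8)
The plan is to show that the set $\mathbf{O}^{-1}_M(\valset)$ of value-producing traces is \emph{absorbing} for the chain, and then to lift this to the $n$-step kernel by a routine induction. The crucial observation is the equivalence $t\notin\mathbf{O}^{-1}_M(\valset)\iff\termsmv{M}{t}=0$: by Lemma~\ref{lemma:big-step-determined} a trace $t$ lies in $\mathbf{O}^{-1}_M(\valset)$ exactly when $M\Downarrow^t_w V$ for some (unique) value $V$ and weight $w$, and by inspecting the big-step rules such a $w$ is a product of sampling densities (each positive by the side condition of \tr{Eval Random}) and scores (each in $(0,1]$); hence $\termsmv{M}{t}=w>0$ whenever $t$ produces a value, and $\termsmv{M}{t}=0$ otherwise.

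First I would establish the \textbf{one-step claim}: for every $s\in\mathbf{O}^{-1}_M(\valset)$ we have $P(s,\mathbf{O}^{-1}_M(\valset))=1$. By the convention fixing $\alpha(s,t)=0$ whenever $\termsmv{M}{t}=0$, the equivalence above forces $\alpha(s,t)=0$ for every $t\notin\mathbf{O}^{-1}_M(\valset)$. Therefore in $P(s,A)=\int_A\alpha(s,t)\,Q(s,dt)+\indfun{A}{s}\int(1-\alpha(s,t))\,Q(s,dt)$ the accepted-move term vanishes off $\mathbf{O}^{-1}_M(\valset)$, so $\int_{\mathbf{O}^{-1}_M(\valset)}\alpha(s,t)\,Q(s,dt)=\int\alpha(s,t)\,Q(s,dt)$. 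Since $s\in\mathbf{O}^{-1}_M(\valset)$ the indicator $\indfun{\mathbf{O}^{-1}_M(\valset)}{s}$ equals $1$, and adding the two terms gives $\int\alpha(s,t)\,Q(s,dt)+\int(1-\alpha(s,t))\,Q(s,dt)=Q(s,\sampseq)=1$ by Lemma~\ref{lemma:q-kernel}. Taking $A=\sampseq$ in the same computation shows $P(s,\sampseq)=1$, so $P$, and hence each $P^n$, is a probability kernel.

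Then I would conclude by induction on $n$. The base case $n=0$ is immediate, since $P^0(s_0,\mathbf{O}^{-1}_M(\valset))=\indfun{\mathbf{O}^{-1}_M(\valset)}{s_0}=1$ by hypothesis. For the step, write $P^{n+1}(s_0,\mathbf{O}^{-1}_M(\valset))=\int P(t,\mathbf{O}^{-1}_M(\valset))\,P^n(s_0,dt)$ and split the integral over $\mathbf{O}^{-1}_M(\valset)$ and its complement. The contribution of the complement is $0$: the integrand is bounded by $1$ and, by the induction hypothesis together with $P^n(s_0,\cdot)$ being a probability measure, $P^n(s_0,\sampseq\setminus\mathbf{O}^{-1}_M(\valset))=0$. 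On $\mathbf{O}^{-1}_M(\valset)$ the one-step claim gives $P(t,\mathbf{O}^{-1}_M(\valset))=1$, so the remaining integral equals $P^n(s_0,\mathbf{O}^{-1}_M(\valset))=1$, as required.

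The only point requiring care is the equivalence $t\notin\mathbf{O}^{-1}_M(\valset)\iff\termsmv{M}{t}=0$, since it is precisely what makes the acceptance probability vanish on proposals leaving the set; I expect this rather than any genuine difficulty to be the crux. Everything else is a mechanical combination of the definition of $P$, the fact that $Q$ and $P^n$ are probability kernels, and the bounded splitting of the defining integrals.
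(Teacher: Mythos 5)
Your proof is correct, and its skeleton coincides with the paper's: both reduce the lemma to the one-step absorbing claim that $P(s,\mathbf{O}^{-1}_M(\valset))=1$ whenever $s\in\mathbf{O}^{-1}_M(\valset)$, and then lift it by the same induction on $n$, splitting $\int P(t,\cdot)\,P^n(s_0,dt)$ over $\mathbf{O}^{-1}_M(\valset)$ and its complement. The difference lies in how the one-step claim is established. The paper computes with the concrete proposal density: it implicitly uses that $q(s,\cdot)$ vanishes outside $\mathbf{O}^{-1}_M(\valset)\cup\Set{[]}$ (a fact that itself rests on the $\mathtt{peval}$/big-step correspondence), the normalization $q(s,[])=1-\int_{\Abs{t}\neq 0}q(s,t)\,dt$, and the standing assumption $\Tracedist{M}(\Set{[]})=0$ to get $\alpha(s,[])=0$, arriving at $P(s,\mathbf{O}^{-1}_M(\valset))=1-\alpha(s,[])\bigl(1-\int_{\mathbf{O}^{-1}_M(\valset)}q(s,t)\,dt\bigr)=1$. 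You instead argue entirely at the level of the acceptance ratio: since $\termsmv{M}{t}=0$ for $t\notin\mathbf{O}^{-1}_M(\valset)$, the convention $\alpha(s,t)=0$ when $\termsmv{M}{t}=0$ forces every proposal leaving the set to be rejected, so the acceptance integral over $\mathbf{O}^{-1}_M(\valset)$ equals the full acceptance integral, and the two terms of $P$ sum to $Q(s,\sampseq)=1$ by Lemma~\ref{lemma:q-kernel}. This buys real economy: no support analysis of $q$, and no use of the assumption $\Tracedist{M}(\Set{[]})=0$ --- your argument works verbatim even if the empty trace carried positive value-density, whereas the paper's computation makes explicit exactly where that assumption enters. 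One minor remark: of the equivalence you flag as the crux, only the easy direction is actually used, namely $t\notin\mathbf{O}^{-1}_M(\valset)$ implies $\termsmv{M}{t}=0$, which follows from the definition of $\Termsmv{M}$ together with determinism (Lemma~\ref{lemma:big-step-determined}); the converse positivity of weights on value-producing traces, though true, plays no role in this lemma.
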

\begin{proof}
  By induction on $n$. 
    The base case holds, since $s_0\in\mathbf{O}^{-1}_M(\valset)$ by
    assumption.  For the induction case, we have
    \(P^{n+1}(s_0,\mathbf{O}^{-1}_M(\valset)) = \int P(s,\mathbf{O}^{-1}_M(\valset))
    P^n(s_0,ds)\).  If $s\in \mathbf{O}^{-1}_M(\valset)$ we have
    \begin{align*}
      P(s,\mathbf{O}^{-1}_M(\valset)) 
      &= \int_{\mathbf{O}^{-1}_M(\valset)} \alpha(s,t)\, Q(s, dt) +  \int (1 - \alpha(s,t))\,Q(s, dt)\\
      &= \int_{\mathbf{O}^{-1}_M(\valset)} q(s,t)\, dt + (1 - \alpha(s,[]))q(s, [])\\
      &= \int_{\mathbf{O}^{-1}_M(\valset)} q(s,t)\, dt + (1 - \alpha(s,[]))(1-\int_{\mathbf{O}^{-1}_M(\valset)} q(s,t)\, dt)\\
      &= 1 - \alpha(s,[])(1-\int_{\mathbf{O}^{-1}_M(\valset)} q(s,t)\, dt)
    \end{align*}
    where $\alpha(s,[])=0 $ since $\Termsmv M([])=0$ by
    assumption. Then
    \begin{align*}
      \int P(s,\mathbf{O}^{-1}_M(\valset)) P^n(s_0,ds) 
      &= 
        \int_{\mathbf{O}^{-1}_M(\valset)} P(s,\mathbf{O}^{-1}_M(\valset)) P^n(s_0,ds)\\
      &=
        \int_{\mathbf{O}^{-1}_M(\valset)} 1 P^n(s_0,ds)\\
      &= 1
    \end{align*}
    where the first and the third equality follow from the induction
    hypothesis.
  \qed
\end{proof}
\begin{lemma}\label{lem:supp-absorbing}
  There is $0\le c<1$ such that $P^{n}([],\mathbf{O}^{-1}_M(\valset))=1-c^{n}$ and $P^n([],\Set{[]})=c^n$.
\end{lemma}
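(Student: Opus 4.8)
The plan is to compute the one-step kernel $P([],\cdot)$ out of the empty trace explicitly, and then propagate it through the recursion defining $P^n$ by induction, exploiting the fact that $\mathbf{O}^{-1}_M(\valset)$ is absorbing (Lemma~\ref{lem:supp-stationary}).

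First I would read off the acceptance probability out of $[]$. Since $[]$ is assumed not to be a completed trace of $M$, we have $\Termsmv M([])=0$, so the boundary conventions for $\alpha$ give $\alpha([],t)=0$ when $\Termsmv M(t)=0$ and $\alpha([],t)=1$ otherwise; equivalently $\alpha([],t)=\indfun{\mathbf{O}^{-1}_M(\valset)}{t}$, using that $\Termsmv M(t)\neq 0$ holds precisely when $M\Downarrow^t_w\valone$ for some value $\valone$ (whence $w>0$, as the only rule producing weight $0$ is \tr{Eval Random Fail}, which returns $\fail$), i.e.\ exactly when $t\in\mathbf{O}^{-1}_M(\valset)$. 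Substituting into Equation~(\ref{equation:transition-kernel}), using that $Q$ is a probability kernel (Lemma~\ref{lemma:q-kernel}) and that $[]\notin\mathbf{O}^{-1}_M(\valset)$, the kernel collapses to
\[
P([],A)=Q([],A\cap\mathbf{O}^{-1}_M(\valset))+\indfun A{[]}\cdot c,\qquad c:=1-Q([],\mathbf{O}^{-1}_M(\valset)).
\]
In particular $P([],\{[]\})=c$ and $P([],\mathbf{O}^{-1}_M(\valset))=1-c$.

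Next I would pin down $c$ and verify $0\le c<1$. Unfolding the proposal density at $s=[]$ (Figure~\ref{fig:trandensitykernel}), we have $k=\min\{0,\Abs t\}=0$ and $\mathtt{peval}(M,[])=M$, so the Gaussian product is empty and $q([],t)=\termsmv M t$ for every $t\neq[]$. Hence $Q([],\mathbf{O}^{-1}_M(\valset))=\int_{\mathbf{O}^{-1}_M(\valset)}\termsmv M t\,dt=\restr{(\sbd M)}{\valset}(\valset)=\tsq M(\valset)$, the last step by Theorem~\ref{thm:sampling-distribution}. Since $Q([],\cdot)$ is a (sub-)probability measure we get $0\le c\le 1$, and the standing assumption $\tsq M(\valset)>0$ forces $c<1$; thus $c=1-\tsq M(\valset)\in[0,1)$ works for all $n$.

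Finally I would prove the two identities simultaneously by induction on $n$. The base case $n=0$ is immediate, as $P^0([],A)=\indfun A{[]}$ and $c^0=1$. For the step I would expand $P^{n+1}([],A)=\int P(t,A)\,P^n([],dt)$ and split the integral over $\{[]\}$, over $\mathbf{O}^{-1}_M(\valset)$, and over the remainder. By the induction hypothesis $P^n([],\cdot)$ assigns mass $c^n$ to $\{[]\}$ and $1-c^n$ to $\mathbf{O}^{-1}_M(\valset)$; since $P$ is a probability kernel ($P(s,\sampseq)=Q(s,\sampseq)=1$), these masses sum to $1$, so the remainder is a $P^n([],\cdot)$-null set and contributes nothing. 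On $\mathbf{O}^{-1}_M(\valset)$, Lemma~\ref{lem:supp-stationary} gives $P(t,\mathbf{O}^{-1}_M(\valset))=1$ and hence $P(t,\{[]\})=0$ (because $[]\notin\mathbf{O}^{-1}_M(\valset)$), while on $\{[]\}$ I use the one-step values computed above. This yields $P^{n+1}([],\{[]\})=c\cdot c^n=c^{n+1}$ and $P^{n+1}([],\mathbf{O}^{-1}_M(\valset))=(1-c)c^n+(1-c^n)=1-c^{n+1}$, closing the induction. I expect the main obstacle to be the first two steps — correctly reading $\alpha([],\cdot)$ off the boundary conventions and checking that $q([],\cdot)=\termsmv M{\cdot}$ so that $c=1-\tsq M(\valset)$ — together with the measure-theoretic point that the remainder region carries no $P^n([],\cdot)$-mass, which is precisely what makes the induction go through cleanly.
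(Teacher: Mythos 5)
Your proof is correct and takes essentially the same route as the paper's: the same constant $c=1-\Tracevaldist{M}(\sampseq\setminus\Set{[]})=1-\tsq{M}(\valset)$, the same induction on $n$ driven by the absorbing property from Lemma~\ref{lem:supp-stationary}, and the same one-step computation $q([],t)=\termsmv{M}{t}$ (for $t\neq[]$) giving $P([],\mathbf{O}^{-1}_M(\valset))=1-c$. The paper's argument is simply terser, leaving implicit the details you spell out (the form of $\alpha([],\cdot)$, that $P$ is a probability kernel, and that the region outside $\Set{[]}\cup\mathbf{O}^{-1}_M(\valset)$ is $P^n([],\cdot)$-null).
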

\begin{proof} \belowdisplayskip=-9pt
  Let $c=1-\Tracevaldist M(\sampseq\setminus\Set{[]})$. 
  By assumption $[]\not\in \mathbf{O}^{-1}_M(\valset)$ and $c<1$, 
  and since $\Tracevaldist M$ is a sub-probability distribution we have $0\le c$.
  We proceed by induction on $n$. The base case is trivial. 
  For the induction case, we have $P(s,\sampseq\setminus\Set{[]})=1$ for all $s\in \mathbf{O}^{-1}_M(\valset)$.
Finally
\begin{SHORT}
\begin{multline*}
  P([],\mathbf{O}^{-1}_M(\valset)) =\int_{\mathbf{O}^{-1}_M(\valset)}\Termsmv M
  =\\\Tracevaldist M(\mathbf{O}^{-1}_M(\valset)) =\Tracevaldist
  M(\sampseq\setminus\Set{[]}).
\end{multline*}
\end{SHORT}
\begin{LONG}
\[
  P([],\mathbf{O}^{-1}_M(\valset)) =\int_{\mathbf{O}^{-1}_M(\valset)}\Termsmv M
  =\Tracevaldist M(\mathbf{O}^{-1}_M(\valset)) =\Tracevaldist
  M(\sampseq\setminus\Set{[]}).
\]
\end{LONG}
\qed
\end{proof}

Based on Lemma~\ref{lem:supp-stationary} and \ref{lem:supp-absorbing}, 
we below consider the Markov chain with kernel $P$ restricted to $\mathbf{O}^{-1}_M(\valset)\cup\Set{[]}$. 

\subsection{Correctness of Inference}
By saying that the inference algorithm is correct, we mean that as the 
number of steps goes to infinity, the distribution of generated samples
approaches the distribution specified by the sampling-based semantics of
the program. 

Formally, we define $T^n(s,A) = P^n(s, \mathbf{O}_{M}^{-1}(A))$ 
as the value sample distribution at step $n$ of the Metropolis-Hastings Markov chain. 
For two measures defined on the same measurable space $(X, \mAlg)$,
we also define the variation norm $||\mu_1 - \mu_2||$ as:
\begin{equation*}
||\mu_1 - \mu_2|| = \sup_{A \in \mAlg}|\mu_1(A) - \mu_2(A)|
\end{equation*}



We want to prove the following theorem:

\begin{theorem}[Correctness] \label{thm:sampling-trace}
For every trace $s$ with $\termsmv Ms \neq 0$,
\[
 \lim_{n \rightarrow \infty}
 || T^n(s, \cdot) - \valdist{M} || = 0.
\]
\end{theorem}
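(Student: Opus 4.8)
The plan is to recognize the kernel $P$ defined in Equation~(\ref{equation:transition-kernel}) as a Metropolis–Hastings kernel whose invariant distribution is the success-conditioned trace distribution, and then to invoke the classical convergence theorem of \citet{tierney1994}, transferring the conclusion along $\mathbf{O}_M$. First I would restrict the chain to $S_0=\mathbf{O}^{-1}_M(\valset)\cup\Set{[]}$, which is legitimate by Lemmas~\ref{lem:supp-stationary} and~\ref{lem:supp-absorbing}: started at a successful trace the chain stays in $\mathbf{O}^{-1}_M(\valset)$, and $[]$ is visited only transiently. On $S_0$ I define the target probability measure $\pi$ by normalizing $\Tracevaldist M$, namely $\pi(A)=\Tracevaldist M(A)/\tsq M(\valset)$; this is well defined because $\tsq M(\valset)>0$ by assumption and $\Tracevaldist M$ is a finite sub-probability measure by Corollary~\ref{cor:tracedist-sub-probability}. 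Since $\pi$ has density $\termsmv M{\cdot}/\tsq M(\valset)$ with respect to the stock measure, its support is exactly $\Set{s\mid\termsmv Ms\neq 0}$, and $\pi(\Set{[]})=0$.

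Next I would show $\pi$ is invariant for $P$, via detailed balance. Using the definition of $\alpha$ in Equation~(\ref{eq:alpha}) (including the degenerate cases fixed in the definition of $\alpha$), the $\min$-form gives the pointwise reversibility identity $\termsmv Ms\,q(s,t)\,\alpha(s,t)=\termsmv Mt\,q(t,s)\,\alpha(t,s)$; the rejection (``stay'') part of $P$ contributes only to the diagonal and is symmetric automatically. Integrating reversibility in $s$ yields $\int P(s,A)\,\pi(ds)=\pi(A)$, so $\pi$ is stationary.

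Then I would verify the hypotheses of Tierney's theorem. For $\pi$-irreducibility, note that the Gaussian factor $\pdf{\mathsf{Gaussian}}(s_i,\sigma^2,\cdot)$ appearing in $q$ is strictly positive everywhere, so from any $s\in\mathbf{O}^{-1}_M(\valset)$ the proposal assigns positive density to every completed trace $t$ sharing the same reduction structure; as $\termsmv Mt>0$ on a neighbourhood of such $t$, these moves are accepted with positive probability, letting the chain reach every set of positive $\pi$-measure. Aperiodicity follows since $\alpha(s,\cdot)<1$ on a positive-measure set, hence $P(s,\Set s)>0$ and no cyclic decomposition exists. By \citet{tierney1994}, a $\pi$-invariant, $\pi$-irreducible kernel is positive recurrent with $\pi$ its unique invariant distribution, and when aperiodic it satisfies $\lim_n\lVert P^n(s,\cdot)-\pi\rVert=0$ for $\pi$-almost every $s$, and for \emph{every} $s$ in the support once the chain is Harris recurrent; the everywhere-positive proposal secures Harris recurrence, so convergence holds for every $s$ with $\termsmv Ms\neq 0$.

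Finally I would transfer the conclusion along $\mathbf{O}_M$. Since $T^n(s,\cdot)=(\mathbf{O}_M)_{*}P^n(s,\cdot)$ and pushforward along a measurable map is a contraction for the variation norm (as $\lVert f_*\mu-f_*\nu\rVert=\sup_B|\mu(f^{-1}B)-\nu(f^{-1}B)|\le\lVert\mu-\nu\rVert$), we obtain $\lVert T^n(s,\cdot)-(\mathbf{O}_M)_{*}\pi\rVert\le\lVert P^n(s,\cdot)-\pi\rVert\to 0$. It remains to identify the limit: by definition $(\mathbf{O}_M)_{*}\pi(A)=\Tracevaldist M(\mathbf{O}^{-1}_M(A))/\tsq M(\valset)=\restr{(\sbd M)}{\valset}(A)/\tsq M(\valset)$, and by Theorem~\ref{thm:sampling-distribution} this equals $\valdist M(A)$ normalized to a probability measure on values, i.e.\ the value semantics conditioned on successful termination, which is the claimed $\valdist M$. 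The main obstacle is this middle step: rigorously checking $\pi$-irreducibility, aperiodicity, and especially Harris recurrence for this particular proposal kernel, since $q$ mixes Gaussian perturbation of a prefix with a re-evaluation that resamples the tail, so one must track the reduction structure of traces carefully to see which proposals have positive density and are accepted; securing Harris recurrence, needed to upgrade from $\pi$-almost-every to \emph{every} starting trace, is the most delicate point.
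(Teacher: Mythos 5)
Your overall skeleton is exactly the paper's: restrict the chain to $\mathbf{O}^{-1}_M(\valset)\cup\Set{[]}$ (Lemmas~\ref{lem:supp-stationary} and~\ref{lem:supp-absorbing}), take $\pi$ to be normalized $\Tracevaldist M$, invoke \citet{tierney1994}, then push forward along $\mathbf{O}_M$ using the variation-norm contraction (Lemma~\ref{lemma:transform-norm}) and Theorem~\ref{thm:sampling-distribution} to identify the limit with $\valdist M$. (Your explicit detailed-balance computation is harmless but unnecessary: Tierney's results apply to any Metropolis--Hastings kernel built from a target $\pi$, which is how the paper uses them.) However, the middle step --- which you yourself flag as the delicate one --- contains two genuine gaps. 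For irreducibility, you argue the proposal gives positive density to traces ``sharing the same reduction structure'' and then conclude the chain ``reaches every set of positive $\pi$-measure''; this does not follow, since a positive-$\pi$-measure set may consist entirely of traces whose length and structure differ from those of $s$. The paper's fix (Lemma~\ref{lem:strong-irred}, Strong Irreducibility) shows no multi-step or same-structure argument is needed: by case analysis on whether $\Abs t\le\Abs s$, the density $q(s,t)$ is positive for \emph{every} $t$ with $\termsmv Mt>0$, because past the common prefix the proposal density is the program's own trace density $\termsmv{\mathtt{peval}(M,t_{1..k})}{t_{k+1..\Abs t}}$ (the tail is resampled by running $M$), so $P(s,A)>0$ in a single step whenever $\Tracevaldist M(A)>0$.

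For aperiodicity, your argument rests on the premise that ``$\alpha(s,\cdot)<1$ on a positive-measure set, hence $P(s,\Set s)>0$'', but this premise is nowhere justified: nothing you have said rules out that the acceptance ratio is $1$ for $Q(s,\cdot)$-almost every proposal, in which case the rejection mass vanishes and your argument yields nothing (the chain can still be aperiodic, just not for this reason). The paper sidesteps this entirely by deriving aperiodicity from strong irreducibility: if $B_1,B_2$ are disjoint with $\pi(B_1)>0$ and $P(s,B_2)=1$ for all $s\in B_1$, then $P(s,B_1)>0$ already contradicts $P(s,B_2)=1$. Finally, concerning Harris recurrence: you are right that it is what upgrades convergence from $\pi$-almost-every to every admissible starting trace, but it does not require the separate ``everywhere-positive proposal'' argument you gesture at; Tierney's Corollary~2 states that every $\pi$-irreducible Metropolis--Hastings kernel is automatically Harris recurrent, which is precisely why the paper's Lemma~\ref{lem:tierneyConvergence} cites Theorem~1 and Corollary~2 of \citet{tierney1994} together and states the conclusion for all $s$.
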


To do so, we first need to investigate the convergence of $P^{n}$ 
to our target distribution $\pi$, defined as follows:
$$
\pi(A) = {\Tracevaldist M(A)}/{\Tracevaldist M (\sampseq)}.
$$

We use a sequence of known results for Metropolis-Hastings Markov
chains~\cite{tierney1994} to prove that $P^{n}$ converges to $\pi$.
We say that a Markov chain transition kernel $P$ is $\distrone$-\emph{irreducible} 
if $\distrone$ is a non-zero 
sub-probability distribution on $(\sampseq,\mathcal{S})$, 
and for all $x \in \sampseq,A \in \mathcal{S}$ 
there exists an integer $n>0$ such that $\distrone(A) > 0$ implies $P^n(x,A) > 0$.  
We say that $P$ is $\distrone$-\emph{aperiodic} if there do not exist $d \geq 2$ and disjoint $B_1, \dots,
B_d$ such that $\distrone(B_1) > 0$, 
and $x \in B_d$ implies  $P(x,B_1) = 1$,
and $x \in B_i$ implies that $P(x,B_{i+1}) = 1$ for $i \in \Set{1,\dots,d-1}$.

\begin{lemma}[\citet{tierney1994}, Theorem 1 and Corollary 2]\label{lem:tierneyConvergence}
  Let $K$ be the transition kernel of a Markov chain given by the
  Metropolis-Hastings algorithm with target distribution $\distrone$.
  If $K$ is $\distrone$-irreducible and aperiodic, then
for all $s$,  $ \lim_{n \rightarrow \infty}|| K^n(s, \cdot) - \distrone || = 0$.
\end{lemma}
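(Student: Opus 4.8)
The plan is to assemble the statement from three standard facts about Metropolis--Hastings (MH) chains, treated in the order: invariance of the target, then convergence from $\pi$-almost every state, then the upgrade to convergence from \emph{every} state via Harris recurrence. Throughout write $\pi=\distrone$ for the target distribution and $K$ for the MH kernel, which has the form
\[
K(s,A)=\int_A\alpha(s,t)\,Q(s,dt)+\indfun As\cdot\int(1-\alpha(s,t))\,Q(s,dt),
\]
with $\alpha(s,t)=\min(1,\tfrac{\pi(t)q(t,s)}{\pi(s)q(s,t)})$ as in~(\ref{eq:alpha}).

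First I would check that $\pi$ is invariant for $K$. The essential computation is detailed balance: the ``accept'' part of the kernel has density $\pi(s)\,\alpha(s,t)\,q(s,t)=\min(\pi(s)q(s,t),\pi(t)q(t,s))$, which is manifestly symmetric in $s$ and $t$ by the very definition of $\alpha$, while the ``reject'' part is supported on the diagonal $s=t$ and is trivially symmetric. Hence $\pi(ds)\,K(s,dt)=\pi(dt)\,K(t,ds)$, and integrating out $s$ gives $\int K(s,A)\,\pi(ds)=\pi(A)$; that is, $K$ is $\pi$-reversible and $\pi$-invariant.

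Second, with $\pi$ a genuine invariant probability measure and $K$ assumed $\pi$-irreducible, the chain is positive recurrent and $\pi$ is its unique invariant distribution. The classical ergodic theorem for aperiodic positive recurrent chains then gives total-variation convergence to $\pi$ --- but, applied without Harris recurrence, only for $\pi$-\emph{almost every} starting state $s$, rather than for all $s$.

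The main obstacle, and the specific content of Tierney's Corollary~2, is closing that gap by showing that every $\pi$-irreducible MH chain is Harris recurrent, so that convergence holds from \emph{all} starting points. General theory guarantees that the exceptional set $N$ of non-Harris starting states satisfies $\pi(N)=0$; the work is to show $N=\emptyset$. Tierney's argument exploits the particular accept/reject structure of the MH kernel established above to rule out any state from which the chain fails, with probability one, to return to the support of $\pi$, which forces $N$ to be empty. Granting Harris recurrence, the ergodic theorem for aperiodic positive Harris recurrent chains (Meyn--Tweedie, Nummelin) upgrades the almost-everywhere statement to $\lim_{n}\|K^n(s,\cdot)-\pi\|=0$ for every $s$, which is exactly the claim. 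Since this is a cited result, in the paper itself I would invoke \citet{tierney1994} directly rather than reproduce the recurrence argument.
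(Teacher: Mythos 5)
Your proposal is correct and is in essence the same approach as the paper's: the paper gives no proof of this lemma at all, stating it purely as a citation of Tierney (1994), and your decomposition --- detailed-balance invariance of the target, Theorem~1's convergence from $\pi$-almost every state, and Corollary~2's Harris-recurrence argument upgrading this to every starting state --- is exactly the content of the two cited results, with the one genuinely hard step (Harris recurrence of $\pi$-irreducible MH chains) deferred to the citation just as the paper does. Your closing remark that one would simply invoke the reference rather than reproduce the recurrence argument is precisely what the paper does.
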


\begin{lemma}[Strong Irreducibility] \label{lem:strong-irred}
    If $\Termsmv M(s)>0$ and $\Tracevaldist{M}(A)>0$ then $P(s,A)>0$.
\end{lemma}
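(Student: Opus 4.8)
The plan is to bound $P(s,A)$ from below by the first summand of its defining equation~(\ref{equation:transition-kernel}) and then reduce to a pointwise positivity statement. Since $\alpha\le 1$ and $Q(s,\cdot)$ is a measure, the second term of~(\ref{equation:transition-kernel}) is non-negative, and because $q(s,\cdot)$ is the density of $Q(s,\cdot)$ with respect to the stock measure $\mu$,
\[
P(s,A)\;\ge\;\int_A \alpha(s,t)\,Q(s,dt)\;=\;\int_A \alpha(s,t)\,q(s,t)\,dt .
\]
Let $A^{*}=\{t\in A\mid\termsmv{M}{t}>0\}$. The integrand $\termsmv{M}{t}$ vanishes off $A^{*}$, so the hypothesis $\Tracevaldist{M}(A)=\int_A\termsmv{M}{t}\,dt>0$ forces $\mu(A^{*})>0$. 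It therefore suffices to show that $\alpha(s,t)\,q(s,t)>0$ for every $t\in A^{*}$, since then $P(s,A)\ge\int_{A^{*}}\alpha(s,t)\,q(s,t)\,dt>0$.

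The heart of the argument is a positivity lemma for the proposal density: if $\termsmv{M}{u}>0$, then $q(r,u)>0$ for every trace $r$. Fix such $u,r$, set $k=\min\{\Abs r,\Abs u\}$ and $N=\mathtt{peval}(M,u_{1..k})$. The Gaussian factor $\prod_{i=1}^{k}\pdf{\mathsf{Gaussian}}(r_i,\sigma^2,u_i)$ is a product of strictly positive densities (as $\sigma^2>0$), so $q(r,u)>0$ iff $\termsmv{N}{u_{k+1..\Abs u}}>0$. Now $\termsmv{M}{u}>0$ gives, via Theorem~\ref{thm:sample-small-big-eq}, a reduction $(M,1,u)\Rightarrow(V,w,[])$ with $V\in\valset$ and $w>0$. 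By determinism (Lemma~\ref{lemma:small-steps-determined}) this chain passes through the unique state reached right after the prefix $u_{1..k}$ has been consumed, namely $(N,w_1,u_{k+1..\Abs u})$ with $N=\mathtt{peval}(M,u_{1..k})$ (in particular $N\neq\fail$, since all of $u$, hence its prefix, is consumed, and $w_1>0$). Dividing out $w_1$ with Lemma~\ref{lemma:multiply-closure} yields $(N,1,u_{k+1..\Abs u})\Rightarrow(V,w/w_1,[])$, so by Theorem~\ref{thm:sample-small-big-eq} again $\termsmv{N}{u_{k+1..\Abs u}}=w/w_1>0$, establishing $q(r,u)>0$.

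With the lemma in hand the acceptance probability is immediate. For $t\in A^{*}$ we have $\termsmv{M}{t}>0$ and, by hypothesis, $\termsmv{M}{s}>0$, so the lemma gives $q(s,t)>0$ (take $u=t$, $r=s$) and $q(t,s)>0$ (take $u=s$, $r=t$). Hence all four quantities appearing in $\alpha(s,t)=\min\bigl(1,\tfrac{\termsmv{M}{t}}{\termsmv{M}{s}}\cdot\tfrac{q(t,s)}{q(s,t)}\bigr)$ are strictly positive, and the degenerate clauses ($\alpha=0$ when $\termsmv{M}{t}=0$, and $\alpha=1$ when $\termsmv{M}{s}\cdot q(s,t)=0$) do not apply, so $\alpha(s,t)>0$. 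Thus $\alpha(s,t)\,q(s,t)>0$ on $A^{*}$, and the reduction above concludes the proof.

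The main obstacle is the positivity lemma, and specifically the identification of $\mathtt{peval}(M,u_{1..k})$ with the intermediate term of the trace-indexed reduction of $M$ on $u$ together with the transfer of positive weight across the split. This is where the bookkeeping lies: one must reconcile $\mathtt{peval}$'s convention of stopping exactly after the last consumed element (the $s_k\neq[]$ clause of its definition) with the decomposition $(M,1,u_{1..k})\Rightarrow(N,w_1,[])$ and $(N,1,u_{k+1..\Abs u})\Rightarrow(V,w/w_1,[])$, using the associativity of $\mathtt{peval}$ (Lemma~\ref{lemma:peval-assoc}) and the suffix manipulations of Lemmas~\ref{lemma:add-suffix-closure} and~\ref{lemma:remove-suffix}. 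Everything else---discarding the non-negative second term of $P$, rewriting $Q$ through its density, and inferring $\mu(A^{*})>0$---is routine.
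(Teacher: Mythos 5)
Correct, and essentially the paper's own argument: both proofs discard the non-negative rejection term, bound $P(s,A)$ from below by $\int \alpha(s,t)\,q(s,t)\,dt$ over a positive-$\mu$-measure subset of $A$, and reduce everything to strict positivity of $q(s,t)$ and $q(t,s)$ on that subset, which follows from the everywhere-positive Gaussian factors together with positivity of the $\mathtt{peval}$ density factor inherited from $\termsmv{M}{t}>0$ and $\termsmv{M}{s}>0$. The only differences are that you integrate over the explicit positive-density set $A^{*}$ where the paper uses $A\cap\sampseqp{n}$ (leaving the restriction to positive-density traces implicit), and that you actually prove positivity of the $\mathtt{peval}$ factor via Theorem~\ref{thm:sample-small-big-eq}, determinism, and the suffix/weight lemmas where the paper simply asserts it by inspecting the formula for $q$ --- added care, not a divergence in method.
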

\begin{SHORT}
\begin{proof}\belowdisplayskip=-9pt
  \newcommand{\An}{A\rvert_n}
  There is $n$ such that $\Termsmv{M}(A\cap\sampseqp{n})>0$. 
  Write $\An=A\cap\sampseqp{n}$. For all $t\in\An$, $q(s,t)>0$ by case
  analysis on whether $n\le\Abs{s}$.
    If $n\le\Abs{s}$, then for all $t\in\An$,
    \begin{align*}
      q(s, t) &= \Pi_{i=1}^n\pdf{\mathsf{Gaussian}}(s_i,\sigma^2,t_i)> 0 & \text{and}\\
      q(t, s) &= (\Pi_{i=1}^n\pdf{\mathsf{Gaussian}}(t_i,\sigma^2,s_i))\cdot
                \termsmv{\texttt{peval}(M,s_{1..n})}{s_{(n+1)},\dots,s_n}>0.\\
    \end{align*}
    Similarly, if $n>\Abs{s}$, then for all $t\in\An$,
    \begin{align*}
      q(s, t) &= (\Pi_{i=1}^{\Abs{s}}\pdf{\mathsf{Gaussian}}(s_i,\sigma^2,t_i)) \cdot
                \termsmv{\texttt{peval}(M,t_{1..{\Abs{s}}})}{t_{({\Abs{s}}+1)},\dots,t_{\Abs{s}}}> 0 & \text{and}\\
      q(t, s) &= \Pi_{i=1}^{\Abs{s}}\pdf{\mathsf{Gaussian}}(t_i,\sigma^2,s_i)>0.
    \end{align*}
Since  $\mu(\An)>0$ and $\Termsmv M(t)>0$ for all $t\in\An$, 
  \begin{align*}
    P(s, A) &\ge P(s,\An) 
    \\ &\ge \int_{\An} \alpha(s,t)\, Q(s, dt) 
    \\ &= \int_{\An} \alpha(s,t) q(s, t)\,dt
    \\ &=\int_{\An} \min\Set{q(s, t),\frac{\Termsmv M(t) q(t,s)}{\Termsmv M(s)}}\,dt
\\ &> 0.
  \end{align*}
  \qed
\end{proof}
\end{SHORT}
\begin{LONG}
\begin{proof}\belowdisplayskip=-9pt
  \newcommand{\An}{A\rvert_n}
  There is $n$ such that $\Termsmv{M}(A\cap\sampseqp{n})>0$. 
  Write $\An=A\cap\sampseqp{n}$. For all $t\in\An$, $q(s,t)>0$ by case
  analysis on whether $n\le\Abs{s}$.
    If $n\le\Abs{s}$, then for all $t\in\An$,
    \begin{align*}
      q(s, t) &= \Pi_{i=1}^n\pdf{\mathsf{Gaussian}}(s_i,\sigma^2,t_i)> 0 & \text{and}\\
      q(t, s) &= (\Pi_{i=1}^n\pdf{\mathsf{Gaussian}}(t_i,\sigma^2,s_i))\cdot
                \termsmv{\texttt{peval}(M,s_{1..n})}{s_{(n+1)},\dots,s_n}>0.\\
    \end{align*}
    Similarly, if $n>\Abs{s}$, then for all $t\in\An$,
    \begin{align*}
      q(s, t) &= (\Pi_{i=1}^{\Abs{s}}\pdf{\mathsf{Gaussian}}(s_i,\sigma^2,t_i)) \cdot
                \termsmv{\texttt{peval}(M,t_{1..{\Abs{s}}})}{t_{({\Abs{s}}+1)},\dots,t_{\Abs{s}}}> 0 & \text{and}\\
      q(t, s) &= \Pi_{i=1}^{\Abs{s}}\pdf{\mathsf{Gaussian}}(t_i,\sigma^2,s_i)>0.
    \end{align*}
Since  $\mu(\An)>0$ and $\Termsmv M(t)>0$ for all $t\in\An$, 
  \begin{align*}
    P(s, A) &\ge P(s,\An) 
    \\ &\ge \int_{\An} \alpha(s,t)\, Q(s, dt) 
    \\ &= \int_{\An} \alpha(s,t) q(s, t)\,dt
    \\ &=\int_{\An} \min\Set{q(s, t),\frac{\Termsmv M(t) q(t,s)}{\Termsmv M(s)}}\,dt
\\ &> 0.
  \end{align*}
  \qed
\end{proof}
\end{LONG}
\begin{corollary}[Irreducibility]\label{lem:irreducible}
  $P$ as given by Equation~(\ref{equation:transition-kernel}) is $\pi$-irreducible.
\end{corollary}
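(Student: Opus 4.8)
The plan is to read off $\pi$-irreducibility from the Strong Irreducibility Lemma~\ref{lem:strong-irred}, using Lemma~\ref{lem:supp-absorbing} only to deal with the distinguished state $[]$. First I would check that $\pi$ really is a non-zero (sub-probability) distribution and that $\pi(A)>0$ is equivalent to the hypothesis $\Tracevaldist M(A)>0$ of Lemma~\ref{lem:strong-irred}. Since $\termsmv{M}{s}$ is $0$ off $\mathbf{O}^{-1}_M(\valset)$, the measure $\Tracevaldist M$ is concentrated there, so $\Tracevaldist M(\sampseq)=\valdist M(\valset)=\tsq M(\valset)>0$ using Theorem~\ref{thm:sampling-distribution} and the standing assumption; hence the normalising constant is positive, $\pi(\sampseq)=1$, and $\pi(A)>0\iff\Tracevaldist M(A)>0$.

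Next, by the remark following Lemma~\ref{lem:supp-absorbing}, it suffices to verify the irreducibility condition for start states $x$ in the restricted state space $\mathbf{O}^{-1}_M(\valset)\cup\Set{[]}$. Fix such an $x$ and a set $A$ with $\pi(A)>0$, i.e.\ $\Tracevaldist M(A)>0$. If $x\in\mathbf{O}^{-1}_M(\valset)$ then $M\Downarrow^x_w\valone$ for a value $\valone$; since producing a value precludes the only weight-zero rule \tr{Eval Random Fail} and all densities and scores contributing to $w$ are strictly positive, we have $\termsmv{M}{x}=w>0$, and Lemma~\ref{lem:strong-irred} gives $P(x,A)>0$, so $n=1$ works. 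If $x=[]$, then Lemma~\ref{lem:supp-absorbing} yields $P^{1}([],\mathbf{O}^{-1}_M(\valset))=1-c>0$ for some $0\le c<1$, so $P^{1}([],\cdot)$ places positive mass on $\mathbf{O}^{-1}_M(\valset)$; for every $s$ in that set $\termsmv{M}{s}>0$ as above, whence $P(s,A)>0$ by Lemma~\ref{lem:strong-irred}. Therefore
\[
P^{2}([],A)=\int P(s,A)\,P^{1}([],ds)\ge\int_{\mathbf{O}^{-1}_M(\valset)}P(s,A)\,P^{1}([],ds)>0,
\]
and $n=2$ works.

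The argument is mostly bookkeeping, so the only points needing care are the $x=[]$ case and the justification of the strict inequality in the displayed line. For the latter I would invoke the standard fact that $\int_B f\,d\nu>0$ whenever $f>0$ throughout $B$ and $\nu(B)>0$: here $f(s)=P(s,A)$ is strictly positive on all of $\mathbf{O}^{-1}_M(\valset)$ while $P^{1}([],\cdot)$ gives this set mass $1-c>0$. Alternatively, one can avoid the two-step detour entirely by computing directly from Figure~\ref{fig:trandensitykernel} that $q([],t)=\termsmv{M}{t}$ and $\alpha([],t)=1$ whenever $\termsmv{M}{t}>0$, which already gives $P([],A)\ge\Tracevaldist M(A)>0$ in a single step.
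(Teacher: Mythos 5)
Your proof is correct and follows the paper's route: the paper offers no explicit argument for this corollary, treating it as immediate from the Strong Irreducibility Lemma~\ref{lem:strong-irred} applied with $n=1$. Your extra bookkeeping---verifying $\pi(A)>0\iff\Tracevaldist{M}(A)>0$ (including positivity of the normalising constant via Theorem~\ref{thm:sampling-distribution}), noting that value-producing traces have strictly positive weight so that states in $\mathbf{O}^{-1}_M(\valset)$ satisfy the hypothesis of Lemma~\ref{lem:strong-irred}, and handling the state $[]$ by a two-step argument (or directly, via $q([],t)=\termsmv{M}{t}$ and $\alpha([],t)=1$, which is exactly the computation used inside the proof of Lemma~\ref{lem:supp-absorbing})---fills in precisely the details the paper leaves implicit.
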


\begin{lemma}[Aperiodicity]\label{lem:aperiodic}
  $P$ as given by Equation~(\ref{equation:transition-kernel}) is $\pi$-aperiodic.  
\end{lemma}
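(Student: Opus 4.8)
The plan is to prove $\pi$-aperiodicity by contradiction, leaning almost entirely on the strong irreducibility already established in Lemma~\ref{lem:strong-irred}. Recall that $\pi$-aperiodicity fails exactly when there is a cyclic decomposition: some $d \geq 2$ and pairwise disjoint measurable sets $B_1, \dots, B_d$ with $\pi(B_1) > 0$ through which the chain is forced to move deterministically, one class at a time. I would assume such a decomposition exists and then exhibit a single state that the kernel is obliged both to leave in one step and to be able to return to in one step.

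First I would locate a usable state inside $B_1$. Since $\pi(A) = \Tracevaldist{M}(A)/\Tracevaldist{M}(\sampseq)$ and $\Tracevaldist{M}(\sampseq) = \tsq{M}(\valset) > 0$ by our standing non-triviality assumption, $\pi(B_1) > 0$ gives $\Tracevaldist{M}(B_1) > 0$. Because $\Tracevaldist{M}(B_1) = \int_{B_1} \termsmv{M}{s}\, ds$ with a non-negative integrand, the set $\Set{s \in B_1 \mid \termsmv{M}{s} > 0}$ has positive stock measure, and in particular is nonempty; fix any such $s \in B_1$.

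Next I would derive the contradiction. Taking $i = 1$ in the cyclic condition, every $x \in B_1$ satisfies $P(x, B_2) = 1$; since $P$ is a probability kernel (as $Q$ is one by Lemma~\ref{lemma:q-kernel}, and the acceptance and rejection terms of Equation~(\ref{equation:transition-kernel}) together integrate to $Q(s,\sampseq) = 1$) and $B_1, B_2$ are disjoint, this forces $P(s, B_1) \le 1 - P(s,B_2) = 0$. On the other hand, strong irreducibility (Lemma~\ref{lem:strong-irred}) applied with the target $A = B_1$, for which $\Tracevaldist{M}(B_1) > 0$, and with our chosen $s$ satisfying $\termsmv{M}{s} > 0$, yields $P(s, B_1) > 0$. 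These two conclusions are incompatible, so no cyclic decomposition can exist and $P$ is $\pi$-aperiodic.

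The argument is short precisely because the real content sits in Lemma~\ref{lem:strong-irred}: one-step reachability of \emph{every} positive-measure target is exactly what rules out the deterministic one-step cycling required by periodicity. The only point needing mild care is the measure-theoretic bookkeeping that converts $\pi(B_1) > 0$ into a concrete $s \in B_1$ with $\termsmv{M}{s} > 0$, which uses the assumptions $\tsq{M}(\valset) > 0$ and $\Tracedist{M}(\Set{[]}) = 0$ guaranteeing that $\pi$ is a genuine distribution supported on value-producing traces. I would deliberately avoid the alternative route through Tierney's holding-probability criterion, since establishing that the holding probability $\int (1 - \alpha(s,t))\,Q(s,dt)$ is positive on a set of positive $\pi$-measure would require analysing when the acceptance ratio $\alpha(s,t)$ drops strictly below $1$ on a positive-measure set of proposals, which is considerably more delicate than the one-line appeal to strong irreducibility.
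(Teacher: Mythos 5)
Your proof is correct and follows essentially the same route as the paper's: assume a cyclic decomposition, then use Lemma~\ref{lem:strong-irred} to get $P(s,B_1)>0$ for a suitable $s\in B_1$, contradicting the forced transition $P(s,B_2)=1$ since $P(s,\cdot)$ is a probability measure and $B_1,B_2$ are disjoint. The only difference is that you explicitly extract a state $s\in B_1$ with $\termsmv{M}{s}>0$ from $\pi(B_1)>0$ before invoking strong irreducibility, a bookkeeping step the paper's two-line proof leaves implicit (it is justified there by the restriction of the chain to $\mathbf{O}^{-1}_M(\valset)\cup\Set{[]}$).
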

\begin{proof}
Assume that $B_{1},B_{2}$ are disjoint sets such that $\pi(B_1) > 0$ and $P(s,B_2) = 1$ for all $s \in B_1$.
If $s\in B_{1}$, Lemma~\ref{lem:strong-irred} gives that $P(s, B_{1})>0$, so $P(s, B_{2})<P(s,\sampseq)=1$, which is a contradiction. 
A fortiori, $P$ is $\pi$-aperiodic.\qed
\end{proof}




\begin{lemma} \label{lemma:transform-norm}
If $\mu_1$ and $\mu_2$ are measures on $(X_1, \Sigma_1)$ and $f:X_1 \rightarrow X_2$
is measurable $\Sigma_1 / \Sigma_2$, then
\[
 ||\mu_1 f^{-1} - \mu_2 f^{-1} || \le ||\mu_1 - \mu_2||
\]
\end{lemma}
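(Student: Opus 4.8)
The plan is to unfold the definitions and observe that the supremum defining the left-hand norm ranges over a \emph{smaller} family of sets than the supremum defining the right-hand norm. Recall that the pushforward measures are given, for any $A \in \Sigma_2$, by $(\mu_1 f^{-1})(A) = \mu_1(f^{-1}(A))$ and $(\mu_2 f^{-1})(A) = \mu_2(f^{-1}(A))$; these are well-defined precisely because $f$ is measurable $\Sigma_1/\Sigma_2$, so that $f^{-1}(A) \in \Sigma_1$ for every $A \in \Sigma_2$.

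First I would fix an arbitrary $A \in \Sigma_2$ and set $B = f^{-1}(A)$, noting $B \in \Sigma_1$ by measurability. Then the difference at $A$ rewrites as a difference at $B$:
\[
|(\mu_1 f^{-1})(A) - (\mu_2 f^{-1})(A)| = |\mu_1(B) - \mu_2(B)| \le \sup_{B' \in \Sigma_1}|\mu_1(B') - \mu_2(B')| = ||\mu_1 - \mu_2||.
\]
Since this bound holds uniformly in $A$, taking the supremum over all $A \in \Sigma_2$ on the left yields
\[
||\mu_1 f^{-1} - \mu_2 f^{-1}|| = \sup_{A \in \Sigma_2}|(\mu_1 f^{-1})(A) - (\mu_2 f^{-1})(A)| \le ||\mu_1 - \mu_2||,
\]
which is exactly the claim.

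There is essentially no obstacle here: the inequality is one-directional exactly because $\{f^{-1}(A) \mid A \in \Sigma_2\}$ is a (possibly proper) subfamily of $\Sigma_1$, so restricting the supremum to it cannot increase its value. The only point requiring care is to invoke measurability of $f$ to guarantee that each $f^{-1}(A)$ is an admissible test set in $\Sigma_1$ — without this, neither the pushforwards nor the bound would be meaningful. No appeal to countable additivity or any deeper measure-theoretic machinery is needed; it is purely a statement about suprema over nested index families.
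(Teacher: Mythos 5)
Your proof is correct and follows essentially the same route as the paper: both arguments reduce the left-hand supremum to a supremum over the subfamily $\{f^{-1}(A) \mid A \in \Sigma_2\} \subseteq \Sigma_1$ (using measurability of $f$ to place each preimage in $\Sigma_1$) and then conclude by monotonicity of suprema over nested families. The only difference is presentational — you bound pointwise in $A$ and then take the supremum, while the paper rewrites the supremum itself before comparing — which is mathematically the same argument.
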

\begin{proof}
We have $\sup_{B \in \Sigma_2} | \mu_1 f^{-1}(B) - \mu_2 f^{-1}(B) | = \sup_{A \in \Sigma'_1} ||\mu_1(A) - \mu_2(A) ||$,
where $\Sigma'_1 = \{ f^{-1}(B) | B \in \Sigma_2 \}$.
By measurability of $f$ we get $\Sigma'_1 \subseteq \Sigma_1$, 
so by monotonicity of $\sup$ we get 
$\sup_{A \in \Sigma'_1} |\mu_1(A) - \mu_2(A) | \le \sup_{A \in \Sigma_1} |\mu_1(A) - \mu_2(A)|$.\qed
\end{proof}

\begin{restate}{Theorem \ref{thm:sampling-trace}}
\mbox{For every trace $s$ with $\Termsmv M(s)\neq 0$,}
\[
 \lim_{n \rightarrow \infty}
 || T^n(s, \cdot) - \valdist{M} || = 0.
\]
\end{restate}
\begin{proof}
By Corollary~\ref{lem:irreducible}, $P$ is $\pi$-irreducible, and by
Lemma~\ref{lem:aperiodic}, $P$ is $\pi$-aperiodic.
Lemma~\ref{lem:tierneyConvergence} then yields that
\[
 \lim_{n \rightarrow \infty}
 || P^n(x, \cdot) - \pi || = 0.
 \]
By definition, $T^n(s,A) = P^n(s, \mathbf{O}_{M}^{-1}(A))$ and
$\valdist M (A) = \tsq{M}(A\cap\valset)/\tsq{M}(\valset)$.  By Theorem
\ref{thm:sampling-distribution}, $\tsq{M}(A\cap\valset) =
\sbd{M}(A\cap\valset) = \Tracedist{M}(\mathbf{O}_M^{-1}(A \cap
\valset) = \Tracedist{M}(\mathbf{O}_M^{-1}(A) \cap
\mathbf{O}_M^{-1}(\valset)) = \Tracevaldist{M}(\mathbf{O}_M^{-1}(A))$
and similarly $\tsq{M}(\valset) =
\Tracedist{M}(\mathbf{O}_M^{-1}(\valset)) =
\Tracevaldist{M}(\mathbb{S})$, which gives $\valdist M (A) =
\pi(\mathbf(O_M^{-1}(A))$.  Thus, by Lemma \ref{lemma:transform-norm}
and the squeeze theorem for limits we get \belowdisplayskip=-10pt
\[
 \lim_{n \rightarrow \infty}
 || T^n(s, \cdot) - \valdist{M} || \leq  
 \lim_{n \rightarrow \infty}
 || P^n(s, \cdot) - \pi || = 0.
\]
\qed
\end{proof}


\subsection{Examples}
To illustrate how inference works, we revisit the geometric distribution
and linear regression examples from Section~\ref{sec:syntax}. Before discussing the transition
kernels for these models, note that the products of Gaussian densities
always cancel out in the acceptance probability $\alpha$, because 
$\pdf{\mathsf{Gaussian}}(s_i,\sigma^2,t_i) = \pdf{\mathsf{Gaussian}}(t_i,\sigma^2,s_i)$
by the definition of the Gaussian PDF.

\paragraph{Geometric Distribution}
Let us begin with the implementation of the geometric distribution, described in \ref{subsec:geom}, which we will call $M_1$. 
Since the only random primitive used in $M_1$ is $\textsf{rnd}$,
whose density is $1$ on all its support, and there are no calls to $\mathtt{score}$, the weight of any
trace that yields a value must be $1$. 
Because the $\mathit{geometric}$ function applied to $0.5$ returns a value immediately when
the call to $\mathsf{rnd}$ returns a number smaller than $0.5$, and recursively calls itself otherwise,
otherwise returns a value immediately

The function $\mathit{geometric}$ applied to $0.5$ calls itself recursively if the call to $\mathsf{rnd}$
returned a value greater or equal to  a half, and returns a value immediately otherwise. Hence, every valid trace 
consists of a sequence of numbers in $[0.5,1]$, followed by a number in $[0, 0.5)$,
and so the set of valid traces is
$S_1= \{s\mid s_i \in [0,0.5)$ for $i<\Abs s \land s_{\Abs{s}}\in[0.5,1]\land \Abs s > 0\}$.
The proposal density is \[q(s,t) = [t \in S_1] \Pi_{i = 1}^k \pdf{\mathsf{Gaussian}}(s_i, \sigma^2, t_i)\]
where $k=\min\Set{\Abs{s},\Abs{t}}$. The term $[t \in S_1]$ reflects the fact that for every non-valid
trace, $\mathbf{P}^{\valset}_{\mathtt{peval}(M, t_{1..k})}(t_{k+1..|t|}) = 0$.

As noted above, the Gaussians cancel out in the acceptance ratio, and
the density of every valid trace is $1$, so $\alpha(s,t) = [t \in
  S_1]$. This means that every valid trace is accepted.  The
transition kernel of the Markov chain induced by the MH algorithm is
\begin{multline*}
P(s,A) = \int_{A \cap S_1}\Pi_{i = 1}^{\min\Set{\Abs{s},\Abs{t}}} \pdf{\mathsf{Gaussian}}(s_i, \sigma^2, t_i) \mu(dt) \\
 + [s \in A] \int_{\mathbb{S} \setminus S_1} \Pi_{i = 1}^{\min\Set{\Abs{s},\Abs{t}}} \pdf{\mathsf{Gaussian}}(s_i, \sigma^2,t_i) \mu(dt) 
\end{multline*}

\paragraph{Linear Regression with $\mathsf{flip}$}

Now, consider the linear regression model from section
\ref{sec:exampl-line-regr}, which can be translated from Church to the
core calculus by applying the rules in Figure \ref{fig:translation}
(details are omitted, but the translation is straightforward as there
is no recursion).

In every trace in this translated model, which we call $M_2$, we have
two draws from $\mathsf{Gaussian}(0,2)$ , whose values are assigned to
variables $m$ and $b$.
They are followed by four calls to $\mathtt{rnd}$ made while
evaluating the four calls to $\mathtt{softeq}$. The conditioning
statement at the end sets the return value to $\tfalse$ if at least
one call to $\mathtt{softeq}$ evaluates to $\tfalse$. Since
$\mathtt{softeq}\ x\ y$ returns $\ttrue$ if and only if the
corresponding call to $\mathtt{rnd}$ returned a value less than
$\mathtt{squash}\ x\ y$, it follows from the definitions of
$\mathtt{squash}$ and $\mathtt{f}$ that the element of the trace
consumed by $\mathtt{softeq}\ (f\ x)\ y$ must be in the interval $[1,
  \frac{1}{e^{(m \cdot x + b - y)^2}})]$.  Note that since the pdf of
$\mathtt{rnd}$ is flat, the weight of any trace depends only on the
first two random values, drawn from the Gausians, as long as the
remaining four random values are in the right intervals.

The full density for this model is
\begin{eqnarray*}
 \mathbf{P}_{M_2}^\valset(s) &=& 
\left(\Pi_{i=1}^2 \pdf{\mathsf{Gaussian}}(0, 2, s_i) \right) \cdot \\
&&\quad \left(\Pi_{i=1}^4\left[s_{i+2} \in\left[1, \frac{1}{e^{(s_i \cdot x_i + s_2   - y_i)^2}}\right)\right]\right) 
\end{eqnarray*}
\noindent
if $s~\in~\mathbb{R}^6$ and $ \mathbf{P}_{M_2}^\valset(s) = 0$ if $s~\notin~\mathbb{R}^6$.

%
%
%
%
%
%
Note that the partial derivative of $\mathbf{P}_{M_2}^\valset(s)$ with respect to each of $s_3,s_4,s_5,s_6$ is zero wherever defined, precluding the use of efficient gradient-based methods for searching over these components of the trace.


Now, let us derive the density $q(s,t)$, assuming that
 $\mathbf{P}_{M_2}^\valset(s)  > 0$ 
 (which implies $s \in \mathbb{R}^6$, as shown above) and $t \in \mathbb{R}^6$. Since we
have $|s| = |t| =6$, the formula for $q$ reduces to:
\[
q(s,t) = (\Pi_{i=1}^6 \pdf{\mathsf{Gaussian}}(s_i, \sigma^2, t_i)
\mathbf{P}^\valset_{M_2'}([])
\]

where $M_2' = \mathtt{peval}(M_2, t)$.
Because there cannot be more than six
random draws in any run of the program, $M_2'$ is determinstic.
This means that $\mathbf{P}^\valset_{M_2'}([]) = 0$ if $M_2' \Downarrow^{[]}_1 \fail$
and  $\mathbf{P}^\valset_{M_2'}([]) = 1$ if $M_2' \Downarrow^{[]}_1 \fail$
for some $V$. 

It is easy to check that if $t \notin \mathbb{R}^6$, then $q(s,t) = 0$---since
there is no trace of length other than $6$ leading to a value, the value of
$\mathbf{P}_{M_2'}^\valset(t_{k+1..|t|})$ in the definition of $q$ must be $0$ in this case.


Thus, the proposal density is
\begin{eqnarray*}
q(s,t) &=& \left(\Pi_{i=1}^6 \pdf{\mathsf{Gaussian}}(s_i, \sigma^2, t_i) \right) \cdot \\
&&\qquad \qquad
\left(\Pi_{i=1}^4[t_{i+2} \in [0, \frac{1}{e^{(t_i \cdot x_1 + t_2   - y_i)^2}})]\right) 
\end{eqnarray*}
for $t~\in~\mathbb{R}^6$ and $q(s,t) = 0$ for $t~\notin~\mathbb{R}^6$.
%

Hence, the acceptance ratio reduces to
%
\begin{eqnarray*}
\alpha(s,t)&=& \min \{1, 
\frac{\Pi_{i=1}^2 \pdf{\mathsf{Gaussian}}(0, 2, t_i)}
{\Pi_{i=1}^2 \pdf{\mathsf{Gaussian}}(0, 2, s_i)}\cdot \\ 
&&\qquad \qquad 
 \Pi_{i=1}^4[t_{i+2} \in [0, \frac{1}{e^{(t_1 \cdot x_i + t_2   - y_i)^2}})]  \}
\end{eqnarray*}
if $t \in \mathbb{R}^6$ and $\alpha(s,t) = 0$ otherwise.

Note that $\alpha(s,t)$ is only positive if each of $t_3,t_4,t_5,t_6$ are within a certain (small) interval.
This is problematic for an implementation, since it will need to find suitable values for all these components of the trace for every new trace to be proposed, leading to inefficiencies due to a slowly mixing Markov chain.



\paragraph{Linear Regression with $\mathsf{score}$}
In this alternative version $M_3$ of the previous model, we also have two draws from $\mathsf{Gaussian}(0,2)$ 
at the beginning, but the calls to $\mathsf{flip}$ are replaced with calls to $\mathtt{score}$,
which multiply the trace density by a positive number 
without consuming any elements of the trace. Because the support 
of the Gaussian PDF is $\mathbb{R}$ and there are precisely two random draws 
(both from Gaussians)
in every trace leading to a value,
the set of valid traces is $\mathbb{R}^2$.
We have $\mathbf{P}_{M_3}^\valset(s) = \Pi_{i=1}^2 \pdf{\mathsf{Gaussian}}(0, 2, s_i)\ \Pi_{i=1}^4  e^{- (s_1 \cdot x_i + s_2   - y_i)^2}$
if  $s~\in~\mathbb{R}^2$ and  $\mathbf{P}_{M_3}^\valset(s) = 0$ otherwise.
Assuming $\mathbf{P}_{M_3}^\valset(s)  > 0$ and $t\in \mathbb{R}^2$, we get the proposal density 
\[
q(s,t) = \Pi_{i=1}^2 \pdf{\mathsf{Gaussian}}(s_i, \sigma^2, t_i) \Pi_{i=1}^4 
\frac{1}{e^{(t_1 \cdot x_i + t_2   - y_i)^2}}
\]
where $x = [0,1,2,3]$ and $y = [0,1,4,6]$. If $t\notin \mathbb{R}^2$, then
$q(s,t)=0$, because otherwise there would be a trace of length different than 2
leading to a value.

Thus, the acceptance ratio is
\begin{align*}
\alpha(s,t) &{}= \frac{ \Pi_{i=1}^2 \pdf{\mathsf{Gaussian}}(0, 2, t_i)} 
{\Pi_{i=1}^2 \pdf{\mathsf{Gaussian}}(0, 2, s_i)}
\end{align*}
if $t~\in~\mathbb{R}^2$ and $\alpha(s,t) = 0$ otherwise.

In contrast to the previous example, here the acceptance ratio is positive for all proposals, non-zero gradients exist almost everywhere, and there are four fewer nuisance parameters to deal with (one per data point!). This makes inference for this version of the model much more tractable in practice.
%

%
%
\section{Related Work}\label{sec:related}
To the best of our knowledge, the only previous theoretical
justification for trace MCMC is the recent work by~\citet{corsamp}, 
who show correctness of trace MCMC for the imperative probabilistic language \Rtwo~\cite{DBLP:conf/aaai/NoriHRS14}.
Their result does not apply to higher-order languages such as \Church or our $\lambda$-calculus, nor to programs that do not almost surely terminate~\cite{Hermanns15.terminationProbabilistic}.
Their algorithm is different from ours in that it exploits the explicit storage locations found in imperative programs, keeping one sample trace per location.
%
%
The authors do state that the space of traces in their language is
equipped with a ``stock'' measure, and that the distributions of
program traces and transitions are given by densities with respect to
that measure.  They do not, however, show that these densities are
measurable.
Their proof of correctness only shows that the acceptance ratio
$\alpha$ computed by their algorithm matches Hasting's formula: the
authors do not prove convergence of the resulting Markov chain.
Indeed, properties such as irreducibility and aperiodicity depend on
the choices of parameters in the algorithm.

Other probabilistic language implementations also use trace MCMC inference,
including \Church~\cite{DBLP:conf/uai/GoodmanMRBT08}, 
\Venture~\cite{DBLP:journals/corr/MansinghkaSP14}, \WebPPL~\cite{dippl}, 
and \Anglican~\cite{DBLP:conf/pkdd/TolpinMW15}.  
These works focusing on efficiency and convergence properties, 
and do not state formal correctness claims for their implementations.

\citet{wingate2011lightweight} give a general program transformation
for a probabilistic language to support trace MCMC, with a focus on
labelling sample points in order to maximise sample reuse.  Extending
our trace semantics with such labelling is important future work,
given that \citet{problems-kiselov16} points out some difficulties
with the transformation and proposes alternatives.

Many recent probabilistic languages admit arbitrary non-negative
scores.  This is done either by having an explicit
$\texttt{score}$-like function, as in 
\WebPPL (called \texttt{factor}), or by
observing that a particular value $V$ was drawn from a given
distribution $\distone(\vec{c})$ (without adding it to the trace), as
in \WebChurch (written $(\distone\ \vec{c}\ V)$) or \Anglican
(\lstinline{observe} $(\distone\ \vec{c})\ V$).
In recent work for a non-recursive $\lambda$-calculus with
\texttt{score},~\citet{DBLP:journals/corr/StatonYHKW16} note that
unbounded scores introduce the possibility of ``infinite model
evidence errors''. As seen in Section~\ref{sec:motivation}, 
even statically bounded scores exhibit this problem in the presence of recursion.


\citet{DBLP:conf/focs/Kozen79} gives a semantics of imperative
probabilistic programs as partial measurable functions from infinite
random traces to final states, which serves as the model for our trace
semantics. Kozen also proves this semantics equivalent to a
domain-theoretic one.
\citet{park08sampling} give an operational version of Kozen's
trace-based semantics for a $\lambda$-calculus with recursion, but
``do not investigate measure-theoretic properties''.
\citet{CousotMonerau-ESOP2012-PAI} generalise Kozen's trace-based
semantics to consider probabilistic programs as measurable functions
from a probability space into a semantics domain, and study abstract
interpretation in this setting.
\citet{DBLP:conf/esop/TorontoMH15} use a pre-image version of Kozen's
semantics to obtain an efficient implementation using rejection
sampling.
\citet{DBLP:conf/haskell/ScibiorGG15} define a monadic embedding of
probabilistic programming in Haskell along the lines of Kozen's
semantics; their paper describes various inference algorithms but has
no formal correctness results.
 
\citet{DBLP:conf/popl/RamseyP02} provide a monadic denotational
semantics for a first-order functional language with discrete
probabilistic choice, and a Haskell implementation of the expectation
monad using variable elimination.
\citet{BBGR12:DerivingPDFs} define a denotational semantics based on
density functions for a restricted first-order language with
continuous distributions.  They also present a type system ensuring
that a given program has a density.

\citet[Chapter 8]{jones90:PhD} defines operational and
domain-theoretic semantics for a $\lambda$-calculus with discrete
probabilistic choice and a fixpoint construct.  Our distribution-based
operational semantics generalises Jones's to deal with continuous
distributions.
Like Kozen's and Jones's semantics, our operational semantics makes
use of the partially additive structure on the category of
sub-probability kernels~\cite{panangaden99markovkernels} in order to
treat programs that make an unbounded number of random draws.
\citet{DBLP:journals/corr/StatonYHKW16} give a domain-theoretic
semantics for a $\lambda$-calculus with continuous distributions and
unbounded \texttt{score}, but without recursion.
While giving a fully abstract domain theory for probabilistic
$\lambda$-calculi with recursion is known to be
hard~\cite{JonesPlotkin89}, there have been recent 
advances using probabilistic coherence
spaces~\cite{DanosEhrhard11,EhrhardTassonPagani14} and game
semantics~\cite{DanosHarmer02}, which in some cases are fully
abstract.  We see no strong obstacles in applying any of these to a
typed version of our calculus, but it is beyond the scope of this
work.
Another topic for future work are methodologies for equivalence
checking in the style of logical relations or bisimilarity, which have
been recently shown to work well in \emph{discrete} probabilistic
calculi \cite{DBLP:conf/fossacs/BizjakB15}.


\section{Conclusions and Remarks}\label{sec:conc}
As a foundation for probabilistic inference in languages such as
\Church, we defined a probabilistic $\lambda$-calculus with draws from
continuous probability distributions and both hard and soft
constraints, defined its semantics as distributions on terms, and
proved correctness of a trace MCMC inference algorithm via a sampling
semantics for the calculus.

Although our emphasis has been on developing theoretical
underpinnings, we also implemented our algorithm in F\# to help
develop our intuitions and indeed to help debug definitions.
The algorithm is correct and effective, but not optimized.
In future, we aim to extend our proofs to cover more efficient
algorithms, inspired by \citet{wingate2011lightweight} and
\citet{problems-kiselov16}, for example.


\appendix

\section{Proofs of Measurability}\label{section:proof-of-measurability}
This appendix contains the proofs of measurability of $\mathbf{P}_M$,
$\mathbf{O}_M$, $\mathbf{P}^\valset_M$,  $\mathtt{peval}$ and $q$, as well as
a proof that $Q$ is a probability kernel.

The proofs usually proceed by decomposing the functions into simpler operations.
However, unlike \citet{TorontoPhD}, we do not define these functions entirely in
terms of general measurable operators, because the scope for reuse is limited here. 
We would have, for instance, to define multiple functions projecting different 
subexpressions of different expressions, and prove them measurable. Hence, the overhead
resulting from these extra definitions would be greater than the benefits.

First we recap some useful results from measure theory:


\begin{itemize}
\item
  A function $f: X_1 \rightarrow X_2$ between metric spaces $(X_1,d_1)$
  and $(X_2, d_2)$ is \emph{continuous} if for every $x \in X_1$ and
  $\epsilon > 0$, there exists $\delta$ such that for every $y \in X_1$, if $d_1(x,y) < \delta$,
  then $d_2(f(x), f(y)) < \epsilon$.
\item
  A subset $A$ of a metric space $(X,d)$ is \emph{dense} if
  \[
  \forall x \in X, \epsilon>0\ \exists y \in A\quad d(x,y) < \epsilon
  \]
\item
  A metric space is \emph{separable} if it has a countable dense subset.
\item  
  Given a sequence of points $x_n$ in a metric space $(X,d)$, we say
  that $x$ is the \emph{limit} of $x_n$ if for all $\epsilon > 0$, there
  exists an $N$ such that $d(x_n,x) < \epsilon$.
\item
  A subset $A$ of a metric space is \emph{closed} if it contains all the limit points,
  that is if $x_n \in A$ for all $n$ and $x_n \rightarrow x$, then $x \in A$. 

  
\end{itemize}

\begin{lemma}[{{\citet[ex. 13.1]{billingsley95}}}] \label{lemma:split-space}
Let $(\Omega, \Sigma)$ and $(\Omega', \Sigma')$ be two measurable spaces,
$T: \Omega \rightarrow \Omega'$ a function and
$A_1, A_2, \dots$ a countable collection of sets in $\Sigma$ whose union is $\Omega$.
Let $\Sigma_n = \{A \mid A \subseteq A_N, A \in \Sigma \}$ be a $\sigma$-algebra in $A_n$
and $T_n: A_n \rightarrow \Omega'$ a restriction of $T$ to $A_n$.
Then $T$ is measurable $\Sigma / \Sigma'$ if and only if $T_n$ is measurable
$\Sigma_n / \Sigma'$ for every $n$.
\end{lemma}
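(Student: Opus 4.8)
The plan is to prove the two implications separately, with the entire argument resting on the single set-theoretic identity that, for any $B \in \Sigma'$ and any $n$,
\[
T_n^{-1}(B) = T^{-1}(B) \cap A_n ,
\]
which holds simply because $T_n$ is the restriction of $T$ to $A_n$. Before either direction, I would record the preliminary observation that $\Sigma_n \subseteq \Sigma$: by the very definition $\Sigma_n = \{A \mid A \subseteq A_n,\ A \in \Sigma\}$, every member of $\Sigma_n$ is already an element of $\Sigma$ (that happens to be contained in $A_n$). This is the bookkeeping fact that lets measurability with respect to the trace $\sigma$-algebra $\Sigma_n$ feed back into measurability with respect to $\Sigma$.

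For the forward direction, I would assume $T$ is measurable $\Sigma/\Sigma'$ and fix $B \in \Sigma'$ together with an index $n$. By the identity above, $T_n^{-1}(B) = T^{-1}(B) \cap A_n$. Since $T^{-1}(B) \in \Sigma$ by measurability of $T$, and $A_n \in \Sigma$ by hypothesis, their intersection lies in $\Sigma$ and is contained in $A_n$, hence lies in $\Sigma_n$. As $B$ and $n$ were arbitrary, each $T_n$ is measurable $\Sigma_n/\Sigma'$.

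For the converse, I would assume every $T_n$ is measurable and fix $B \in \Sigma'$. Here I use the hypothesis that the $A_n$ cover $\Omega$, i.e. $\bigcup_n A_n = \Omega$, to decompose the preimage:
\[
T^{-1}(B) = \bigcup_n \bigl(T^{-1}(B) \cap A_n\bigr) = \bigcup_n T_n^{-1}(B) .
\]
By measurability of each $T_n$ we have $T_n^{-1}(B) \in \Sigma_n$, and by the preliminary observation $\Sigma_n \subseteq \Sigma$, so each term of the union lies in $\Sigma$. Since $\Sigma$ is closed under countable unions, $T^{-1}(B) \in \Sigma$, and thus $T$ is measurable $\Sigma/\Sigma'$.

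There is no deep obstacle here; the lemma is essentially a repackaging of the facts that preimages commute with the covering decomposition and that $\sigma$-algebras are closed under countable unions. The only point requiring a little care is the role of the trace $\sigma$-algebra $\Sigma_n$: one must notice that its elements are genuine $\Sigma$-sets, so that the countable union assembled in the converse direction stays inside $\Sigma$. Once that inclusion is noted, both directions are immediate.
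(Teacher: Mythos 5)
Your proof is correct. The paper offers no proof of its own for this lemma---it is cited verbatim as an exercise from Billingsley (ex.~13.1)---so there is nothing internal to compare against; your argument, built on the identity $T_n^{-1}(B) = T^{-1}(B) \cap A_n$ together with the observation that $\Sigma_n \subseteq \Sigma$ and closure of $\Sigma$ under countable unions, is precisely the standard argument the exercise intends (you also silently and correctly read the paper's typo $A \subseteq A_N$ as $A \subseteq A_n$).
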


\hspace{0.1in}

A convenient way of showing that a function is Borel-measurable is to show that it is
continuous as a function between metric spaces.

Let us represent the product $\sigma$-algebra $\mathcal{M} \times \mathcal{R} \times \mathcal{S}$
as a Borel $\sigma$-algebra induced by a metric. 
First, we define the standard metric on $\mathbb{R}$, and the disjoint union of Manhattan metrics for $\mathbb{S}$:
\begin{eqnarray*}
  d_{\mathbb{R}}(w,w') &\deq& |w - w'|\\
  d_{\mathbb{S}}(s,s') &\deq&
  \begin{cases}
  \sum_{i=1}^{|s|} |s_i - s'| & \text{if}\ |s| = |s'|
  \\
  \infty \quad &\text{otherwise}
  \end{cases}
\end{eqnarray*}

We can easily verify that $(\mathbb{S}, d)$ generates $\mathcal{S}$. We define the metric on
$\Lambda \times \mathbb{R} \times \mathbb{S}$ to be the Manhattan metric:

\[
  d((M,w,s), (M',w',s')) \deq d_{\Lambda}(M,M') + d_{\mathbb{R}}(w,w') + d_{\mathbb{S}}(s,s')
\]


The following is a standard result in measure theory:
\begin{lemma}[{{\citet[Proposition 4.2 b)]{gallay2009mesure}}}]
If $X_1, X_2$ are 
separable metric spaces then 
\[
\Borel(X_1 \times X_2) = \Borel(X_1) \times \Borel(X_2)
\]
\end{lemma}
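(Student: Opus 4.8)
The plan is to prove the two inclusions $\Borel(X_1)\times\Borel(X_2)\subseteq\Borel(X_1\times X_2)$ and $\Borel(X_1\times X_2)\subseteq\Borel(X_1)\times\Borel(X_2)$ separately, noting that only the second genuinely uses separability. Here $\Borel(X_1)\times\Borel(X_2)$ denotes, following the paper's convention, the product $\sigma$-algebra $\sigma(\{A\times B\mid A\in\Borel(X_1),\ B\in\Borel(X_2)\})$, while $\Borel(X_1\times X_2)$ is the Borel $\sigma$-algebra induced by the Manhattan product metric.

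For the first inclusion I would argue that it suffices to show each generating rectangle $A\times B$ (with $A\in\Borel(X_1)$, $B\in\Borel(X_2)$) lies in $\Borel(X_1\times X_2)$. The projections $\pi_i:X_1\times X_2\to X_i$ are $1$-Lipschitz for the Manhattan metric, hence continuous and so Borel measurable; therefore $\pi_1^{-1}(A),\pi_2^{-1}(B)\in\Borel(X_1\times X_2)$ for all Borel $A,B$, and since $A\times B=\pi_1^{-1}(A)\cap\pi_2^{-1}(B)$, the rectangle is Borel in the product. Because $\Borel(X_1)\times\Borel(X_2)$ is the smallest $\sigma$-algebra containing these rectangles, the inclusion follows. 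This direction needs no separability.

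For the second inclusion I would use that $\Borel(X_1\times X_2)$ is generated by the product-open sets, so it is enough to show every open $U\subseteq X_1\times X_2$ belongs to $\Borel(X_1)\times\Borel(X_2)$. This is where separability enters: if $D_i$ is a countable dense subset of $X_i$, then the balls $\mathbf{B}(d_i,q)$ with $d_i\in D_i$ and $q\in\mathbb{Q}_{>0}$ form a countable base for the topology of $X_i$, so the products $\mathbf{B}(d_1,q_1)\times\mathbf{B}(d_2,q_2)$ form a countable base for the product topology. Hence $U$ is a countable union of such basic rectangles; each basic rectangle is a product of open (so Borel) sets and therefore lies in $\Borel(X_1)\times\Borel(X_2)$, and a countable union of such sets stays in $\Borel(X_1)\times\Borel(X_2)$. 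Thus every open $U$, and hence all of $\Borel(X_1\times X_2)$, is contained in the product $\sigma$-algebra.

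Combining the two inclusions gives the claimed equality. I expect the main obstacle to be the second inclusion, and specifically the step expressing an arbitrary product-open set as a \emph{countable} union of open rectangles: this is exactly the point at which separability is indispensable, since without a countable base a product-open set need not decompose into countably many rectangles and the inclusion can genuinely fail for non-separable factors.
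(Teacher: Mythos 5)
Your proof is correct, and it is the standard argument: projections are Lipschitz (hence Borel) for one inclusion, and second countability of separable metric spaces for the other. There is nothing in the paper to compare it against: the paper does not prove this lemma at all, but cites it as a known result from \citet[Proposition 4.2 b)]{gallay2009mesure}, so you have effectively supplied the proof the reference is standing in for. One small point worth noting is that the paper defines $\Borel(X,\delta)$ as generated by open \emph{balls} rather than open sets; your argument covers this too, since in a separable metric space every open set is a countable union of such balls, so the two generated $\sigma$-algebras coincide --- and your closing remark correctly identifies that separability is exactly what makes the second inclusion (and this identification) work.
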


It is obvious that $(\mathbb{R}, d)$ and $(\mathbb{S}, d)$ are separable. Now, 
let $\Lambda_Q$ be the subset of $\Lambda_P$ in which all constants are rational. Then, it is easy
to show that $\Lambda_Q$ is countable. 

\begin{lemma} \label{lemma:terms-dense}
$\Lambda_Q$ is a dense subset of $(\Lambda_P, d)$
\end{lemma}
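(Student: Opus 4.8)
The plan is to prove the equivalent statement that for every term $M \in \Lambda_P$ and every $\epsilon > 0$ there is a term $N \in \Lambda_Q$ with $d(M,N) < \epsilon$. The guiding observation is that, by the definition of $d$ in Figure~\ref{fig:metric}, the distance between two terms is \emph{finite} precisely when they have the same syntactic shape — identical up to the real constants occurring at corresponding positions — and in that case $d$ is the sum of the absolute differences $|c_i - c_i'|$ of those corresponding constants. Hence, to approximate $M$ it suffices to fix its shape and perturb only its constants, replacing each real constant by a nearby rational; the troublesome ``$d(M,N) = \infty$ otherwise'' clause then never applies, since shape is preserved throughout.

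First I would proceed by structural induction on $M$. The base cases are: $M = c$, where density of $\mathbb{Q}$ in $\mathbb{R}$ yields a rational $q$ with $|c - q| < \epsilon$, so $q \in \Lambda_Q$ and $d(c,q) = |c-q| < \epsilon$; and $M = x$ or $M = \fail$, which already lie in $\Lambda_Q$ with $d(M,M) = 0$. For each compound constructor I would apply the induction hypothesis to the immediate subterms with tolerance $\epsilon / k$, where $k$ is the number of those subterms, and then reassemble. Concretely, for $M = M_1\,M_2$ one picks $N_i \in \Lambda_Q$ with $d(M_i, N_i) < \epsilon/2$ and sets $N = N_1\,N_2$, so that $d(M,N) = d(M_1,N_1) + d(M_2,N_2) < \epsilon$; the cases $\distone(V_1,\dots,V_n)$, $g(V_1,\dots,V_n)$, $\ite{V}{M_1}{M_2}$, $\score{V}$ and $\lambda x.M'$ are handled identically, splitting $\epsilon$ evenly among the finitely many components and invoking the matching clause of $d$. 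In every case the reassembled term again contains only rational constants, so $N \in \Lambda_Q$, and it has the same shape as $M$, so the relevant finite clause of $d$ governs the distance. An equally valid non-inductive route is to list the real constants $c_1,\dots,c_k$ occurring in $M$, replace each in place by a rational $q_i$ with $|c_i - q_i| < \epsilon/\max(1,k)$, and observe directly that the result lies in $\Lambda_Q$ at $d$-distance $\sum_{i=1}^k |c_i - q_i| < \epsilon$.

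I do not expect any real obstacle here: this is a routine density argument, and the only points requiring attention are bookkeeping ones — keeping the shape fixed so that $d$ stays finite, and dividing $\epsilon$ among the (finitely many) immediate subterms so the summed error stays below $\epsilon$. Since each term has only finitely many constant positions, the rational approximation at each position contributes an arbitrarily small amount, and their sum is controlled. The lemma then follows, and combined with the countability of $\Lambda_Q$ noted just above it establishes that $(\Lambda_P, d)$ is separable.
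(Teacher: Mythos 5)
Your proof is correct and takes essentially the same approach as the paper: the paper's own proof is a one-line structural induction whose base case invokes the density of $\mathbb{Q}$ in $\mathbb{R}$, exactly as you do. You simply spell out the bookkeeping (splitting $\epsilon$ among the finitely many subterms and noting that preserving the syntactic shape keeps $d$ finite) that the paper leaves implicit.
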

\begin{proof}
We need to prove that
\[
  \forall M\in \Lambda_P, \epsilon>0\ \exists M_Q \in \Lambda_Q \quad
  d(M,M_q) < \epsilon
\]

This can be easily shown by induction (the base case follows from the fact that
$\mathbb{Q}$ is a dense subset of $\mathbb{R}$.
\qed
\end{proof}

\begin{lemma} \label{lemma:terms-separable}
The metric space $(\Lambda_P, d)$ is separable.
\end{lemma}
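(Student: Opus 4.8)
The statement is essentially an immediate corollary of the two facts established just above, so the plan is short. First I would recall the definition of separability given in the preliminaries: it suffices to exhibit a \emph{countable dense} subset of $(\Lambda_P, d)$. The natural candidate is $\Lambda_Q$, the set of terms all of whose numeric constants are rational, which is precisely the subset singled out immediately before Lemma~\ref{lemma:terms-dense}.

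Density of $\Lambda_Q$ is exactly the content of Lemma~\ref{lemma:terms-dense}, so nothing further is required on that front. It then remains only to confirm countability of $\Lambda_Q$, which was asserted informally above. I would make this precise by observing that terms are generated inductively by finitely many syntactic constructors over a countable alphabet: the denumerable variable set $\varset$, the countable sets of distribution and function identifiers, and --- for the elements of $\Lambda_Q$ --- the rational constants $\mathbb{Q}$. Stratifying by syntactic size, one writes $\Lambda_Q = \bigcup_{k \in \NN} \Lambda_Q^{(k)}$, where $\Lambda_Q^{(k)}$ collects the terms built from $k$ constructors; each $\Lambda_Q^{(k)}$ is countable, being obtained from finite products of countable sets, and a countable union of countable sets is again countable.

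Combining the two observations, $\Lambda_Q$ is a countable dense subset of $(\Lambda_P, d)$, and hence by definition $(\Lambda_P, d)$ is separable. I do not anticipate any genuine obstacle here: all the substantive work --- the inductive approximation of an arbitrary term by a rational-constant term, using density of $\mathbb{Q}$ in $\mathbb{R}$ at the base case --- has already been carried out in Lemma~\ref{lemma:terms-dense}. The only point deserving a little care in the present proof is making the countability argument fully rigorous via the size-stratification above, and even that is entirely routine.
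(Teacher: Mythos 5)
Your proof is correct and takes essentially the same route as the paper's: the paper also obtains separability as an immediate corollary of Lemma~\ref{lemma:terms-dense} (density of $\Lambda_Q$), with countability of $\Lambda_Q$ asserted in the text just before that lemma. Your only addition is to spell out the routine size-stratification argument for countability, which the paper leaves implicit.
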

\begin{proof}
Corollary of Lemma \ref{lemma:terms-dense}.
\qed
\end{proof}


\begin{corollary} \label{lemma:triples-complete-separable}
The $\sigma$-algebra on $\lamterms \times \mathbb{R} \times \mathbb{S}$ generated by the
metric $d$ is $\mathcal{M} \times \mathcal{R} \times \mathcal{S} $.
\end{corollary}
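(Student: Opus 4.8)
The plan is to reduce everything to the cited product formula \citet[Proposition 4.2 b)]{gallay2009mesure}, which computes the Borel $\sigma$-algebra of a product of \emph{two} separable metric spaces, and to apply it twice. First I would record that each of the three factors is a separable metric space whose Borel $\sigma$-algebra is the one named in the statement. For $(\mathbb{R}, d_{\mathbb{R}})$ this is standard, with $\mathbb{Q}$ as a countable dense subset and $\mathcal{R} = \Borel(\mathbb{R})$ by definition. For $(\mathbb{S}, d_{\mathbb{S}})$, separability follows by taking the rational-valued traces of each length as a countable dense set, while $\mathcal{S} = \Borel(\mathbb{S}, d_{\mathbb{S}})$ was verified just above. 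For $(\lamterms, d)$, separability is Lemma~\ref{lemma:terms-separable}, and $\mathcal{M} = \Borel(\lamterms, d)$ holds by the definition of $\mathcal{M}$.

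Next I would note that the Manhattan metric $d$ on the triple metrizes the product topology, so its Borel $\sigma$-algebra coincides with the one obtained after regrouping the factors as $(\lamterms \times \mathbb{R}) \times \mathbb{S}$, again with the (associative) sum of the component metrics. I then apply the cited proposition to the separable spaces $\lamterms$ and $\mathbb{R}$ to obtain $\Borel(\lamterms \times \mathbb{R}) = \mathcal{M} \times \mathcal{R}$. Because a finite product of separable metric spaces is itself separable --- the product of the countable dense subsets is countable and dense --- the space $\lamterms \times \mathbb{R}$ is separable, so I may invoke the proposition a second time, now on $\lamterms \times \mathbb{R}$ and $\mathbb{S}$, getting $\Borel((\lamterms \times \mathbb{R}) \times \mathbb{S}) = (\mathcal{M} \times \mathcal{R}) \times \mathcal{S}$. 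Associativity of the product $\sigma$-algebra then gives $(\mathcal{M} \times \mathcal{R}) \times \mathcal{S} = \mathcal{M} \times \mathcal{R} \times \mathcal{S}$, which is the claim.

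The routine obligations --- that a finite sum of metrics metrizes the product topology, and that the product $\sigma$-algebra is associative --- are entirely standard. The one point that needs genuine attention is having separability available at the \emph{intermediate} stage: the cited result is stated only for binary products, so before applying it the second time I must know that $\lamterms \times \mathbb{R}$ is separable. This is the only real content beyond unwinding definitions; once it is in place the two applications of the proposition chain together cleanly.
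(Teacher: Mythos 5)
Your proof is correct and takes essentially the same route the paper intends: the paper states this as an unproved corollary of the separability lemmas and the cited binary product formula, and your argument---applying that formula twice, with the observation that $\lamterms \times \mathbb{R}$ is itself separable, plus associativity of the product $\sigma$-algebra---is exactly the omitted chain of reasoning. No gaps.
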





Throughout this section, we call a function ``measurable'' if it is Borel measurable
and ``continuous'' if it is continuous as a function between metric spaces.

\hspace{0.1in}

We can use lemma \ref{lemma:split-space} to split the space $\mathcal{M}$ of expressions
into subspaces of expressions of different type, and restrict functions (such as the reduction 
relation) to a given type of expression, to process different cases separately.

We write $\mathbf{Subst}(M,x,v)$ for $\sbst{M}{x}{V}$, to emphasize the fact that
substitution is a function. 

\begin{display}[0.2]{Detailed definition of substitution}
\clause{\mathbf{Subst}(c, x, V) \triangleq c }\\
\clause{\mathbf{Subst}(x, x, V) \triangleq V }\\
\clause{\mathbf{Subst}(x, y, V) \triangleq y \quad \text{if}\ x \neq y }\\
\clause{\mathbf{Subst}(\lambda x . M, x, V) \triangleq \lambda x . M }\\
\clause{\mathbf{Subst}(\lambda x . M, y, V) \triangleq \lambda x . (\mathbf{Subst}(M, y, V)) \quad \text{if}\ x \neq y }\\
\clause{\mathbf{Subst}(M\ N, x, V) \deq \mathbf{Subst}(M, x, V)\ \mathbf{Subst}(N, x, V)  }\\
\clause{\mathbf{Subst}(\distone(V_1,\dots,V_{|\distone|}), x, V) \deq
  \distone(\mathbf{Subst}(V_1, x, V),\dots,\mathbf{Subst}(V_{|\distone|}, x, V)) }\\
\clause{\mathbf{Subst}(g(V_1,\dots,V_{|g|}), x, V) \deq
  g(\mathbf{Subst}(V_1, x, V),\dots,\mathbf{Subst}(V_{|g|}, x, V)) }\\
\clause{\begin{prog} \mathbf{Subst}(\ite{W}{M}{L},x, V) \triangleq {} \\ \quad
  \ite{\mathbf{Subst}(W,x,V)}{\mathbf{Subst}(M,x,V)}{\mathbf{Subst}(L,x,V)} \end{prog}}\\
\clause{\mathbf{Subst}(\mathtt{score}(V'), x, V) \deq \mathtt{score}(\mathbf{Subst}(V', x, V))}\\
\clause{\mathbf{Subst}(\fail, x, V) \deq \fail}
\end{display}

%
%
%

For convenience, let us also define a metric on contexts:

\begin{eqnarray*}
d(\hole, \hole) &\deq& 0\\
d(E M, F N) &\deq& d(E, F) + d(M, N)\\
d((\lambda x. M) E, (\lambda x . N) F) &\deq& d(M,N) + d(E,F)\\
d(E, F) &\deq& \infty \quad \text{otherwise}
\end{eqnarray*}

\begin{lemma} \label{lemma:metric-contexts}
$d(E[M],F[N]) \le d(E, F) + d(M,N)$.
\end{lemma}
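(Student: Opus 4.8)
The plan is to prove the inequality by structural induction on the evaluation context $E$, handling $F$ by a case split at each step. The key observation that trivialises every mismatched case is that the context metric assigns distance $\infty$ to any pair of contexts with different top-level shapes (the final ``otherwise'' clause of its definition): whenever $E$ and $F$ have incompatible outermost forms we have $d(E,F)=\infty$, and the desired bound $d(\ct{E}{M},\ct{F}{N})\le d(E,F)+d(M,N)=\infty$ holds vacuously. Thus at each inductive step only the subcase where $F$ has the same outermost constructor as $E$ requires any real work.

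In the base case $E=\hole$ we have $\ct{E}{M}=M$. If also $F=\hole$, then $\ct{F}{N}=N$ and $d(E,F)=0$, so both sides equal $d(M,N)$ and we even get equality; if $F\neq\hole$ then $d(E,F)=\infty$ and we are done. For the inductive step there are two shapes to consider, following the grammar $E\bnf\hole\mid E'\,L\mid(\lambda x.L)\,E'$. When $E=E'\,L$ and $F=F'\,L'$, I would unfold the application clause of the term metric, $d(\ct{E'}{M}\,L,\ct{F'}{N}\,L')=d(\ct{E'}{M},\ct{F'}{N})+d(L,L')$, apply the induction hypothesis to bound the first summand by $d(E',F')+d(M,N)$, and recognise $d(E',F')+d(L,L')$ as exactly $d(E,F)$ by the context-metric clause for applications. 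The case $E=(\lambda x.L)\,E'$, $F=(\lambda x.L')\,F'$ is analogous: the application clause of the term metric yields $d(\lambda x.L,\lambda x.L')+d(\ct{E'}{M},\ct{F'}{N})$, the $\lambda$-clause collapses $d(\lambda x.L,\lambda x.L')$ to $d(L,L')$, and the induction hypothesis together with the matching context-metric clause closes the case.

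I do not expect any genuine difficulty: the argument is a routine induction in which the two metrics were deliberately defined to track each other structurally. The only points requiring a little care are bookkeeping rather than conceptual. First, one must note that the outermost constructor of $\ct{E}{M}$ is always determined by $E$, never by $M$, so that the term metric decomposes along the same lines as the context metric; this is immediate from the definition of context plugging. Second, in the abstraction case one relies on terms being identified up to renaming of bound variables, so that the shared bound variable $x$ presupposed by the context-metric clause for $(\lambda x.L)\,E'$ can always be arranged. With these observations in hand, each case reduces to substituting the recursive definitions of the two metrics and invoking the induction hypothesis once.
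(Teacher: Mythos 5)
Your proof is correct and follows essentially the same route as the paper's: induction on the structure of $E$, dismissing all shape-mismatched pairs via $d(E,F)=\infty$, and in the matched application and $\lambda$-application cases unfolding the two metric definitions and applying the induction hypothesis once. The only difference is organizational (the paper factors out the $d(E,F)=\infty$ case globally before a simultaneous induction, whereas you dispatch it case by case), which is immaterial.
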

\begin{proof}
By induction on the structure of $E$.

If $d(E, F) = \infty$, then the result is obvious, since $d(M', N') \le \infty$ for all $M', N'$.

Now let us assume $d(E, F) \neq \infty$ and prove the result
by simultaneous induction on the structure on $E$ and $F$:

\begin{itemize}
\item Case $E = F = \hole$: in this case, $E[M] = M$, $F[N] = N$, and $d(E,F) = 0$, so obviously
$d(E[M],F[N]) = d(E, F) + d(M,N)$ 

\item Case $E = E'\ L_1$, $F = F'\ L_2$:

We have $d(E[M], F[N]) = d(E'[M]\ L_1, F'[N]\ L_2) = d(E'[M], F'[N]) + d(L_1, L_2)$.
By induction hypothesis, $d(E'[M], F'[N]) \le d(E', F') + d(M,N)$, so
$d(E[M], F[N]) \le d(E', F') + d(M,N) + d(L_1, L_2) = d(E,F) + d(M,N)$.

\item Case $E = (\lambda x . L_1)\ E'$, $F = (\lambda x . L_2)\ F'$:

We have $d(E[M], F[N]) = d((\lambda x. L_1) (E'[M]), (\lambda x. L_2) (F'[N]))
= d(\lambda x. L_1, \lambda x. L_2) + d(E'[M], F'[N])$. By induction hypothesis, 
$d(E'[M], F'[N]) \le d(E', F') + d(M,N)$, so $d(E[M], F[N]) \le d(E',F') + d(\lambda x. L_1, \lambda x. L_2) + d(M,N)
= d(E,F) + d(M,N)$. \qed
\end{itemize}
\end{proof}

\begin{lemma} \label{lemma:metric-infty}
If $d(E,F) = \infty$, then for all $R_1$, $R_2$, $d(E[R_1], F[R_2])$.
\end{lemma}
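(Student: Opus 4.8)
The plan is to prove the statement, which I read as ``$d(E[R_1],F[R_2])=\infty$'' (the displayed conclusion is otherwise incomplete), by structural induction on $E$ together with a case analysis on the top-level shapes of $E$ and $F$. The one ingredient that drives every mismatched-shape case is a purely syntactic observation about plugging a \emph{redex} into a context: for every context $G$ and every redex $R$, the term $G[R]$ is never a value. Indeed, if $G=\hole$ then $G[R]=R$ is a redex, whose top-level shape is an application, a sampling $\distone(\vec{c})$, a primitive $g(\vec{c})$, a $\score{c}$, $\fail$, or a conditional $\ite{V}{M}{N}$ --- none of which is a constant, a variable, or a $\lambda$-abstraction; and if $G\neq\hole$ then $G=E'M$ or $G=(\lambda x.M)E'$, so $G[R]$ is an application, again not a value. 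I would record this observation first, together with the fact that the only redexes whose top-level shape is an application are $(\lambda x.M)V$ and $c\,M$.

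When $E$ and $F$ have the same top-level shape the term metric of Figure~\ref{fig:metric} decomposes in parallel with the context metric. If $E=E'M$ and $F=F'N$, then $E[R_1]=(E'[R_1])M$ and $F[R_2]=(F'[R_2])N$ are both applications, so $d(E[R_1],F[R_2])=d(E'[R_1],F'[R_2])+d(M,N)$; since $d(E,F)=d(E',F')+d(M,N)=\infty$, either the term summand $d(M,N)$ is already $\infty$ (and we are done since $d$ is non-negative), or $d(E',F')=\infty$ and the induction hypothesis on the proper subcontexts $E',F'$ gives $d(E'[R_1],F'[R_2])=\infty$. The case where both are argument contexts $(\lambda x.M)E'$ and $(\lambda x.N)F'$ is identical, using $d(\lambda x.M,\lambda x.N)=d(M,N)$; if the bound variables cannot be matched by $\alpha$-renaming, the two abstractions already sit at distance $\infty$. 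The case $(E,F)=(\hole,\hole)$ is vacuous, as $d(\hole,\hole)=0\neq\infty$.

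The real content is in the mismatched-shape cases, where $d(E,F)=\infty$ holds by the ``otherwise'' clause and I must force the distance between the plugged terms to $\infty$ regardless of $R_1,R_2$. The representative cases are: (i) $E=\hole$, $F=F'N$, where $E[R_1]=R_1$ and $F[R_2]=(F'[R_2])N$ --- if $R_1$ is not an application the shapes already differ, while if $R_1=(\lambda y.M)V$ or $R_1=c\,M$ the head of $(F'[R_2])N$ is $F'[R_2]$, which is neither a $\lambda$-abstraction nor a constant by the observation, so the head comparison contributes $\infty$; (ii) $E=\hole$, $F=(\lambda x.N)F'$, where the head of $F[R_2]$ is the abstraction $\lambda x.N$, so for $R_1=(\lambda y.M)V$ I instead compare the arguments, noting that $V$ is a value whereas $F'[R_2]$ is not, whence $d(V,F'[R_2])=\infty$; (iii) $E=E'M$, $F=(\lambda x.N)F'$, where the heads $E'[R_1]$ and $\lambda x.N$ differ because $E'[R_1]$ is never an abstraction. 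The remaining mismatches are obtained from these by exchanging $E$ and $F$, using symmetry of $d$.

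The main obstacle --- the only place where the proof is not a mechanical unfolding of the two metric definitions --- is subcase~(ii): when $F$ is an argument-evaluation context, the head of $F[R_2]$ is a genuine abstraction, so the naive ``heads differ'' argument fails, and one must instead exploit that the argument position of $F[R_2]$ holds the redex $R_2$ and is therefore not a value, while the argument of the application redex $R_1=(\lambda y.M)V$ is by definition a value. More broadly, this is where the hypothesis that $R_1,R_2$ are \emph{redexes} (rather than arbitrary terms) is indispensable: without it the statement is simply false, e.g.\ with $E=\hole$, $F=\hole\,N$ and $R_1=(R_2)\,N$ one gets $d(E[R_1],F[R_2])=d((R_2)\,N,(R_2)\,N)=0$, even though $d(\hole,\hole\,N)=\infty$. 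Once the observation is in place the induction is routine, so I expect the bulk of the writing to be the careful enumeration of shape pairs rather than any single hard step.
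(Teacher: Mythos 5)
Your proof is correct and follows essentially the same route as the paper's: structural induction on $E$ with a case analysis on the top-level shapes of $E$ and $F$, driven by the key observation that a context filled with a redex is never a value (hence never an abstraction or a constant), with the delicate subcase being an application redex $(\lambda y.M)V$ against an argument context $(\lambda x.N)F'$, resolved by comparing the value $V$ with the non-value $F'[R_2]$. Your explicit upfront statement of that observation, your separate handling of the erroneous redex $c\,M$, and your counterexample showing the redex hypothesis is indispensable are nice refinements, but the decomposition into shape cases (with the remaining pairs dispatched by symmetry of $d$) is exactly the paper's argument.
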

\begin{proof}
By induction  on the structure of $E$:
\begin{itemize}
\item If $E = [\ ]$, then $d(E,F) = \infty$ implies $F \neq [\ ]$:
\begin{itemize}
\item If $F = (\lambda x . M)\ F'$, then $d(E[R_1], F[R_2]) = d(R_1, (\lambda  x . M)\ F'[R_2]) = \infty$,
because $R_1$ is either not an application or of the form $V_1\ V_2$, and $F'[R_2]$ is not a
value.
\item If $F = F'\ N$, then $d(E[R_1], F[R_2]) = d(R_1, F'[R_2]\ N) = \infty$, because
$R_1$ is either not an application or of the form $V_1\ V_2$, and $F'[R_2]$ is not a
value.
\end{itemize}
\item If $E = (\lambda x . M)\ E'$, then:
\begin{itemize}
\item If $F = F'\ N$, then $d(E[R_1], F[R_2]) = d(\lambda x. M, F'[R_2]) + d(E'[R_1], N) = \infty$,
because $d(\lambda x. M, F'[R_2]) \ = \infty$, as$F'[R_2]$ cannot be a lambda-abstraction.
\item If $F = (\lambda x . N)\ F'$, then $d(E,F) = \infty$ implies that either $d(M,N) = \infty$ or $d(E', F') = \infty$. We have
$d(E[R_1], F[R_2]) = d(M,N) + d(E'[R_1], F'[R_2])$. If $d(M,N) = \infty$, then obviously $d(E[R_1], F[R_2]) = \infty$.
Otherwise, by induction hypothesis, $d(E', F') = \infty$ gives $d(E'[R_1], F'[R_2]) = \infty$, and
so $d(E[R_1], F[R_2]) = \infty$.
\end{itemize}
\item If $E = E'\ M$ and $F = F'\ N$, then $d(E,F) = \infty$ implies that either $d(M,N) = \infty$ or $d(E', F') = \infty$. 
We have $d(E[R_1], F[R_2]) = d(M,N) + d(E'[R_1], F'[R_2])$, so $d(E'[R_1], F'[R_2]) = \infty$ follows like in the previous case.
\end{itemize}
The property also holds in all remaining cases by symmetry of $d$.
\qed
\end{proof}

\begin{lemma} \label{lemma:metric-contexts-redex}
$d(E[R_1],F[R_2]) = d(E, F) + d(R_1,R_2)$.
\end{lemma}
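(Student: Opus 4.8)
The plan is to prove the two inequalities separately and combine them. The upper bound $d(E[R_1],F[R_2]) \le d(E,F) + d(R_1,R_2)$ is immediate from Lemma~\ref{lemma:metric-contexts}, instantiated at $M = R_1$ and $N = R_2$, so all the work lies in establishing the matching lower bound $d(E[R_1],F[R_2]) \ge d(E,F) + d(R_1,R_2)$.

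For the lower bound I would split on whether $d(E,F)$ is finite. If $d(E,F) = \infty$, then Lemma~\ref{lemma:metric-infty} gives $d(E[R_1],F[R_2]) = \infty$, so both sides of the claimed equality are $\infty$ and the case is done. This is precisely the step where the hypothesis that $R_1,R_2$ are \emph{redexes} (rather than arbitrary terms) is essential: a redex is never a bare value, and if it is an application it has the rigid form $(\lambda x.M)\,V$, so it cannot be accidentally matched against a $\lambda$-abstraction or against a non-value occupying the function/argument slot of a context of a \emph{different} shape. This is exactly why Lemma~\ref{lemma:metric-contexts} yields only $\le$ for general terms, whereas here we recover equality.

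If instead $d(E,F) < \infty$, the ``otherwise'' clause of the context metric forces $E$ and $F$ to have the same top-level shape, and I would induct on the structure of $E$. When $E = F = \hole$ we have $d(E[R_1],F[R_2]) = d(R_1,R_2)$ and $d(E,F) = 0$. When $E = E'\,L_1$ and $F = F'\,L_2$, both $E[R_1] = E'[R_1]\,L_1$ and $F[R_2] = F'[R_2]\,L_2$ are top-level applications, so by the definition of $d$ on terms $d(E[R_1],F[R_2]) = d(E'[R_1],F'[R_2]) + d(L_1,L_2)$; combining the induction hypothesis $d(E'[R_1],F'[R_2]) = d(E',F') + d(R_1,R_2)$ with the context-metric identity $d(E,F) = d(E',F') + d(L_1,L_2)$ gives the claim. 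The case $E = (\lambda x.L_1)\,E'$, $F = (\lambda x.L_2)\,F'$ is analogous, peeling off $d(\lambda x.L_1,\lambda x.L_2) = d(L_1,L_2)$ and applying the induction hypothesis to $E',F'$.

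The only genuine obstacle is the infinite-distance case, and that obstacle has already been discharged by Lemma~\ref{lemma:metric-infty}; I would emphasize in the write-up that the entire reliance on the redex hypothesis is funneled through that lemma, so that in the present proof it enters only through a single citation. The finite case is then a routine structural induction in which the term-metric decomposition of applications aligns perfectly with the recursive definition of the context metric, leaving nothing delicate to check.
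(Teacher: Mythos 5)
Your proof is correct and follows essentially the same route as the paper's: the infinite case is discharged by Lemma~\ref{lemma:metric-infty}, and the finite case repeats the structural induction of Lemma~\ref{lemma:metric-contexts} with the inequality sharpened to equality. The only cosmetic difference is that you package it as two inequalities (citing Lemma~\ref{lemma:metric-contexts} for the upper bound), whereas the paper obtains equality directly in the induction, but the substance is identical.
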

\begin{proof}
If $d(E, F) = \infty$, then $d(E[R_1],F[R_2])  = \infty$ by Lemma \ref{lemma:metric-infty},
otherwise the proof is the same as the proof of lemma \ref{lemma:metric-contexts},
with inequality replaced by equality when applying the induction hypothesis.
\qed
\end{proof}

\begin{lemma} \label{lemma:subst-bound}
$d(\mathbf{Subst}(M,x,V), \mathbf{Subst}(N,x,W)) \le
d(M,N) + k\cdot d(V,W)$ where k is the max of the multiplicities of x in M and N
\end{lemma}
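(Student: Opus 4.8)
The plan is to prove the bound by structural induction on $M$ (equivalently, on the common shape of $M$ and $N$), first disposing of the case $d(M,N)=\infty$. In that case the right-hand side is already $\infty$, so the inequality is vacuous: $d(\mathbf{Subst}(M,x,V),\mathbf{Subst}(N,x,W))\in[0,\infty]$ while $d(M,N)+k\cdot d(V,W)=\infty$ since $k\ge 0$ and $d(V,W)\ge 0$. Hence the content of the lemma lies entirely in the case $d(M,N)<\infty$.

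The crucial preliminary observation is that, by the definition of the metric $d$ in Figure~\ref{fig:metric}, $d(M,N)<\infty$ forces $M$ and $N$ to have the same syntax tree up to the real constants at the leaves. In particular, since $d(\varone,\varone)=0$ but $d$ is $\infty$ between two syntactically distinct variables (and between two distinct top-level constructors), every variable leaf of $M$ is matched by the identical variable leaf of $N$ in the same position. Consequently $M$ and $N$ have the same number of free occurrences of $x$; writing $\#_x(M)$ for this count, we get $\#_x(M)=\#_x(N)=k$, so $k$ is indeed the $\max$ of the two multiplicities. This is exactly what makes the otherwise problematic multi-child constructors go through: for an application $M=M_1 M_2$, $N=N_1 N_2$ we have $\#_x(M)=\#_x(M_1)+\#_x(M_2)$ with the same decomposition for $N$, so the coefficients \emph{add} rather than merely bounding one another.

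Concretely I would strengthen the statement proved by induction to: if $d(M,N)<\infty$ then $\#_x(M)=\#_x(N)$ and $d(\mathbf{Subst}(M,x,V),\mathbf{Subst}(N,x,W))\le d(M,N)+\#_x(M)\cdot d(V,W)$; the lemma follows at once. The base cases are direct from the defining clauses of $\mathbf{Subst}$ and of $d$: for $M=N=x$ both sides reduce to $d(V,W)$ with $\#_x=1$; for a constant, a variable $\varone'\neq x$, or $\fail$ we have $\#_x=0$ and the substituted terms are the originals, so the inequality is the identity $d(M,N)\le d(M,N)$. For the homomorphic constructors --- application, $g(\vec V)$, $\distone(\vec V)$, $\score{V'}$, and $\ite{W}{M}{L}$ --- $\mathbf{Subst}$ distributes over the immediate subterms while $d$ is the sum of the subterm distances, so summing the induction hypotheses over the children and using additivity of $\#_x$ yields precisely $d(M,N)+\#_x(M)\cdot d(V,W)$. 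For an abstraction $\lambda \varone'.M'$ there are two subcases: if $\varone'=x$ the substitution is the identity and $\#_x=0$, so both sides equal $d(M',N')=d(M,N)$; if $\varone'\neq x$ then $\mathbf{Subst}$ descends under the binder, $\#_x(\lambda \varone'.M')=\#_x(M')$, and the claim is exactly the induction hypothesis applied to $M'$ and $N'$.

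The only real subtlety, and the step I expect to be the main obstacle, is this multiplicity bookkeeping in the multi-child cases. A naive induction that carries $k=\max(\#_x(M),\#_x(N))$ at every node fails, because $\max(a,b)+\max(c,d)$ may strictly exceed $\max(a+c,b+d)$ and would only deliver a weaker bound. The remedy is precisely the finiteness observation above, which forces $\#_x(M)=\#_x(N)$ at every node and so turns each $\max$ into a genuine equality that behaves additively under the constructors. With this invariant threaded through the induction, the remainder of the argument is routine, depending only on the defining equations of $\mathbf{Subst}$ and $d$ together with the convention that terms are identified up to renaming of bound variables, so that no capture arises in the abstraction case.
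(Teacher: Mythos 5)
Your proof is correct and takes essentially the same route as the paper, which simply states ``by simultaneous induction on the structure of $M$ and $N$'' and leaves all details implicit. Your additional observation --- that $d(M,N)<\infty$ forces the free occurrences of $x$ in $M$ and $N$ to coincide, so the two multiplicities are equal and the bound behaves additively under the constructors --- is precisely the bookkeeping needed to make that induction go through.
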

\begin{proof}
By simultaneous induction on the structure of $M$ and $N$.
\qed
\end{proof}
%

%

%



Let $\mathcal{C}$ denote the set of contexts and $\mathcal{G}$ the set of primitive functions. Let:
\begin{itemize}
\item $\Lambda_{\mathit{appl}} \triangleq \{ E[(\lambda x . M) V] \mid E \in \mathcal{C} ,(\lambda x. M) \in \cterms, V \in \valset \}$
\item $\Lambda_{\mathit{applc}} \triangleq \{ E[c\ V] \mid E \in \mathcal{C} ,c\ \in \mathbb{R}, V \in \valset \}$
\item $\Lambda_{\mathit{iftrue}} \triangleq \{ E[\ite{\bf{true}}{M}{N}]\mid E \in \mathcal{C}, M, N \in \cterms \} $
\item $\Lambda_{\mathit{iffalse}} \triangleq \{ E[\ite{\bf{false}}{M}{N}]\mid E \in \mathcal{C}, M, N \in \cterms \} $
\item $\Lambda_{\mathit{fail}} \triangleq \{ E[\fail]\mid E \in \mathcal{C} \setminus \{[\ ]\} \} $
\item $\Lambda_{\mathit{prim}}(g) \triangleq \{ E[g(\vec{c})]\mid E \in \mathcal{C}, \vec{c} \in \mathbb{R}^{|g|} \} $
\item $\Lambda_{\mathit{prim}} \triangleq \bigcup_{g \in \mathcal{G}} \Lambda_{\mathit{prim}}(g) $
\item $A\Lambda_{\mathit{if}} \triangleq \{ E[\ite{G}{M}{N}]\mid E \in \mathcal{C}, M, N \in \cterms, G \in \gvalset \} $
\item $\Lambda_{\mathit{dist}}(D) \triangleq \{ E[D(\vec{c})]\mid E \in \mathcal{C}, \vec{c} \in \mathbb{R}^{|D|}\} $
\item $\Lambda_{\mathit{dist}} \triangleq \bigcup_{D \in \mathcal{D}} P_{rnd}(D) $
\item $A\Lambda_{\mathit{prim}} \triangleq  \bigcup_{g \in \mathcal{G}} E[g(G_1, \dots, G_{|g|})]\mid E \in \mathcal{C}, G_1, \dots, G_{|g|} \in \gvalset \}$
\item $A\Lambda_{\mathit{dist}} \triangleq \bigcup_{D \in \mathcal{D}} E[D(G_1, \dots, G_{|D|})]\mid E \in \mathcal{C}, G_1, \dots, G_{|D|} \in \gvalset \}$
\item $A\Lambda_{\mathit{scr}}\triangleq \{E[\score{c}] \mid E \in \mathcal{C}, c \in \mathbb{R} \} $
\item $\Lambda_{\mathit{scr}} \triangleq \{E[\score{c}] \mid E \in \mathcal{C}, c \in (0,1] \}$
\end{itemize}

\begin{lemma} \label{lemma:partitions-measurable}
All the sets above are measurable.
\end{lemma}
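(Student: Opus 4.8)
The plan is to exploit the skeleton decomposition $\cterms=\biguplus_{N\in\skset}\tmsofsk{N}$ already used in the proof of Lemma~\ref{lemma:redsk}, together with the fact that $\skset$ is countable. Every set listed above is a set of closed terms, so it suffices to show that $A\cap\tmsofsk{N}$ is measurable for each skeleton $N$; then $A=\bigcup_{N\in\skset}(A\cap\tmsofsk{N})$ is a countable union of measurable sets and hence measurable. The whole point is that, once the skeleton is fixed, membership in each $A$ reduces to a Borel condition on the finitely many real constants that instantiate $N$.

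First I would record that each $\tmsofsk{N}$ is itself measurable. Fixing any $M_{0}\in\tmsofsk{N}$, the metric $d$ satisfies $d(M_{0},M')<\infty$ exactly when $M'$ shares the skeleton of $M_{0}$ (distinct skeletal structure forces distance $\infty$), so $\tmsofsk{N}=\bigcup_{r\in\NN}\mathbf{B}(M_{0},r)$ is open, hence in $\measterms$. Next, writing $k$ for the number of constant-positions of $N$, the substitution map $\vec{c}\mapsto\sbst{N}{\vec{x}}{\vec{c}}$ is a bijection $\RR^{k}\to\tmsofsk{N}$, and by the recursive definition of $d$ (which bottoms out at $d(c,d)=\Abs{c-d}$) it is an isometry onto $\tmsofsk{N}$ with the Manhattan metric. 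Consequently $\restr{\measterms}{\tmsofsk{N}}$, which is just the Borel $\sigma$-algebra of the subspace metric, is carried onto $\Borel(\RR^{k})$, so a subset of $\tmsofsk{N}$ is measurable precisely when its preimage in $\RR^{k}$ is Borel; and since $\tmsofsk{N}\in\measterms$, measurability in the restriction coincides with measurability in $\measterms$.

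It then remains to inspect, case by case, the preimage of each $A\cap\tmsofsk{N}$ in $\RR^{k}$. By Lemma~\ref{lemma:unique} the decomposition $\ct{\ectxone}{R}$ of a non-value term is unique and determined by its syntactic shape, hence by $N$ alone; so for the purely structural families ($\Lambda_{\mathit{appl}}$, $\Lambda_{\mathit{applc}}$, $\Lambda_{\mathit{prim}}(g)$, $\Lambda_{\mathit{dist}}(D)$, $\Lambda_{\mathit{fail}}$, $A\Lambda_{\mathit{if}}$, $A\Lambda_{\mathit{prim}}$, $A\Lambda_{\mathit{dist}}$, $A\Lambda_{\mathit{scr}}$) the intersection $A\cap\tmsofsk{N}$ is either $\emptyset$ or all of $\tmsofsk{N}$, whose preimages are $\emptyset$ and $\RR^{k}$. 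The only families that genuinely depend on the substituted values are those whose redex constrains a particular constant: for $\Lambda_{\mathit{iftrue}}$ (resp.\ $\Lambda_{\mathit{iffalse}}$) the guard constant must equal $1$ (resp.\ $0$), giving the preimage $\{\vec{c}\mid c_{j}=1\}$ (resp.\ $\{\vec{c}\mid c_{j}=0\}$), a closed hyperplane; and for $\Lambda_{\mathit{scr}}$ the score argument must lie in $(0,1]$, giving the preimage $\{\vec{c}\mid c_{j}\in(0,1]\}$. All of these are Borel in $\RR^{k}$. Finally, the aggregated families $\Lambda_{\mathit{prim}}=\bigcup_{g}\Lambda_{\mathit{prim}}(g)$ and $\Lambda_{\mathit{dist}}=\bigcup_{D}\Lambda_{\mathit{dist}}(D)$ are countable unions over the countable identifier sets, hence measurable.

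The per-family verification is routine bookkeeping; the one place needing genuine care, and which I view as the main obstacle, is justifying the dichotomy between the structural and the coordinate-dependent families. Concretely, one must argue that for a fixed skeleton the identity of the leftmost redex and its surrounding context never depends on which reals are plugged in, so that the only residual dependence on $\vec{c}$ is through the guard being $\ttrue/\tfalse$ and the score argument lying in $(0,1]$. This is exactly the observation underlying Lemma~\ref{lemma:redsk} (that the reducibility status of a term in $\tmsofsk{N}$ depends only on $N$), here extended to track the additional finite set of equality and interval constraints; once that is in hand, every case collapses to a Borel subset of Euclidean space as above.
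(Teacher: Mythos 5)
Your proof is correct, but it takes a genuinely different route from the paper's. The paper disposes of the lemma in two short steps: it observes that every listed set except $\Lambda_{\mathit{scr}}$ is closed in the metric $d$ (limits preserve both syntactic shape and exact equality constraints such as the guard being $\ttrue$ or $\tfalse$), hence Borel; and for the one non-closed set it defines the projection $i_{\mathit{scr}}:A\Lambda_{\mathit{scr}}\to\RR$ with $i_{\mathit{scr}}(E[\score{c}])=c$, notes that it is continuous hence measurable, and writes $\Lambda_{\mathit{scr}}=i_{\mathit{scr}}^{-1}((0,1])$. You instead run everything through the skeleton decomposition: countability of $\skset$, measurability (indeed clopenness) of each $\tmsofsk{N}$, the isometry of $\tmsofsk{N}$ with $\RR^{k}$ under the Manhattan metric, and then a uniform reduction of every family to a Borel subset of $\RR^{k}$ --- trivially $\emptyset$ or all of $\tmsofsk{N}$ for the structural families, a hyperplane or the preimage of $(0,1]$ for the guard- and score-constrained ones. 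What the paper's route buys is brevity: once closedness is granted, only $\Lambda_{\mathit{scr}}$ needs an argument, and that argument is your coordinate analysis in miniature, a projection onto the single relevant constant. What your route buys is uniformity and robustness: a single mechanism covers all the sets, closed or not, it makes explicit exactly where membership depends on the instantiating reals (only the if-guards and the score argument), and it reuses the device the paper itself deploys in the proof of Lemma~\ref{lemma:redsk} --- your closing observation, that the context/redex decomposition of a term in $\tmsofsk{N}$ is determined by $N$ alone, is precisely the observation made there. The cost is the extra bookkeeping (the isometry and the identification of the trace $\sigma$-algebra with the subspace Borel $\sigma$-algebra), where the paper needs only a routine, if unproved, closedness check for each family.
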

\begin{proof}
All these sets except for $\Lambda_{\mathit{scr}}$ are closed, so they are obviously measurable.
The set $\Lambda_{\mathit{scr}}$ is not closed (for example, we can define
a sequence of points in $\Lambda_{\mathit{scr}}$ whose limit is
$\mathtt{score}(0) \notin \Lambda_{\mathit{scr}}$), but it is still measurable:

Define a function $i_{\mathit{scr}}: A\Lambda_{\mathit{scr}} \rightarrow \mathbb{R}$  by
$i_{\mathit{scr}}(E[\score{c}]) = c$. This function is continuous and so measurable.
Since the interval $(0,1]$ is a Borel subset of $\mathbb{R}$, $i_{\mathit{scr}}^{-1}( (0,1] ) =
\Lambda_{\mathit{scr}}$ is measurable.
\qed
\end{proof}

Now, we need to define the set of erroneous redexes of all types.
\begin{itemize}
\item $R\Lambda_{\mathit{if}} \triangleq 
  A\Lambda_{\mathit{if}}
   \setminus (\Lambda_{\mathit{iftrue}} \cup \Lambda_{\mathit{iffalse}}))$
\item $R\Lambda_{\mathit{prim}} \triangleq 
 A\Lambda_{\mathit{prim}}
  \setminus \Lambda_{\mathit{prim}}$
\item $R\Lambda_{\mathit{dist}} \triangleq 
  A\Lambda_{\mathit{dist}}
  \setminus \Lambda_{\mathit{dist}}$
\item $R\Lambda_{\mathit{scr}} \triangleq A\Lambda_{\mathit{scr}} \setminus \Lambda_{\mathit{scr}}$
\item $\Lambda_{error} \triangleq R\Lambda_{\mathit{if}} \cup R\Lambda_{\mathit{prim}} \cup R\Lambda_{\mathit{dist}} \cup R\Lambda_{\mathit{scr}}$
\end{itemize}
\begin{lemma}
The set $\Lambda_{error}$ is measurable.
\end{lemma}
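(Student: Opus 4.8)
The plan is to obtain this as an immediate corollary of Lemma~\ref{lemma:partitions-measurable}, using nothing beyond the closure properties of the $\sigma$-algebra $\measterms$ on terms. The key fact is that any $\sigma$-algebra is closed under complementation and finite union, and hence under set difference, since $A \setminus B = A \cap (\lamterms \setminus B) = \lamterms \setminus ((\lamterms \setminus A) \cup B)$. So the whole argument reduces to checking that $\Lambda_{error}$ is built from already-measurable pieces by finitely many Boolean operations.

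First I would observe that every building block appearing in the definitions of $R\Lambda_{\mathit{if}}$, $R\Lambda_{\mathit{prim}}$, $R\Lambda_{\mathit{dist}}$, and $R\Lambda_{\mathit{scr}}$ is already known to be measurable: the ``outer'' sets $A\Lambda_{\mathit{if}}$, $A\Lambda_{\mathit{prim}}$, $A\Lambda_{\mathit{dist}}$, $A\Lambda_{\mathit{scr}}$ together with the ``good redex'' sets $\Lambda_{\mathit{iftrue}}$, $\Lambda_{\mathit{iffalse}}$, $\Lambda_{\mathit{prim}}$, $\Lambda_{\mathit{dist}}$, $\Lambda_{\mathit{scr}}$ are all measurable by Lemma~\ref{lemma:partitions-measurable}.

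Next, each erroneous-redex set is assembled from these by a finite union followed by a single set difference: $R\Lambda_{\mathit{if}} = A\Lambda_{\mathit{if}} \setminus (\Lambda_{\mathit{iftrue}} \cup \Lambda_{\mathit{iffalse}})$, and analogously $R\Lambda_{\mathit{prim}} = A\Lambda_{\mathit{prim}} \setminus \Lambda_{\mathit{prim}}$, $R\Lambda_{\mathit{dist}} = A\Lambda_{\mathit{dist}} \setminus \Lambda_{\mathit{dist}}$, and $R\Lambda_{\mathit{scr}} = A\Lambda_{\mathit{scr}} \setminus \Lambda_{\mathit{scr}}$. By the closure properties noted above, each of the four is measurable. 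Finally, $\Lambda_{error}$ is the finite union $R\Lambda_{\mathit{if}} \cup R\Lambda_{\mathit{prim}} \cup R\Lambda_{\mathit{dist}} \cup R\Lambda_{\mathit{scr}}$ of measurable sets, hence measurable.

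There is essentially no obstacle here: all the genuine measure-theoretic content has already been discharged in Lemma~\ref{lemma:partitions-measurable}, where the one delicate case---the set $\Lambda_{\mathit{scr}}$, which fails to be closed but is shown measurable as the preimage $i_{\mathit{scr}}^{-1}((0,1])$ of a continuous selector---was handled. The present lemma only combines those facts through Boolean operations on measurable sets, so I expect it to be a routine verification rather than a new construction, and I would keep the written proof to a single sentence citing Lemma~\ref{lemma:partitions-measurable} and the closure of $\measterms$ under finite union and difference.
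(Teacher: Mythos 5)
Your proof is correct and matches the paper's own argument, which is the one-line observation that $\Lambda_{error}$ is constructed from measurable sets (those established in Lemma~\ref{lemma:partitions-measurable}) by operations preserving measurability. Your version simply spells out the finite unions and set differences explicitly, which is a harmless elaboration of the same routine verification.
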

\begin{proof}
It is constructed from measurable sets by operations preserving measurability.
\qed
\end{proof}

Define:

$\Lambda_{det} = \Lambda_{\mathit{appl}} \cup \Lambda_{cappl} \cup \Lambda_{\mathit{iftrue}} \cup \Lambda_{\mathit{iffalse}} \cup \Lambda_{\mathit{fail}} \cup
\Lambda_{\mathit{prim}} \cup \Lambda_{error} $
\begin{lemma} \label{lemma:lambda-det-measurable}
$\Lambda_{det}$ is measurable.
\end{lemma}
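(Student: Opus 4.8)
The plan is to recognize that this statement is a direct corollary of the measurability results already established for the constituent sets. By definition,
\[
\Lambda_{det} = \Lambda_{\mathit{appl}} \cup \Lambda_{\mathit{applc}} \cup \Lambda_{\mathit{iftrue}} \cup \Lambda_{\mathit{iffalse}} \cup \Lambda_{\mathit{fail}} \cup \Lambda_{\mathit{prim}} \cup \Lambda_{error},
\]
which is a \emph{finite} union. I would first note that Lemma~\ref{lemma:partitions-measurable} already gives measurability of $\Lambda_{\mathit{appl}}$, $\Lambda_{\mathit{applc}}$, $\Lambda_{\mathit{iftrue}}$, $\Lambda_{\mathit{iffalse}}$, $\Lambda_{\mathit{fail}}$, and $\Lambda_{\mathit{prim}}$, and that the lemma immediately preceding this one establishes measurability of $\Lambda_{error}$. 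Thus every set appearing in the union belongs to the Borel $\sigma$-algebra $\measterms$ on terms.

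The second and final step is simply to invoke the defining closure property of a $\sigma$-algebra: since $\measterms$ is closed under countable (hence finite) unions, and each of the seven sets above lies in $\measterms$, their union $\Lambda_{det}$ lies in $\measterms$ as well. This completes the argument.

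I expect no genuine obstacle here; the only point requiring care is bookkeeping, namely confirming that \emph{each} summand of the union has in fact been shown measurable by a prior result, so that no set slips through relying on an unproved claim. In particular one should double-check that the set written $\Lambda_{cappl}$ in the definition of $\Lambda_{det}$ is the same as the set $\Lambda_{\mathit{applc}}$ covered by Lemma~\ref{lemma:partitions-measurable} (a notational variant for the context-filled redex $E[c\ V]$), so that its measurability is indeed already available. Once that identification is made, the result is immediate.

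\qed
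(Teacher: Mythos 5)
Your proof is correct and takes essentially the same approach as the paper, which simply observes that $\Lambda_{det}$ is a (finite) union of sets already shown measurable. Your extra bookkeeping — citing Lemma~\ref{lemma:partitions-measurable} for the constituent sets, the preceding lemma for $\Lambda_{error}$, and noting that $\Lambda_{cappl}$ is just the notation $\Lambda_{\mathit{applc}}$ — only makes the paper's one-line argument explicit.
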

\begin{proof}
$\Lambda_{det}$ is a union of measurable sets.
\qed
\end{proof}

\begin{lemma}
$\gvalset$ is measurable.
\end{lemma}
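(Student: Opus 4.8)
The plan is to show that $\gvalset$ is a \emph{closed} subset of the metric space $(\lamterms,d)$ of Figure~\ref{fig:metric}, from which measurability in $\measterms$ is immediate, since $\measterms$ is by definition the Borel $\sigma$-algebra generated by $d$ and closed sets are Borel. I would prove closedness by writing $\gvalset=\valset\cup\Set{\fail}$ and showing each of the two pieces is closed.

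The key observation driving everything is that the metric $d$ assigns finite distance only to pairs of terms sharing the same outermost syntactic constructor. Inspecting Figure~\ref{fig:metric}, every clause yielding a finite value relates two terms built with the \emph{same} top-level operator (two constants, two applications, two abstractions with the same bound variable, two $g$-redexes, two $\distone$-redexes, two $\mathtt{score}$'s, two conditionals, or $\fail$ with $\fail$), and every remaining pair is sent to $\infty$. Consequently, if $d(M,N)<\infty$ then $M$ is a value exactly when $N$ is, and $M=\fail$ exactly when $N=\fail$; moreover finite distance preserves which variables occur free (a mismatched free variable forces distance $\infty$), so closedness of terms is preserved as well.

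With this in hand the two closedness arguments are short. For $\valset$: given a convergent sequence $V_{n}\to M$ with each $V_{n}\in\valset$, taking $\epsilon=1$ in the definition of limit yields an index beyond which $d(V_{n},M)<\infty$; by the observation $M$ then has the same outermost constructor as a closed value, hence is itself a constant or a (closed) $\lambda$-abstraction, so $M\in\valset$. For $\Set{\fail}$: since $d(\fail,N)=\infty$ for every $N\neq\fail$, any sequence in $\Set{\fail}$ converging to $M$ forces $M=\fail$, so the singleton is closed (indeed isolated). A finite union of closed sets is closed, so $\gvalset$ is closed and therefore lies in $\measterms$.

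The only real work, and the step I expect to be the main obstacle, is the case analysis establishing the key observation that finite $d$-distance forces matching outermost structure and hence preserves value-hood. This is routine given the explicit clauses of Figure~\ref{fig:metric}, but it must be checked against each alternative so that no value is accidentally at finite distance from a reducible term; this cannot happen, since every reducible term has top-level constructor among application, $g$, $\distone$, $\mathtt{score}$, a conditional, or $\fail$, none of which is a constant or a $\lambda$-abstraction.
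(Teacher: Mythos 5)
Your proof is correct and takes essentially the same route as the paper's: the paper likewise observes that $\gvalset$ is the union of the sets of closed terms of the forms $c$, $\lambda x.M$, and $\fail$, hence closed in the metric $d$ and therefore Borel-measurable. Your case analysis showing that finite $d$-distance forces a matching outermost constructor (and preserves closedness of terms) merely fills in the details that the paper dismisses as ``easy to see''.
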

\begin{proof}
It is easy to see that $\gvalset$ is precisely the union of sets of all closed expressions of the form
$c$, $\lambda x . M$, $x$ and $\fail$, so it is closed, and hence measurable.
\qed
\end{proof}

\begin{lemma}
$\valset$ is measurable.
\end{lemma}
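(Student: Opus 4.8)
The plan is to obtain measurability of $\valset$ as an immediate corollary of the measurability of $\gvalset$, which was just established in the preceding lemma. Since by definition $\gvalset=\valset\cup\{\fail\}$ and $\fail\notin\valset$, we have $\valset=\gvalset\setminus\{\fail\}$, so it suffices to show that the singleton $\{\fail\}$ is measurable and then invoke closure of the Borel $\sigma$-algebra $\measterms$ under complement and finite intersection.

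First I would check that $\{\fail\}$ lies in $\measterms$. Inspecting the metric $d$ in Figure~\ref{fig:metric}, we have $d(\fail,\fail)=0$ while $d(\fail,M)=\infty$ for every term $M\neq\fail$. Hence $\{\fail\}$ coincides with the open ball $\mathbf{B}(\fail,1)=\{M\mid d(\fail,M)<1\}$, and is therefore a generator of the Borel $\sigma$-algebra; in particular $\{\fail\}\in\measterms$. Then $\valset=\gvalset\cap(\lamterms\setminus\{\fail\})$, and since $\gvalset\in\measterms$ by the preceding lemma and $\measterms$ is closed under complement and intersection, we conclude $\valset\in\measterms$.

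As an alternative (self-contained) route, one can argue exactly as in the proof for $\gvalset$: the set $\valset$ is precisely the collection of closed expressions of the form $c$ or $\lambda x.M$, and this set is closed in $(\lamterms,d)$. Indeed, because $d$ assigns distance $\infty$ between terms of differing outermost shape, any convergent sequence of values must eventually share a single form ($c$ or $\lambda x.M$), so its limit is again a value; closed subsets of a metric space are measurable.

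I do not expect any genuine obstacle here, as the statement is a routine consequence of the $\gvalset$ case. The only point deserving a word of care is the measurability of the singleton $\{\fail\}$, and that is settled immediately by the definition of $d$, which makes $\fail$ an isolated point of the metric space of terms.
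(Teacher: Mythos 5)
Your proposal is correct. Your primary route differs from the paper's: the paper argues directly that $\valset$ is closed in the metric space $(\lamterms,d)$ (it is the union of the sets of closed terms of the forms $c$ and $\lambda x.M$, and since $d$ puts distance $\infty$ between terms of different outermost shape, limits of values are values), whereas you instead derive measurability from the preceding lemma via $\valset=\gvalset\setminus\{\fail\}$, using that $\fail$ is an isolated point so $\{\fail\}=\mathbf{B}(\fail,1)$ is Borel. Both arguments are sound; your decomposition buys economy by reusing the $\gvalset$ lemma and only needing the single observation about the isolated point, while the paper's direct closedness argument is self-contained and uniform with how it handles $\gvalset$ and the other syntactic classes in the appendix. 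Your ``alternative route'' is in fact exactly the paper's proof, stated slightly more carefully: the paper lists the forms as ``$c$, $\lambda x.M$ and $x$'', although a bare variable is not a closed term, so your restriction to $c$ and $\lambda x.M$ is the more accurate phrasing.
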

\begin{proof}
$\valset$ is the union of sets of all closed expressions of the form
$c$, $\lambda x . M$ and $x$ , so it is closed, and hence measurable.
\qed
\end{proof}


%


\subsection{Deterministic reduction as a measurable function}

Let us define a function performing one step of the reduction relation.
This function has to be defined piecewise.
Let us start with sub-functions reducing deterministic redexes of the given type.
\begin{eqnarray*}
g_{\mathit{appl}} &:& \Lambda_{\mathit{appl}} \rightarrow \cterms\\
g_{\mathit{appl}}(E[(\lambda x. M)\ V]) &=& E[\mathbf{Subst}(M,x,v)]
\end{eqnarray*}
\begin{lemma}
$g_{\mathit{appl}}$ is measurable.
\end{lemma}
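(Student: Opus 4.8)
The plan is to show that $g_{\mathit{appl}}$ is \emph{continuous} as a map from $(\Lambda_{\mathit{appl}}, d)$ to $(\cterms, d)$, since (as already noted in this section) continuity between metric spaces entails Borel measurability with respect to the induced $\sigma$-algebras, and $\Lambda_{\mathit{appl}}$ is itself a measurable subset by Lemma~\ref{lemma:partitions-measurable}. The whole argument rests on the three metric estimates already in hand: the decomposition identity for contexts and redexes (Lemma~\ref{lemma:metric-contexts-redex}), the sub-additivity of plugging into a context (Lemma~\ref{lemma:metric-contexts}), and the Lipschitz bound for substitution (Lemma~\ref{lemma:subst-bound}).

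First I would fix a point $t = E[(\lambda x.M)\,V]\in\Lambda_{\mathit{appl}}$ and an arbitrary $t' = F[(\lambda x.N)\,W]$, noting that by Lemma~\ref{lemma:unique} each such term has a \emph{unique} evaluation-context/redex decomposition, so that $E,\lambda x.M,V$ are determined by $t$ and $g_{\mathit{appl}}$ is well defined. The key structural remark is that $d$ assigns distance $\infty$ to any two terms of different shape, so it suffices to handle $d(t,t')<\infty$; there the two decompositions match structurally, and (working modulo $\alpha$-renaming so the two abstractions share the bound variable $x$) Lemma~\ref{lemma:metric-contexts-redex} together with the clauses of Figure~\ref{fig:metric} give the exact splitting
\[
d(t,t') = d(E,F) + d(M,N) + d(V,W).
\]

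Next I would bound the distance between the images. Since $g_{\mathit{appl}}(t) = E[\mathbf{Subst}(M,x,V)]$ and $g_{\mathit{appl}}(t') = F[\mathbf{Subst}(N,x,W)]$, applying Lemma~\ref{lemma:metric-contexts} and then Lemma~\ref{lemma:subst-bound} yields
\[
d(g_{\mathit{appl}}(t), g_{\mathit{appl}}(t')) \le d(E,F) + d(M,N) + k\cdot d(V,W),
\]
where $k$ is the multiplicity of $x$ in $M$ (equal to its multiplicity in $N$ whenever the shapes match). Writing $K=\max(1,k)$, the right-hand side is at most $K\,(d(E,F)+d(M,N)+d(V,W)) = K\,d(t,t')$, so given $\epsilon>0$ the choice $\delta=\epsilon/K$ establishes continuity at $t$, hence everywhere.

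The only point that is not entirely routine — and the one I would be careful to phrase correctly — is that the Lipschitz constant $k$ is \emph{not uniform} over the domain: it grows with the number of occurrences of the bound variable in the redex, so $g_{\mathit{appl}}$ is neither globally Lipschitz nor uniformly continuous. This is harmless, because measurability needs only \emph{pointwise} continuity, and at the fixed point $t$ the term $M$ is fixed, so $k$ is a fixed finite constant; it is precisely this local constant that Lemma~\ref{lemma:subst-bound} supplies. I would therefore state the $\epsilon$--$\delta$ argument pointwise rather than attempt a single global bound.
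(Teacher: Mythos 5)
Your proof is correct and follows essentially the same route as the paper's: both establish pointwise continuity of $g_{\mathit{appl}}$ via the exact splitting from Lemma~\ref{lemma:metric-contexts-redex}, the substitution bound of Lemma~\ref{lemma:subst-bound} (composed with Lemma~\ref{lemma:metric-contexts}), and an $\epsilon$--$\delta$ argument with a local constant depending on the multiplicity $k$ of the bound variable. Your explicit remark that $k$ is only a \emph{local} Lipschitz constant, and that pointwise continuity suffices for measurability, is a welcome clarification of a point the paper's proof leaves implicit.
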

\begin{proof}
By Lemma \ref{lemma:metric-contexts-redex}, we have
  $d(E[(\lambda x. M)V], F[(\lambda x.N)W]) = d(E,F) + d(M,N) + d(V,W)$
  and by Lemma \ref{lemma:subst-bound},
  $d(E[\mathbf{Subst}(M,x,V)], F[\mathbf{Subst}(N,x,W)]) \leq
   d(E,F) + d(M,N) + k \cdot d(V,W)$, where $k$ is the 
   maximum of the multiplicities of $x$ in $M$ and $N$.
   
  For any $\epsilon > 0$, take $\delta = \frac{\epsilon}{k+1}$. Then, if
  $d(E[(\lambda x. M)V], F[(\lambda x.N)W]) < \delta$, then
  \begin{eqnarray*}
  d(E[\mathbf{Subst}(M,x,V)], F[\mathbf{Subst}(N,x,W)]) &\le&
  d(E,F) + d(M,N) + k \cdot d(V,W)\\
  &\le& (k + 1) \cdot (d(E,F) + d(M,N) + d(V,W))\\
  &=& (k+1) \cdot d(E[(\lambda x. M)V], F[(\lambda x.N)W])\\
  &<& \epsilon
  \end{eqnarray*}
  
Thus, $g_{\mathit{appl}}$ is continuous, and so measurable. \qed
\end{proof}
\begin{eqnarray*}
g_{\mathit{applc}} &:& \Lambda_{\mathit{applc}} \rightarrow \cterms\\
g_{\mathit{applc}}(E[c\ M]) &=& E[\fail]
\end{eqnarray*}
\begin{lemma}
$g_{\mathit{applc}}$ is measurable.
\end{lemma}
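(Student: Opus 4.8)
The plan is to show that $g_{\mathit{applc}}$ is continuous as a function between the metric spaces $(\Lambda_{\mathit{applc}}, d)$ and $(\cterms, d)$, since continuity implies Borel-measurability by the remark recalled above. The argument will mirror the one just given for $g_{\mathit{appl}}$, but it is in fact strictly simpler: because the output $E[\fail]$ does not depend on the applied constant $c$ nor on the argument $V$, there is no substitution to analyse and hence no multiplicity factor $k$ to track.

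First I would take two arbitrary inputs $E[c\ V]$ and $F[c'\ W]$ in $\Lambda_{\mathit{applc}}$ (these decompositions being unique by Lemma~\ref{lemma:unique}, so that $g_{\mathit{applc}}$ is well defined). Applying Lemma~\ref{lemma:metric-contexts-redex} and then unfolding $d$ on applications from Figure~\ref{fig:metric} gives
\[
d(E[c\ V], F[c'\ W]) = d(E, F) + d(c\ V, c'\ W) = d(E, F) + d(c, c') + d(V, W).
\]
Next, since $g_{\mathit{applc}}(E[c\ V]) = E[\fail]$ and $g_{\mathit{applc}}(F[c'\ W]) = F[\fail]$, I would bound the distance of the outputs using Lemma~\ref{lemma:metric-contexts} together with $d(\fail,\fail)=0$:
\[
d\bigl(g_{\mathit{applc}}(E[c\ V]),\, g_{\mathit{applc}}(F[c'\ W])\bigr) = d(E[\fail], F[\fail]) \le d(E, F) + d(\fail, \fail) = d(E, F).
\]

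Chaining the two displays yields $d(g_{\mathit{applc}}(x), g_{\mathit{applc}}(y)) \le d(E,F) \le d(x,y)$ for all inputs $x,y$, so $g_{\mathit{applc}}$ is non-expansive. Continuity then follows immediately by taking $\delta = \epsilon$ in the definition, and measurability follows. There is essentially no obstacle here: the only thing to check is that the context-distance $d(E,F)$ controls the output distance while being dominated by the total input distance, which is exactly what Lemmas~\ref{lemma:metric-contexts} and~\ref{lemma:metric-contexts-redex} deliver. The one point worth stating explicitly is that $d(\fail,\fail)=0$, which is why the constant-valued nature of the reduct causes no difficulty.
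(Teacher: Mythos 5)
Your proof is correct and takes essentially the same route as the paper: the paper's own proof consists of the single sentence ``It is easy to check that $g_{\mathit{applc}}$ is continuous,'' and your non-expansiveness argument (bounding the output distance by $d(E,F)$ via Lemma~\ref{lemma:metric-contexts} and dominating it by the input distance via Lemma~\ref{lemma:metric-contexts-redex}) is precisely the verification the paper leaves implicit.
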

\begin{proof}
It is easy to check that $g_{\mathit{applc}}$ is continous.
\qed
\end{proof}
\begin{eqnarray*}
g_{\mathit{prim}} &:& \Lambda_{\mathit{prim}} \rightarrow \cterms\\
g_{\mathit{prim}}(E[g(\vec{c})]) &=& E[\intfun{g}(\vec{c})]
\end{eqnarray*}
\begin{lemma}
$g_{\mathit{prim}}$ is measurable.
\end{lemma}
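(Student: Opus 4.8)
The plan is to prove $g_{\mathit{prim}}$ measurable by writing it as a composition of measurable maps, rather than by exhibiting it as continuous (the route taken for $g_{\mathit{appl}}$ and $g_{\mathit{applc}}$). The reason for the change of strategy is that the interpretation $\intfun{g}$ of a function identifier is only assumed to be a \emph{measurable} function $\mathbb{R}^{|g|}\to\mathbb{R}$, and need not be continuous; hence there is no hope of bounding $d(E[\intfun{g}(\vec{c})],F[\intfun{g}(\vec{c'})])$ in terms of $d(E[g(\vec{c})],F[g(\vec{c'})])$ as was done in the two preceding lemmas.

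First I would apply Lemma~\ref{lemma:split-space} to the countable decomposition $\Lambda_{\mathit{prim}}=\bigcup_{g\in\mathcal{G}}\Lambda_{\mathit{prim}}(g)$: since $\mathcal{G}$ is countable, it suffices to prove that the restriction of $g_{\mathit{prim}}$ to each $\Lambda_{\mathit{prim}}(g)$ is measurable. Fixing the head symbol $g$ is what makes the rest go through, because on $\Lambda_{\mathit{prim}}(g)$ every term has the form $E[g(\vec{c})]$, so the real arguments $\vec{c}$ can be separated from the surrounding context in a uniform way.

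On $\Lambda_{\mathit{prim}}(g)$ I would factor the restriction of $g_{\mathit{prim}}$ as
\[
E[g(\vec{c})]\;\longmapsto\;(E,\vec{c})\;\longmapsto\;(E,\intfun{g}(\vec{c}))\;\longmapsto\;E[\intfun{g}(\vec{c})].
\]
The first map $E[g(\vec{c})]\mapsto(E,\vec{c})$ is continuous: Lemma~\ref{lemma:metric-contexts-redex} together with the definition of the metric on terms gives $d(E[g(\vec{c})],F[g(\vec{c'})])=d(E,F)+\sum_{i}|c_i-c_i'|$, so this map is in fact distance-preserving onto its image. The middle map $(E,\vec{c})\mapsto(E,\intfun{g}(\vec{c}))$ is measurable, being the product of the (continuous) identity on contexts with the measurable function $\intfun{g}$. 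The last map $(E,r)\mapsto E[r]$, with $r$ read as the constant term $r$, is Lipschitz continuous by Lemma~\ref{lemma:metric-contexts}, since $d(E[r],F[r'])\le d(E,F)+|r-r'|$. A composition of measurable maps is measurable, so the restriction is measurable, and Lemma~\ref{lemma:split-space} then yields measurability of $g_{\mathit{prim}}$ on all of $\Lambda_{\mathit{prim}}$.

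The main obstacle---and the whole point of departing from the proofs of the two preceding lemmas---is the mere measurability of $\intfun{g}$. It forces the detour through the split and reassembly maps, and the preliminary use of Lemma~\ref{lemma:split-space} to localise the head symbol before $\intfun{g}$ is applied componentwise; the continuity checks for the two outer maps are then entirely routine.
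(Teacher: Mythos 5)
Your proof is correct and takes essentially the same route as the paper's: the paper likewise establishes measurability of $g_{\mathit{prim}}$ by writing it as a composition of the map splitting a term into a context and a redex, the (assumed measurable) interpretation $\intfun{g}$, and the map recombining a context with a redex. Your write-up merely makes explicit what the paper leaves implicit, namely the per-identifier decomposition via Lemma~\ref{lemma:split-space} and the continuity of the two outer maps via Lemmas~\ref{lemma:metric-contexts-redex} and~\ref{lemma:metric-contexts}.
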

\begin{proof}
By assumption, every primitive function $g$ is measurable.
$g_{\mathit{prim}}$ is a composition of a function splitting a context and a redex,
$g$ and a function combining a context with a redex, all of which are measurable.
\qed
\end{proof}
\begin{eqnarray*}
g_{\mathit{iftrue}} &:& \Lambda_{\mathit{iftrue}} \rightarrow \cterms\\
g_{\mathit{iftrue}}(E[\ite{\ttrue}{M_1}{M_2}]) &=& E[M_1]\\
g_{\mathit{iffalse}} &:& \Lambda_{\mathit{iffalse}} \rightarrow \cterms\\
g_{\mathit{iffalse}}(E[\ite{\tfalse}{M_1}{M_2}]) &=& E[M_2]
\end{eqnarray*}
\begin{lemma}
$g_{\mathit{iftrue}}$ and $g_{\mathit{iffalse}}$ are measurable.
\end{lemma}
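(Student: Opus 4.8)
The plan is to deduce measurability from the stronger fact that $g_{\mathit{iftrue}}$ and $g_{\mathit{iffalse}}$ are continuous --- indeed distance-nonincreasing --- maps between metric spaces, since (as recalled in this appendix) a continuous map between metric spaces is Borel measurable. I would treat $g_{\mathit{iftrue}}$ explicitly; the argument for $g_{\mathit{iffalse}}$ is word-for-word the same with the second branch in place of the first.

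First I would fix two inputs $E[\ite{\ttrue}{M_1}{M_2}]$ and $F[\ite{\ttrue}{N_1}{N_2}]$ in $\Lambda_{\mathit{iftrue}}$; by Lemma~\ref{lemma:unique} every term of $\Lambda_{\mathit{iftrue}}$ has a unique such decomposition, so $g_{\mathit{iftrue}}$ is well defined. Since $\ite{\ttrue}{M_1}{M_2}$ is a redex, Lemma~\ref{lemma:metric-contexts-redex} gives
\[
\begin{aligned}
&d(E[\ite{\ttrue}{M_1}{M_2}],\, F[\ite{\ttrue}{N_1}{N_2}])\\
&\qquad = d(E,F) + d(M_1,N_1) + d(M_2,N_2),
\end{aligned}
\]
where I unfold the clause for $\mathtt{if}$ in the metric of Figure~\ref{fig:metric} and use $d(\ttrue,\ttrue)=0$. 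For the outputs, Lemma~\ref{lemma:metric-contexts} yields $d(E[M_1],F[N_1]) \le d(E,F) + d(M_1,N_1)$, and adding the nonnegative term $d(M_2,N_2)$ on the right shows the output distance is bounded by the input distance just computed. Hence $g_{\mathit{iftrue}}$ is distance-nonincreasing, so taking $\delta=\epsilon$ witnesses continuity, and measurability follows.

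I do not anticipate a real obstacle: this is a routine Lipschitz-continuity argument resting entirely on metric-decomposition facts already established (Lemmas~\ref{lemma:metric-contexts}, \ref{lemma:metric-infty}, and \ref{lemma:metric-contexts-redex}). The only point needing a moment's care is that the two inputs may decompose with incompatible context or redex shapes, so that several of the distances involved are $\infty$; the displayed equality and the context inequality are both stated to remain valid in that regime (the infinite case being handled by Lemma~\ref{lemma:metric-infty}), so the bound $d(g_{\mathit{iftrue}}(x),g_{\mathit{iftrue}}(y)) \le d(x,y)$ persists even when both sides are $\infty$.
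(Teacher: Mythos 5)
Your proof is correct and follows essentially the same route as the paper's: the paper likewise computes the input distance as $d(E,F)+d(M_1,N_1)+d(M_2,N_2)$ via the metric clause for $\mathtt{if}$ and the context--redex decomposition, bounds the output distance $d(E[M_1],F[N_1])$ by it, and concludes that $g_{\mathit{iftrue}}$ is continuous, hence measurable. Your treatment is merely more explicit about which lemmas are invoked and about the case of infinite distances, which the paper leaves implicit.
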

\begin{proof}
We have 
$d(E[\ite{\bf{true}}{M_1}{N_1}], F[\ite{\bf{true}}{M_2}{N_2}]) = d(E, F) + d(M_1,M_2) + d(N_1,N_2)
  \geq d(E[M_1],F[M_2])$,  so $g_{\mathit{iftrue}}$ is continuous, and so measurable, and similarly for 
  $g_{\mathit{iffalse}}$.
  \qed
\end{proof}
\begin{eqnarray*}
g_{\mathit{fail}} &:& \Lambda_{\mathit{fail}} \rightarrow \cterms\\
g_{\mathit{fail}}(E[\fail]) &=& \fail\\
\end{eqnarray*}
\begin{lemma}
$g_{\mathit{fail}}$ is measurable.
\end{lemma}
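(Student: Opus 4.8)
The plan is to observe that $g_{\mathit{fail}}$ is a \emph{constant} function: by definition every element of $\Lambda_{\mathit{fail}}$ has the shape $E[\fail]$ for some proper context $E$, and $g_{\mathit{fail}}$ sends each such term to the single fixed term $\fail$. Since constant functions between measurable spaces are always measurable, the result is immediate; there is no real obstacle here, and this is by far the easiest of the measurability lemmas in this appendix.

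Concretely, I would argue directly from the definition of measurability. Equip the domain $\Lambda_{\mathit{fail}}$ with the restricted $\sigma$-algebra $\restr{\measterms}{\Lambda_{\mathit{fail}}}$ and the codomain $\cterms$ with $\measterms$. For any measurable set $A$ of terms, the inverse image is
\[
g_{\mathit{fail}}^{-1}(A) =
\begin{cases}
\Lambda_{\mathit{fail}} & \text{if } \fail \in A,\\
\emptyset & \text{if } \fail \notin A,
\end{cases}
\]
and both $\Lambda_{\mathit{fail}}$ (measurable by Lemma~\ref{lemma:partitions-measurable}) and $\emptyset$ lie in $\restr{\measterms}{\Lambda_{\mathit{fail}}}$. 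Hence $g_{\mathit{fail}}$ is measurable.

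Alternatively, following the route favoured throughout this appendix of establishing Borel-measurability via continuity, I would note that $g_{\mathit{fail}}$ is trivially continuous as a map between metric spaces: for all $x,y \in \Lambda_{\mathit{fail}}$ we have $d(g_{\mathit{fail}}(x), g_{\mathit{fail}}(y)) = d(\fail, \fail) = 0$, so at any point and for any $\epsilon > 0$ an arbitrary $\delta$ witnesses the continuity condition. Either phrasing settles the claim in a single line, and I would present the continuity version to match the style of the neighbouring lemmas on $g_{\mathit{appl}}$, $g_{\mathit{iftrue}}$, and $g_{\mathit{iffalse}}$.
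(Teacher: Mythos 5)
Your proof is correct and rests on exactly the same observation as the paper's: $g_{\mathit{fail}}$ is a constant function (every input maps to $\fail$), hence trivially measurable. The paper states this in one line; your inverse-image and continuity elaborations are both valid ways of spelling out the same idea.
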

\begin{proof}
Obvious, since it is a constant function.
\qed
\end{proof}
\begin{eqnarray*}
g_{error} &:& \Lambda_{error} \rightarrow \cterms\\
g_{error}(E[T]) &=& E[\fail]\\
\end{eqnarray*}
\begin{lemma}
$g_{error}$ is measurable.
\end{lemma}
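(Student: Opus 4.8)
The plan is to mirror the continuity arguments already used for $g_{\mathit{iftrue}}$ and $g_{\mathit{iffalse}}$, reducing the claim to a non-expansiveness estimate and then gluing the pieces together with Lemma~\ref{lemma:split-space}. Recall that $\Lambda_{error} = R\Lambda_{\mathit{if}} \cup R\Lambda_{\mathit{prim}} \cup R\Lambda_{\mathit{dist}} \cup R\Lambda_{\mathit{scr}}$, and that each of these four sets is measurable (being built from the measurable sets of Lemma~\ref{lemma:partitions-measurable} by Boolean operations). Since they are pairwise disjoint and cover $\Lambda_{error}$, by Lemma~\ref{lemma:split-space} it suffices to prove that the restriction of $g_{error}$ to each of them is measurable, which I would do by showing each restriction is continuous as a map between metric spaces.

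First I would note that $g_{error}$ is well defined: any $E[T]\in\Lambda_{error}$ contains the erroneous redex $T$, so it is not a generalized value, and Lemma~\ref{lemma:unique} guarantees that the decomposition of $E[T]$ into a context $E$ and a redex $T$ is unique; hence the output $E[\fail]$ is determined. For the continuity estimate, fix one of the four pieces and take two terms $E[T]$, $F[T']$ in it (so $T$, $T'$ are erroneous redexes). Since $\fail$ and every erroneous redex belong to the redex grammar, Lemma~\ref{lemma:metric-contexts-redex} gives both
\[
d(E[T],F[T']) = d(E,F) + d(T,T')
\quad\text{and}\quad
d(E[\fail],F[\fail]) = d(E,F) + d(\fail,\fail) = d(E,F).
\]
Because $d(T,T')\ge 0$, I conclude
\[
d\bigl(g_{error}(E[T]),\,g_{error}(F[T'])\bigr) = d(E,F) \le d(E,F)+d(T,T') = d(E[T],F[T']),
\]
so $g_{error}$ is $1$-Lipschitz on each piece, hence continuous, hence Borel measurable there. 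Applying Lemma~\ref{lemma:split-space} to the measurable cover then yields measurability of $g_{error}$ on all of $\Lambda_{error}$.

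I expect no serious obstacle here, as the argument is essentially identical to the earlier reduction-step lemmas; the only points needing care are minor. The first is that the redex grammar contains both $\fail$ and all erroneous redexes, which is exactly what lets Lemma~\ref{lemma:metric-contexts-redex} apply with $R_1=R_2=\fail$ on the output side. The second is the case in which the two input terms have redexes of genuinely different shape (for instance different arities, or a primitive-redex error versus a distribution-redex error): there $d(T,T')=\infty$ and $d(E,F)=\infty$, so by Lemma~\ref{lemma:metric-infty} both sides of the displayed inequality are $\infty$ and the bound holds vacuously. Both points follow directly from the definition of the metric $d$, so the only real work is the routine bookkeeping of checking the estimate uniformly across the four constituent sets.
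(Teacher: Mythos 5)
Your proof is correct and follows essentially the same route as the paper: the key step in both is the $1$-Lipschitz estimate $d(E[T],F[T']) \geq d(E,F) = d(E[\fail],F[\fail])$, giving continuity and hence Borel measurability. The paper applies this estimate directly on all of $\Lambda_{error}$ without your preliminary decomposition into the four constituent sets via Lemma~\ref{lemma:split-space}; that extra step is harmless but redundant, since the estimate (together with the infinite-distance case handled by Lemma~\ref{lemma:metric-infty}) holds uniformly across the whole domain.
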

\begin{proof}
We have $d(E[T_1], F[T_2]) \geq d(E,F) = d(E[\fail],F[\fail])$, so
$g_{error}$ is continuous and hence measurable.
\qed
\end{proof}

\begin{eqnarray*}
g_{det}' &:& \Lambda_{det} \rightarrow \cterms\\
g_{det}' &=& g_{\mathit{appl}} \cup g_{\mathit{applc}} \cup g_{\mathit{prim}} \cup g_{\mathit{iftrue}} \cup g_{\mathit{iffalse}} \cup g_{\mathit{fail}} \cup g_{error}  \\
\end{eqnarray*}
\begin{lemma}
$g_{det}'$ is measurable.
\end{lemma}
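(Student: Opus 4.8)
The plan is to read $g_{det}'$ as a piecewise-defined function over a finite measurable cover of its domain and to invoke Lemma~\ref{lemma:split-space}, which is tailored exactly to this situation. First I would record that $\Lambda_{det}$ is, by definition, the union of the seven sets $\Lambda_{\mathit{appl}}$, $\Lambda_{\mathit{applc}}$, $\Lambda_{\mathit{prim}}$, $\Lambda_{\mathit{iftrue}}$, $\Lambda_{\mathit{iffalse}}$, $\Lambda_{\mathit{fail}}$, $\Lambda_{error}$, each of which is measurable by Lemma~\ref{lemma:partitions-measurable} together with the measurability of $\Lambda_{error}$. By Lemma~\ref{lemma:unique}, a reducible term $\termone=\ct{\ectxone}{\rdxone}$ has a unique decomposition, so its redex shape is uniquely determined and each such term lands in exactly one of the seven classes; hence the cover is in fact a partition and the set-theoretic union $g_{det}'=g_{\mathit{appl}}\cup\dots\cup g_{error}$ is a genuine, single-valued function on $\Lambda_{det}$.

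Next I would identify, for each piece $A_n$ of this cover, the restriction $(g_{det}')|_{A_n}$ with the corresponding component function already analysed above ($g_{\mathit{appl}}$ on $\Lambda_{\mathit{appl}}$, $g_{\mathit{applc}}$ on $\Lambda_{\mathit{applc}}$, and so on through $g_{error}$ on $\Lambda_{error}$). Each of these component functions has been shown measurable in the immediately preceding lemmas, mostly by exhibiting them as continuous maps between metric spaces. The one point that needs care is that those proofs establish measurability with respect to the Borel $\sigma$-algebra generated by the \emph{induced} metric on the subdomain, whereas Lemma~\ref{lemma:split-space} asks for measurability with respect to $\restr{\measterms}{A_n}=\{B\cap A_n\mid B\in\measterms\}$. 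I would bridge this by the standard topological fact that, for any subset $A$ of a metric space, the Borel $\sigma$-algebra of $A$ under the subspace metric coincides with $\restr{\Borel}{A}$; since each $A_n$ is itself a measurable subset, this restricted algebra is precisely $\restr{\measterms}{A_n}$.

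With these two observations in hand the hypotheses of Lemma~\ref{lemma:split-space} are exactly met: a finite (hence countable) measurable cover $A_1,\dots,A_7$ of $\Omega=\Lambda_{det}$, a target space $(\Omega',\Sigma')=(\cterms,\measterms)$, and a function $g_{det}'$ whose restriction to each $A_n$ is measurable $\restr{\measterms}{A_n}/\measterms$. The lemma then immediately yields that $g_{det}'$ is measurable $\restr{\measterms}{\Lambda_{det}}/\measterms$, which is the claim.

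The main obstacle I anticipate is not any deep argument but the bookkeeping of transporting the component measurability results correctly. In particular, the continuity-based proofs for $g_{\mathit{appl}}$, $g_{\mathit{iftrue}}$, $g_{error}$, etc., are stated relative to the induced subspace metric, and one must be careful to convert each into a statement about the restricted $\sigma$-algebra before feeding it into Lemma~\ref{lemma:split-space}; the subspace/restriction identification of $\sigma$-algebras is the technical step doing the real work. Everything else is routine assembly of facts established earlier, and I note in passing that well-definedness of $g_{det}'$ is robust even where classes might meet, since on any overlap all component functions compute the same reduct (either $E[\fail]$ via $\detred$, or the single deterministic successor guaranteed by Lemma~\ref{lemma:det-reduction-deterministic}).
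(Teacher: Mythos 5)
Your proof is correct and follows essentially the same route as the paper, whose entire argument is the one-line appeal to Lemma~\ref{lemma:split-space} applied to the measurable cover of $\Lambda_{det}$ by the seven redex classes, with the component functions $g_{\mathit{appl}},\dots,g_{error}$ already proved measurable. Your additional care about the subspace Borel $\sigma$-algebra coinciding with the restricted $\sigma$-algebra and about single-valuedness on the cover is sound bookkeeping that the paper leaves implicit.
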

\begin{proof}
Follows directly from Lemma \ref{lemma:split-space}.
\qed
\end{proof}

\begin{lemma} \label{lemma:detred-function}
$M \detred N$ if and only if $g_{det}'(M) = N$.
\end{lemma}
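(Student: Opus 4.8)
The plan is to prove the biconditional by a direct case analysis that matches the six clauses of Figure~\ref{fig:detred} against the pieces of the partition $\Lambda_{det}$, anchored on the unique-decomposition result Lemma~\ref{lemma:unique} and the redex grammar. The crucial observation is that each subfunction $g_{\mathit{appl}}$, $g_{\mathit{applc}}$, $g_{\mathit{prim}}$, $g_{\mathit{iftrue}}$, $g_{\mathit{iffalse}}$, $g_{\mathit{fail}}$, $g_{error}$ was defined on its domain to return exactly the right-hand side of one reduction clause. Consequently, once the domains are shown to line up, both directions reduce to bookkeeping.

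For the forward implication I would assume $M \detred N$ and inspect which clause derived it. Each clause pins down a syntactic shape $E[R]$ for $M$ and places it in one piece: a $\beta$-redex $E[(\lambda x.M')V]$ gives $M \in \Lambda_{\mathit{appl}}$ with $g_{\mathit{appl}}(M) = E[\mathbf{Subst}(M',x,V)] = N$; a primitive application $E[g(\vec c)]$ gives $M \in \Lambda_{\mathit{prim}}$ with $g_{\mathit{prim}}(M) = E[\intfun{g}(\vec c)] = N$; the $\mathsf{if}$-rules land in $\Lambda_{\mathit{iftrue}}$, $\Lambda_{\mathit{iffalse}}$; failure propagation under a proper context lands in $\Lambda_{\mathit{fail}}$; the constant application $E[c\,V]$ lands in $\Lambda_{\mathit{applc}}$; and the remaining erroneous redexes land in $\Lambda_{error}$. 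In each case $g_{det}'$ agrees with the relevant $g_\bullet$, so $g_{det}'(M)=N$. The backward implication is symmetric: from $g_{det}'(M)=N$ the term $M$ lies in a unique piece $\Lambda_\bullet$, and reading off the definition of $g_\bullet$ exhibits the matching reduction clause, so $M \detred N$.

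The step I expect to be the real (if elementary) obstacle is verifying coverage, i.e.\ that $\dom(g_{det}')=\Lambda_{det}$ is \emph{exactly} the set of terms on which $\detred$ is defined, neither more nor less. This requires noting that $\detred$ is deliberately \emph{partial}: Figure~\ref{fig:detred} has no clause for a sampling redex $\distone(\vec c)$ or for a valid score $\mathtt{score}(c)$ with $c\in(0,1]$, since those are handled by \tr{Red Random} and \tr{Red Score} rather than by deterministic reduction. Correspondingly, $\Lambda_{det}$ omits $\Lambda_{\mathit{dist}}$ and $\Lambda_{\mathit{scr}}$ while retaining their erroneous counterparts $R\Lambda_{\mathit{dist}}$ and $R\Lambda_{\mathit{scr}}$ inside $\Lambda_{error}$. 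On the excluded terms $E[\distone(\vec c)]$ and $E[\mathtt{score}(c)]$ the biconditional holds vacuously, as both sides are false. The argument that every other reducible closed term lands in precisely one piece is where Lemma~\ref{lemma:unique} does the work: each such $M$ decomposes uniquely as $E[R]$, and the redex grammar guarantees that $R$ falls into exactly one deterministic category, so the pieces of $\Lambda_{det}$ are pairwise disjoint and their union is automatically well defined. Combining this disjoint-and-exhaustive decomposition with the fact that each $g_\bullet$ reproduces its clause closes both directions of the equivalence.
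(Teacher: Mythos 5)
Your proof takes exactly the route the paper intends: the paper's own proof of this lemma is literally ``By inspection,'' and your argument --- matching each clause of Figure~\ref{fig:detred} to the piece of $\Lambda_{det}$ whose subfunction reproduces its right-hand side, using Lemma~\ref{lemma:unique} for disjointness, and observing that the sampling and well-formed score redexes are excluded on both sides so the biconditional is vacuous there --- is precisely that inspection written out. Both directions and the partiality discussion are handled in the way the paper's one-line proof presupposes.

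One caveat on the step you yourself single out as the crux. The claim that $\dom(g_{det}')=\Lambda_{det}$ equals the domain of $\detred$ ``neither more nor less'' is too strong under the paper's definitions as literally written. The clause $E[T]\detred E[\fail]$ fires on $E[c\ M]$ for an \emph{arbitrary} closed term $M$, since ``$c\ \termone$'' is an erroneous redex; but $\Lambda_{\mathit{applc}}$ admits only value arguments and $\Lambda_{error}$ contains no applications, so $E[c\,(N_1\,N_2)]$ reduces under $\detred$ yet lies outside $\Lambda_{det}$, leaving $g_{det}'$ undefined there. Likewise $E[\score{\lambda x.N}]$ is an erroneous redex (a $\lambda$-abstraction is not in $(0,1]$), but $A\Lambda_{\mathit{scr}}$, and hence $R\Lambda_{\mathit{scr}}$, only admits real arguments to $\mathtt{score}$, so this term too escapes $\Lambda_{det}$. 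On such terms the forward implication fails, so the lemma is false as literally stated; this is a glitch in the paper's partition (note $g_{\mathit{applc}}$ is even written with pattern $E[c\ M]$ while its declared domain uses $V\in\valset$) rather than in your reasoning, and the paper's ``by inspection'' glosses over the same mismatch. With the evident repair --- allow arbitrary closed arguments in $\Lambda_{\mathit{applc}}$ and arbitrary value arguments in $A\Lambda_{\mathit{scr}}$ --- your coverage argument, and with it the whole proof, goes through verbatim.
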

\begin{proof}
By inspection.
\qed
\end{proof}

\subsection{Small- step reduction as a measurable function}

Let 
\[
\begin{array}{r@{}l@{}l}
\mathcal{T}_{\mathit{val}} = {}&\gvalset  \times \mathbb{R} \times \mathbb{S}\\
\mathcal{T}_{\mathit{det}} = {}&\Lambda_{\mathit{det}} \times \mathbb{R} \times \mathbb{S}\\
\mathcal{T}_{\mathit{scr}} = {}&\Lambda_{\mathit{scr}} \times  \mathbb{R} \times \mathbb{S}\\
\mathcal{T}_{\mathit{rnd}}= {}&\{ (E[D(\vec{c})], w, c \mathrel{::} s)\ |\ &E \in \mathcal{C}, D \in \mathcal{D}, 
                             \vec{c} \in \mathbb{R}^{|D|}, w \in \mathbb{R}, s \in \mathbb{S}, c \in \mathbb{R}, \\&&\pdf{D}(\vec{c},c) >0 \}
\end{array}
\]

\begin{lemma} \label{lemma:trnd-measurable}
$\mathcal{T}_{\mathit{val}}$, $\mathcal{T}_{\mathit{det}}$, $\mathcal{T}_{\mathit{scr}}$ and $\mathcal{T}_{\mathit{rnd}}$ are measurable.
\end{lemma}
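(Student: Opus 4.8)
The plan is to handle the four sets in two groups: the first three, $\mathcal{T}_{\mathit{val}}$, $\mathcal{T}_{\mathit{det}}$ and $\mathcal{T}_{\mathit{scr}}$, are plain measurable rectangles, whereas $\mathcal{T}_{\mathit{rnd}}$ is the genuinely interesting case. Throughout I work in the measurable space $\lamterms \times \RR \times \sampseq$, whose $\sigma$-algebra is $\measterms \times \mathcal{R} \times \mathcal{S}$ by Corollary~\ref{lemma:triples-complete-separable}.

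For $\mathcal{T}_{\mathit{val}}$, $\mathcal{T}_{\mathit{det}}$ and $\mathcal{T}_{\mathit{scr}}$ I would simply observe that each has the form $A \times \RR \times \sampseq$ for a measurable set of terms $A$: here $A = \gvalset$ (measurable as shown above), $A = \Lambda_{\mathit{det}}$ (Lemma~\ref{lemma:lambda-det-measurable}), and $A = \Lambda_{\mathit{scr}}$ (Lemma~\ref{lemma:partitions-measurable}). Since $\RR \in \mathcal{R}$ and $\sampseq \in \mathcal{S}$, each rectangle $A \times \RR \times \sampseq$ is a measurable set of the product $\sigma$-algebra by definition, and we are done with these three.

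The only nontrivial case is $\mathcal{T}_{\mathit{rnd}}$, where the term component and the head of the trace are coupled through the condition $\pdf{D}(\vec{c}, c) > 0$. First I would decompose over the countable set $\mathcal{D}$ of distribution identifiers: writing $\mathcal{T}_{\mathit{rnd}}(D)$ for the subset carrying that fixed $D$, we have $\mathcal{T}_{\mathit{rnd}} = \biguplus_{D \in \mathcal{D}} \mathcal{T}_{\mathit{rnd}}(D)$, the union being disjoint since $D$ is determined by the term; so it suffices to prove each $\mathcal{T}_{\mathit{rnd}}(D)$ measurable. Fix $D$ of arity $k$. The term component ranges over the measurable set $\Lambda_{\mathit{dist}}(D)$ and the trace has length at least one. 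I would then exhibit a measurable map
\[
\psi_D : \Lambda_{\mathit{dist}}(D) \times \RR \times \textstyle\biguplus_{n \geq 1}\RR^n \longrightarrow \RR^{k+1}, \qquad \psi_D\bigl(E[D(\vec{c})],\, w,\, c \mathrel{::} s'\bigr) = (\vec{c}, c),
\]
and observe that $\mathcal{T}_{\mathit{rnd}}(D) = \psi_D^{-1}\bigl(\pdf{D}^{-1}((0,\infty))\bigr)$. Since $\pdf{D}$ is measurable by assumption, $\pdf{D}^{-1}((0,\infty))$ is Borel in $\RR^{k+1}$, so this preimage is measurable as soon as $\psi_D$ is.

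To check that $\psi_D$ is measurable it is enough to verify that each of its two coordinate blocks into $\RR^{k+1}$ is measurable. The extraction of $\vec{c}$ factors as the projection onto the term, followed by the measurable splitting function $\deapp$ that returns the redex $D(\vec{c})$ (measurable by the same context/redex decomposition used for the deterministic-reduction functions earlier in this appendix), followed by the map $D(\vec{c}) \mapsto \vec{c}$, which is an isometry onto $\RR^k$ and hence continuous. The extraction of $c$ is the projection onto the trace followed by the head map $c \mathrel{::} s' \mapsto c$, which on each $\RR^n$ with $n \geq 1$ is the first coordinate projection and so continuous; by Lemma~\ref{lemma:split-space} it is therefore measurable on the disjoint union. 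Pairing the two blocks yields $\psi_D$, which completes the argument. The main obstacle is exactly this coupling in $\mathcal{T}_{\mathit{rnd}}$: one should not try to describe the set directly, but instead realise it as the preimage of a Borel set under a measurable extraction map, leaning on the already-established measurability of the context/redex splitting.
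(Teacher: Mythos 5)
Your proof is correct and takes essentially the same approach as the paper: the first three sets are measurable rectangles, and $\mathcal{T}_{\mathit{rnd}}$ is expressed as a countable union, over the distribution identifiers $D$, of preimages of $(0,\infty)$ under $\pdf{D}$ composed with a measurable map extracting $(\vec{c},c)$ from $(E[D(\vec{c})],w,c \mathrel{::} s)$. The only cosmetic difference is that the paper justifies measurability of this extraction map directly by continuity (it is nonexpansive for the term and trace metrics), whereas you factor it through the splitting function $\deapp$ and the head projection; both justifications are sound.
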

\begin{proof}
The measurability of  $\mathcal{T}_{\mathit{val}}$, $\mathcal{T}_{\mathit{det}}$ and $\mathcal{T}_{\mathit{scr}}$ is obvious (they are products of measurable sets), so let us focus on $\mathcal{T}_{\mathit{rnd}}$.

For each distribution $D$, define a function $i_D: \Lambda_{\mathit{rnd}}(D) \times \mathbb{R} \times (\mathbb{S} \setminus \{[]\}) \rightarrow \mathbb{R}^{|D|} \times \mathbb{R}$
by $i_D(E[D(\vec{c})], w, c \mathrel{::} s) = (c, \vec{c})$. This function is continuous, and so measurable. Then, since for each
$D$, $\pdf{D}$ is measurable by assumption, the function $j_d = \pdf{D} \circ i_D$ is measurable.
Then, $\mathcal{T}_{\mathit{rnd}} = \bigcup_{D \in \mathcal{D}} j_D^{-1}((0, \infty))$, and since the set of distributions is countable,
$\mathcal{T}_{\mathit{rnd}}$ is measurable. 
\qed
\end{proof}

Let $\mathcal{T} = \cterms \times  \mathbb{R} \times \mathbb{S}$  and let 
$\mathcal{T}_{\mathit{blocked}} = \mathcal{T} \setminus (\mathcal{T}_{\mathit{val}} \cup \mathcal{T}_{\mathit{det}}
\cup \mathcal{T}_{\mathit{scr}} \cup \mathcal{T}_{\mathit{rnd}})$
be the set of non-reducible (``stuck'') triples,
whose first components are not values. Obviously, $\mathcal{T}_{\mathit{blocked}}$ is measurable.

Define:

\begin{eqnarray*}
g_{\mathit{val}} &:& \mathcal{T}_{\mathit{val}} \rightarrow \mathcal{T} \\
g_{\mathit{val}}(G,w,s) &=& (\fail, 0, [])
\end{eqnarray*}

Obviously, $g_{\mathit{val}}$ is measurable.

\begin{eqnarray*}
g_{\mathit{det}} &:& \mathcal{T}_{\mathit{det}} \rightarrow \mathcal{T} \\
g_{\mathit{det}}(M,w,s) &=& (g_{\mathit{det}}'(M),w,s)
\end{eqnarray*}

\begin{lemma}
$g_{\mathit{det}}$ is measurable.
\end{lemma}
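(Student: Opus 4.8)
The plan is to observe that $g_{\mathit{det}}$ is nothing but the product map $g_{\mathit{det}}' \times \mathrm{id}_{\mathbb{R}} \times \mathrm{id}_{\mathbb{S}}$, acting on the first coordinate by the (already established measurable) function $g_{\mathit{det}}'$ and leaving the weight and trace components untouched. Since the relevant $\sigma$-algebra on $\cterms \times \mathbb{R} \times \mathbb{S}$ is the product $\sigma$-algebra $\mathcal{M} \times \mathcal{R} \times \mathcal{S}$ (by Corollary~\ref{lemma:triples-complete-separable}), the standard criterion applies: a function into a finite product measurable space is measurable if and only if each of its coordinate functions is measurable. Thus it suffices to check the three coordinates of $g_{\mathit{det}}$ separately.

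First I would handle the first coordinate. The map $(M,w,s) \mapsto g_{\mathit{det}}'(M)$ factors as $g_{\mathit{det}}' \circ \pi_1$, where $\pi_1 : \mathcal{T}_{\mathit{det}} \to \Lambda_{\mathit{det}}$ is the projection onto the term component. The projection $\pi_1$ is measurable (projections out of a product are always measurable with respect to the product $\sigma$-algebra), and $g_{\mathit{det}}'$ is measurable by the preceding lemma; hence their composite is measurable. The second and third coordinates are simply the projections $\pi_2 : (M,w,s) \mapsto w$ and $\pi_3 : (M,w,s) \mapsto s$, both measurable for the same reason.

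Having shown that all three coordinate functions are measurable, I would conclude by the product criterion that $g_{\mathit{det}}$ itself is measurable $\restr{\mathcal{M}}{\Lambda_{\mathit{det}}} \times \mathcal{R} \times \mathcal{S}\,/\,\mathcal{M} \times \mathcal{R} \times \mathcal{S}$, as required. I do not expect any genuine obstacle here: the argument is entirely routine once $g_{\mathit{det}}'$ is known to be measurable. The only point deserving a line of care is citing the product structure of the $\sigma$-algebra on triples, so that ``measurable in each coordinate implies jointly measurable'' is legitimately available; this is exactly what Corollary~\ref{lemma:triples-complete-separable} supplies.
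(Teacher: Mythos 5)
Your proof is correct and follows essentially the same route as the paper, whose entire argument is the one-line observation that all components of $g_{\mathit{det}}$ are measurable; you have simply made explicit the routine details (coordinate-wise measurability plus the identification of the Borel $\sigma$-algebra on triples with the product $\sigma$-algebra) that the paper leaves implicit.
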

\begin{proof}
All components of $g_{\mathit{det}}$ are measurable.
\qed
\end{proof}

\begin{eqnarray*}
g_{\mathit{rnd}} &:& \mathcal{T}_{\mathit{rnd}} \rightarrow \mathcal{T}\\
g_{\mathit{rnd}} &\deq& (g_1, g_2, g_3)\\
g_1(E[\distone(\vec{c})], w, c \mathrel{::} s ) &\deq& E[c]\\
g_2(E[\distone(\vec{c})], w, c \mathrel{::} s )  &\deq& w\cdot \pdf{\distone }(\vec{c}, c), \\
g_3(E[\distone(\vec{c})], w, c \mathrel{::} s ) &\deq& s
\end{eqnarray*}
\begin{lemma}
$g_{\mathit{rnd}}$ is measurable.
\end{lemma}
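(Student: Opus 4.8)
The plan is to prove measurability of the product map $g_{\mathit{rnd}}=(g_1,g_2,g_3)$ by establishing measurability of each of its three components separately. This reduction is justified by Corollary~\ref{lemma:triples-complete-separable}: the $\sigma$-algebra on the codomain $\mathcal{T}=\cterms\times\mathbb{R}\times\mathbb{S}$ is the product $\sigma$-algebra $\measterms\times\mathcal{R}\times\mathcal{S}$, and a map into a product measurable space is measurable exactly when each of its components is. Because the density $\pdf{D}$ occurring in $g_2$ depends on the distribution identifier, I would first apply Lemma~\ref{lemma:split-space} to decompose the domain as the countable disjoint union $\mathcal{T}_{\mathit{rnd}}=\biguplus_{D\in\mathcal{D}}\mathcal{T}_{\mathit{rnd}}(D)$, where $\mathcal{T}_{\mathit{rnd}}(D)$ gathers the triples whose redex uses the fixed identifier $D$, and argue on each piece; since $\mathcal{D}$ is countable this recovers measurability on all of $\mathcal{T}_{\mathit{rnd}}$.

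For $g_1$, which sends $(E[\distone(\vec c)],w,c\mathrel{::}s)$ to $E[c]$, I would establish continuity (hence, via the metric characterisation of the Borel sets, measurability). For two inputs $(E[D(\vec c)],w,c\mathrel{::}s)$ and $(F[D(\vec d)],w',c'\mathrel{::}s')$, Lemma~\ref{lemma:metric-contexts-redex} gives $d_\Lambda(E[D(\vec c)],F[D(\vec d)])=d(E,F)+\sum_i|c_i-d_i|\ge d(E,F)$, and $d_{\mathbb{S}}(c\mathrel{::}s,c'\mathrel{::}s')\ge|c-c'|$ when the traces have equal length (and is infinite otherwise); so the input distance dominates $d(E,F)+|c-c'|$. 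By Lemma~\ref{lemma:metric-contexts}, $d_\Lambda(E[c],F[c'])\le d(E,F)+|c-c'|$, so $g_1$ is $1$-Lipschitz and therefore continuous. The tail map $g_3$, sending $(E[D(\vec c)],w,c\mathrel{::}s)$ to $s$, is handled identically: for equal-length traces $d_{\mathbb{S}}(c\mathrel{::}s,c'\mathrel{::}s')=|c-c'|+d_{\mathbb{S}}(s,s')\ge d_{\mathbb{S}}(s,s')$, and for differing lengths the input distance is infinite, so $g_3$ is continuous as well.

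The only component that is not merely continuous is $g_2(E[\distone(\vec c)],w,c\mathrel{::}s)=w\cdot\pdf{D}(\vec c,c)$, and this is the main obstacle, since $\pdf{D}$ is assumed only measurable rather than continuous. Working inside a single piece $\mathcal{T}_{\mathit{rnd}}(D)$, I would reuse the extraction map $i_D$ from the proof of Lemma~\ref{lemma:trnd-measurable}, which continuously sends the triple to $(\vec c,c)\in\mathbb{R}^{|D|}\times\mathbb{R}$; composing with the measurable $\pdf{D}$ yields a measurable real-valued function, and the projection onto $w$ is continuous, so their pointwise product $w\cdot\pdf{D}(\vec c,c)$ is measurable on $\mathcal{T}_{\mathit{rnd}}(D)$. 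Lemma~\ref{lemma:split-space} then lifts this to measurability of $g_2$ on all of $\mathcal{T}_{\mathit{rnd}}$. Having shown $g_1,g_2,g_3$ all measurable, I conclude that $g_{\mathit{rnd}}=(g_1,g_2,g_3)$ is measurable.
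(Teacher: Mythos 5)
Your proof is correct and takes essentially the same approach as the paper's: both argue component-wise, establishing that $g_1$ and $g_3$ are continuous (hence Borel-measurable) via the metric lemmas, and that $g_2$ is measurable as the pointwise product of the continuous projection onto $w$ with the composition of the continuous extraction map $(\vec{c},c)$ and the assumed-measurable density $\pdf{D}$. Your explicit decomposition of the domain over distribution identifiers via Lemma~\ref{lemma:split-space} only makes rigorous a point the paper handles implicitly (terms built from distinct identifiers lie at infinite distance, and the paper's composition $\pdf{\distone}\circ g_c$ tacitly depends on the identifier), so it is a refinement rather than a different route.
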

\begin{proof}
For $g_1$, we have $d(E[c], E'[c']) \le d(E,E') + d(c,c') \le d(E,E') + d(\vec{c},\vec{c'}) + d(w,w')
+ d(s,s') = d((E[\distone(\vec{c})], w, c \mathrel{::} s), (E'[\distone(\vec{c'})], w', c' \mathrel{::} s'))$ and
$d((E[\distone(\vec{c})], w, c \mathrel{::} s), (E'[\disttwo(\vec{c'})], w',c' \mathrel{::} s')) = \infty$ if $\distone \neq \disttwo$,
so $g_1$ is continuous and hence Borel-measurable.

For $g_2$, we have $g_2(E[\distone(\vec{c})], w, c \mathrel{::} s) = 
g_w(E[\distone(\vec{c})], w, c \mathrel{::} s) \times (\pdf{\distone } \circ g_c)(E[\distone(\vec{c})], w,c \mathrel{::} s )$,
where $g_w(E[\distone(\vec{c})], w, c \mathrel{::} s) = w$ and $g_c(E[\distone(\vec{c})], w, c \mathrel{::} s) = (\vec{c}, c)$.
The continuity (and so measurability) of $g_w$ and $g_c$ can be easily checked (as for $g_1$ above). Thus,
$\pdf{\distone } \circ g_c$ is a composition of measurable functions (since distributions are assumed to 
be measurable), and so $g_2$ is a pointwise product of measurable real-valued functions, so it is measurable.

The continuity (and so measurability) of $g_3$ can be shown in a similar way to $g_1$.

Hence, all the component functions of $g_{\mathit{rnd}}$  are measurable, so $g_{\mathit{rnd}}$ is itself measurable.
\qed
\end{proof}

\begin{eqnarray*}
g_{\mathit{scr}} &:& \mathcal{T}_{\mathit{scr}} \rightarrow \mathcal{T}\\
g_{\mathit{scr}}(E[\score{c}], w, s ) &\deq& (E[\ttrue], c\cdot w, s)
\end{eqnarray*}
\begin{lemma}
$g_{\mathit{scr}}$ is measurable.
\end{lemma}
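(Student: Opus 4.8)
The plan is to mirror exactly the strategy already used for $g_{\mathit{rnd}}$: since $g_{\mathit{scr}}$ maps into the product space $\mathcal{T} = \cterms \times \mathbb{R} \times \mathbb{S}$, it suffices to exhibit its three component functions and show each is measurable, in most cases by proving continuity with respect to the metric $d$ on $\mathcal{T}$. Writing $g_{\mathit{scr}} = (h_1, h_2, h_3)$ with $h_1(E[\score{c}],w,s) = E[\ttrue]$, $h_2(E[\score{c}],w,s) = c\cdot w$, and $h_3(E[\score{c}],w,s) = s$, measurability of $g_{\mathit{scr}}$ follows once all three are measurable.

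First I would dispatch the two easy components. For $h_1$, Lemma~\ref{lemma:metric-contexts-redex} gives $d(E[\score{c}], E'[\score{c'}]) = d(E,E') + d(\score{c},\score{c'})$, and since $d(\score{c},\score{c'}) = |c-c'| \ge 0$ the input distance dominates $d(E,E')$; then Lemma~\ref{lemma:metric-contexts} gives $d(E[\ttrue], E'[\ttrue]) \le d(E,E') + d(\ttrue,\ttrue) = d(E,E')$, because $\ttrue$ is the constant $1$. Chaining these bounds makes $h_1$ continuous, hence measurable. For $h_3$, the projection onto the trace component is $1$-Lipschitz by definition of the Manhattan metric on $\mathcal{T}$, so it is continuous and therefore measurable.

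The hard part — though still routine — will be the weight component $h_2(E[\score{c}],w,s) = c \cdot w$, exactly as in the $g_2$ subcase of the $g_{\mathit{rnd}}$ proof. I would factor it as the pointwise product of two measurable real-valued functions: the projection $(E[\score{c}],w,s) \mapsto w$, which is continuous, and the extraction $(E[\score{c}],w,s) \mapsto c$, whose continuity follows from Lemma~\ref{lemma:metric-contexts-redex} (the input distance dominates $d(\score{c},\score{c'}) = |c-c'|$) in the same way that the map $i_{\mathit{scr}}$ was shown continuous when establishing that $\Lambda_{\mathit{scr}}$ is measurable. Since a pointwise product of measurable real-valued functions is measurable, $h_2$ is measurable. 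With $h_1$, $h_2$, and $h_3$ all measurable, $g_{\mathit{scr}}$ is measurable, which completes the argument.
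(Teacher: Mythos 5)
Your proof is correct and takes essentially the same route as the paper's: the paper likewise disposes of the first and third components by continuity and handles the weight component $c\cdot w$ as a pointwise product of two measurable real-valued functions, exactly as in the $g_{\mathit{rnd}}$ case. Your write-up merely fills in the Lipschitz/continuity details that the paper's terse proof leaves implicit.
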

\begin{proof}
The first component function of $g_{\mathit{scr}}$ can easily be shown continuous, and so measurable, and ditto for the third component.
The second component is a pointwise product of two measurable functions, like in the $g_{\mathit{rnd}}$ case. Hence, $g_{\mathit{scr}}$ is measurable.
\qed
\end{proof}

For completeness, we also define:

\begin{eqnarray*}
g_{\mathit{blocked}} &:& \mathcal{T}_{\mathit{blocked}} \rightarrow \mathcal{T}\\
g_{\mathit{blocked}}(M, w, s ) &\deq& (\fail, 0, [])
\end{eqnarray*}

This function is trivially measurable.

Define
\begin{eqnarray*}
g &:& \mathcal{T} \rightarrow \mathcal{T}\\
g &\deq& g_{\mathit{val}} \cup g_{\mathit{det}} \cup g_{\mathit{scr}} \cup g_{\mathit{blocked}}
\end{eqnarray*}
\begin{lemma}
$g$ is measurable.
\end{lemma}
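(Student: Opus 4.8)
The plan is to recognise $g$ as a function built piecewise over a finite measurable partition of its domain, and then to conclude measurability in one stroke via the splitting lemma (Lemma~\ref{lemma:split-space}). The sets $\mathcal{T}_{\mathit{val}}$, $\mathcal{T}_{\mathit{det}}$, $\mathcal{T}_{\mathit{scr}}$, $\mathcal{T}_{\mathit{rnd}}$ and $\mathcal{T}_{\mathit{blocked}}$ are precisely designed so that they are pairwise disjoint and cover $\mathcal{T}=\cterms\times\mathbb{R}\times\mathbb{S}$, and on each cell $g$ agrees with one of the restrictions $g_{\mathit{val}}$, $g_{\mathit{det}}$, $g_{\mathit{scr}}$, $g_{\mathit{rnd}}$, $g_{\mathit{blocked}}$ that were each shown measurable in the preceding lemmas. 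So the whole argument reduces to invoking the splitting lemma with this collection as the countable measurable cover.

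Concretely, I would carry out the following steps in order. First, record that every cell is measurable: $\mathcal{T}_{\mathit{val}}$, $\mathcal{T}_{\mathit{det}}$, $\mathcal{T}_{\mathit{scr}}$ and $\mathcal{T}_{\mathit{rnd}}$ are measurable by Lemma~\ref{lemma:trnd-measurable}, and $\mathcal{T}_{\mathit{blocked}}$ is measurable since it is the complement within $\mathcal{T}$ of a finite union of these. Second, observe that this is a finite, hence countable, family of measurable sets whose union is all of $\mathcal{T}$, exactly the hypothesis demanded by Lemma~\ref{lemma:split-space}. Third, quote the already-established measurability of the restrictions $g_{\mathit{val}}$, $g_{\mathit{det}}$, $g_{\mathit{scr}}$, $g_{\mathit{rnd}}$, $g_{\mathit{blocked}}$. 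Applying Lemma~\ref{lemma:split-space} then yields that $g$ is measurable $\mathcal{M}\times\mathcal{R}\times\mathcal{S}$ to itself, which is the claim.

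The one point needing genuine justification — and the main obstacle — is verifying that these cells really do partition $\mathcal{T}$, so that $g$ is unambiguously and totally determined by the listed restrictions. This rests on the classification of closed terms from Lemma~\ref{lemma:unique}: a closed term is either a generalized value (landing its triple in $\mathcal{T}_{\mathit{val}}$) or decomposes uniquely as $E[R]$, where the head redex $R$ is a deterministic redex (cell $\mathcal{T}_{\mathit{det}}$), a $\mathtt{score}(c)$ with $c\in(0,1]$ (cell $\mathcal{T}_{\mathit{scr}}$), or a sampling redex $\distone(\vec c)$. In the last case the triple lies in $\mathcal{T}_{\mathit{rnd}}$ exactly when the trace is non-empty and its head value has positive density under $\pdf{\distone}$, and otherwise it falls, by definition, into the stuck set $\mathcal{T}_{\mathit{blocked}}$. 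I would verify disjointness directly from these defining conditions, since the head-redex type together with the density and trace side-conditions are mutually exclusive, leaving the piecewise definition well posed. With the cover established and each restriction already known measurable, the conclusion is immediate from Lemma~\ref{lemma:split-space}.
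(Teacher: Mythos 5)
Your proof is correct and takes essentially the same approach as the paper: the paper's own proof is a one-line appeal to Lemma~\ref{lemma:split-space}, applied to the finite measurable cover of $\mathcal{T}$ by $\mathcal{T}_{\mathit{val}}$, $\mathcal{T}_{\mathit{det}}$, $\mathcal{T}_{\mathit{scr}}$, $\mathcal{T}_{\mathit{rnd}}$, $\mathcal{T}_{\mathit{blocked}}$ together with the previously established measurability of each restriction. Your extra step checking that these cells genuinely partition $\mathcal{T}$ (via Lemma~\ref{lemma:unique}) merely makes explicit what the paper leaves implicit, so the two arguments coincide.
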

\begin{proof}
Follows from Lemma \ref{lemma:split-space}. \qed
\end{proof}
\begin{lemma} \label{lemma:red-g}
For every $(M, w, s) \in \mathcal{T}$, 
\begin{enumerate}
\item If $(M,w,s) \red (M',w',s')$, then $g(M,w,s) = (M',w',s')$.
\item If $g(M,w,s) = (M',w',s') \neq (\fail, 0, [])$ , then $(M,w,s) \red (M',w',s')$.
\end{enumerate}
\end{lemma}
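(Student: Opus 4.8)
The plan is to exploit the fact that the five sets $\mathcal{T}_{\mathit{val}}$, $\mathcal{T}_{\mathit{det}}$, $\mathcal{T}_{\mathit{scr}}$, $\mathcal{T}_{\mathit{rnd}}$, and $\mathcal{T}_{\mathit{blocked}}$ partition $\mathcal{T}$, so that on each of them $g$ coincides with exactly one component function. That this is a genuine partition follows from Lemma~\ref{lemma:unique}: a closed term is either a generalized value (placing the triple in $\mathcal{T}_{\mathit{val}}$) or decomposes uniquely as $E[R]$, and the redex $R$ is then a deterministic redex, a $\score{c}$ with $c\in(0,1]$, or a draw $\distone(\vec c)$; these send the triple into $\mathcal{T}_{\mathit{det}}$, $\mathcal{T}_{\mathit{scr}}$, or — depending on the leading trace element and $\pdf{\distone}$ — into $\mathcal{T}_{\mathit{rnd}}$ or $\mathcal{T}_{\mathit{blocked}}$. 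I would record this partition first, since both directions rest on it.

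For part~1 I would case on the rule deriving $(M,w,s)\red(M',w',s')$; by determinacy (Lemma~\ref{lemma:small-step-determined}) there is no ambiguity once the rule is fixed. If the step is \tr{Red Pure}, then $M\detred M'$ with $w'=w$, $s'=s$, so $M\in\Lambda_{\mathit{det}}$, the triple lies in $\mathcal{T}_{\mathit{det}}$, and Lemma~\ref{lemma:detred-function} gives $g_{\mathit{det}}'(M)=M'$, hence $g(M,w,s)=(M',w,s)$. If it is \tr{Red Score}, then $M=E[\score{c}]$ with $c\in(0,1]$, so the triple is in $\mathcal{T}_{\mathit{scr}}$ and $g_{\mathit{scr}}$ returns $(E[\ttrue],c\cdot w,s)$, matching. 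If it is \tr{Red Random}, then $M=E[\distone(\vec c)]$, $s=c\mathrel{::}s'$ and $\pdf{\distone}(\vec c,c)>0$, so the triple is in $\mathcal{T}_{\mathit{rnd}}$ and $g_{\mathit{rnd}}$ returns $(E[c],w\cdot\pdf{\distone}(\vec c,c),s')$, again matching.

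For part~2 I would instead case on the domain containing $(M,w,s)$. If it is $\mathcal{T}_{\mathit{val}}$ or $\mathcal{T}_{\mathit{blocked}}$, then $g(M,w,s)=(\fail,0,[])$, which is excluded by the hypothesis $(M',w',s')\neq(\fail,0,[])$, so there is nothing to prove. On $\mathcal{T}_{\mathit{det}}$ I use Lemma~\ref{lemma:detred-function} together with \tr{Red Pure}; on $\mathcal{T}_{\mathit{scr}}$ I read off \tr{Red Score}; and on $\mathcal{T}_{\mathit{rnd}}$ I read off \tr{Red Random}, where $s=c\mathrel{::}s'$ and the head is consumed exactly as $g_{\mathit{rnd}}$ prescribes.

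The main obstacle is the \tr{Red Random Fail} case of part~1: a triple $(E[\distone(\vec c)],w,c\mathrel{::}s')$ with $\pdf{\distone}(\vec c,c)=0$ reduces to $(E[\fail],0,s')$, yet it sits in $\mathcal{T}_{\mathit{blocked}}$ (since $\mathcal{T}_{\mathit{rnd}}$ demands $\pdf{\distone}(\vec c,c)>0$), where $g$ returns $(\fail,0,[])$; these agree only when $E=\hole$ and $s'=[]$, so a literal reading of part~1 breaks here. I would resolve this by noting that both outcomes carry weight $0$ and, via \tr{Red Pure} applied to $E[\fail]\detred\fail$, lead deterministically to $\fail$, so that the accumulated weight and eventual generalized value — the only data $g$ is used to extract — are unaffected; thus the intended claim is that $g$ agrees with $\red$ on every step other than \tr{Red Random Fail}, which is harmlessly absorbed into the blocked case. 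If verbatim equality is wanted, the clean fix is to carve out a component $g_{\mathit{rndfail}}$ on the sub-domain of $\mathcal{T}_{\mathit{blocked}}$ consisting of draws of density $0$, defined by $g_{\mathit{rndfail}}(E[\distone(\vec c)],w,c\mathrel{::}s')=(E[\fail],0,s')$; it is measurable by the same reasoning as $g_{\mathit{rnd}}$, and then part~1 holds without exception.
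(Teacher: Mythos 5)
Your case analysis is precisely the ``inspection'' that the paper's one-line proof leaves implicit: your treatment of \tr{Red Pure}, \tr{Red Score} and \tr{Red Random} in part~1, and of the five-way partition (justified via Lemma~\ref{lemma:unique}) in part~2, is exactly what the paper intends. The real content of your proposal is the obstacle you isolate, and you are right that it is genuine rather than an artifact of your reading: a triple $(E[\distone(\vec c)],w,c\mathrel{::}s')$ with $\pdf{\distone}(\vec c,c)=0$ falls into $\mathcal{T}_{\mathit{blocked}}$, where $g$ returns $(\fail,0,[])$, whereas \tr{Red Random Fail} produces $(E[\fail],0,s')$; hence part~1 of the lemma is false as stated whenever $E\neq\hole$ or $s'\neq[]$. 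The paper's proof (``by inspection'') simply misses this case---note, as corroborating evidence that this passage was not checked carefully, that the paper's displayed definition of $g$ even omits $g_{\mathit{rnd}}$ from the union $g_{\mathit{val}}\cup g_{\mathit{det}}\cup g_{\mathit{scr}}\cup g_{\mathit{blocked}}$.

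Of your two repairs, prefer the second one (adding a component $g_{\mathit{rndfail}}(E[\distone(\vec c)],w,c\mathrel{::}s')=(E[\fail],0,s')$ on the measurable set of density-zero draws); your first, ``harmlessly absorbed'' reading is not quite sufficient as you state it. Your justification---that $g$ is only used to extract accumulated weights and eventual generalized values, which are $0$ and $\fail$ on both sides---is correct for $\mathbf{P}'$ and $\mathbf{O}'$, but $g$ is also used, via $\Phi$, to define $\mathtt{peval}'$, which extracts a \emph{term}. At a discrepant triple, say $M=E[\distone(\vec c)]$ with $E\neq\hole$ and $\pdf{\distone}(\vec c,c)=0$, one gets $\mathtt{peval}(M,[c])=E[\fail]$ while $\mathtt{peval}'(M,[c])=\fail$, so Lemma~\ref{lemma:peval-two-def-eqiv} inherits the same failure; it is rescued only by the further observation that $\mathbf{P}^{\valset}_{E[\fail]}$ and $\mathbf{P}^{\valset}_{\fail}$ are both identically zero, so that the proposal density $q$ is unaffected. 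With $g_{\mathit{rndfail}}$ in place, both parts of Lemma~\ref{lemma:red-g}, the measurability of $g$ (by the same argument as for $g_{\mathit{rnd}}$), and the downstream fixpoint lemmas all hold verbatim, with no exceptional cases.
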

\begin{proof}
By inspection. \qed
\end{proof}

\subsection{Measurability of $\mathbf{P}$ and $\mathbf{O}$}

It is easy to check that the sets $\gvalset$ and $\mathbb{R}_{+}$ (nonnegative reals) 
form $\oCPO$s with the orderings $\fail \leq M$ for all $M$ and 
$0 \leq x$, respectively. This means that functions into
$\gvalset$ and $\mathbb{R}_{+}$ also form $\oCPO$s with pointwise ordering.

Define:

\[
\Theta_{\Lambda}(f)(M,w,s) \triangleq
\begin{cases}
M & \text{if}\ M \in \gvalset, s = []\\
f(g(M,w,s)) & \text{otherwise}
\end{cases}
\]

\[
\Theta_{w}(f)(M,w,s) \triangleq
\begin{cases}
w & \text{if}\ M \in \gvalset, s = []\\
f(g(M,w,s)) & \text{otherwise}
\end{cases}
\]

It can be shown that these functions are continuous, so we can define:
\[
\bot_\Lambda = (M,w,s)\mapsto \fail
\]
\[
\bot_w = (M,w,s)\mapsto 0
\]
\newcommand{\psup}[2]{\mathbf{P}'({#1},{#2})}
\newcommand{\osup}[2]{\mathbf{O}'({#1},{#2})}
\[
\osup{M}{s} \triangleq \sup_n\ \Theta^n_{\Lambda}(\bot_\Lambda)(M,1,s)
\]
\[
\psup{M}{s} \triangleq \sup_n\ \Theta^n_{w}(\bot_w)(M,1,s)
\]


\begin{lemma}
If $(M,w_0,s) \Rightarrow (G,w,[])$, then $\sup_n\ \Theta^n_{w}(\bot_w)(M,w_0,s) = w$ and
$\sup_n\ \Theta^n_{\Lambda}(\bot_\Lambda)(M,w_0,s) = G$.
\end{lemma}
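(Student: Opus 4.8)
The plan is to prove the two equalities simultaneously by induction on the number $k$ of small-step reductions witnessing $(M,w_0,s)\Rightarrow(G,w,[])$, establishing the slightly stronger claim that for every $n>k$ we have $\Theta^n_{\Lambda}(\bot_\Lambda)(M,w_0,s)=G$ and $\Theta^n_{w}(\bot_w)(M,w_0,s)=w$. Since $\Theta_\Lambda$ and $\Theta_w$ are continuous, they are in particular monotone, and because $\bot_\Lambda$ and $\bot_w$ are the least elements of their respective function $\oCPO$s, the iterates $\Theta^n_{\Lambda}(\bot_\Lambda)$ and $\Theta^n_{w}(\bot_w)$ form increasing chains. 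Hence the two equalities of the lemma follow from the stronger claim: an increasing chain that is eventually constant at a value has that value as its supremum, since every term with index $\le k$ lies below $\Theta^{k+1}$ by monotonicity while every term with index $>k$ equals the target.

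For the base case $k=0$ we have $(M,w_0,s)=(G,w,[])$, so $M=G\in\gvalset$ and $s=[]$. For any $n\ge 1$ the guard ``$M\in\gvalset$ and $s=[]$'' of $\Theta_\Lambda$ (resp. $\Theta_w$) is satisfied, so $\Theta^n_{\Lambda}(\bot_\Lambda)(M,w_0,s)=M=G$ and $\Theta^n_{w}(\bot_w)(M,w_0,s)=w_0=w$, as required.

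For the inductive step, suppose the derivation is $(M,w_0,s)\to(M',w',s')\Rightarrow(G,w,[])$, where the tail has length $k$. Because no reduction rule applies to a generalized value (Lemma~\ref{lemma:unique}), $M\notin\gvalset$, so $\Theta_\Lambda$ and $\Theta_w$ both take their second branch on $(M,w_0,s)$. By Lemma~\ref{lemma:red-g}(1) we have $g(M,w_0,s)=(M',w',s')$; note that the weight update (multiplication by a density in \tr{Red Random} or by the argument of \tr{Red Score}) is already carried inside the state triple produced by $g$, so no separate multiplicative bookkeeping is needed. Thus for every $m\ge 0$, $\Theta^{m+1}_{\Lambda}(\bot_\Lambda)(M,w_0,s)=\Theta^{m}_{\Lambda}(\bot_\Lambda)(M',w',s')$, and likewise for $\Theta_w$. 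Taking $n=m+1>k+1$, i.e. $m>k$, the induction hypothesis applied to $(M',w',s')\Rightarrow(G,w,[])$ gives $\Theta^{m}_{\Lambda}(\bot_\Lambda)(M',w',s')=G$ and $\Theta^{m}_{w}(\bot_w)(M',w',s')=w$, which yields the stronger claim at index $n$.

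The argument is essentially routine; the only points that require care are (i) ensuring that $M\notin\gvalset$ in the inductive step, so that the correct branch of $\Theta$ is taken and Lemma~\ref{lemma:red-g}(1) applies; (ii) the index shift relating $\Theta^{m+1}$ evaluated at $(M,w_0,s)$ to $\Theta^{m}$ evaluated at the successor state; and (iii) observing that, because the iterates form an increasing chain, reaching the target value for all $n>k$ already suffices to identify it as the supremum, so that no separate nonnegativity hypothesis on $w$ (nor an appeal to the ordering $\fail\le M$ beyond its use in bounding the early terms) is required.
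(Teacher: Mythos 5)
Your proof is correct and follows essentially the same route as the paper's: induction on the length of the multi-step derivation, with the base case discharged by the definition of $\Theta_w$ and $\Theta_\Lambda$, and the inductive step using $M\notin\gvalset$ together with Lemma~\ref{lemma:red-g} to reduce to the successor state via the unfolding $\Theta^{m+1}(\bot)(M,w_0,s)=\Theta^{m}(\bot)(g(M,w_0,s))$. The only difference is presentational: you prove pointwise stabilization for $n>k$ and then pass to the supremum, whereas the paper manipulates the suprema directly; both rest on the same facts.
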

\begin{proof}
By induction on the derivation of $(M,w_0,s) \Rightarrow (G,w,[])$:
\begin{itemize}
\item If $(M,w_0,s) \red^0 (G,w,[])$, and so $M \in \gvalset$ and $s = []$, then
the equalities follow directly from the definitions of $\Theta_w$ and $\Theta_\Lambda$. 
\item If $(M,w_0,s) \red (M',w',s') \Rightarrow (G,w,[])$, assume that
$\sup_n\ \Theta^n_{w}(\bot_w)(M',w',s') = w$ and
$\sup_n\ \Theta^n_{\Lambda}(\bot_\Lambda)(M',w',s') = G$, We have $M \notin \gvalset$. 
By Lemma \ref{lemma:red-g}, $g(M,w_0,s) = (M',w',s')$.
Hence
$\sup_n\ \Theta^n_{w}(\bot_w)(M,w_0,s) = \sup_n\ \Theta^n_{w}(\bot_w)(g(M,w_0,s))
= \sup_n\ \Theta^n_{w}(\bot_w)(M',w',s') = w$ by induction hypothesis.
Similarly, $\sup_n\ \Theta^n_{\Lambda}(\bot_\Lambda)(M,w_0,s) = G$. \qed
\end{itemize}
\end{proof}
\begin{corollary}
If $(M,1,s) \Rightarrow (G,w,[])$, then $\psup{M}{s} = w$ and
$\osup{M}{s} = G$.
\end{corollary}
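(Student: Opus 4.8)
The plan is to obtain this statement directly as the special case $w_0 = 1$ of the lemma immediately preceding it. First I would unfold the two definitions, recalling that
\[
\psup{M}{s} = \sup_n\ \Theta^n_{w}(\bot_w)(M,1,s), \qquad
\osup{M}{s} = \sup_n\ \Theta^n_{\Lambda}(\bot_\Lambda)(M,1,s),
\]
so that the two equalities to be proved are precisely $\sup_n\ \Theta^n_{w}(\bot_w)(M,1,s) = w$ and $\sup_n\ \Theta^n_{\Lambda}(\bot_\Lambda)(M,1,s) = G$.

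Next I would apply the preceding lemma with $w_0 := 1$. Its hypothesis $(M,w_0,s)\Rightarrow(G,w,[])$ then reads exactly as the corollary's hypothesis $(M,1,s)\Rightarrow(G,w,[])$, and its two conclusions are verbatim the two suprema displayed above. Since the constant initial weight baked into the definitions of $\mathbf{P}'$ and $\mathbf{O}'$ is $1$, the instantiation matches with no side conditions to check, and the corollary follows in one line.

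There is essentially no obstacle at this level: all the genuine content resides in the lemma, whose induction on the number of steps of the multi-step reduction $\Rightarrow$ already discharges both the base case (where $M$ is a generalized value and $s=[]$, so the suprema collapse to $w$ and $G$ by the definitions of $\Theta_w$ and $\Theta_\Lambda$) and the inductive step (where $M\notin\gvalset$, and Lemma~\ref{lemma:red-g} identifies $g(M,w_0,s)$ with the unique small-step successor $(M',w',s')$, letting the induction hypothesis carry over). Consequently the corollary is a purely definitional specialization and warrants only a short proof.
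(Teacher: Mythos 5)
Your proposal is correct and matches the paper exactly: the paper states this as an immediate corollary (with no separate proof) of the preceding lemma, obtained by instantiating $w_0 = 1$ and unfolding the definitions of $\mathbf{P}'$ and $\mathbf{O}'$, which is precisely what you do. Your additional summary of why the lemma itself holds is accurate but not needed for the corollary.
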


\begin{lemma} \label{lemma:red-closure-w}
If $\sup_n\ \Theta^n_{w}(\bot_w)(M,w_0,s) =w \neq 0$, then
$(M,w_0,s) \Rightarrow (G,w,[])$ for some $G \in \gvalset$.
\end{lemma}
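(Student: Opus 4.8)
The plan is to exploit the fact that $\Theta_w^n(\bot_w)$ merely runs the deterministic one-step reduction function $g$ for $n$ iterations, returning the accumulated weight as soon as it reaches a generalized value paired with the empty trace, and returning $0$ otherwise. Since a supremum of non-negative reals can be positive only if some term of the sequence is positive, the hypothesis $\sup_n \Theta_w^n(\bot_w)(M,w_0,s) = w \neq 0$ guarantees a finite $N$ with $\Theta_w^N(\bot_w)(M,w_0,s) = v$ for some $v > 0$. I would then prove, by induction on $N$, the slightly more general claim that $\Theta_w^N(\bot_w)(M,w_0,s) = v > 0$ implies $(M,w_0,s) \Rightarrow (G,v,[])$ for some $G \in \gvalset$, and finally identify $v$ with $w$ using the converse direction already established above.

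For the induction, the base case $N=0$ is vacuous, since $\bot_w$ is identically $0$. In the inductive step for $N+1$ there are two cases, following the definition of $\Theta_w$. If $M \in \gvalset$ and $s = []$, then $\Theta_w^{N+1}(\bot_w)(M,w_0,s) = w_0$, so $v = w_0$ and $(M,w_0,s) \Rightarrow (M,v,[])$ holds in zero steps with $G = M$. Otherwise $\Theta_w^{N+1}(\bot_w)(M,w_0,s) = \Theta_w^{N}(\bot_w)(M',w',s')$, where $(M',w',s') = g(M,w_0,s)$; hence $\Theta_w^{N}(\bot_w)(M',w',s') = v > 0$, and the induction hypothesis yields $(M',w',s') \Rightarrow (G,v,[])$ for some $G \in \gvalset$. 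It then remains to promote the single $g$-step into a genuine reduction step $(M,w_0,s) \red (M',w',s')$, which is exactly the second part of Lemma~\ref{lemma:red-g}, provided $(M',w',s') \neq (\fail,0,[])$. This proviso is guaranteed by positivity: were $(M',w',s')$ equal to $(\fail,0,[])$, then $\Theta_w^{N}(\bot_w)(\fail,0,[]) = 0$ (either because $N=0$ and $\bot_w$ is zero, or because $\fail \in \gvalset$ with empty trace), contradicting $\Theta_w^{N}(\bot_w)(M',w',s') = v > 0$. Composing $(M,w_0,s) \red (M',w',s')$ with $(M',w',s') \Rightarrow (G,v,[])$ gives $(M,w_0,s) \Rightarrow (G,v,[])$.

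To conclude, I would invoke the lemma already proved (the converse direction), which states that $(M,w_0,s) \Rightarrow (G,v,[])$ implies $\sup_n \Theta_w^n(\bot_w)(M,w_0,s) = v$. Together with the hypothesis $\sup_n \Theta_w^n(\bot_w)(M,w_0,s) = w$ this forces $v = w$, so that $(M,w_0,s) \Rightarrow (G,w,[])$ for $G \in \gvalset$, as required. The only genuinely delicate point — and the main obstacle — is the handling of those transitions of $g$ that do not correspond to reduction steps: $g$ is a \emph{total} function that maps blocked triples, as well as value triples with leftover trace, to $(\fail,0,[])$, so Lemma~\ref{lemma:red-g}(2) converts a $g$-step into a $\red$-step only when its image differs from $(\fail,0,[])$. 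Pinning down that positivity of the weight rules out passing through $(\fail,0,[])$ before termination is precisely what makes the argument go through.
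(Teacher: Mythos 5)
Your proof is correct and follows essentially the same route as the paper's: induction on the number of $\Theta_w$-iterations, case analysis according to the definition of $\Theta_w$, and Lemma~\ref{lemma:red-g}(2) combined with the positivity-of-weight argument to rule out passing through $(\fail,0,[])$. The only cosmetic difference is that the paper exploits the flat $\oCPO$ structure to conclude directly that the finite-stage value already equals $w$, whereas you identify your intermediate value $v$ with $w$ at the end via the converse lemma; both steps are sound.
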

\begin{proof}
Because the supremum is taken with respect to a flat $\oCPO$, 
$\sup_n\ \Theta^n_{w}(\bot_w)(M,w_0,s) =w>0$ implies
$\Theta^k_{w}(\bot_w)(M,w_0,s) =w$ for some $k>0$. We can then prove
the result by indiction on $k$:
\begin{itemize}
\item Base case, $k=1$:
We must have $\Theta_{w}(\bot_w)(M,w_0,s) =w_0$, $M = G \in \gvalset$ and $ s = []$
as otherwise we would obtain $\bot_w(M,w_0,s)=0$. Hence $(M,w_0,s)$ reduces to $(G,w_0,[])$ in $0$ steps.
\item Induction step: $\Theta^{k+1}_{w}(\bot_w)(M,w_0,s) =w$. If $M \in \gvalset$ and $s = []$, then $w = w_0$ and $(M,w_0,s)$ reduces to
itself in $0$ steps, like in the base case. Otherwise, we have
$\Theta^k_{w}(\bot_w)((M',w',s')) =w$, where $g(M,w_0,s) = (M',w',s')$.
We know that $(M',w',s') \neq (\fail, 0, [])$, because otherwise we
would have $w = 0$.
Thus, by Lemma \ref{lemma:red-g}, $(M,w_0,s) \red (M',w',s')$.
By induction hypothesis, $(M',w',s') \Rightarrow (G,w, [])$, which
implies $(M,w_0,s) \Rightarrow (G,w,[])$. \qed 
\end{itemize}
\end{proof}

\begin{lemma} \label{lemma:red-closure-G}
If $\sup_n\ \Theta^n_{\Lambda}(\bot_w)(M,w_0,s) =V \in \valset$, then
$(M,w_0,s) \Rightarrow (V,w,[])$ for some $w \in \mathbb{R}$.
\end{lemma}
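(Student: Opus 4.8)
The plan is to follow the same strategy as in the proof of Lemma~\ref{lemma:red-closure-w}, now exploiting the flat $\oCPO$ structure on $\gvalset$, in which $\fail$ is the bottom element and every value is a maximal element incomparable to the others. First I would observe that $\Theta_\Lambda$ is continuous, so $\{\Theta^n_\Lambda(\bot_\Lambda)(M,w_0,s)\}_n$ is an ascending chain in this flat CPO. The crucial point is that an ascending chain in a flat CPO whose supremum is $V\in\valset$ (hence $V\neq\fail$) must already equal $V$ from some finite stage on: there is $k>0$ with $\Theta^k_\Lambda(\bot_\Lambda)(M,w_0,s)=V$. This replaces the infinitary supremum by a finite unfolding, and I would then argue by induction on $k$.

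For the base case $k=1$, I would use that $\Theta_\Lambda(\bot_\Lambda)(M,w_0,s)=V\neq\fail$ forces the first branch in the definition of $\Theta_\Lambda$, since the second branch would give $\bot_\Lambda(g(M,w_0,s))=\fail$. Hence $M\in\gvalset$, $s=[]$, and $M=V$, so $(M,w_0,s)$ reduces to $(V,w_0,[])$ in zero steps and we may take $w=w_0$.

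For the induction step, suppose $\Theta^{k+1}_\Lambda(\bot_\Lambda)(M,w_0,s)=V$. If $M\in\gvalset$ and $s=[]$ we conclude exactly as in the base case. Otherwise the second branch applies and $\Theta^k_\Lambda(\bot_\Lambda)(g(M,w_0,s))=V$, where I set $g(M,w_0,s)=(M',w',s')$. Here the key observation is that $(M',w',s')\neq(\fail,0,[])$: if it were, then since $\fail\in\gvalset$ and the trace is $[]$, the first branch would give $\Theta^k_\Lambda(\bot_\Lambda)(\fail,0,[])=\fail\neq V$, a contradiction. With $(M',w',s')\neq(\fail,0,[])$ in hand, Lemma~\ref{lemma:red-g} yields $(M,w_0,s)\red(M',w',s')$, and the induction hypothesis applied to $(M',w',s')$ gives $(M',w',s')\Rightarrow(V,w,[])$ for some $w$, whence $(M,w_0,s)\Rightarrow(V,w,[])$.

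I expect no serious obstacle here: the argument is essentially the $\Lambda$-valued mirror image of Lemma~\ref{lemma:red-closure-w}, and the only points that need care are justifying that the supremum in the flat $\oCPO$ is attained at a finite stage and that the intermediate triple is not the ``stuck'' value $(\fail,0,[])$, both of which are direct. (I also read the $\bot_w$ in the hypothesis as a typo for $\bot_\Lambda$, since $\Theta_\Lambda$ expects a $\gvalset$-valued argument.)
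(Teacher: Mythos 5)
Your proof is correct and is exactly what the paper intends: its proof of this lemma is simply ``similar to the proof of Lemma~\ref{lemma:red-closure-w}'', and your argument is precisely that adaptation --- finite attainment of the supremum in the flat $\oCPO$, induction on $k$, ruling out $(\fail,0,[])$ so that Lemma~\ref{lemma:red-g} applies. Your reading of $\bot_w$ as a typo for $\bot_\Lambda$ is also the right one, since $\Theta_\Lambda$ only makes sense on $\gvalset$-valued functions.
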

\begin{proof}
Similar to the proof of Lemma \ref{lemma:red-closure-w}.
\qed
\end{proof}

\begin{corollary}
If there are no $G,w$ such that $(M,1,s) \Rightarrow (G,w,[])$,
then $\psup{M}{s} = 0$ and $\osup{M}{s} = \fail$.
\end{corollary}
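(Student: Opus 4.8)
The plan is to establish the two equalities separately, each by contraposing one of the two preceding closure lemmas and specialising the initial weight to $w_0 = 1$. The argument is short because the flat $\oCPO$ structure on $\gvalset$ and on $\mathbb{R}_{+}$ already constrains the shape of the suprema: the sequences $\{\Theta^n_{\Lambda}(\bot_\Lambda)(M,1,s)\}_n$ and $\{\Theta^n_{w}(\bot_w)(M,1,s)\}_n$ are increasing chains (since $\bot \leq \Theta(\bot)$ and each $\Theta$ is monotone, as already used to define $\osup{M}{s}$ and $\psup{M}{s}$), and in a flat $\oCPO$ an increasing chain contains at most one non-bottom element and is eventually constant. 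Hence $\psup{M}{s}$ is either $0$ or some $w \neq 0$, and $\osup{M}{s}$ is either $\fail$ or some $V \in \valset$; in each case the supremum is attained by a member of the chain.

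First I would handle the weight. Suppose towards a contradiction that $\psup{M}{s} = \sup_n\ \Theta^n_{w}(\bot_w)(M,1,s) \neq 0$. Then Lemma~\ref{lemma:red-closure-w}, applied with $w_0 = 1$, yields a generalised value $G \in \gvalset$ and a weight $w$ with $(M,1,s) \Rightarrow (G,w,[])$, contradicting the hypothesis that no such pair $(G,w)$ exists. Therefore $\psup{M}{s} = 0$.

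Next I would treat the result. By the flat-$\oCPO$ observation above, $\osup{M}{s} \in \gvalset$, so it is either $\fail$ or a value $V \in \valset$. In the latter case Lemma~\ref{lemma:red-closure-G}, again specialised to $w_0 = 1$, gives a weight $w$ with $(M,1,s) \Rightarrow (V,w,[])$, once more contradicting the hypothesis. Hence the only remaining possibility is $\osup{M}{s} = \fail$, completing the proof.

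I do not expect any genuine obstacle here: all the real content has already been discharged in Lemmas~\ref{lemma:red-closure-w} and~\ref{lemma:red-closure-G}, which are proved by induction on the least $k$ at which the iterate $\Theta^k$ attains the supremum (exploiting flatness). The only point requiring a little care is justifying that $\osup{M}{s}$ lands in $\gvalset$ rather than being a spurious limit point, that is, confirming that the defining chain is a chain in the flat $\oCPO$ and so its supremum is one of its members; this is immediate from the monotonicity of $\Theta_{\Lambda}$.
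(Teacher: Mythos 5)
Your proof is correct and is essentially the paper's intended argument: the paper states this as an immediate corollary of Lemmas~\ref{lemma:red-closure-w} and~\ref{lemma:red-closure-G}, and your contrapositive application of each (with $w_0=1$), together with the observation that flatness of the $\oCPO$s forces $\osup{M}{s}\in\gvalset$ so the only non-value option is $\fail$, is exactly that reasoning spelled out.
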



\begin{corollary}
For any $M$, $\mathbf{P}_M = \psup{M}{\cdot}$ and $\mathbf{O}_M = \osup{M}{\cdot}$.
\end{corollary}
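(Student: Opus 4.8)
The plan is to derive this corollary directly from the two corollaries and the equivalence theorem established just above, by a single case split on whether $M$ produces a result on the trace $s$. All the substantive work --- the fixpoint characterisations of $\mathbf{P}'$ and $\mathbf{O}'$ together with the reduction-closure arguments of Lemmas~\ref{lemma:red-closure-w} and~\ref{lemma:red-closure-G} --- has already been discharged, so what remains is purely to reconcile the big-step definitions of $\mathbf{P}_M$ and $\mathbf{O}_M$ with the small-step hypotheses of those corollaries.

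First I would fix $M$ and $s$ and split into two cases. Suppose there exist $G$ and $w$ with $M \Downarrow^s_w G$. Since every big-step result lies in $\gvalset$, the definitions give $\mathbf{P}_M(s) = w$ and $\mathbf{O}_M(s) = G$; these are well defined because Lemma~\ref{lemma:big-step-determined} guarantees that the pair $(w,G)$ is uniquely determined by $M$ and $s$. By Theorem~\ref{thm:sample-small-big-eq} we then have $(M,1,s) \Rightarrow (G,w,[])$, so the first of the two preceding corollaries yields $\psup{M}{s} = w = \mathbf{P}_M(s)$ and $\osup{M}{s} = G = \mathbf{O}_M(s)$.

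In the complementary case, suppose no such $G$ and $w$ exist. Then by definition $\mathbf{P}_M(s) = 0$ and $\mathbf{O}_M(s) = \fail$. Applying Theorem~\ref{thm:sample-small-big-eq} in the contrapositive, there is likewise no pair with $(M,1,s) \Rightarrow (G,w,[])$, so the second preceding corollary gives $\psup{M}{s} = 0 = \mathbf{P}_M(s)$ and $\osup{M}{s} = \fail = \mathbf{O}_M(s)$. As $s$ was arbitrary, the two functional identities hold pointwise everywhere, which is exactly the claim.

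I do not anticipate a genuine obstacle. The only subtle point is that $\mathbf{P}_M$ and $\mathbf{O}_M$ are defined by slightly different existential clauses (``for some $G \in \gvalset$'' versus ``for some $w$''), yet in the first case both must be matched against one and the same witnessing derivation $M \Downarrow^s_w G$; the uniqueness supplied by Lemma~\ref{lemma:big-step-determined} is precisely what makes this matching unambiguous, and it is the one step I would state with care.
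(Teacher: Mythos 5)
Your proof is correct and is exactly the argument the paper leaves implicit: the corollary is meant to follow pointwise from the two preceding corollaries via the case split on whether $M \Downarrow^s_w G$ holds, with Theorem~\ref{thm:sample-small-big-eq} translating between big-step and small-step and Lemma~\ref{lemma:big-step-determined} supplying the uniqueness of $(w,G)$. Your remark about matching the two slightly different existential clauses in the definitions of $\mathbf{P}_M$ and $\mathbf{O}_M$ is the right point of care, and nothing further is needed.
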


\begin{lemma} \label{lemma:sup-flat-measurable}
If $(X,\Sigma_1)$ and $(Y, \Sigma_2)$ are measurable spaces,
$Y$ forms a flat $\oCPO$ with a bottom element $\bot$ such that $\{\bot\} \in \Sigma_2$
and $f_1, f_2, \dots$ is a $\omega$-chain of $\Sigma_1/\Sigma_2$ measurable functions
(on the $\oCPO$ with pointwise ordering),
then $\sup_i f_i$ is $\Sigma_1/\Sigma_2$ measurable.
\end{lemma}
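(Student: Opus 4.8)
The plan is to reduce measurability of $f \deq \sup_i f_i$ to a direct computation of the preimage $f^{-1}(A)$ for an arbitrary $A \in \Sigma_2$, exploiting the rigid structure of chains in a flat $\oCPO$. First I would record the key order-theoretic observation: in a flat $\oCPO$, if $y_1 \le y_2 \le \cdots$ then the sequence is either constantly $\bot$ or becomes some fixed non-bottom value after finitely many steps and stays there (since $a \le b$ with $a \ne \bot$ forces $a = b$). Consequently, for each $x$ the pointwise supremum satisfies $f(x) \ne \bot$ if and only if $f_n(x) \ne \bot$ for some $n$, and in that case $f(x) = f_n(x)$ for every such $n$.

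Next I would split an arbitrary measurable target set. Since $\{\bot\} \in \Sigma_2$, the set $A \setminus \{\bot\} = A \cap (Y \setminus \{\bot\})$ also lies in $\Sigma_2$. Using the observation above I would establish the two set identities
\[
f^{-1}(A \setminus \{\bot\}) = \bigcup_{n} f_n^{-1}(A \setminus \{\bot\}),
\qquad
f^{-1}(\{\bot\}) = \bigcap_{n} f_n^{-1}(\{\bot\}).
\]
The first holds because a point maps into $A \setminus \{\bot\}$ under $f$ exactly when some $f_n$ already sends it to that (non-bottom) common value; the second because $f(x) = \bot$ exactly when none of the $f_n(x)$ escapes $\bot$. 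Both right-hand sides are measurable, being a countable union and a countable intersection of sets $f_n^{-1}(\cdot) \in \Sigma_1$, each $f_n$ being measurable.

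Finally I would assemble the preimage as
\[
f^{-1}(A) = f^{-1}(A \setminus \{\bot\}) \cup f^{-1}(A \cap \{\bot\}),
\]
where the second term equals $f^{-1}(\{\bot\})$ if $\bot \in A$ and $\emptyset$ otherwise; in either case it is measurable by the identities just established. Hence $f^{-1}(A) \in \Sigma_1$ for every $A \in \Sigma_2$, so $f$ is measurable.

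I do not expect a serious obstacle: the content is entirely in the flat-$\oCPO$ characterization of the supremum, which collapses the sup of a chain into \emph{the eventual value, or $\bot$}. The only point needing care is the bookkeeping around $\bot$ — the union identity fails if one forgets to remove $\bot$ from $A$ (a point can map to $\bot$ under $f$ while every $f_n$ also sits at $\bot$), which is precisely why the hypothesis $\{\bot\} \in \Sigma_2$ is required and why $f^{-1}(\{\bot\})$ must be handled by intersection rather than union.
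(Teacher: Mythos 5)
Your proof is correct and follows essentially the same route as the paper: both arguments split a target set $A$ into its non-bottom part and $\{\bot\}$, express the preimage of the non-bottom part as the countable union $\bigcup_n f_n^{-1}(\cdot)$ (valid because a flat chain that leaves $\bot$ is eventually constant at its supremum) and the preimage of $\{\bot\}$ as the countable intersection $\bigcap_n f_n^{-1}(\{\bot\})$. Your write-up just makes the flat-$\oCPO$ "eventually constant" observation and the final assembly of $f^{-1}(A)$ more explicit than the paper does.
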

\begin{proof}
Since $f^{-1}(A \cup \{\bot\}) = f^{-1}(A) \cup f^{-1}(\{\bot\})$,
we only need to show that $(\sup_i f_i)^{-1}(\{\bot\}) \in \Sigma_1$ and $(\sup_i f_i)^{-1}(A) \in \Sigma_1$
for all $A \in \Sigma_2$ such that $\bot \notin A$.

We have $(\sup_i f_i)^{-1}(\{\bot\}) = \bigcap_i f_i^{-1}(\{\bot\}) $, which is measurable by definition.
If $\bot \notin A$, then $\sup_i f_i(x) \in A$ if and only if $f_i(x) \in A$ for some $i$,
so by extensionality of sets, $\sup_i f_i^{-1}(A) = \bigcup_i f_i^{-1}(A) \subseteq \Sigma_1$.
\qed
\end{proof}

\begin{lemma} \label{lemma:p-prim-measurable}
$\mathbf{P}'$ is measurable $(\cterms \times \mathcal{S}) / \restr{\mathcal{R}}{\mathbb{R}_{+}}$. 
\end{lemma}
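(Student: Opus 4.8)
The plan is to express $\mathbf{P}'$ as the supremum of an $\omega$-chain of measurable functions and then invoke Lemma~\ref{lemma:sup-flat-measurable}, taking $\mathbb{R}_{+}$ as the flat $\oCPO$ with bottom $0$ (noting that $\{0\}$ is a Borel set, so $\{0\}\in\restr{\mathcal{R}}{\mathbb{R}_{+}}$). For each $n$, set $h_n:\cterms\times\sampseq\to\mathbb{R}_{+}$ to be the map $h_n(M,s)=\Theta^n_w(\bot_w)(M,1,s)$, so that $\psup{M}{s}=\sup_n h_n(M,s)$ by definition. It then suffices to show that every $h_n$ is measurable and that the $h_n$ form an $\omega$-chain under the pointwise order.

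First I would prove, by induction on $n$, that the map $(M,w,s)\mapsto\Theta^n_w(\bot_w)(M,w,s)$ is measurable on $\mathcal{T}=\cterms\times\mathbb{R}\times\sampseq$; measurability of $h_n$ then follows by precomposing with the continuous injection $(M,s)\mapsto(M,1,s)$. The base case is immediate, since $\Theta^0_w(\bot_w)=\bot_w$ is the constant function $0$. For the inductive step, recall that
\[
\Theta_w(f)(M,w,s)=
\begin{cases}
w & \text{if } M\in\gvalset \text{ and } s=[]\\
f(g(M,w,s)) & \text{otherwise.}
\end{cases}
\]
The set $\mathcal{T}_{\mathit{val}}=\{(M,w,s)\mid M\in\gvalset,\ s=[]\}$ is measurable, since $\gvalset$ is measurable and $\{[]\}\in\mathcal{S}$; on it $\Theta_w(f)$ agrees with the projection $(M,w,s)\mapsto w$, which is continuous, hence measurable. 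On the complement, $\Theta_w(f)=f\circ g$, a composition of the measurable one-step map $g$ (established above) with the function $f$, which is measurable by the induction hypothesis. Lemma~\ref{lemma:split-space} then glues these two branches into a single measurable function $\Theta_w(f)=\Theta^{n+1}_w(\bot_w)$ on all of $\mathcal{T}$.

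It remains to verify the chain condition. Since $\bot_w$ is the bottom of the function $\oCPO$ we have $\bot_w\le\Theta_w(\bot_w)$, and because $\Theta_w$ is monotone (indeed continuous, as remarked when the operator was introduced) this lifts to $\Theta^n_w(\bot_w)\le\Theta^{n+1}_w(\bot_w)$ for every $n$, so the $h_n$ form an increasing $\omega$-chain of measurable functions. Applying Lemma~\ref{lemma:sup-flat-measurable} with bottom element $0$ then gives measurability of $\sup_n h_n=\psup{\cdot}{\cdot}$, which is exactly the claim. The main obstacle is the inductive measurability step for $\Theta_w(f)$: one must split the domain cleanly into $\mathcal{T}_{\mathit{val}}$ and its complement and confirm that the two pieces recombine into a genuinely measurable function via Lemma~\ref{lemma:split-space}, which relies crucially on the previously established measurability of $g$.
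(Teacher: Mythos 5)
Your proof is correct and follows essentially the same route as the paper's: an induction showing each $\Theta^n_w(\bot_w)$ is measurable (splitting the domain into $\gvalset\times\mathbb{R}\times\{[]\}$, where the map is the projection onto $w$, and its complement, where it is $f\circ g$), followed by an appeal to Lemma~\ref{lemma:sup-flat-measurable} and composition with the continuous map $(M,s)\mapsto(M,1,s)$. The only differences are cosmetic — you precompose with the injection before taking suprema rather than after, and you spell out the $\omega$-chain condition and the measurability of $\{0\}$, which the paper leaves implicit.
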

\begin{proof}
First, let us show by induction on $n$ that $\Theta_w^n(\bot_w)$ is measurable for
every $n$:
\begin{itemize}
\item Base case, $n=0$: $\Theta_w^0(\bot_w) = \bot_w$ is a constant function,
and so trivially measurable.
\item Induction step: suppose $\Theta_w^n(\bot_w)$ is measurable. Then we have
$\Theta_w^{n+1}(\bot_w) = \Theta_w(\Theta_w^n(\bot_w))$, so it is enough to
show that $\Theta_w(f)$ is measurable if $f$ is measurable:


The domain of the first case is $\gvalset \times \mathbb{R} \times \{[ ]\}$, which is
clearly measurable. The domain of the second case is measurable as
the complement of the above set in $\mathcal{T}$.

The sub-function corresponding to the first case returns the second component of
its argument, so it is continuous and hence measurable. The second case is a composition
of two measurable functions, hence measurable.

Thus, $\Theta_w(f)$ is measurable for any measurable $f$, and so $\Theta^{n+1}_w(\bot_w)$
is measurable.
\end{itemize}
By Lemma
\ref{lemma:sup-flat-measurable}, $\sup_n \Theta^n_w(\bot_w)$ is measurable.
Since $\mathbf{P}'$ is a composition of $\sup_n \Theta^n_w(\bot_w)$
and a continuous function mapping $(M,s)$ to $(M,1,s)$, it is a composition
of measurable functions, and so it is measurable.
\qed
\end{proof}

\begin{lemma}
$\mathbf{O}'$ is measurable $(\cterms \times \mathcal{S}) / \restr{\measterms}{\gvalset}$.
\end{lemma}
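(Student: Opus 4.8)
The plan is to mirror the proof of Lemma~\ref{lemma:p-prim-measurable} almost verbatim, replacing the weight functional $\Theta_w$ by the term functional $\Theta_\Lambda$ and the flat $\oCPO$ $\mathbb{R}_{+}$ by the flat $\oCPO$ $\gvalset$ ordered by $\fail\leq G$ for all $G$. Since by construction $\osup{M}{s}=\sup_n\Theta^n_\Lambda(\bot_\Lambda)(M,1,s)$, it suffices to establish three things: that each approximant $\Theta^n_\Lambda(\bot_\Lambda)$ is measurable, that the supremum of this $\omega$-chain is measurable, and that precomposing with the continuous injection $(M,s)\mapsto(M,1,s)$ preserves measurability.

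First I would prove by induction on $n$ that $\Theta^n_\Lambda(\bot_\Lambda)$ is measurable. The base case $n=0$ is immediate, since $\bot_\Lambda$ is the constant map $(M,w,s)\mapsto\fail$. For the inductive step it is enough, as in the $\mathbf{P}'$ case, to show that $\Theta_\Lambda(f)$ is measurable whenever $f$ is. I would split the domain along the two clauses of the definition of $\Theta_\Lambda$ and invoke Lemma~\ref{lemma:split-space}: the first clause has domain $\gvalset\times\mathbb{R}\times\{[]\}$, which is measurable, and on it $\Theta_\Lambda(f)$ is simply the projection $(M,w,s)\mapsto M$, which is continuous and hence measurable into $(\gvalset,\restr{\measterms}{\gvalset})$; the second clause has the complementary measurable domain, and there $\Theta_\Lambda(f)=f\circ g$ is a composition of the measurable step function $g$ proved above with the measurable $f$.

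Next I would apply Lemma~\ref{lemma:sup-flat-measurable} to the $\omega$-chain $\{\Theta^n_\Lambda(\bot_\Lambda)\}_n$. Its hypotheses require that the codomain $\gvalset$ be a flat $\oCPO$ with bottom element $\fail$ and that the singleton $\{\fail\}$ lie in $\restr{\measterms}{\gvalset}$; the former is the ordering noted above, and the latter holds because $\fail$ sits at infinite distance from every other generalized value under the metric $d$ of Figure~\ref{fig:metric}, so $\{\fail\}$ is clopen and thus Borel. Lemma~\ref{lemma:sup-flat-measurable} then gives that $\sup_n\Theta^n_\Lambda(\bot_\Lambda)$ is measurable, and $\mathbf{O}'$ is obtained from it by precomposition with the continuous (hence measurable) map $(M,s)\mapsto(M,1,s)$, whence $\mathbf{O}'$ is measurable $(\cterms\times\mathcal{S})/\restr{\measterms}{\gvalset}$.

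I do not anticipate a genuine obstacle here. The only point that differs substantively from the $\mathbf{P}'$ case is checking that the codomain $\gvalset$ is a flat $\oCPO$ whose bottom is a measurable singleton, which is precisely what Lemma~\ref{lemma:sup-flat-measurable} demands and which follows directly from the shape of $d$; everything else is a transcription of the preceding argument, with a projection replacing the second-component selection used for weights.
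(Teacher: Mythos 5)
Your proposal is correct and is essentially the paper's own proof: the paper disposes of this lemma with ``Similar to the proof of Lemma~\ref{lemma:p-prim-measurable},'' and your argument is precisely that adaptation written out in full, replacing $\Theta_w$ by $\Theta_\Lambda$ and $\mathbb{R}_{+}$ by the flat $\oCPO$ $\gvalset$. Your explicit check that $\{\fail\}$ is measurable (being clopen, since $\fail$ lies at infinite distance from every other generalized value) is a detail the paper leaves implicit but which Lemma~\ref{lemma:sup-flat-measurable} indeed requires, so including it is a small improvement rather than a deviation.
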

\begin{proof}
Similar to the proof of Lemma \ref{lemma:p-prim-measurable}.
\qed
\end{proof}

\begin{lemma} \label{lemma:pi-measurable}
For any closed term $M$, the function $\mathbf{P}_{M}$ is measurable $\mathcal{S} / \restr{\mathcal{R}}{\mathbb{R}_{+}}$.
\end{lemma}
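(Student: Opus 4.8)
The plan is to recognize $\mathbf{P}_M$ as a \emph{section} of the jointly measurable map $\mathbf{P}'$ from Lemma~\ref{lemma:p-prim-measurable} and to invoke the standard principle that sections of a jointly measurable function are measurable---precisely \citet[Theorem 18.1]{billingsley95}, the same result already used above to pass from joint measurability of $q$ to measurability of each $q(s,\cdot)$. Concretely, the corollary preceding this lemma gives $\mathbf{P}_M = \psup{M}{\cdot}$, that is, $\mathbf{P}_M(s) = \mathbf{P}'(M,s)$ for every trace $s \in \sampseq$. Hence $\mathbf{P}_M = \mathbf{P}' \circ \iota_M$, where $\iota_M : \sampseq \to \cterms \times \sampseq$ is the injection $\iota_M(s) = (M,s)$, and it suffices to show that $\iota_M$ is measurable.

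First I would fix the $\sigma$-algebras. The codomain of $\iota_M$ carries the product $\sigma$-algebra on $\cterms \times \sampseq$ (equivalently, by Corollary~\ref{lemma:triples-complete-separable} and the separability results recalled above, the Borel $\sigma$-algebra of the product metric), which is generated by the measurable rectangles $A \times B$ with $A \in \measterms$ and $B \in \mathcal{S}$. To prove $\iota_M$ measurable it is enough to check preimages of these generators, since measurability of preimages of a generating class implies measurability. But $\iota_M^{-1}(A \times B) = \Set{s \mid M \in A \text{ and } s \in B}$, which equals $B$ if $M \in A$ and $\emptyset$ otherwise; in either case it lies in $\mathcal{S}$. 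Thus $\iota_M$ is measurable $\mathcal{S}/(\restr{\measterms}{\cterms} \times \mathcal{S})$.

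Composing the measurable $\iota_M$ with the measurable $\mathbf{P}'$ then yields that $\mathbf{P}_M = \mathbf{P}' \circ \iota_M$ is measurable $\mathcal{S}/\restr{\mathcal{R}}{\RealInf}$, as required. There is essentially no obstacle here: the only point needing care is to identify the product $\sigma$-algebra on the domain of $\mathbf{P}'$ and to observe that it is generated by rectangles, so that fixing the first coordinate produces a measurable injection; note that this argument does not even require $\Set{M}$ to be measurable, since we only test preimages of generators. The measurability of $\mathbf{O}_M$ (and, analogously, of $\Termsmv{M}$) follows by the identical section argument applied to $\mathbf{O}'$, and the facts that $\Tracedist{\termone}$ and $\Tracevaldist{\termone}$ are measures, $\sbd{\termone}$ is a measure, and $\restr{(\sbd{\termone})}{\valset}$ is a measure then follow routinely, the integral of a non-negative measurable function being a measure and the pushforward of a measure along a measurable map being a measure.
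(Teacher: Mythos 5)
Your proposal is correct and takes essentially the same approach as the paper: the paper's proof is the one-line observation that since $\mathbf{P}'$ is jointly measurable (Lemma~\ref{lemma:p-prim-measurable}), the section $\mathbf{P}_M = \mathbf{P}'(M,\cdot)$ is measurable. You simply make explicit the standard section argument (the measurable injection $s \mapsto (M,s)$ composed with $\mathbf{P}'$, checked on generating rectangles) that the paper invokes implicitly, in the same way it elsewhere cites Theorem~18.1 of \citet{billingsley95} for $q$.
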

\begin{proof}
Since $\mathbf{P'}$ is measurable, $\mathbf{P}_M = \mathbf{P'}(M, \cdot)$ is measurable for every $M \in \cterms$.
\qed
\end{proof}

\begin{lemma} \label{lemma:o-measurable}
For each $M$, the function $\mathbf{O}_M$ is measurable $\mathcal{S} / \restr{\measterms}{\gvalset}$. 
\end{lemma}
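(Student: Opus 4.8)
The plan is to reduce this directly to the joint measurability of $\mathbf{O}'$ already established, in exactly the way the preceding lemma for $\mathbf{P}_M$ was derived from the measurability of $\mathbf{P}'$. Recall from the corollary above that $\mathbf{O}_M = \osup{M}{\cdot}$; that is, $\mathbf{O}_M$ is the section of the jointly measurable function $\mathbf{O}' : (\cterms \times \sampseq) \to \gvalset$ obtained by fixing its first argument to the closed term $M$. So the entire content of the lemma is to pass from joint measurability of $\mathbf{O}'$ to measurability of each of its sections.

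First I would invoke the standard fact that a section of a jointly measurable function is itself measurable: if $f : (X \times Y, \Sigma_X \times \Sigma_Y) \to (Z, \Sigma_Z)$ is measurable, then for each fixed $x \in X$ the map $y \mapsto f(x,y)$ is measurable $\Sigma_Y / \Sigma_Z$. This is precisely the result cited earlier as \cite[Theorem 18.1]{billingsley95}, the same one already used to conclude that $q(s,\cdot)$ is measurable for every $s$. Applying it with $X = \cterms$, $Y = \sampseq$, $Z = \gvalset$, $\Sigma_Y = \mathcal{S}$ and $\Sigma_Z = \restr{\measterms}{\gvalset}$, and using that $\mathbf{O}'$ has been shown measurable $(\cterms \times \sampseq) / \restr{\measterms}{\gvalset}$, we obtain that $\mathbf{O}_M = \mathbf{O}'(M, \cdot)$ is measurable $\mathcal{S} / \restr{\measterms}{\gvalset}$ for every closed $M$, as required.

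There is essentially no obstacle at this stage, since all the real work has been pushed into the (already completed) proof that $\mathbf{O}'$ is jointly measurable, which in turn rested on writing $\mathbf{O}'$ as a pointwise supremum $\sup_n \Theta_\Lambda^n(\bot_\Lambda)$ of measurable approximants over the flat \oCPO{} $\gvalset$ and invoking Lemma~\ref{lemma:sup-flat-measurable}. The only mild points worth checking are that $\gvalset$ equipped with $\restr{\measterms}{\gvalset}$ is a legitimate measurable codomain and that the $\sigma$-algebra on $\cterms \times \sampseq$ is the product one generated by the metric $d$ (Corollary~\ref{lemma:triples-complete-separable}, with the weight component dropped); both were already arranged in the setup for $\mathbf{P}_M$. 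Consequently the argument for $\mathbf{O}_M$ is a verbatim transcription of the $\mathbf{P}_M$ case, with $\mathbf{P}'$ and $\restr{\mathcal{R}}{\RRp}$ replaced by $\mathbf{O}'$ and $\restr{\measterms}{\gvalset}$.
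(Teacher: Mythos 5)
Your proposal is correct and follows the paper's own argument exactly: the paper proves this lemma in one line by noting that since $\mathbf{O}'$ is jointly measurable, the section $\mathbf{O}_M = \mathbf{O}'(M,\cdot)$ is measurable for every closed $M$. Your write-up merely makes explicit the section-measurability fact (Billingsley, Theorem 18.1) that the paper leaves implicit, in the same way both texts handle the $\mathbf{P}_M$ case.
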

\begin{proof}
Since $\mathbf{O'}$ is measurable, $\mathbf{O}_M = \mathbf{O'}(M, \cdot)$ is measurable for every $M \in \cterms$.
\qed
\end{proof}

\begin{lemma} \label{lemma:pv-rewrite}
For all $M$, $s$, $\mathbf{P}_M^\valset(s) = \mathbf{P}_M(s) [ \mathbf{O}_M(s) \in \valset]$ 
\end{lemma}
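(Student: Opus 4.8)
The plan is to prove the identity pointwise in $s$ by a case analysis on the big-step behaviour of $M$ on the trace $s$, leaning on the determinacy of the big-step semantics (Lemma~\ref{lemma:big-step-determined}) to guarantee that the clauses defining $\mathbf{P}_M$, $\mathbf{P}_M^\valset$ and $\mathbf{O}_M$ are mutually exclusive and jointly exhaustive. Concretely, for a fixed $s$ exactly one of the following holds: (i) there are $w$ and $V\in\valset$ with $M\Downarrow^s_w V$; (ii) there is $w$ with $M\Downarrow^s_w\fail$; or (iii) there are no $w,G$ with $M\Downarrow^s_w G$. These cases are pairwise disjoint precisely because Lemma~\ref{lemma:big-step-determined} tells us that whenever $M\Downarrow^s_w G$ the pair $(w,G)$ is uniquely determined by $M$ and $s$, so $M$ cannot simultaneously yield a value and fail on the same trace.

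First I would dispatch case (i). Here $\mathbf{O}_M(s)=V\in\valset$ by the defining clause of $\mathbf{O}_M$, so $[\mathbf{O}_M(s)\in\valset]=1$; moreover $V\in\valset\subseteq\gvalset$, so both $\mathbf{P}_M(s)$ and $\mathbf{P}_M^\valset(s)$ equal the same $w$. The right-hand side is then $w\cdot 1=w$, matching the left. In case (ii) the result is $\fail\notin\valset$, hence $[\mathbf{O}_M(s)\in\valset]=0$, and by determinacy no value is produced on $s$, so $\mathbf{P}_M^\valset(s)=0$; the right-hand side is $\mathbf{P}_M(s)\cdot 0=0$, which again matches the left. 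Finally, in case (iii) all three quantities take their ``otherwise'' values: $\mathbf{P}_M(s)=\mathbf{P}_M^\valset(s)=0$ and $\mathbf{O}_M(s)=\fail$, so both sides are $0$.

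The only point requiring genuine care --- and the closest thing to an obstacle --- is justifying that this three-way split is both exhaustive and disjoint. This is exactly where Lemma~\ref{lemma:big-step-determined} is indispensable, since it is what forbids a single trace from witnessing both $M\Downarrow^s_w V$ and $M\Downarrow^s_{w'}\fail$, and hence what makes the value-versus-failure distinction used in case (ii) legitimate. Once disjointness is secured, each case reduces to a direct unfolding of the definitions of $\mathbf{P}_M$, $\mathbf{P}_M^\valset$, $\mathbf{O}_M$ and the Iverson bracket; no measure theory, induction, or appeal to the small-step semantics is needed.
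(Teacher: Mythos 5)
Your proof is correct and follows essentially the same route as the paper: the same three-way case split on whether $M$ yields a value, yields $\fail$, or has no big-step derivation on $s$, with Lemma~\ref{lemma:big-step-determined} securing disjointness, and then a direct unfolding of the definitions of $\mathbf{P}_M$, $\mathbf{P}_M^\valset$, $\mathbf{O}_M$ and the Iverson bracket in each case. Nothing further is needed.
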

\begin{proof}
By Lemma \ref{lemma:big-step-determined}, if $M \Downarrow^w_s G$, then $w$, $G$ are unique.
If $M \Downarrow^w_s V$, then $\mathbf{P}_M(s) = w$, $\mathbf{P}_M^\valset(s)$ and $ \mathbf{O}_M(s) \in \valset$,
so the equality holds. If $M \Downarrow^w_s \fail$, then $\mathbf{P}_M(s) = w$, $\mathbf{P}_M^\valset(s) = 0$
and $\mathbf{O}_M(s) \notin \valset)$, so both sides of the equation are $0$. If there is no $G$ such
that $M \Downarrow^w_s G$, then both sides are also $0$.
\qed
\end{proof}

\begin{lemma} \label{lemma:p-v-measurable}
$\mathbf{P}_M^\valset$ is measurable for every $M$.
\end{lemma}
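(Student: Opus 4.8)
The plan is to reduce the claim to the pointwise rewriting established in Lemma~\ref{lemma:pv-rewrite}, which already expresses $\mathbf{P}_M^\valset$ as a product of two functions that are (or are easily shown to be) measurable. Concretely, Lemma~\ref{lemma:pv-rewrite} gives $\mathbf{P}_M^\valset(s) = \mathbf{P}_M(s)\cdot\indfun{\valset}{\mathbf{O}_M(s)}$ for all $s$, so it suffices to prove that each factor is measurable and then invoke the closure of measurable real-valued functions under pointwise products.

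First I would recall that $\mathbf{P}_M$ is measurable $\mathcal{S}/\restr{\mathcal{R}}{\mathbb{R}_{+}}$ by Lemma~\ref{lemma:pi-measurable}. Next, for the indicator factor, I would observe that $\mathbf{O}_M$ is measurable $\mathcal{S}/\restr{\measterms}{\gvalset}$ by Lemma~\ref{lemma:o-measurable}, and that $\valset$ is a measurable set of generalized values (established earlier in this appendix). Hence the preimage $\mathbf{O}_M^{-1}(\valset)$ lies in $\mathcal{S}$, and the map $s\mapsto\indfun{\valset}{\mathbf{O}_M(s)}$, being the indicator of the measurable set $\mathbf{O}_M^{-1}(\valset)$, is measurable $\mathcal{S}/\restr{\mathcal{R}}{\mathbb{R}_{+}}$.

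Finally, since the pointwise product of two nonnegative measurable real-valued functions is again measurable, combining the two facts above with the identity of Lemma~\ref{lemma:pv-rewrite} yields measurability of $\mathbf{P}_M^\valset$. I do not expect a genuine obstacle here: the whole argument is a routine application of standard closure properties, with all the real content already discharged in Lemmas~\ref{lemma:pv-rewrite}, \ref{lemma:pi-measurable}, and~\ref{lemma:o-measurable}.
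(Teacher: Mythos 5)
Your proof is correct and follows essentially the same route as the paper: both invoke Lemma~\ref{lemma:pv-rewrite} to write $\mathbf{P}_M^\valset(s) = \mathbf{P}_M(s)\cdot\indfun{\valset}{\mathbf{O}_M(s)}$, then combine measurability of $\mathbf{P}_M$ (Lemma~\ref{lemma:pi-measurable}), measurability of $\mathbf{O}_M$ (Lemma~\ref{lemma:o-measurable}) together with measurability of $\valset$, and closure of nonnegative measurable functions under pointwise products. Your phrasing of the indicator factor as the indicator of the preimage $\mathbf{O}_M^{-1}(\valset)$ is just a restatement of the paper's ``composition of $\mathbf{O}_M$ and an indicator function for a measurable set''; the arguments are identical in substance.
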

\begin{proof}
By Lemma \ref{lemma:pv-rewrite}, $\mathbf{P}_M^\valset(s) = \mathbf{P}_M(s) [ \mathbf{O}_M(s) \in \valset]$ ,
so $\mathbf{P}_M^\valset$ is a pointwise product of a measurable function and 
a composition of $\mathbf{O}_M$ and an indicator function
for a measurable set, hence it is measurable.
\qed
\end{proof}

\begin{restate}{Lemma~\ref{lemma:pi-measurable}}
  For any closed term $M$, the functions $\mathbf{P}_{M}$, $\mathbf{O}_M$ and~$\mathbf{P}_{M}^\valset$ are all measurable; 
$\Tracedist{\termone}$ and $\Tracevaldist\termone$ are measures on $(\sampseq, \mathcal{S})$; 
$\sbd{\termone}$ is a measure on $(\gvalset,\restr\measterms\gvalset)$; 
and $\restr{(\sbd{\termone})}\valset$ is a measure on $(\valset,\restr\measterms\valset)$.
\end{restate}
  \begin{proof}
For all $M \in \cterms$, $\mathbf{P}_{M}$, $\mathbf{O}_M$ and $\mathbf{P}_{M}^\valset$ are measurable by lemmas 
\ref{lemma:pi-measurable}, \ref{lemma:o-measurable} and \ref{lemma:p-v-measurable}, respectively. Since
$\mathbf{P}_{M}$, and $\mathbf{P}_{M}^\valset$ are obviously nonnegative, the functions
$\Tracedist{\termone}$ and $\Tracevaldist\termone$ are measures of densities $\mathbf{P}_{M}$ and
$\mathbf{P}_{M}^\valset$ with respect to the stock measure $\mu$ \cite[Section 2.3.3]{gallay2009mesure}.
The function $\sbd{\termone}$ is a transformation of
the measure $\Tracedist{\termone}$ on  $(\sampseq, \mathcal{S})$  by the $\mathcal{S} / \restr{\measterms}{\gvalset}$ 
-measurable function $\mathbf{O}_{M}$, so it is a measure on $(\gvalset,\restr\measterms\gvalset)$
 \cite[Section 13, Transformations of Measures]{billingsley95}.
%
%
Since 
$ \restr{(\sbd{\termone})}{\valset}(\tsetone) = 
\sbd{\termone}(\tsetone \cap \valset)$ for every measurable set $\tsetone \in \restr\measterms\gvalset $,
the function $\restr{(\sbd{\termone})}\valset$ is a restriction of $\sbd{\termone}$ to $\valset$,
so it is a measure on $(\valset, \restr{\measterms}{\valset})$.

%
%
  \end{proof}

\subsection{Measurability of $\mathtt{peval}$}

Like in the previous section, we start by giving an alternative definition of $\mathtt{peval}$,
using the function $g$ instead of referring to the reduction relation directly.





%



The set of closed terms $C\Lambda$ is a $\oCPO$ with respect to the partial order defined by
$\fail \leq G$ for all $G$. Hence the set $\funs$ of all functions
$(C\Lambda \times \mathbb{R} \times \mathbb{S}) \rightarrow C\Lambda$ is a $\oCPO$ with respect to the pointwise order.
Define $\Phi : \funs \rightarrow \funs$ as:

\[
\Phi(f)(M,w,s) =
\begin{cases}
M & \text{if}\ s = []\\
f(g(M,w,s)) & \text{otherwise} \\
\end{cases}
\]

It is easy to check that $\Phi$ is monotone and preserves suprema of $\omega$-chains,
so it is continuous. Hence, we can define:

\[
  \mathtt{peval}'(M,s) = \sup_k \Phi^k(\bot_\Lambda)(M,1,s)
\]

where $\bot_\Lambda(M,w,s) = \fail$, as before.


\noindent
We first need to show that the original $\mathtt{peval}$ function is well-defined.
\begin{lemma} \label{lemma:peval-wd}
If $(M,w_0,s) \Rightarrow (M_k,w_k,s_k) \red (M', w', [])$ and $s_k \neq []$
and $(M,w_0,s) \Rightarrow (M_l,w_l,s_l) \red (M'', w'', [])$ and $s_l \neq []$,
then $M'=M''$ and $w'=w''$.
\end{lemma}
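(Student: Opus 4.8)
The plan is to reduce this well-definedness statement entirely to the determinism of single-step reduction, since its content is precisely that the moment the trace is used up is unambiguous. The whole argument is free of measure theory; it is bookkeeping around Lemma~\ref{lemma:small-step-determined} together with the fact that the empty trace is \emph{absorbing}.

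First I would record the absorption observation: if a configuration has empty trace, then so does every configuration reachable from it. Indeed, among the rules of Figure~\ref{fig:small-step-sampling} only \tr{Red Random} and \tr{Red Random Fail} alter the trace, and both require it to have the form $c \mathrel{::} s$; the rules \tr{Red Pure} and \tr{Red Score} leave the trace untouched. Hence no step can turn an empty trace into a non-empty one, and the claim follows by a trivial induction on the number of steps. Next I would establish that the configurations reachable from $(M,w_0,s)$ form a chain under $\Rightarrow$: if $(M,w_0,s)\Rightarrow X$ and $(M,w_0,s)\Rightarrow Y$, then $X\Rightarrow Y$ or $Y\Rightarrow X$. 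This is the standard fact that a functional one-step relation has linearly ordered orbits, proved by induction on the length of one of the two multi-step derivations: in the inductive step, both first single steps out of $(M,w_0,s)$ agree by Lemma~\ref{lemma:small-step-determined}, so the induction hypothesis applies to the common successor.

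Applying this with $X=(M_k,w_k,s_k)$ and $Y=(M_l,w_l,s_l)$, I may assume without loss of generality that $X\Rightarrow Y$. If $X=Y$, then $M'=M''$ and $w'=w''$ follow immediately from Lemma~\ref{lemma:small-step-determined}, since $(M',w',[])$ and $(M'',w'',[])$ are both the unique single-step successor of $X$. If instead $X\Rightarrow Y$ takes at least one step, then by single-step determinism its first step must coincide with $X\red(M',w',[])$, whence $(M',w',[])\Rightarrow Y$; but $(M',w',[])$ has empty trace, so by the absorption observation $Y$ also has empty trace, i.e.\ $s_l=[]$, contradicting the hypothesis $s_l\neq[]$. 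Thus only the case $X=Y$ survives, giving $M'=M''$ and $w'=w''$. The only mild subtlety is the chain property, and that is a routine consequence of determinism; I expect no real obstacle.
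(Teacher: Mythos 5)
Your proof is correct, and it rests on exactly the same two ingredients as the paper's: single-step determinism (Lemma~\ref{lemma:small-step-determined}) and the observation that no rule of Figure~\ref{fig:small-step-sampling} can turn an empty trace into a non-empty one (the paper states this absorption fact inline, as ``there is no rule which adds an element to a trace''). The difference is organizational. The paper proves the lemma by a single induction on the derivation of $(M,w_0,s) \Rightarrow (M_k,w_k,s_k)$, with a nested case split on whether the second derivation has zero or more steps; determinism aligns the first steps of the two derivations, and the absorption fact kills the mixed cases by contradiction. You instead factor out a reusable chain lemma --- the configurations reachable from a fixed starting point under a deterministic one-step relation are linearly ordered by $\Rightarrow$ --- after which the lemma follows from a short case analysis: either $(M_k,w_k,s_k)=(M_l,w_l,s_l)$, and determinism finishes, or one strictly reaches the other through $(M',w',[])$, and absorption contradicts $s_l \neq []$. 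Your decomposition does the determinism induction once, in a form independent of the particular endpoints, which makes the trace-specific reasoning transparent and would also serve directly in proving Lemma~\ref{lemma:small-steps-determined}; the paper's monolithic induction is more self-contained, never needing the chain property in full generality. One small point: in your chain-lemma sketch, spell out the base case where one derivation has length zero --- as written, your inductive step presupposes that both derivations take at least one step.
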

\begin{proof}
By induction on the derivation of $(M,w_0,s) \Rightarrow (M_k,w_k,s_k)$:
\begin{itemize}
\item If  $(M,w_0,s) \Rightarrow (M_k,w_k,s_k)$ was derived in $0$ steps,
we have $M_k=M$, $w_k=w$ and $s_k=s$, and so 
$(M,w_0,s) \red (M', w', [])$, where $s \neq []$.

If $(M,w_0,s) \Rightarrow (M_l,w_l,s_l)$ was derived in $0$ steps, then
$(M_l,w_l,s_l) = (M,w_0,s)$, and so $M'' = M'$ and $w''=w'$ by 
Lemma \ref{lemma:small-step-determined}.

If $(M,w_0,s) \Rightarrow (M_l,w_l,s_l)$ was derived in $1$ or more steps, we have
$(M,w_0,s)  \red (\hat{M},\hat{w},\hat{s}) 
\Rightarrow (M_l,w_l,s_l) \red (M'', w'', [])$ and $s_l \neq []$,
for some $\hat{M}$, $\hat{w}$, $\hat{s}$. By Lemma \ref{lemma:small-step-determined},
$\hat{s} = []$. We have $ (\hat{M},\hat{w}, []) 
\Rightarrow (M_l,w_l,s_l)$, where $s_l \neq []$.
This leads to a contradiction,
as it is easy to show that reducing a term with an empty trace cannot
yield a triple with a non-empty trace (there is no rule which adds an element 
to a trace)

\item If  $(M,w_0,s) \Rightarrow (M_k,w_k,s_k)$ was derived in $1$ on more steps,
we have $(M,w_0,s) \red (M^*,w^*,s^*) \red^k (M_k,w_k,s_k) \red (M', w', [])$
for some $k \ge 0$, $M^*$, $w^*$, $s^*$. Now, if 
$(M,w_0,s) \Rightarrow (M_l,w_l,s_l)$ was derived in $1$ or more steps, we have
$(M,w_0,s) \red (\hat{M}, \hat{W}, \hat{s}) \Rightarrow (M_l,w_l,s_l) \red (M'', w'', [])$ and $s_l \neq []$
for some $\hat{M}$, $\hat{w}$, $\hat{s}$, where
$(\hat{M}, \hat{W}, \hat{s}) =  (M^*,w^*,s^*)$ by Lemma 
\ref{lemma:small-step-determined}.
Hence, the result follows by the induction hypothesis.

If $(M,w_0,s) \Rightarrow (M_l,w_l,s_l)$ was derived in $0$ steps, then
$(M_l,w_l,s_l) = (M,w_0,s)$, and so $(M,w_0,s) \red (M'', w'', [])$. By
Lemma \ref{lemma:small-step-determined}, this implies $s^* = []$,  so
$(M^*,w^*,[]) \Rightarrow (M_k,w_k,s_k)$ for $s_k \neq []$, which is 
impossible, as explained in the previous case.
\end{itemize}
\end{proof}

\begin{lemma} \label{lemma:peval-to-prim}
If $(M,w_0,s) \Rightarrow (M_k,w_k,s_k) \red (M', w', [])$ and $s_k \neq []$, then
$\sup_n \Phi^n (\bot_\Lambda)(M,w_0,s) = M'$.
\end{lemma}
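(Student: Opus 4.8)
The plan is to prove the statement by induction on the number of steps in the derivation of $(M,w_0,s)\Rightarrow(M_k,w_k,s_k)$, exploiting the fact that each application of $\Phi$ unwinds exactly one reduction step, provided the current trace is non-empty. I would first record two preliminary observations. Since $\Phi$ is continuous and $\bot_\Lambda$ is the bottom element of $\funs$, the sequence $\{\Phi^n(\bot_\Lambda)(M,w_0,s)\}_n$ is an increasing chain in the flat $\oCPO$ on $C\Lambda$ (with $\fail$ as bottom); its supremum is therefore the eventual value of the chain, so it suffices to show that $\Phi^n(\bot_\Lambda)(M,w_0,s)=M'$ for all sufficiently large $n$. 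Second, by Lemma~\ref{lemma:remove-suffix} every reduction step removes a (possibly empty) prefix of the trace, so $s_k$ is a suffix of every trace occurring at or before $(M_k,w_k,s_k)$; since $s_k\neq[]$, all of those traces---in particular $s$ itself and the trace reached after the first step---are non-empty. This is precisely what forces $\Phi$ to take its ``else'' branch at each of these states.

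With these in hand I would carry out the induction. In the base case the derivation has zero steps, so $(M,w_0,s)=(M_k,w_k,s_k)\to(M',w',[])$ with $s\neq[]$. By the forward direction of Lemma~\ref{lemma:red-g}, $g(M,w_0,s)=(M',w',[])$, so for $n\geq 2$ the definition of $\Phi$ gives $\Phi^{n}(\bot_\Lambda)(M,w_0,s)=\Phi^{n-1}(\bot_\Lambda)(M',w',[])$; since the trace of $(M',w',[])$ is empty, the ``then'' branch applies and returns the term, so this equals $M'$ (note this covers the case $M'=\fail$ as well). For the inductive step, write $(M,w_0,s)\to(\hat M,\hat w,\hat s)\Rightarrow(M_k,w_k,s_k)\to(M',w',[])$. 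Both $s$ and $\hat s$ are non-empty by the second observation, so the forward direction of Lemma~\ref{lemma:red-g} gives $g(M,w_0,s)=(\hat M,\hat w,\hat s)$ and hence $\Phi^{n+1}(\bot_\Lambda)(M,w_0,s)=\Phi^{n}(\bot_\Lambda)(\hat M,\hat w,\hat s)$. The induction hypothesis, applied to the shorter derivation $(\hat M,\hat w,\hat s)\Rightarrow(M_k,w_k,s_k)\to(M',w',[])$ (whose final-but-one trace is still $s_k\neq[]$), yields $\Phi^{n}(\bot_\Lambda)(\hat M,\hat w,\hat s)=M'$ for all large $n$, and the claim follows.

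Combining the two observations, the chain $\{\Phi^n(\bot_\Lambda)(M,w_0,s)\}_n$ is $\fail$ for small $n$ and $M'$ from some point on, whence $\sup_n\Phi^n(\bot_\Lambda)(M,w_0,s)=M'$, as required. The main obstacle is bookkeeping with Lemma~\ref{lemma:red-g}: I must invoke only its forward direction (a reduction step implies $g$ equals the reduct), which is unconditional, so the argument stays correct even when a reduct happens to be $(\fail,0,[])$; the ``else'' branch then feeds $(\fail,0,[])$ into $\Phi$, whose ``then'' branch correctly returns $M'=\fail$. Establishing non-emptiness of all pre-final traces is what licenses processing every such state through the ``else'' branch, and is the crux that makes the unwinding halt exactly at $M'$.
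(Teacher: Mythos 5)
Your proof is correct and takes essentially the same route as the paper's: induction on the length of the multi-step reduction, using the forward direction of Lemma~\ref{lemma:red-g} to turn each reduction step into one unwinding of $\Phi$, with the empty-trace branch of $\Phi$ terminating the unwinding at $M'$. Your explicit justification via Lemma~\ref{lemma:remove-suffix} that all intermediate traces are non-empty, and your eventually-constant-chain reading of the supremum in the flat $\oCPO$, are just more careful renderings of steps the paper's proof handles implicitly (via monotonicity of $\Phi$ and the stated $s \neq []$ in the base case).
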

\begin{proof}
By induction on the length of derivation of $(M,w_0,s) \Rightarrow (M_k,w_k,s_k) \red (M', w', [])$.
Suppose $(M,w_0,s) \red^k (M_k,w_k,s_k) \red (M', w', [])$.
\begin{itemize}
\item Base case, $k=0$: We have $(M,w_0,s) \red (M', w', [])$ and $s \neq []$. Hence,
by Lemma \ref{lemma:red-g}, $g(M,w_0,s) = (M',w',[])$, and so,
by monotonicity of $\Phi$,
$\sup_k \Phi^k(\bot_\Lambda)(M,w_0,s) = \sup_k \Phi(\Phi(\Phi^k(\bot_\Lambda)))(M,w_0,s) =
\sup_k \Phi(\Phi^k(\bot_\Lambda))(M',w',[])= M'$, as required.
\item Induction step: Let $(M,w_0,s) \red^{k+1} (M_k,w_k,s_k) \red (M', w', [])$.
Then there exist $M^*$, $w^*$, $s^*$ such that $(M,w_0,s) \red (M^*, w^*, s^*) \red^{k} (M_k,w_k,s_k) \red (M', w', [])$.
Now, we have $\sup_k \Phi^k(\bot_\Lambda)(M,w_0,s) = \sup_k \Phi(\Phi^k(\bot_\Lambda))(M,w_0,s) =
\sup_k \Phi^k(\bot_\Lambda)(M^*,w^*,s^*) $, and $\sup_k \Phi^k(\bot_\Lambda)(M^*,w^*,s^*)  = M'$
by induction hypothesis, which ends the proof.
\end{itemize} \qed
\end{proof}

\begin{corollary} \label{corr:peval-to-prim}
If $(M,1,s) \Rightarrow (M_k,w_k,s_k) \red (M', w', [])$ and $s_k \neq []$, then
$\mathtt{peval}'(M,s) = M'$.
\end{corollary}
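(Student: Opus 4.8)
The plan is to observe that this corollary is nothing more than the specialization of Lemma~\ref{lemma:peval-to-prim} to the case $w_0 = 1$, combined with unfolding the definition of $\mathtt{peval}'$. Recall that $\mathtt{peval}'$ was defined by
\[
\mathtt{peval}'(M,s) = \sup_k \Phi^k(\bot_\Lambda)(M,1,s),
\]
so the left-hand side of the corollary is literally the quantity that Lemma~\ref{lemma:peval-to-prim} computes, once the initial weight is fixed to $1$.

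Concretely, I would first note that the hypotheses of the corollary --- namely $(M,1,s) \Rightarrow (M_k,w_k,s_k) \red (M', w', [])$ together with $s_k \neq []$ --- are exactly the hypotheses of Lemma~\ref{lemma:peval-to-prim} instantiated at $w_0 = 1$. Applying that lemma therefore yields $\sup_n \Phi^n(\bot_\Lambda)(M,1,s) = M'$. Since the index name in the supremum is immaterial, this is precisely $\sup_k \Phi^k(\bot_\Lambda)(M,1,s) = M'$, which by the defining equation of $\mathtt{peval}'$ above reads $\mathtt{peval}'(M,s) = M'$, as required.

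There is essentially no obstacle here: all of the real work --- the induction on the length of the reduction $(M,1,s) \Rightarrow (M_k,w_k,s_k) \red (M',w',[])$, and the appeals to continuity and monotonicity of $\Phi$ and to Lemma~\ref{lemma:red-g} relating $\red$ to the measurable step function $g$ --- has already been carried out in the proof of Lemma~\ref{lemma:peval-to-prim}. The corollary is a one-line consequence obtained by setting $w_0 = 1$ and rewriting with the definition of $\mathtt{peval}'$. The only point worth flagging explicitly is the bookkeeping that $\mathtt{peval}'$ hard-codes the initial weight to $1$, so the general weight $w_0$ appearing in the lemma must be specialized accordingly; no further estimate or case analysis is needed.
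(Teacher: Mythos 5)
Your proof is correct and matches the paper's intent exactly: the paper states this as an immediate corollary of Lemma~\ref{lemma:peval-to-prim} (with no separate proof given), obtained precisely by instantiating $w_0 = 1$ and unfolding the definition $\mathtt{peval}'(M,s) = \sup_k \Phi^k(\bot_\Lambda)(M,1,s)$. Nothing is missing.
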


\begin{lemma} \label{lemma:prim-to-peval}
If $\sup_n \Phi^n (\bot_\Lambda)(M,w_0,s) = M' \neq \fail$,
then either $s=[]$ or $(M,w_0,s) \Rightarrow (M_k,w_k,s_k) \red (M', w', [])$
for some $M_k$, $w_k$, $s_k$, $w'$, where $s_k \neq []$.
\end{lemma}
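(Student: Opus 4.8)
The plan is to reduce the supremum to a finite unfolding and then induct. Since $\cterms$ is a flat $\oCPO$ (we have $\fail\leq G$ for all $G$, while distinct non-$\fail$ terms are incomparable), the chain $\bigl(\Phi^n(\bot_\Lambda)(M,w_0,s)\bigr)_n$ is eventually constant, so the hypothesis $\sup_n\Phi^n(\bot_\Lambda)(M,w_0,s)=M'\neq\fail$ gives some finite $k$ with $\Phi^k(\bot_\Lambda)(M,w_0,s)=M'$. It therefore suffices to prove, by induction on $k$, that $\Phi^k(\bot_\Lambda)(M,w_0,s)=M'\neq\fail$ implies the stated disjunction.

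If $s=[]$ the first disjunct holds and we are done (here in fact $M'=M$). So suppose $s\neq[]$. Unfolding the definition of $\Phi$ gives $M'=\Phi^{k-1}(\bot_\Lambda)(g(M,w_0,s))$; write $(M^*,w^*,s^*)=g(M,w_0,s)$. The key observation is that $(M^*,w^*,s^*)\neq(\fail,0,[])$: otherwise $s^*=[]$ would force $\Phi^{k-1}(\bot_\Lambda)(\fail,0,[])=\fail=M'$, contradicting $M'\neq\fail$. Hence Lemma~\ref{lemma:red-g}(2) applies and yields the single step $(M,w_0,s)\red(M^*,w^*,s^*)$.

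Now $\Phi^{k-1}(\bot_\Lambda)(M^*,w^*,s^*)=M'\neq\fail$, so the induction hypothesis applies to $(M^*,w^*,s^*)$. If it returns $s^*=[]$, then $\Phi^{k-1}(\bot_\Lambda)(M^*,w^*,[])=M^*$, whence $M'=M^*$ and $(M,w_0,s)\red(M',w^*,[])$; taking $(M_k,w_k,s_k)=(M,w_0,s)$ with the zero-step reduction $(M,w_0,s)\Rightarrow(M,w_0,s)$ (and $s_k=s\neq[]$) establishes the claim. If instead it returns $(M^*,w^*,s^*)\Rightarrow(M_k,w_k,s_k)\red(M',w',[])$ with $s_k\neq[]$, then prefixing the step $(M,w_0,s)\red(M^*,w^*,s^*)$ gives $(M,w_0,s)\Rightarrow(M_k,w_k,s_k)\red(M',w',[])$, as required.

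I expect the main obstacle to be the two ingredients that make the induction legitimate: first, extracting a finite stabilisation index $k$ from the supremum, which is exactly where flatness of the domain is used; and second, the dichotomy $(M^*,w^*,s^*)\neq(\fail,0,[])$, which is what licenses the appeal to Lemma~\ref{lemma:red-g}(2). The split on whether $s^*$ is empty is then only bookkeeping, but it is needed to produce the precise pattern $(M,w_0,s)\Rightarrow(M_k,w_k,s_k)\red(M',w',[])$ with $s_k\neq[]$ demanded by the statement.
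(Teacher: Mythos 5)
Your proof is correct and follows essentially the same route as the paper's: extract a finite stabilisation index $k$ from the supremum via flatness of the order, induct on $k$, use Lemma~\ref{lemma:red-g}(2) to turn the $g$-step into a reduction step, and split on the two disjuncts of the induction hypothesis. If anything, you are more explicit than the paper about the key dichotomy $(M^*,w^*,s^*)\neq(\fail,0,[])$ that licenses the appeal to Lemma~\ref{lemma:red-g}(2), which the paper compresses into ``since $M'\neq\fail$ by assumption.''
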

\begin{proof}
Like in lemma \ref{lemma:red-closure-w}, for every $M$, $w_0$, $s$, we must have
$\Phi^k (\bot_\Lambda)(M,w_0,s) =M'$ for some $k>0$, and we can prove the result
by induction on $k$.
\begin{itemize}
\item Base case. $k=1$: we must have $s=[]$ as otherwise we would have $M' = \fail$.
\item Induction step: suppose $\Phi^{k+1} (\bot_\Lambda)(M,w_0,s) =M'$. By definition of 
$\Phi$, if $s \neq []$, we have $\Phi^k (\bot_\Lambda)(M^*, w^*, s^*)=M'$, where
$g(M,w_0,s) \red (M^*, w^*, s^*)$. Since $M' \neq \fail$ by assumption, 
Lemma \ref{lemma:red-g} yields $(M,w_0,s) \red (M^*, w^*, s^*)$.
By induction hypothesis, either $s^* = []$ or $(M^*,w^*,s^*) \Rightarrow (M^{**}, w^{**},s^{**}) \red (M', w'', [])$
for some $M^{**}$, $w^{**}$, $s^{**}$, $w''$, where $s^{**} \neq []$.
In the former case, we have $(M,w_0,s) \Rightarrow (M_k,w_k,s_k) \red (M', w', [])$
with $(M,w_0,s) = (M_k,w_k,s_k) $,
$(M^*,w^*,s^*) = (M', w', [])$ and $s_k \neq []$, as required.
In the latter case, we have  $(M,w_0,s) \Rightarrow (M^{**}, w^{**},s^{**}) \red (M', w'', [])$,
with $s^{**} \neq []$.

\end{itemize}
\qed
\end{proof}

\begin{lemma} \label{lemma:peval-two-def-eqiv}
$\mathtt{peval} = \mathtt{peval}'$
\end{lemma}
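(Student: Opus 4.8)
The plan is to establish the pointwise identity $\mathtt{peval}(M,s)=\mathtt{peval}'(M,s)$ for every closed term $M$ and every trace $s$, by a case analysis that follows the three clauses in the definition of $\mathtt{peval}$. The whole argument rides on the two preceding lemmas, which already provide the two directions of the correspondence between the closed-form definition and the fixpoint definition; what remains is to glue them together and handle the base case $s=[]$ separately.

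First I would dispatch the case $s=[]$. Here $\mathtt{peval}(M,[])=M$ by definition. On the $\mathtt{peval}'$ side, note that whenever the trace argument is empty the first clause of $\Phi$ fires irrespective of its functional argument, so $\Phi^{k}(\bot_\Lambda)(M,1,[])=M$ for every $k\geq 1$, while $\Phi^{0}(\bot_\Lambda)(M,1,[])=\bot_\Lambda(M,1,[])=\fail$. Since the order on $C\Lambda$ is flat with $\fail$ least, the chain $\fail\leq M\leq M\leq\cdots$ has supremum $M$, whence $\mathtt{peval}'(M,[])=M$ and the two agree.

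Next comes the ``productive'' case: $s\neq[]$ and $(M,1,s)\Rightarrow(M_k,w_k,s_k)\rightarrow(M',w',[])$ for some $M_k,w_k,s_k,w'$ with $s_k\neq[]$. Then $\mathtt{peval}(M,s)=M'$ by definition, where Lemma~\ref{lemma:peval-wd} guarantees that this $M'$ is independent of the particular derivation chosen, so the clause is unambiguous. For the other side, Corollary~\ref{corr:peval-to-prim} (the $w_0=1$ instance of Lemma~\ref{lemma:peval-to-prim}) says exactly that under these hypotheses $\mathtt{peval}'(M,s)=M'$, so again the two coincide. Finally, for the ``otherwise'' case, where $s\neq[]$ but no reduction of the above shape exists, $\mathtt{peval}(M,s)=\fail$; I would show $\mathtt{peval}'(M,s)=\fail$ by contradiction. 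If instead $\mathtt{peval}'(M,s)=M'\neq\fail$, then because $s\neq[]$, Lemma~\ref{lemma:prim-to-peval} (instantiated at $w_0=1$) produces precisely a reduction $(M,1,s)\Rightarrow(M_k,w_k,s_k)\rightarrow(M',w',[])$ with $s_k\neq[]$, contradicting the assumption of this case.

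I do not expect any genuine obstacle here, since the substantive work has already been done in Lemmas~\ref{lemma:peval-wd}, \ref{lemma:peval-to-prim} and \ref{lemma:prim-to-peval}. The only point requiring mild care is making sure the two supporting lemmas are applied at the correct fixed weight $w_0=1$ used in the definition of $\mathtt{peval}'$, and correctly computing the supremum in the flat $\omega$-CPO for the empty-trace base case (so that the initial $\fail$ from $\Phi^{0}$ does not pollute the limit). Everything else is bookkeeping matching the three clauses against the three outcomes of the fixpoint.
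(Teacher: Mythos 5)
Your proof is correct and follows essentially the same route as the paper's: both arguments dispose of the case $s=[]$ directly and then glue together Lemma~\ref{lemma:prim-to-peval} and Corollary~\ref{corr:peval-to-prim} (at $w_0=1$) via a case split with a short contradiction argument, the only cosmetic difference being that you split on the clauses of $\mathtt{peval}$ while the paper splits on the value of $\mathtt{peval}'$. Your explicit computation of the supremum in the flat $\omega$CPO for the empty-trace case is a welcome bit of extra care that the paper leaves as ``trivial.''
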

\begin{proof}
We need to show that $\mathtt{peval}(M,s) = \mathtt{peval}'(M,s)$ for all $M \in \cterms$, $s \in \mathbb{S}$.

If $s =[]$, then the equality follows trivially from the two definitions. Now, assume $s \neq []$. 

If $\mathtt{peval}'(M,s) =M' \neq \fail$, then it follows from Lemma \ref{lemma:prim-to-peval} that $\mathtt{peval}(M,s) = M'$,

Now, let $\mathtt{peval}'(M,s) = \fail$ and suppose that $\mathtt{peval}(M,s) = M' \neq \fail$. Since $s \neq []$, 
by definition of $\mathtt{peval}$ there must be $M_k$, $w_k$, $s_k$, $w'$ such that 
$(M,1,s) \Rightarrow (M_k,w_k,s_k) \red (M', w', [])$ and $s_k \neq []$. But by Corollary \ref{corr:peval-to-prim},
this implies that $\mathtt{peval}'(M,s) = M' \neq \fail$, which yields a contradiction. Hence $\mathtt{peval}(M,s) = \fail$.
\qed
\end{proof}
%

\begin{lemma} \label{lemma:phi-measurable}
For every $k$, $\mathtt{peval}_k = \Phi^k(\bot^\lambda)$ is measurable.
\end{lemma}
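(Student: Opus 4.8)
The plan is to argue by induction on $k$, in direct parallel with the treatment of $\Theta_w^n$ in the proof of Lemma~\ref{lemma:p-prim-measurable}. For the base case $k=0$ the function $\Phi^0(\bot_\Lambda)=\bot_\Lambda$ is the constant map $(M,w,s)\mapsto\fail$, which is trivially measurable. For the inductive step it is enough to isolate and prove the single auxiliary fact that $\Phi$ \emph{preserves} measurability: if $f$ is a measurable function from $\mathcal{T}=\cterms\times\RR\times\sampseq$ to $\cterms$, then $\Phi(f)$ is measurable as well. Granting this, $\Phi^{k+1}(\bot_\Lambda)=\Phi(\Phi^{k}(\bot_\Lambda))$ is measurable whenever $\Phi^{k}(\bot_\Lambda)$ is, closing the induction.

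To prove that $\Phi(f)$ is measurable I would split its domain $\mathcal{T}$ along the case distinction in the definition of $\Phi$. The first piece is $\cterms\times\RR\times\{[]\}$; since $\{[]\}\in\Borel_0\subseteq\mathcal{S}$, this is a product of measurable sets and hence measurable, and so is its complement in $\mathcal{T}$. On the first piece $\Phi(f)$ is the projection $(M,w,s)\mapsto M$, which is $1$-Lipschitz for the Manhattan product metric and therefore continuous, hence measurable. On the complement $\Phi(f)$ coincides with $f\circ g$, the composite of the one-step reduction function $g$ (shown measurable in the preceding subsection) with the measurable $f$, and is therefore measurable. Applying Lemma~\ref{lemma:split-space} to this two-element measurable cover of $\mathcal{T}$ yields that $\Phi(f)$ is measurable on all of $\mathcal{T}$.

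This completes the induction and establishes the lemma. I do not expect any genuine obstacle here: the argument follows the same flat-\oCPO{} iteration scheme already used for $\mathbf{P}'$ and $\mathbf{O}'$, and the only points needing attention are verifying that the two domain pieces form a measurable partition and that each restriction of $\Phi(f)$ is measurable---both of which reduce to the already-established measurability of $g$ together with the induction hypothesis on $f$.
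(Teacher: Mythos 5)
Your proof is correct and takes essentially the same route as the paper's: induction on $k$, with the inductive step showing that $\Phi$ preserves measurability by partitioning the domain into the measurable pieces $s=[]$ and $s\neq[]$ (justified by Lemma~\ref{lemma:split-space}), using continuity of the projection on the first piece and measurability of the composite $f\circ g$ on the second. The only cosmetic differences are that the paper fixes the weight component to $1$ via the embedding $(M,s)\mapsto(M,1,s)$ and uses a three-way case split (and, sloppily, calls the projection case ``constant''), whereas you work directly on $\cterms\times\RR\times\sampseq$ with a two-way split, which is if anything slightly cleaner.
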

\begin{proof}
By induction on $k$:
\begin{itemize}
\item Base case: $k = 0$:
$\mathtt{peval}_0 = \bot^\lambda$ is a constant function on $\cterms \times \mathbb{S}$, so trivially measurable.
\item Induction step : we have $\mathtt{peval}_{k+1} = \Phi(\mathtt{peval}_k)$, so it is enough
to show that $\Phi(f)$ is measurable if $f$ is measurable. $\Phi(f)$ is defined in pieces,
so we want to use Lemma \ref{lemma:split-space}.

The domain of the first case is $\cterms \times \{[]\}$, so obviously measurable.
The domain of the second case is 
$p^{-1}(g^{-1}(\cterms \times \mathbb{R} \times \mathbb{S}) \cap (\cterms \times \{ 1\} \times (\mathbb{S}\setminus \{[] \} )))$,
and $p(M,s) = (M,1,s)$ is continuous, and so measurable. Hence, the domain is measurable.
Finally, the domain of the last case is the complement of the union of the two
above measurable sets, which means it is also measurable. 

Thus, we only need to show that the functions corresponding to these three cases are
measurable. This is obvious in the first and third case, because the corresonding
functions are constant. The function for the second case is $\phi(M,s) = f(g(p(M,s)))$,
where $p$ is as defined above and  $g'$ is the restriction of $g$ to $g^{-1}(\cterms \times \mathbb{R} \times \mathbb{S})$,
which is measurable since restrictions preserve measrability.
Since composition of measurable functions is measurable, $\phi$ is measurable.

Thus, $\mathtt{peval}_{k+1} $ is measurable, as required. \qed
\end{itemize}
\end{proof}


\begin{lemma} \label{lemma:peval-prim-measurable}
$\mathtt{peval}'$ is a measurable function $\cterms  \times  \sampseq \rightarrow  \cterms$.
\end{lemma}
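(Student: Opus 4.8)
The plan is to mirror exactly the argument already used for $\mathbf{P}'$ and $\mathbf{O}'$ in Lemma~\ref{lemma:p-prim-measurable}, since $\mathtt{peval}'$ has the same shape: it is the supremum of an $\omega$-chain of iterates of $\Phi$, precomposed with a trivial coordinate insertion. Concretely, I would write $\mathtt{peval}'(M,s) = \left(\sup_k \Phi^k(\bot_\Lambda)\right)(M,1,s)$, so that it factors as the composition of the map $p:\cterms\times\sampseq\to\cterms\times\RR\times\sampseq$ given by $p(M,s)=(M,1,s)$ with the function $\sup_k \Phi^k(\bot_\Lambda):\cterms\times\RR\times\sampseq\to\cterms$. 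The map $p$ is continuous in the metric $d$ (it merely inserts the fixed real $1$ in the middle coordinate), hence Borel-measurable, so it suffices to show the supremum is measurable.

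For the supremum, first I would invoke Lemma~\ref{lemma:phi-measurable}, which already establishes that each iterate $\Phi^k(\bot_\Lambda)$ is measurable. Then I would appeal to Lemma~\ref{lemma:sup-flat-measurable}, whose hypotheses I must verify: (i) the codomain $\cterms$ forms a flat $\oCPO$ under the ordering $\fail\le G$ with bottom element $\fail$ — this is the ordering already fixed in the passage preceding the definition of $\Phi$; (ii) the singleton $\{\fail\}\in\measterms$ is measurable — this holds because $\{\fail\}$ is closed in the metric $d$ (indeed $d(\fail,M)=\infty$ for every $M\neq\fail$), hence Borel; and (iii) the family $\{\Phi^k(\bot_\Lambda)\}_k$ is an $\omega$-chain for the pointwise order. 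Point (iii) follows from the stated monotonicity of $\Phi$ together with the fact that $\bot_\Lambda$, returning $\fail$ everywhere, is the least element of the function $\oCPO$, so that $\Phi^0(\bot_\Lambda)=\bot_\Lambda\le\Phi^1(\bot_\Lambda)\le\cdots$ by an immediate induction. With these three facts in hand, Lemma~\ref{lemma:sup-flat-measurable} gives measurability of $\sup_k\Phi^k(\bot_\Lambda)$, and composing with the measurable $p$ concludes that $\mathtt{peval}'$ is measurable.

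I expect the only genuinely substantive work to be the verification that the hypotheses of Lemma~\ref{lemma:sup-flat-measurable} are met, and in particular the chain condition (iii); but since the continuity of $\Phi$ and the bottom status of $\bot_\Lambda$ are already asserted in the text, this reduces to a one-line induction rather than a real obstacle. The measurability of $\{\fail\}$ and of $p$ are both routine from the definition of the metric $d$ on terms given in Figure~\ref{fig:metric}. Thus the proof is essentially a citation of Lemma~\ref{lemma:phi-measurable} and Lemma~\ref{lemma:sup-flat-measurable}, with the chain and flatness conditions checked in passing, exactly parallel to the proof of Lemma~\ref{lemma:p-prim-measurable}.

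\begin{proof}
By definition, $\mathtt{peval}'(M,s) = \left(\sup_k \Phi^k(\bot_\Lambda)\right)(M,1,s)$. The map $p:\cterms\times\sampseq\to\cterms\times\RR\times\sampseq$ given by $p(M,s)=(M,1,s)$ is continuous, hence measurable. By Lemma~\ref{lemma:phi-measurable}, each $\Phi^k(\bot_\Lambda)$ is measurable. Since $\cterms$ is a flat $\oCPO$ with least element $\fail$, the singleton $\{\fail\}$ is closed (as $d(\fail,M)=\infty$ for $M\neq\fail$) and hence measurable, and the iterates $\Phi^k(\bot_\Lambda)$ form an $\omega$-chain by monotonicity of $\Phi$ and minimality of $\bot_\Lambda$. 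Lemma~\ref{lemma:sup-flat-measurable} therefore gives that $\sup_k \Phi^k(\bot_\Lambda)$ is measurable. Since $\mathtt{peval}'$ is the composition of this measurable function with the measurable map $p$, it is measurable.
\qed
\end{proof}
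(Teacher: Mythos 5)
Your proof is correct and takes essentially the same route as the paper, whose entire proof is the one-line citation ``Corollary of Lemmas~\ref{lemma:phi-measurable} and \ref{lemma:sup-flat-measurable}.'' You simply spell out what the paper leaves implicit: the verification of the flat-$\oCPO$, measurable-bottom, and $\omega$-chain hypotheses of Lemma~\ref{lemma:sup-flat-measurable}, and the precomposition with the continuous insertion $(M,s)\mapsto(M,1,s)$ (the same device the paper uses in Lemma~\ref{lemma:p-prim-measurable}).
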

\begin{proof}
Corollary of Lemmas \ref{lemma:phi-measurable} and \ref{lemma:sup-flat-measurable}.
\qed
\end{proof}

\begin{restate}{Lemma~\ref{lemma:peval-measurable}}
$\mathtt{peval}$ is a measurable function $\cterms  \times  \sampseq \rightarrow  \cterms$.
\end{restate}
\begin{proof}
Corollary of Lemma \ref{lemma:peval-prim-measurable} and Lemma \ref{lemma:peval-two-def-eqiv}.
\qed
\end{proof}

\begin{lemma} \label{lemma:peval-assoc-pt1}
  For every $M \in C\Lambda, c \in \mathbb{R}, s \in \mathbb{S}$,
  $\mathtt{peval}(\mathtt{peval}(M,[c]), s) \leq
  \mathtt{peval}(M, c \mathrel{::} s)$
\end{lemma}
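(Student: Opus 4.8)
The plan is to exploit that $\leq$ here is the flat order on $C\Lambda$ with bottom $\fail$, so the claimed inequality is equivalent to the implication: \emph{if} $\mathtt{peval}(\mathtt{peval}(M,[c]),s)=M^\ast\neq\fail$, \emph{then} $\mathtt{peval}(M,c\mathrel{::}s)=M^\ast$. I would first switch to the fixpoint presentation using Lemma~\ref{lemma:peval-two-def-eqiv} ($\mathtt{peval}=\mathtt{peval}'$), so that I can move freely between $\mathtt{peval}$-values and small-step reductions via the two characterisation lemmas proved earlier: Lemma~\ref{lemma:prim-to-peval} (a non-$\fail$ value yields a reduction that exhausts the trace) and Corollary~\ref{corr:peval-to-prim} (such a reduction forces the corresponding $\mathtt{peval}$-value). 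These encapsulate the Scott-induction content already established for $\Phi$, so the remaining work is pure reduction bookkeeping.

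Writing $N=\mathtt{peval}(M,[c])$, the case $s=[]$ is immediate, since then both sides equal $\mathtt{peval}(M,[c])$. For $s\neq[]$ I would first argue $N\neq\fail$ (otherwise $(\fail,1,s)$ cannot begin any reduction, forcing $M^\ast=\fail$), and then extract two reductions from Lemma~\ref{lemma:prim-to-peval}: from $N=\mathtt{peval}(M,[c])\neq\fail$ I get $(M,1,[c])\Rightarrow(M_j,w_j,s_j)\to(N,w_N,[])$ with $s_j\neq[]$, and since a non-empty suffix of $[c]$ must be $[c]$ itself (using Lemma~\ref{lemma:remove-suffix}), $s_j=[c]$; from $\mathtt{peval}(N,s)=M^\ast\neq\fail$ with $s\neq[]$ I get $(N,1,s)\Rightarrow(N_k,w_k,s_k)\to(M^\ast,w',[])$ with $s_k\neq[]$.

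Next I would glue these along the trace and weight. Appending the suffix $s$ with Lemma~\ref{lemma:add-suffix-closure} turns $(M,1,[c])\Rightarrow(M_j,w_j,[c])$ into $(M,1,c\mathrel{::}s)\Rightarrow(M_j,w_j,c\mathrel{::}s)$, and Lemma~\ref{lemma:add-suffix} turns the final $c$-consuming step into $(M_j,w_j,c\mathrel{::}s)\to(N,w_N,s)$, so $(M,1,c\mathrel{::}s)\Rightarrow(N,w_N,s)$. Rescaling the second reduction by $w_N$ with Lemmas~\ref{lemma:multiply-closure} and~\ref{lemma:multiply} gives $(N,w_N,s)\Rightarrow(N_k,w_Nw_k,s_k)\to(M^\ast,w_Nw',[])$. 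By transitivity of multi-step reduction I obtain $(M,1,c\mathrel{::}s)\Rightarrow(N_k,w_Nw_k,s_k)\to(M^\ast,w_Nw',[])$ with $s_k\neq[]$, whence Corollary~\ref{corr:peval-to-prim} yields $\mathtt{peval}(M,c\mathrel{::}s)=M^\ast$, as required.

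The main obstacle I anticipate is the careful accounting of traces and weights rather than any conceptual difficulty: I must verify that $s_j=[c]$ (so that appending $s$ reconstructs $c\mathrel{::}s$ exactly), that the weight factor $w_N$ discarded by the inner $\mathtt{peval}$ is harmless because $\mathtt{peval}$ reads off only the reached term while weights merely scale through (Lemmas~\ref{lemma:multiply} and~\ref{lemma:multiply-closure}), and that degenerate possibilities for $N$ (for instance $N=E[\fail]$ arising from a failed random draw) are automatically excluded by the hypothesis $M^\ast\neq\fail$. Dispatching the $s=[]$ and $N=\fail$ corner cases up front is what keeps the gluing argument clean.
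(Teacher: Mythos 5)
Your proposal is correct, but it follows a genuinely different route from the paper's. The paper proves this lemma by Scott induction on the functional $\Phi$: it introduces the predicate $P(f)\Leftrightarrow\sup_k\Phi^k(\bot_\Lambda)(\Phi(f)(M,[c]),s)\leq\sup_k\Phi^k(\bot_\Lambda)(M,c\mathrel{::}s)$ for all $M,c,s$, verifies that $P$ is $\omega$-inductive (using flatness of the order), and then checks the base case and the step $P(f)\Rightarrow P(\Phi(f))$ by case analysis on the single-step behaviour of $(M,1,[c])$. You instead use the flat order to restate the claim as an implication (if the left side is a term $M^\ast\neq\fail$, then the right side equals $M^\ast$), and discharge it purely operationally: Lemma~\ref{lemma:prim-to-peval} and Corollary~\ref{corr:peval-to-prim}, together with Lemma~\ref{lemma:peval-two-def-eqiv}, let you pass between non-$\fail$ $\mathtt{peval}$-values and trace-exhausting reductions, and the two extracted reductions are then spliced using Lemmas~\ref{lemma:add-suffix-closure} and~\ref{lemma:multiply-closure}. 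This is sound and non-circular, since all the results you invoke precede the present lemma in the paper and none of them depends on it. What your route buys is that no new fixpoint induction is needed --- in particular you avoid verifying $\omega$-inductivity of a predicate, the subtlest point of the paper's argument --- at the price of some bookkeeping you should make explicit: that traces along a \emph{multi-step} reduction are suffixes of the initial trace (the multi-step analogue of Lemma~\ref{lemma:remove-suffix}, a routine induction, though in fact your gluing goes through even without identifying $s_j=[c]$), that $w_N\geq 0$ as required by the multiplication lemmas (immediate from Lemma~\ref{lemma:w-greater-zero}), and transitivity of $\Rightarrow$. Conversely, the paper's Scott-induction template has the advantage that essentially the same machinery also handles the reverse inequality (Lemma~\ref{lemma:peval-assoc-pt2}); your operational method would there additionally need a lemma splitting a reduction on $c\mathrel{::}s$ at the point where $c$ is consumed, which is slightly more work.
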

\begin{proof}
Define a property $P \subseteq F$:
\[
  P(f) \Leftrightarrow  \sup_k \Phi^k(\bot^\lambda) (\Phi(f)(M, [c]), s)
  \leq \sup_k \Phi^k(\bot^\lambda)(M, c \mathrel {::} s)\quad \forall M, c, s
\]

Since $\sup_k \Phi^k(\bot^\lambda)$ is a fixpoint of $\Phi$,
$\Phi(\sup_k \Phi^k(\bot^\lambda))= \sup_k \Phi^k(\bot^\lambda)$, so

\[
  P(\sup_k \Phi^k(\bot^\lambda)) \iff \sup_k \Phi^k(\bot^\lambda) (\sup_l \Phi^l(\bot^\lambda) (M, [c]), s)
  \leq \sup_k \Phi^k(\bot^\lambda)(M, c \mathrel {::} s)\quad \forall M, c, s
\]

So we only need to prove $P(\sup_k \Phi^k(\bot^\lambda))$.

To show that the property $P$ is $\omega$-inductive, let $f_1 \leq f_2 \leq \dots$ be a
$\omega$-chain and $\sup_i f_i$ its limit. Then $\Phi(f_1) \leq \Phi(f_2) \leq \dots$
and $\sup_i \Phi(f_i) = \Phi(\sup_i f_i)$. For all $M$, $s$, we have

\[
  \Phi(\sup_i f_i) (M,s) = (\sup_i \Phi(f_i))(M,s) = \sup_i (\Phi(f_i)(M,s))
\]

Note that either $\Phi(f_i)(M,s) = \fail$ for all $i$ or there is some $n$ such that
$\Phi(f_n)(M,s) \in \valset$ and $\Phi(f_m)(M,s) = \Phi(f_n)(M,s)$ for all $m > n$.
In either case, there is a $n(M,s)$ such that $\sup_i (\Phi(f_i)(M,s)) = \Phi(f_{n(M,s)})(M,s)$.
Hence

\begin{eqnarray*}
  P(\sup_i f_i) &\Leftrightarrow&  \sup_k \Phi^k(\bot^\lambda) (\Phi(\sup_i f_i)(M, [c]), s)
  \leq \sup_k \Phi^k(\bot^\lambda)(M, c \mathrel {::} s)\quad \forall M, c, s \\
  &\Leftrightarrow& \sup_k \Phi^k(\bot^\lambda) (\Phi(f_{n(M,[c])})(M, [c]), s)
  \leq \sup_k \Phi^k(\bot^\lambda)(M, c \mathrel {::} s)\quad \forall M, c, s \\
  &\Leftrightarrow& P(f_{n(M,[c])})
\end{eqnarray*}
as required.

Now we can prove the desired property by Scott induction:
\begin{itemize}
  \item Base case:
    \begin{eqnarray*}
      P(\bot^\lambda) &\Leftrightarrow& \sup_k \Phi^k(\bot^\lambda) (\Phi(\bot^\lambda)(M, [c]), s)
       \leq \sup_k \Phi^k(\bot^\lambda)(M, c \mathrel {::} s)\quad \forall M, c, s 
    \end{eqnarray*}
    For any $M$, $c$, $s$, we have
    \[
      \sup_k \Phi^k(\bot^\lambda) (\Phi(\bot^\lambda)(M, [c]), s)
      = \sup_k \Phi^k(\bot^\lambda) (\fail, s) = \fail \leq \sup_k \Phi^k(\bot^\lambda)(M, c \mathrel{::}  s)
    \]
    as required.

  \item Induction step:
    We need to show that for all $f$ such that $P(f)$, $P(\Phi(f))$ holds,
    that is 
\[
 \sup_k \Phi^k(\bot^\lambda) (\Phi(\Phi(f))(M, [c]), s)
  \leq \sup_k \Phi^k(\bot^\lambda)(M, c \mathrel {::} s)\quad \forall M, c, s
\]

    \begin{itemize}
      \item Case $(M,1,[c]) \rightarrow (M',w,[c])$:
        \begin{eqnarray*}
          LHS &=& \sup_k \Phi^k(\bot^\lambda) (\Phi(\Phi(f))(M, [c]), s)\\
              &=& \sup_k \Phi^k(\bot^\lambda) (\Phi(f)(M', [c]), s)\\
              (\text{by assumption}) &\leq& \sup_k \Phi^k(\bot^\lambda)(M', c \mathrel{::} s)\\
              &=& \Phi(\sup_k \Phi^k(\bot^\lambda))(M, c \mathrel{::} s)\\
              &=& (\sup_k \Phi^k(\bot^\lambda)(M, c \mathrel{::} s)\\
              &=& RHS 
        \end{eqnarray*}

      \item Case $(M,1,[c]) \rightarrow (M',w,[])$:
        \begin{eqnarray*}
          LHS &=& \sup_k \Phi^k(\bot^\lambda) (\Phi(\Phi(f))(M, [c]), s)\\
              &=& \sup_k \Phi^k(\bot^\lambda) (\Phi(f)(M', []), s)\\
              &=& \sup_k \Phi^k(\bot^\lambda) (M', s)\\
              &=& \Phi(\sup_k \Phi^k(\bot^\lambda))(M, c \mathrel{::} s)\\
              &=& (\sup_k \Phi^k(\bot^\lambda)(M, c \mathrel{::} s)\\
              &=& RHS 
        \end{eqnarray*}

      \item Case $(M,1,[c]) \not\rightarrow $:
        \begin{eqnarray*}
          LHS &=& \sup_k \Phi^k(\bot^\lambda) (\Phi(\Phi(f))(M, [c]), s)\\
              &=& \sup_k \Phi^k(\bot^\lambda) (\fail, s)\\
              &=& \Phi(\sup_k \Phi^k(\bot^\lambda))(M, c \mathrel{::} s)\\
              &=& (\sup_k \Phi^k(\bot^\lambda)(M, c \mathrel{::} s)\\
              &=& RHS 
        \end{eqnarray*}

    \end{itemize}
\end{itemize}

Therefore, $P(\sup_k \Phi^k(\bot^\lambda))$, holds, and so 
  $\mathtt{peval}(\mathtt{peval}(M,[c]), s) \leq
  \mathtt{peval}(M, c \mathrel{::} s)$ for all closed $M$, $c$, $s$.
\qed
\end{proof}

\begin{lemma} \label{lemma:peval-assoc-pt2}
  For every $M \in C\Lambda, c \in \mathbb{R}, s \in \mathbb{S}$,
  $\mathtt{peval}(\mathtt{peval}(M,[c]), s) \geq 
  \mathtt{peval}(M, c \mathrel{::} s)$
\end{lemma}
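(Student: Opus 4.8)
The plan is to prove the reverse inequality by an ordinary induction on the approximation index, rather than by reversing the Scott induction of Lemma~\ref{lemma:peval-assoc-pt1}. The reason is structural: since $\mathtt{peval}=\mathtt{peval}'=\sup_k\Phi^k(\bot_\Lambda)(\cdot,1,\cdot)$ is a \emph{supremum}, every approximant $\Phi^k(\bot_\Lambda)$ lies \emph{below} it, so Scott induction naturally bounds the fixpoint from \emph{above} (as in pt1). Trying to bound $\mathtt{peval}(M,c\mathrel{::}s)$ from above by the two-phase value with the symmetric predicate makes the base case $f=\bot_\Lambda$ collapse the right-hand side to $\sup_k\Phi^k(\bot_\Lambda)(\fail,1,s)=\fail$, which would force $\mathtt{peval}(M,c\mathrel{::}s)=\fail$, and is false. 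Instead, since the codomain $\cterms$ is a flat $\omega$CPO with bottom $\fail$, it suffices to prove, for every $k$ and all $M,c,s$,
\[
\Phi^k(\bot_\Lambda)(M,1,c\mathrel{::}s)\ \le\ \mathtt{peval}(\mathtt{peval}(M,[c]),s),
\]
and then take the supremum over $k$; here the right-hand side is the \emph{full} (fixed) two-phase evaluation, so we may use arbitrarily many inner reduction steps.

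First I would record two trivial preliminaries from the definition of $\Phi$: that $\Phi^k(\bot_\Lambda)(N,w,t)$ is independent of the weight $w$ (the term component of $g$ is weight-independent and the $t=[]$ clause ignores $w$; induction on $k$), and that $\mathtt{peval}'(N,[])=N$. The induction on $k$ is then driven by unfolding $\Phi^{k+1}(\bot_\Lambda)(M,1,c\mathrel{::}s)=\Phi^k(\bot_\Lambda)(g(M,1,c\mathrel{::}s))$ (legitimate since $c\mathrel{::}s\neq[]$) and splitting on the one-step behaviour of $M$ via Lemma~\ref{lemma:red-g} and Lemma~\ref{lemma:unique}: (a) $M$ is a generalized value or is stuck, so $g(M,1,c\mathrel{::}s)=(\fail,0,[])$ and the left-hand side is $\fail$, hence below anything; (b) $M$ takes a trace-preserving step (Red Pure or Red Score) to some $M_1$, so $g(M,1,c\mathrel{::}s)=(M_1,w_1,c\mathrel{::}s)$; (c) $M$ takes a head-consuming step (Red Random or Red Random Fail) to some $N_1\in\{E[c],E[\fail]\}$, so $g(M,1,c\mathrel{::}s)=(N_1,w_1,s)$.

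In case (b) the induction hypothesis at $k$ gives $\Phi^k(\bot_\Lambda)(M_1,1,c\mathrel{::}s)\le\mathtt{peval}(\mathtt{peval}(M_1,[c]),s)$, and since the very same trace-preserving step also fires under the trace $[c]$ (determinism, Lemma~\ref{lemma:small-step-determined}), one shows $\mathtt{peval}(M,[c])=\mathtt{peval}(M_1,[c])$ by the same $\Phi$-unfolding, so the two right-hand sides coincide. In case (c) the head $c$ is consumed in one step, leaving trace $s$; using that each approximant lies below the supremum, $\Phi^k(\bot_\Lambda)(N_1,1,s)\le\mathtt{peval}'(N_1,s)$, and one checks directly that $\mathtt{peval}(M,[c])=N_1$ (the single head-consuming step empties $[c]$, after which $\mathtt{peval}'(\cdot,[])$ returns $N_1$); hence the left-hand side is $\le\mathtt{peval}(N_1,s)=\mathtt{peval}(\mathtt{peval}(M,[c]),s)$, as required. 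Taking $\sup_k$ and invoking $\mathtt{peval}=\mathtt{peval}'$ (Lemma~\ref{lemma:peval-two-def-eqiv}) completes the proof.

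The main obstacle is case (c): correctly identifying the moment the head sample $c$ is consumed and proving the split $\mathtt{peval}(M,[c])=N_1$. One must argue that reduction consumes the trace strictly from the front (only random steps shorten it, and they remove exactly the head), so that reducing $M$ under $[c]$ follows the same path as under $c\mathrel{::}s$ up to the first random step and then halts. The same fact can instead be packaged operationally: assume $\mathtt{peval}(M,c\mathrel{::}s)=M'\neq\fail$, obtain a witnessing reduction $(M,1,c\mathrel{::}s)\Rightarrow(M_k,w_k,s_k)\red(M',w',[])$ with $s_k\neq[]$ from Lemma~\ref{lemma:prim-to-peval}, split it at the first head-consuming step using Lemmas~\ref{lemma:add-suffix}, \ref{lemma:remove-suffix} and~\ref{lemma:multiply-closure}, and read off $\mathtt{peval}(M,[c])$ and $\mathtt{peval}(N_1,s)$ via Corollary~\ref{corr:peval-to-prim}. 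The weight bookkeeping there—rescaling weights to $1$ and ruling out a failing last draw so that the residual weight is positive—is routine but must be carried out to apply the characterization cleanly.
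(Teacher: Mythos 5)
Your proof is correct, and at its core it runs through exactly the same case analysis as the paper's: split on whether $M$ is a value/stuck, takes a trace-preserving step, or takes a head-consuming step, then use (i) $\mathtt{peval}(M,[c])=\mathtt{peval}(M_1,[c])$ in the trace-preserving case and (ii) $\mathtt{peval}(M,[c])=N_1$ together with ``approximant $\leq$ supremum'' in the head-consuming case. The difference is purely in packaging. The paper \emph{does} use Scott induction here, but it sidesteps the problem you identified (the right-hand side collapsing to $\fail$ at the base case) in the same way you do: it places the inducted function $f$ only in the \emph{left-hand} (lower) position, keeping the right-hand side fixed at the full fixpoint, via the predicate $Q(f)\Leftrightarrow\bigl[\forall M,c,s.\ f(M,c\mathrel{::}s)\leq\sup_k\Phi^k(\bot_\Lambda)(\sup_l\Phi^l(\bot_\Lambda)(M,[c]),s)\bigr]\wedge\bigl[f\leq\sup_k\Phi^k(\bot_\Lambda)\bigr]$. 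The second conjunct is the paper's counterpart of your appeal to ``each approximant lies below the supremum'' in case (c); in your formulation that fact is trivially true and need not be carried in the induction hypothesis, and your final step (a pointwise upper bound on a chain is an upper bound on its sup) replaces the paper's explicit $\omega$-inductivity check for $Q$, which in the flat $\oCPO$ requires the eventually-stationary argument. So your version is a mild streamlining: same inequalities, same three cases, but ordinary induction on the approximation index instead of Scott induction with an admissible two-conjunct predicate. One small remark: your explicit preliminary that $\Phi^k(\bot_\Lambda)$ is weight-independent in its term output is a point the paper glosses over (it silently identifies $\Phi(F)(M,[c])$ with evaluation at weight $1$), so recording it is a genuine improvement in rigor; your closing ``operational packaging'' alternative for case (c) is unnecessary given the direct $\Phi$-unfolding argument, but does no harm.
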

\begin{proof}
Like in the previous lemma, we use Scott induction. Define the property:
\begin{eqnarray*}
  Q(f) &\Leftrightarrow& 
  f (M, c \mathrel {::} s)  \leq \sup_k \Phi^k(\bot^\lambda)  
  (\sup_l \Phi^l(\bot^\lambda)(M, [c]), s) \quad \forall M, c, s \\
  && \wedge\  f \leq \sup_k \Phi^k(\bot^\lambda)
\end{eqnarray*}

We need to show that $Q(\sup_k \Phi^k(\bot^\lambda))$ holds.

First, we need to verify that $Q$ is $\omega$-inductive. This is obvious for the second 
conjunct, so let us concentrate on the first. Once again, we use the
property that for all $\omega$-chains $f_1 \leq f_2 \leq \dots$ and $M$, $s$,
the chain $f_1(M,s) \leq f_2(M,s) \leq \dots$ will eventually be stationary.
For all $M$,$c$,$s$, we have 
$(\sup_i f_i) (M, c \mathrel {::} s) = f_{n(M,c,s)}(M, c \mathrel {::} s)$ for
some $n(M,c,s)$.
Then for every $M$,$c$,$s$, the inequality
\[
(\sup_i f_i) (M, c \mathrel {::} s) \leq \sup_k \Phi^k(\bot^\lambda)  
  (\sup_l \Phi^l(\bot^\lambda)(M, [c]), s)
\]
follows from $Q(f_{n(M,c,s)})$.

\begin{itemize}
  \item Base case:
    \begin{eqnarray*}
  Q(\bot^\lambda) &\Leftrightarrow&  
  \bot^\lambda (M, c \mathrel {::} s)  \leq \sup_k \Phi^k(\bot^\lambda)  
  (\sup_l \Phi^l(\bot^\lambda)(M, [c]), s) \quad \forall M, c, s \\
  && \wedge\  \bot^\lambda \leq \sup_k \Phi^k(\bot^\lambda)
    \end{eqnarray*}
    Both inequalities are obvious, because the LHS is always $\fail$.

  \item Induction step: Give $Q(f)$, for every $M$, $c$, $s$ we need to show:
    \begin{eqnarray*}
     \Phi(f) (M, c \mathrel {::} s) &\leq&
     \sup_k \Phi^k(\bot^\lambda) (\sup_l \Phi^l(\bot^\lambda)(M, [c]), s)
    \quad \forall M, c, s\\
    \Phi(f) &\leq& \sup_k \Phi^k(\bot^\lambda)
    \end{eqnarray*}

    Again, the second inequality is obvious, so let us concentrate on the first.
  \begin{itemize}
   \item Case $(M,1,[c]) \rightarrow (M',w,[c])$:
     \begin{eqnarray*}
       LHS &=& f(M', c \mathrel{::} s)\\
       \text{(by first assumption)} &\leq& \sup_k \Phi^k(\bot^\lambda) (\sup_l \Phi^l(\bot^\lambda)(M', [c]), s)\\
       &=& \sup_k \Phi^k(\bot^\lambda) (\Phi(\sup_l \Phi^l(\bot^\lambda))(M, [c]), s)\\
       &=& \sup_k \Phi^k(\bot^\lambda) (\sup_l \Phi^l(\bot^\lambda)(M, [c]), s)\\
       &=& RHS
     \end{eqnarray*}

   \item Case $(M,1,[c]) \rightarrow (M',w,[])$:
     \begin{eqnarray*}
      LHS &=& f(M',s)\\
      \text{(by second assumption)} &\leq& \sup_k \Phi^k(\bot^\lambda)(M',s)\\
      &=& \sup_k \Phi^k(\bot^\lambda)(\Phi(\sup_l \Phi^l(\bot^\lambda))(M', []) ,s)\\
      &=& \sup_k \Phi^k(\bot^\lambda)(\sup_l \Phi^l(\bot^\lambda)(M', []) ,s)\\
      &=& \sup_k \Phi^k(\bot^\lambda)(\Phi(\sup_l \Phi^l(\bot^\lambda))(M, [c]) ,s)\\
      &=& \sup_k \Phi^k(\bot^\lambda)(\sup_l \Phi^l(\bot^\lambda)(M, [c]) ,s)\\
      &=& RHS
     \end{eqnarray*}
   \item Case $(M,1,[c]) \not\rightarrow $:
     \begin{eqnarray*}
     LHS &=& \fail\\
         &\leq& RHS
     \end{eqnarray*}
\end{itemize}
\end{itemize}
As required.
\qed
\end{proof}

\begin{restate}{Lemma~\ref{lemma:peval-assoc-step}}
  $\mathtt{peval}(\mathtt{peval}(M,[c]), s) = \mathtt{peval}(M,c \mathrel{::} s) $
\end{restate}
\begin{proof}
  Follows from Lemmas \ref{lemma:peval-assoc-pt1} and \ref{lemma:peval-assoc-pt2}. \qed
\end{proof}

%

\begin{restate}{Lemma~\ref{lemma:peval-assoc}}
For all closed $M$, $s$, $t$,  $\mathtt{peval}(\mathtt{peval}(M,s), t) = \mathtt{peval}(M,s @ t) $
\end{restate}
\begin{proof}
By induction on $|s|$.
\begin{itemize}
  \item Base case: $s = []$.
    We have $\mathtt{peval}(M,[]) = M$ (by definition of $\mathtt{peval}$),
    so the result is trivial.
  \item Induction step: $s = c \mathrel{::} s'$.

    We want $\mathtt{peval}(\mathtt{peval}(M,c \mathrel{::}s), t) = \mathtt{peval}(M, c \mathrel{::} s @ t)$.

    We have:
    \begin{eqnarray*}
    LHS &=& \mathtt{peval}(\mathtt{peval}(M,c \mathrel{::}s), t)\\
    \text{(by Lemma \ref{lemma:peval-assoc-step})}&=& \mathtt{peval}(\mathtt{peval}(\mathtt{peval}(M,[c]), s), t)\\
    \text{(by induction hypothesis)} &=& (\mathtt{peval}(\mathtt{peval}(M,[c]), s@t) \\
    \text{(by Lemma \ref{lemma:peval-assoc-step})}&=& \mathtt{peval}(M, [c] \mathrel{::} s@t)
    \end{eqnarray*}
\end{itemize}
\qed
\end{proof}

\subsection{Measurability of $q$ and $Q$}

\begin{lemma} \label{lemma:q-less-one}
  For all $s \in \mathbb{S}$ and $M \in \cterms$, $\int_{\mathbb{S} \setminus {[] }} q_M(s,t) \mu(dt) \leq 1 $
\end{lemma}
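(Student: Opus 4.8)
The plan is to unfold the definition of $q$ from Figure~\ref{fig:trandensitykernel}, split the domain of integration according to the length of $t$ relative to $n\deq\Abs{s}$, and reduce everything to a single integral over $\mathbb{R}^n$ against the full product of Gaussian densities, which has total mass $1$. Concretely, I would first write $\int_{\sampseq\setminus\Set{[]}} q(s,t)\,\mu(dt) = \sum_{m=1}^n I_m + \sum_{m>n} J_m$, where $I_m$ collects the traces $t$ of length $m\le n$ (so $k=m$ and the trailing factor is $\termsmv{\texttt{peval}(M,t)}{[]}$) and $J_m$ the traces of length $m>n$ (so $k=n$). For $m\le n$, inserting the trivial factor $\int_{\mathbb{R}^{n-m}}\Pi_{i=m+1}^n \pdf{\mathsf{Gaussian}}(s_i,\sigma^2,w_i)\,dw=1$ and applying Tonelli rewrites $I_m=\int_{\mathbb{R}^n}(\Pi_{i=1}^n \pdf{\mathsf{Gaussian}}(s_i,\sigma^2,u_i))\cdot \termsmv{\texttt{peval}(M,u_{1..m})}{[]}\,du$. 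For $m>n$, factoring $t=u\mathbin{@}v$ with $u\in\mathbb{R}^n$, applying Tonelli, and summing over $l=m-n\ge 1$ gives $\sum_{m>n} J_m=\int_{\mathbb{R}^n}(\Pi_{i=1}^n \pdf{\mathsf{Gaussian}}(s_i,\sigma^2,u_i))\cdot \Tracevaldist{\texttt{peval}(M,u)}(\sampseq\setminus\Set{[]})\,du$, using $\sum_{l\ge1}\int_{\mathbb{R}^l}\termsmv{N}{v}\,dv=\Tracevaldist{N}(\sampseq\setminus\Set{[]})$.

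Combining these, the entire integral equals $\int_{\mathbb{R}^n}(\Pi_{i=1}^n \pdf{\mathsf{Gaussian}}(s_i,\sigma^2,u_i))\cdot\Psi(u)\,du$, where $\Psi(u)\deq\sum_{m=1}^n \termsmv{\texttt{peval}(M,u_{1..m})}{[]}+\Tracevaldist{\texttt{peval}(M,u)}(\sampseq\setminus\Set{[]})$. Since the product of Gaussian densities integrates to $1$ over $\mathbb{R}^n$, it suffices to prove the pointwise bound $\Psi(u)\le 1$ for every $u$.

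This pointwise bound is the heart of the argument and the step I expect to be hardest; it rests on the fact that the set of \emph{value traces} of $M$ (those $t$ with $\termsmv Mt>0$) is prefix-free. I would derive prefix-freeness from determinism: if $t'$ is a value trace then $(M,1,t')\Rightarrow(V',w',[])$ with $V'$ a value, so by Lemma~\ref{lemma:add-suffix-closure} we get $(M,1,t'\mathbin{@}t'')\Rightarrow(V',w',t'')$, and as $V'$ does not reduce, Lemma~\ref{lemma:small-steps-determined} forbids $t'\mathbin{@}t''$ (for $t''\ne[]$) from being a value trace. Now $\termsmv{\texttt{peval}(M,u_{1..m})}{[]}>0$ forces $u_{1..m}$ to be a value trace of $M$ (by the small-step characterisation of $\texttt{peval}$); since $u_{1..1}\sqsubset\cdots\sqsubset u_{1..n}$ is a prefix-chain, prefix-freeness permits at most one such $m=m_0$, and that single term is $\le 1$ because an empty consumed trace admits only deterministic steps and scores in $(0,1]$ (no density factors that could exceed $1$). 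If such an $m_0\le n$ exists, then $M$ has already reached a value, so $\texttt{peval}(M,u)=\fail$ when $m_0<n$, while $\texttt{peval}(M,u)$ has no nonempty value trace when $m_0=n$; in both cases the $\Tracevaldist{}$ summand vanishes and $\Psi(u)\le 1$. If no such $m_0$ exists, then $\Psi(u)=\Tracevaldist{\texttt{peval}(M,u)}(\sampseq\setminus\Set{[]})\le\Tracevaldist{\texttt{peval}(M,u)}(\sampseq)\le1$ by Corollary~\ref{cor:tracedist-sub-probability}.

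Finally, the measurability needed to justify Tonelli and the monotone rearrangements follows from Lemma~\ref{lemma:peval-measurable} together with the measurability of $\mathbf{P}^{\valset}$ (Lemma~\ref{lemma:p-v-measurable}); and the degenerate case $n=0$ is immediate, since then $\Psi([])=\Tracevaldist M(\sampseq\setminus\Set{[]})\le1$ directly. Putting the three pieces together yields $\int_{\sampseq\setminus\Set{[]}} q(s,t)\,\mu(dt)\le 1$, as required.
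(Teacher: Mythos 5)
Your proof is correct, and it reaches the bound by a genuinely different route than the paper. The paper argues by induction on $\Abs{s}$: it first proves a recursive unfolding of the density (Lemma~\ref{lemma:qstar-rec}), peels off a single Gaussian factor by Fubini at each step, and closes the induction with the local disjointness fact of Lemma~\ref{lemma:p-or-q} --- if $\termsmv{N}{[]}>0$ then $q^*_N(s',t')=0$ for all $t'\neq[]$ --- so that the ``already terminated'' mass and the ``keep sampling'' mass can never add up beyond $1$. You instead collapse the whole integral into $\int_{\mathbb{R}^n}\bigl(\Pi_{i=1}^n\pdf{\mathsf{Gaussian}}(s_i,\sigma^2,u_i)\bigr)\Psi(u)\,du$ and prove the pointwise bound $\Psi(u)\le 1$; for this you need prefix-freeness of value traces in full generality, a lemma the paper never states (it only proves the empty-prefix instance, Lemmas~\ref{lemma:eval-empty}, \ref{lemma:peval-zero} and \ref{lemma:p-or-q}), though it does follow from determinism (Lemmas~\ref{lemma:add-suffix-closure} and \ref{lemma:small-steps-determined}) exactly as you sketch. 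Both proofs rest on the same pillar, namely that the trace semantics is a sub-probability measure --- you invoke Corollary~\ref{cor:tracedist-sub-probability}, the paper invokes Theorem~\ref{thm:sampling-distribution} in its base case --- so neither is more elementary in that respect. What the paper's induction buys is locality: each step needs only one-dimensional Fubini and the empty-prefix disjointness lemma, with the induction hypothesis silently doing the work of your chain argument over the prefixes $u_{1..1},\dots,u_{1..n}$. What your version buys is transparency: it exhibits the integral as Gaussian proposal mass (exactly $1$) times per-point program mass (at most $1$), and makes explicit why at most one prefix length can terminate in a value. If you were to write it up, two steps deserve care: the Tonelli rearrangements need \emph{joint} measurability of $(u,v)\mapsto\termsmv{\mathtt{peval}(M,u)}{v}$, which requires the uncurried measurable maps $\mathbf{P}'$ and $\mathbf{O}'$ of the appendix rather than the fixed-$M$ Lemma~\ref{lemma:p-v-measurable}; and the claim that $\termsmv{\mathtt{peval}(M,u_{1..m})}{[]}>0$ makes $u_{1..m}$ a value trace of $M$ needs Lemma~\ref{lemma:small-step-comp} (to compose the $\mathtt{peval}$ prefix with the residual evaluation) together with the observation that big-step derivations ending in a value always carry strictly positive weight.
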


To prove this lemma, we need some auxiliary results:

\begin{lemma} \label{lemma:eval-empty}
  If $M \Downarrow^{[]}_w G$ and $M \Downarrow^s_{w'} G'$, then $s = []$.
\end{lemma}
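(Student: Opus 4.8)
The plan is to discharge the two big-step hypotheses by moving to the small-step world and then arguing by induction. By Theorem~\ref{thm:sample-small-big-eq}, $M\Downarrow^{[]}_w G$ is equivalent to $(M,1,[])\Rightarrow(G,w,[])$ and $M\Downarrow^{s}_{w'}G'$ is equivalent to $(M,1,s)\Rightarrow(G',w',[])$, with $G,G'\in\gvalset$. So it suffices to prove the following purely operational claim: for every closed $M$, if there is a reduction $(M,w_0,[])\Rightarrow(G,w,[])$ to a generalized value and a reduction $(M,w_0',s)\Rightarrow(G',w',[])$ to a generalized value, then $s=[]$. I would prove this by induction on the length $n$ of the empty-trace reduction, leaving the starting weights existentially quantified on each side so that weights play no role.

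For the base case $n=0$ we have $M=G\in\gvalset$; since generalized values never reduce, the second reduction must also take $0$ steps, forcing $(M,w_0',s)=(G',w',[])$ and hence $s=[]$.

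For the inductive step ($n\ge 1$), $M$ reduces and so is not a generalized value; by Lemma~\ref{lemma:unique} it decomposes uniquely as $M=E[R]$. The crucial observation is that the only reduction rules consuming or inspecting the trace are \tr{Red Random} and \tr{Red Random Fail}, both of which require a non-empty trace of the form $c\mathrel{::}s$. Since $(M,w_0,[])$ does take a step with the empty trace, $R$ cannot be a sampling redex $\distone(\vec c)$, so this first step is either \tr{Red Pure} or \tr{Red Score}; both leave the trace unchanged and produce a reduct $M_1=E[N]$ (respectively $E[\ttrue]$) determined by $E$ and $R$ alone, independently of the trace and the weight. Consequently the first step of the $s$-trace reduction, which starts from the same $M=E[R]$, is forced to use the same rule, yields the same reduct $M_1$, and leaves $s$ unchanged. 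We are thus left with $(M_1,w_1,[])\Rightarrow(G,w,[])$ of length $n-1$ together with $(M_1,w_1',s)\Rightarrow(G',w',[])$, and the induction hypothesis applied to $M_1$ gives $s=[]$.

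The main obstacle is the bookkeeping that makes the inductive step go through cleanly, rather than any deep idea: I must argue that the first reduction step is genuinely independent of the trace (and of the weight), so that both reductions follow an identical prefix up to the point where—were it ever reached—a sampling redex would appear. This rests on Lemma~\ref{lemma:unique} (unique redex decomposition) together with a short case inspection of the rules in Figure~\ref{fig:small-step-sampling}, noting that only the two \tr{Red Random} rules touch the trace. Quantifying the starting weights existentially, rather than fixing them, is what avoids having to track how \tr{Red Score} rescales the weight, which would otherwise force an extra appeal to Lemma~\ref{lemma:multiply-closure}.
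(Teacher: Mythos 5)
Your proof is correct, but it takes a genuinely different route from the paper's. The paper proves the lemma directly, by induction on the big-step derivation of $M \Downarrow^{[]}_{w} G$: every big-step rule concatenates the traces of its premises, so an empty conclusion trace forces every premise trace to be empty; one then inverts the second derivation rule by rule, aligns its premises with those of the first (which in the \textsc{(Eval Appl)} case needs determinism of big-step evaluation, Lemma~\ref{lemma:big-step-determined}, to know both derivations continue with the same body $P[V/x]$), and applies the induction hypothesis. You instead pay the up-front cost of Theorem~\ref{thm:sample-small-big-eq} and argue in the small-step world, where the key fact becomes local and mechanical: only \textsc{(Red Random)} and \textsc{(Red Random Fail)} inspect the trace, and they demand a nonempty one, so a run that makes progress on the empty trace consists entirely of the trace-oblivious rules \textsc{(Red Pure)} and \textsc{(Red Score)}, and any other run from the same term must shadow it step by step (via the unique decomposition of Lemma~\ref{lemma:unique} and determinism of $\detred$, Lemma~\ref{lemma:det-reduction-deterministic}). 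What your route buys: determinism is consumed one small step at a time rather than invoked for whole subderivations, and your existential quantification over the starting weights is exactly the right strengthening to avoid weight bookkeeping (otherwise you would indeed need Lemma~\ref{lemma:multiply-closure}). What it costs: dependence on the equivalence theorem, which the paper's one-line induction does not need; since the paper states this lemma in the appendix, well after Theorem~\ref{thm:sample-small-big-eq} and the small-step determinism lemmas, that dependence is legitimate here.
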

\begin{proof}
By induction on the derivation of $M \Downarrow^{[]}_w G$.
\qed
\end{proof}

\begin{lemma} \label{lemma:P-empty}
  If $\mathbf{P}^{\valset}_M([]) > 0$, then $\mathbf{P}^{\valset}_M(t) = 0$ for all $t \neq []$.
\end{lemma}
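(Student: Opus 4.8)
The plan is to obtain this lemma as an immediate consequence of Lemma~\ref{lemma:eval-empty}, which guarantees that once a term admits some empty-trace big-step evaluation, it can only ever consume the empty trace. First I would simply unfold the definition of $\mathbf{P}^{\valset}_M$. The hypothesis $\mathbf{P}^{\valset}_M([]) > 0$ means precisely that there are a weight $w$ and a value $V \in \valset$ with $M \Downarrow^{[]}_w V$; indeed, the only clause of the definition that yields a nonzero value is the one producing $w$ when $M \Downarrow^{[]}_w V$ for some $V \in \valset$.

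Next, arguing by contradiction, I would suppose that $\mathbf{P}^{\valset}_M(t) > 0$ for some $t \neq []$. Unfolding the definition again, this supplies a weight $w'$ and a value $V' \in \valset$ such that $M \Downarrow^t_{w'} V'$. At this point I would invoke Lemma~\ref{lemma:eval-empty} with the instantiation $G := V$, $G' := V'$, and $s := t$: from $M \Downarrow^{[]}_w V$ together with $M \Downarrow^t_{w'} V'$ it concludes $t = []$, which contradicts the assumption $t \neq []$. Hence no such $t$ can exist, and therefore $\mathbf{P}^{\valset}_M(t) = 0$ for every $t \neq []$, as required.

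I do not expect any genuine obstacle here, since all the substantive work is carried out by Lemma~\ref{lemma:eval-empty} (itself proved by induction on the derivation of $M \Downarrow^{[]}_w G$). The only point meriting a moment of care is that the hypothesis delivers a \emph{value} $V \in \valset$ rather than merely a generalized value in $\gvalset$; but since $\valset \subseteq \gvalset$, the derivation $M \Downarrow^{[]}_w V$ has exactly the shape required to feed into Lemma~\ref{lemma:eval-empty}, so the application is unproblematic.
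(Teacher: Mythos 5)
Your proof is correct and follows exactly the route the paper takes: the paper's own proof is the one-liner ``Follows directly from Lemma~\ref{lemma:eval-empty}'', and your argument simply spells out the unfolding of $\mathbf{P}^{\valset}_M$ and the contradiction that this one-liner leaves implicit. Nothing further is needed.
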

\begin{proof}
  Follows directly from Lemma \ref{lemma:eval-empty}.
  \qed
\end{proof}

\begin{lemma}[Tonelli's theorem for sums and integrals, 1.4.46 in \cite{tao2011measure}] \label{lemma:Tao-1446} 
If $(\Omega, \Sigma, \mu)$ is a measure space and $f_1, f_2, \dots$ a sequence of non-negative
measurable functions, then
\[
  \int_{\Omega}\ \sum_{i=1}^{\infty} f_i(x)\ \mu(dx) = \sum_{i=1}^{\infty}\ \int_{\Omega} f_i(x)\ \mu(dx)
\]
\end{lemma}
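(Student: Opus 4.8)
The plan is to derive this identity directly from the Monotone Convergence Theorem (MCT), which I take as available from the measure-theoretic background already invoked elsewhere in the paper (e.g.\ in the proof of Lemma~\ref{lemma:sup-sampseqn}). The essential observation is that a sum of \emph{nonnegative} terms is nothing but the monotone limit of its partial sums, so interchanging $\sum$ and $\int$ here is really an instance of interchanging a monotone limit with the integral.

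First I would introduce the partial sums $S_n(x) = \sum_{i=1}^{n} f_i(x)$ and the total sum $S(x) = \sum_{i=1}^{\infty} f_i(x)$, the latter defined as a value in $[0,\infty]$ (a series of nonnegative extended reals always converges in $[0,\infty]$). Each $S_n$ is measurable, being a finite sum of the measurable functions $f_1,\dots,f_n$, and $S = \sup_n S_n = \lim_n S_n$ is measurable as a pointwise supremum of measurable functions into $[0,\infty]$. Because every $f_i \ge 0$, the sequence $(S_n)$ is pointwise nondecreasing, $S_1 \le S_2 \le \cdots$, and converges pointwise to $S$.

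Next I would apply MCT to the increasing sequence $(S_n)$ to obtain $\int_\Omega S(x)\,\mu(dx) = \lim_{n\to\infty}\int_\Omega S_n(x)\,\mu(dx)$, and then use linearity of the integral over the \emph{finite} sum $S_n$, namely $\int_\Omega S_n(x)\,\mu(dx) = \sum_{i=1}^{n}\int_\Omega f_i(x)\,\mu(dx)$. Combining the two gives
\[
\int_\Omega \sum_{i=1}^{\infty} f_i(x)\,\mu(dx) = \int_\Omega S(x)\,\mu(dx) = \lim_{n\to\infty}\sum_{i=1}^{n}\int_\Omega f_i(x)\,\mu(dx) = \sum_{i=1}^{\infty}\int_\Omega f_i(x)\,\mu(dx),
\]
which is the claimed equality, both sides being understood as elements of $[0,\infty]$.

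The only substantive ingredient is MCT; everything else is bookkeeping, so I do not expect a genuine obstacle, but the point deserving care—and the reason nonnegativity cannot be dropped—is twofold. Nonnegativity is exactly what makes $(S_n)$ monotone, so that MCT applies, and what guarantees that $S$ is well defined pointwise in $[0,\infty]$ with no integrability hypothesis on the $f_i$. This is why the statement is asserted in the extended nonnegative reals: both sides may simultaneously equal $+\infty$, and no assumption of finiteness or summability is required.
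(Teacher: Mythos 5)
Your proof is correct and follows exactly the route the paper takes: the paper's own proof of this lemma is the one-line remark ``Follows from the monotone convergence theorem,'' which is precisely the argument you spell out in detail via monotone partial sums and linearity of the integral over finite sums. Your added care about nonnegativity (guaranteeing monotonicity of the partial sums and well-definedness in $[0,\infty]$) is the right justification for why MCT applies.
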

\begin{proof}
Follows from the monotone convergence theorem.
\qed
\end{proof}

\begin{lemma}[Linearity of Lebesgue integral, 1.4.37 ii) from \cite{tao2011measure}]
If $(\Omega, \Sigma)$ is a measurable space, $f$ a non-negative
measurable function, and $\mu_i, \mu_2, \dots$ a sequence of measures
on $\Sigma$, then
\[
  \int_\Omega f(x)\ \sum_{i=1}^{\infty}\ \mu_i(dx) = \sum_{i=1}^{\infty}\ \int_\Omega f(x)\ \mu_i(dx) 
\]
\end{lemma}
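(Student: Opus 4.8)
The plan is to prove this by the standard ``measure-theoretic machine'', building the function $f$ up through indicators, then simple functions, and finally general non-negative measurable functions. As a preliminary step I would first check that $\mu \deq \sum_{i=1}^{\infty} \mu_i$ is genuinely a measure on $(\Omega,\Sigma)$: non-negativity and $\mu(\0)=0$ are immediate, while countable additivity follows from the Tonelli-for-series lemma just established, which lets me swap the sum over a disjoint family with the sum over the index $i$. Once $\mu$ is known to be a measure, the left-hand integral $\int_\Omega f\,d\mu$ is well defined, and the identity to be proved is exactly $\int_\Omega f\,d\mu = \sum_i \int_\Omega f\,d\mu_i$.

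First I would handle the base case $f = \indfun{A}{\cdot}$ for $A \in \Sigma$, where both sides collapse to the definition: $\int_\Omega \indfun{A}{x}\,\mu(dx) = \mu(A) = \sum_i \mu_i(A) = \sum_i \int_\Omega \indfun{A}{x}\,\mu_i(dx)$. Next, for a non-negative simple function $s = \sum_{j=1}^{k} a_j\,\indfun{A_j}{\cdot}$ with each $a_j \ge 0$, the identity extends by finite additivity and homogeneity of the integral, interchanging the \emph{finite} sum over $j$ with the per-indicator identity already obtained; this is a routine finite manipulation with no convergence issues.

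Finally, for a general non-negative measurable $f$, I would take the canonical increasing sequence of simple functions $s_n \uparrow f$ and apply the monotone convergence theorem with respect to $\mu$ to get $\int_\Omega f\,d\mu = \sup_n \int_\Omega s_n\,d\mu = \sup_n \sum_i \int_\Omega s_n\,d\mu_i$, using the simple-function case in the last equality. The only delicate point, and the main obstacle, is the interchange of $\sup_n$ with the infinite sum $\sum_i$. Since the quantities $\int_\Omega s_n\,d\mu_i$ are non-negative and non-decreasing in $n$, this interchange is legitimate: I would justify it either by monotone convergence applied to the series (a monotone double sequence, where the two iterated suprema coincide) or, to reuse what is already in the paper, by the Tonelli-for-series lemma after rewriting $\sup_n$ as a countable supremum. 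After the swap, a second application of monotone convergence with respect to each individual $\mu_i$ gives $\sup_n \int_\Omega s_n\,d\mu_i = \int_\Omega f\,d\mu_i$, and hence $\sum_i \int_\Omega f\,d\mu_i$, completing the argument. In the paper one simply cites Tao (1.4.37 ii), but this three-stage approximation is the underlying proof.
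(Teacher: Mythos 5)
Your proposal is correct, but there is no paper proof to compare it against: the paper states this lemma purely as a citation to Tao's textbook (item 1.4.37(ii)) and supplies no argument of its own, so your three-stage proof is filling a gap the authors deliberately outsourced. Your argument is the canonical one and all of its delicate points are handled soundly: (i) the preliminary check that $\mu = \sum_{i=1}^\infty \mu_i$ is a measure, with countable additivity reduced to swapping a double series of non-negative terms (the paper's preceding Tonelli-for-series lemma, or just Tonelli for counting measure); (ii) the indicator and simple-function cases, which are pure bookkeeping since all terms are non-negative; and (iii) the general case via $s_n \uparrow f$, where the only real issue is interchanging $\sup_n$ with $\sum_i$. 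Your justification of that interchange is valid: for $a_{n,i} \geq 0$ non-decreasing in $n$ one has $\sup_n \sum_i a_{n,i} = \sum_i \sup_n a_{n,i}$, either by monotone convergence with respect to counting measure on the index set, or by your telescoping reduction to Tonelli for series. One small remark: the final step needs monotone convergence applied separately to each $\mu_i$ to identify $\sup_n \int s_n \, d\mu_i$ with $\int f \, d\mu_i$, which you state explicitly, so the argument is complete as written.
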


\begin{lemma}[Ex. 1.4.36 xi) from \cite{tao2011measure}] \label{lemma:Tao-1436}
If $(\Omega, \Sigma, \mu)$ is a measure space and $f$ a nonnegative measurable function on $\Omega$
and $B \in \Sigma$ and $f^B$ a restriction of $f$ to $B$, then

\[
  \int_\Omega f(x) [x \in B]\ \mu(dx) = \int_B f(x)\ \mu^B(dx)
\]

\end{lemma}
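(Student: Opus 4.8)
The plan is to prove the identity by the standard approximation argument of measure theory (the ``standard machine''): verify it first for indicator functions, extend to non-negative simple functions by linearity, and finally pass to arbitrary non-negative measurable $f$ by monotone convergence. Throughout, I would note that $f^B$ is $\restr{\Sigma}{B}$-measurable (the restriction of a $\Sigma$-measurable function to a measurable subset is measurable for the restricted $\sigma$-algebra), and that $\mu^B=\restr{\mu}{B}$ is a genuine measure on $(B,\restr{\Sigma}{B})$ in the sense already used in the excerpt, so both integrals are well defined.

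First I would treat the case $f=\indfun{A}{\cdot}$ for some $A\in\Sigma$, i.e.\ $f(x)=[x\in A]$. Here the left-hand side is $\int_\Omega [x\in A][x\in B]\,\mu(dx)=\int_\Omega[x\in A\cap B]\,\mu(dx)=\mu(A\cap B)$, while $f^B$, viewed as a function on $B$, is exactly $\indfun{A\cap B}{\cdot}$, so the right-hand side is $\int_B [x\in A\cap B]\,\mu^B(dx)=\mu^B(A\cap B)=\mu(A\cap B)$, the last equality holding because $A\cap B\in\restr{\Sigma}{B}$ and $\mu^B$ agrees with $\mu$ on $\restr{\Sigma}{B}$ by definition. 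Hence the two sides coincide for indicators.

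Next, for a non-negative simple function $f=\sum_{i=1}^n a_i\indfun{A_i}{\cdot}$ with $a_i\geq 0$ and $A_i\in\Sigma$, both sides of the identity are positively homogeneous and additive in $f$, so the equality for indicators lifts to simple functions by linearity of the Lebesgue integral. Finally, for an arbitrary non-negative measurable $f$ I would choose an increasing sequence $s_k\uparrow f$ of non-negative simple functions converging pointwise from below. Then $s_k\cdot\indfun{B}{\cdot}\uparrow f\cdot\indfun{B}{\cdot}$ on $\Omega$ and $s_k^B\uparrow f^B$ on $B$; applying the monotone convergence theorem to each side, and using the already-established equality $\int_\Omega s_k\cdot\indfun{B}{\cdot}\,d\mu=\int_B s_k^B\,d\mu^B$ for every $k$, the two suprema agree, which gives the claim.

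The only point requiring care — and the one I expect to be the main obstacle, though it is genuinely routine — is the precise correspondence in the indicator case between integrating $f\cdot\indfun{B}{\cdot}$ over the whole space $(\Omega,\Sigma,\mu)$ and integrating the restriction $f^B$ over the sub-space $(B,\restr{\Sigma}{B},\mu^B)$; once the bookkeeping identifying $\mu^B(A\cap B)$ with $\mu(A\cap B)$ is pinned down, the simple-function and limit steps are mechanical. Since this is exactly Exercise~1.4.36(xi) of \cite{tao2011measure}, one may alternatively just invoke the citation.
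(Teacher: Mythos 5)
Your proof is correct. It is worth noting, though, that the paper itself offers no proof of this lemma at all: it is stated purely as a citation (Exercise 1.4.36(xi) of Tao's book) and used as a black box in the proof of Lemma~\ref{lemma:q-less-one}. So your alternative remark at the end --- ``one may just invoke the citation'' --- is exactly what the paper does, while the body of your proposal supplies the details the paper omits. Your standard-machine argument is the right way to fill them in: the indicator case reduces to the identity $\mu^B(A\cap B)=\mu(A\cap B)$, which holds by the paper's own definition of the restricted measure ($\restr{\mu}{\tsetone}(\tsettwo)=\mu(\tsettwo)$ for $\tsettwo\in\restr{\algone}{\tsetone}$, legitimate here since $B\in\Sigma$ makes every set of $\restr{\Sigma}{B}$ a member of $\Sigma$); the simple-function step is linearity; and the passage to general nonnegative $f$ is the monotone convergence theorem applied on both sides simultaneously, using that $s_k\cdot\indfun{B}{\cdot}\uparrow f\cdot\indfun{B}{\cdot}$ on $\Omega$ and $s_k^B\uparrow f^B$ on $B$. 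What your version buys is self-containedness and an explicit check that the paper's notion of restriction of a measure is the one under which the cited identity is true; what the paper's version buys is brevity, delegating a routine fact to a standard reference.
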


Below we write $q(s,t)$ as $q_M(s,t)$, to make the dependency on $M$ explicit.

Let $q_M^*$ be defined as follows:

\[
  q_M^*(s,t) =
  \begin{cases}
    \mathbf{P}_M^{\valset} ([]) & \text{if}\ t = []\\
    q_M(s,t) & \text{otherwise}
  \end{cases}
\]

\begin{lemma} \label{lemma:qstar-rec}
  For all $M \in \cterms$ and  $s,y \in \mathbb{S}$
\[
  q_M^*(s,t) =
  \begin{cases}
    \mathbf{P}_M^{\valset}(t) & \text{if}\ s = []\ \text{or}\ t = []\\
    \pdf{\mathsf{Gaussian}}(s_1, \sigma^2, t_1) q_{\mathtt{peval}(M, [s_1])}^*([s_2, \dots s_{|s|}], [t_2, \dots t_{|t|}])  & \text{otherwise}
  \end{cases}
\]
\end{lemma}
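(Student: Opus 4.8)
The plan is to collapse the two-branch definition of $q_M^\ast$ into a single closed form valid for \emph{all} $s,t\in\mathbb{S}$, and then read off the recursion from that form. Concretely, I would first establish the auxiliary identity
\[
  q_M^\ast(s,t)=\Bigl(\prod_{i=1}^{k}\pdf{\mathsf{Gaussian}}(s_i,\sigma^2,t_i)\Bigr)\cdot\mathbf{P}^{\valset}_{\mathtt{peval}(M,s_{1..k})}(t_{k+1..\Abs{t}}),\qquad k=\min\Set{\Abs{s},\Abs{t}},
\]
for every $M\in\cterms$ and all $s,t$. This is precisely the reason for passing from $q$ to $q^\ast$: the only place where $q_M$ disagrees with this formula is $t=[]$, where $q_M(s,[])=1-\int_A q(s,t)\,dt$ is replaced in $q^\ast$ by $\mathbf{P}^{\valset}_M([])$; but when $t=[]$ we have $k=0$, so the right-hand side of the display collapses (empty product, $\mathtt{peval}(M,[])=M$, empty suffix) to exactly $\mathbf{P}^{\valset}_M([])$. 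For $t\neq[]$ the identity is just the definition of $q$ read with partial-evaluation prefix $s_{1..k}$.

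With the uniform form in hand the first branch is immediate: whenever $s=[]$ or $t=[]$ we get $k=0$, so the product is empty and $\mathtt{peval}(M,s_{1..0})=\mathtt{peval}(M,[])=M$, giving $q_M^\ast(s,t)=\mathbf{P}^{\valset}_M(t_{1..\Abs{t}})=\mathbf{P}^{\valset}_M(t)$, as required.

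For the main case $s\neq[]$ and $t\neq[]$, I would write $M'=\mathtt{peval}(M,[s_1])$, $s'=s_{2..\Abs{s}}$, $t'=t_{2..\Abs{t}}$, and set $k=\min\Set{\Abs{s},\Abs{t}}\ge 1$ and $k'=\min\Set{\Abs{s'},\Abs{t'}}=k-1$. Applying the uniform form to $q^\ast_{M'}(s',t')$ and shifting indices by one ($s'_i=s_{i+1}$, $t'_i=t_{i+1}$) turns its Gaussian product into $\prod_{i=2}^{k}\pdf{\mathsf{Gaussian}}(s_i,\sigma^2,t_i)$ and its suffix $t'_{k'+1..\Abs{t'}}$ into $t_{k+1..\Abs{t}}$; multiplying by the leading factor $\pdf{\mathsf{Gaussian}}(s_1,\sigma^2,t_1)$ then rebuilds the full product $\prod_{i=1}^{k}$. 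The only remaining point is to reconcile the partial-evaluation arguments: the inner term carries $\mathtt{peval}(M',s'_{1..k'})=\mathtt{peval}(\mathtt{peval}(M,[s_1]),s_{2..k})$, which by Lemma~\ref{lemma:peval-assoc} equals $\mathtt{peval}(M,[s_1]\mathbin{@}s_{2..k})=\mathtt{peval}(M,s_{1..k})$, exactly the argument occurring in the uniform form of $q_M^\ast(s,t)$. Hence both sides agree.

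The main obstacle is the index bookkeeping in this last step: one must verify that the shift sends $k$ to $k-1$ correctly both inside the $\min$ and in the suffix range, and check the degenerate sub-case $\Abs{t}=1$, where $t'=[]$ and the inner $q^\ast_{M'}(s',[])$ reduces via its $t=[]$ clause to $\mathbf{P}^{\valset}_{M'}([])$, which is consistent with the uniform form at $k'=0$. The associativity Lemma~\ref{lemma:peval-assoc} is exactly the tool that forces the nested $\mathtt{peval}$ to collapse, so the argument goes through provided the uniform closed form is set up with the partial-evaluation prefix taken from $s$, matching the statement.
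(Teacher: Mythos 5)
Your proof has a genuine gap, and it sits exactly where you flag your own choice at the end: the ``uniform closed form'' with the partial-evaluation prefix taken from $s$ is \emph{not} the definition of $q$. In Figure~\ref{fig:trandensitykernel} the density is $q_M(s,t)=\bigl(\Pi_{i=1}^k\pdf{\mathsf{Gaussian}}(s_i,\sigma^2,t_i)\bigr)\cdot\mathbf{P}^{\valset}_{N}(t_{k+1..\Abs{t}})$ with $N=\mathtt{peval}(M,t_{1..k})$: the prefix comes from the \emph{proposed} trace $t$, which is the trace the program is actually run on, while $s$ enters only through the Gaussian perturbation factors. Your auxiliary identity, which carries $\mathbf{P}^{\valset}_{\mathtt{peval}(M,s_{1..k})}$, is therefore false whenever $s_{1..k}\neq t_{1..k}$ and the program's control flow depends on the drawn values (the two sides then partially evaluate $M$ on different traces); your justification that it is ``just the definition of $q$'' is a misreading. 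Consequently the recursion you derive, with inner term $q^*_{\mathtt{peval}(M,[s_1])}$, does not hold for the paper's $q^*$: the step where you invoke Lemma~\ref{lemma:peval-assoc} collapses $\mathtt{peval}(\mathtt{peval}(M,[s_1]),s_{2..k})$ to $\mathtt{peval}(M,s_{1..k})$, but nothing in the actual definition of $q$ ever produces a peval on a prefix of $s$.

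The confusion is understandable because the printed statement of the lemma contains the same $s_1$-for-$t_1$ slip; but that is a typo, as one can check both against the definition of $q$ and against the place the lemma is applied: in the proof of Lemma~\ref{lemma:q-less-one} it is used in the form $q^*_M(s,t)=\pdf{\mathsf{Gaussian}}(s_1,\sigma^2,t_1)\,q^*_{\mathtt{peval}(M,[t_1])}(s_{2..\Abs{s}},t_{2..\Abs{t}})$. The paper's own proof accordingly unfolds $q_M(s,t)$ with prefix $t_{1..k}$ and applies Lemma~\ref{lemma:peval-assoc-step} to split $\mathtt{peval}(M,[t_1,\dots,t_k])$ into $\mathtt{peval}(\mathtt{peval}(M,[t_1]),[t_2,\dots,t_k])$; only its final line repeats the typo. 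Structurally your argument mirrors that proof --- closed form, index shift, peval associativity, the degenerate sub-case $\Abs{t}=1$ --- and it would go through essentially verbatim once the prefix is taken from $t$, yielding $\mathtt{peval}(M,[t_1])$ in the recursion. As written, however, you have derived the mistyped statement from a false premise rather than proved the intended lemma.
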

\begin{proof}
By induction on $\Abs s$:
\begin{itemize}
  \item Case $s = []$:

    If $t = []$, the result follows directly from the definition of $q_M^*$.
    Otherwise, $q_M^*([],t) = q_M([], t) = P_M^{\valset}(t)$, as required.
  \item Case $|s| = n+1 > 0$:

    Again, if $t = []$, the result follows immediately. Otherwise,
    we have
    \[
      q_M^*(s,t) = q_M(s,t) =  \Pi_{i=1}^k (\pdf{\mathsf{Gaussian}}(s_i, \sigma^2, t_i)) 
      \mathbf{P}_{\mathtt{peval}(M,[t_1,\dots,t_k])}^{\valset}(t) 
    \]
    where $k = \mathtt{min}(|s|,|t|) > 0$. Hence
    \begin{eqnarray*}
      q_M^*(s,t) &=& \pdf{\mathsf{Gaussian}}(s_1, \sigma^2, t_1) \Pi_{i=2}^k
      (\pdf{\mathsf{Gaussian}}(s_i, \sigma^2, t_i)) \mathbf{P}_{\mathtt{peval}(M,[t_1,\dots,t_k])}^{\valset}([t_{k+1}, \dots, t_{|t|}])\\
      \text{(by Lemma \ref{lemma:peval-assoc-step})}&=& \pdf{\mathsf{Gaussian}}(s_1, \sigma^2, t_1) \Pi_{i=2}^k
      (\pdf{\mathsf{Gaussian}}(s_i, \sigma^2, t_i)) 
      \mathbf{P}_{\mathtt{peval}(\mathtt{peval}(M, [t_1]),[t_2,\dots,t_k])}^{\valset}([t_{k+1}, \dots, t_{|t|}]) \\
      &=& (\pdf{\mathsf{Gaussian}}(s_1, \sigma^2, t_1)) 
      q_{\mathtt{peval}(M,[s_1])}^*([s_2, \dots, s_{|s|})], [t_2, \dots, t_{|t|}])
    \end{eqnarray*}
    as required.
    \qed
\end{itemize}
\end{proof}

\begin{lemma} \label{lemma:peval-zero}
If $\mathbf{P}^{\valset}_M([]) > 0$, then $\mathtt{peval}(M,t) = \fail$ for every
$t \neq []$.
\end{lemma}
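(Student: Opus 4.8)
The plan is to read off from the hypothesis that $M$ reduces to a value without touching the trace at all, and then to exploit determinism of small-step reduction to conclude that a nonempty trace can never be consumed. First I would unfold the hypothesis: by the definition of $\mathbf{P}^{\valset}$, the assumption $\termsmv{M}{[]}>0$ means that $M\Downarrow^{[]}_{w} V$ for some value $V\in\valset$ with $w>0$, and hence by Theorem~\ref{thm:sample-small-big-eq} we have $(M,1,[])\Rightarrow(V,w,[])$.

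The key observation is that this reduction consumes no trace elements. Starting from the empty trace, the only small-step rules that can fire on a configuration whose trace is $[]$ are \tr{Red Pure} and \tr{Red Score}, since the two trace-shortening rules \tr{Red Random} and \tr{Red Random Fail} both require a trace of the form $c\mathrel{::}s$; as these two admissible rules leave the trace unchanged, a straightforward induction on the length of the derivation shows that every configuration occurring in $(M,1,[])\Rightarrow(V,w,[])$ has the empty trace and every step is an instance of \tr{Red Pure} or \tr{Red Score}. This yields a term sequence $M=M_0\to M_1\to\cdots\to M_n=V$ in which, by Lemma~\ref{lemma:unique}, no $M_i$ with $i<n$ is a sampling redex $E[\distone(\vec{c})]$ (otherwise the computation would be stuck on the empty trace rather than reaching the value $V$).

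Finally I would transfer this to an arbitrary $t\neq[]$. Because the rules \tr{Red Pure} and \tr{Red Score} are independent of the trace component, applying the identical sequence of steps to the configuration $(M,1,t)$ gives $(M,1,t)\Rightarrow(V,w',t)$ for some weight $w'$, with the trace equal to $t$ at every configuration; moreover this is the \emph{only} reduction path from $(M,1,t)$, since $\to$ is a partial function by Lemma~\ref{lemma:small-step-determined}, and it terminates because $V$ is a value. Consequently no configuration reachable from $(M,1,t)$ ever has trace $[]$ (each has a trace of length $\Abs{t}\ge 1$), so there is no reduction of the shape $(M,1,t)\Rightarrow(M_k,w_k,s_k)\to(M',w',[])$ with $s_k\neq[]$; by the definition of $\mathtt{peval}$ its value therefore falls into the last clause, i.e.\ $\mathtt{peval}(M,t)=\fail$. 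The main obstacle is precisely the transfer step: one must argue carefully that the trace-$t$ computation genuinely mirrors the empty-trace one and never reaches a sampling redex, which is exactly where the trace-independence of the pure and score rules, combined with single-step determinism, does the work. The companion fact at the level of big-step evaluation is recorded in Lemma~\ref{lemma:eval-empty} and can be used to cross-check the argument.
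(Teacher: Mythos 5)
Your proof is correct and takes essentially the same route as the paper's: both convert the hypothesis via Theorem~\ref{thm:sample-small-big-eq} into a small-step reduction $(M,1,[])\Rightarrow(V,w,[])$, lift it to $(M,1,t)\Rightarrow(V,w,t)$ for arbitrary $t\neq[]$, and then invoke determinism of $\to$ to conclude that no configuration with empty trace is reachable from $(M,1,t)$, so $\mathtt{peval}(M,t)$ falls into the $\fail$ clause. The only cosmetic difference is that the paper obtains the lifting by citing Lemma~\ref{lemma:add-suffix} together with an easy induction, whereas you re-derive it inline from the observation that only the trace-independent rules \tr{Red Pure} and \tr{Red Score} can fire on an empty trace.
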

\begin{proof}
It $\mathbf{P}^{\valset}_M([]) = w > 0$, then we must have $M \Downarrow^{[]}_w V$ for some
$V \in \valset$, which implies $(M,1,[]) \Rightarrow (G,w,[])$. Using
Lemma \ref{lemma:add-suffix}, we can easily show by induction that
$(M,1,t) \Rightarrow (G,w,t)$ for any $t \neq []$. Because the reduction relation is deterministic,
this implies that there are no $M'$, $w'$ such that $(M,1,t) \Rightarrow (M',w',[])$
(if there were, we would have $(M',w',[]) \Rightarrow (G,w,t)$, but no reduction rule can add an element to a trace).
This means that $\mathtt{peval}$, by applying reduction repeatedly, will never reach $(M',[])$
for any $M'$, so $\mathtt{peval}(M,t) = \fail$.
\qed
\end{proof}

\begin{lemma} \label{lemma:p-or-q}
  If $\mathbf{P}^{\valset}_M([]) > 0$, then $q_M^*(s,t) = 0$ for all $s \in \mathbb{S},t \neq []$.
\end{lemma}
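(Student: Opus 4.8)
The plan is to reduce the claim to a direct computation of $q_M(s,t)$ from its defining formula, since for $t \neq []$ the definition of $q_M^*$ gives $q_M^*(s,t) = q_M(s,t)$; thus it suffices to show $q_M(s,t) = 0$ whenever $\mathbf{P}^{\valset}_M([]) > 0$ and $t \neq []$. First I would set $k = \min\{\Abs{s},\Abs{t}\}$ and $N = \mathtt{peval}(M,t_{1..k})$, so that by the definition in Figure~\ref{fig:trandensitykernel} (applicable because $\Abs{t} \neq 0$),
\[
q_M(s,t) = \Bigl(\Pi_{i=1}^{k}\pdf{\mathsf{Gaussian}}(s_i,\sigma^2,t_i)\Bigr)\cdot \mathbf{P}^{\valset}_{N}(t_{(k+1)..\Abs{t}}).
\]
The strategy is then to show that the second factor vanishes, which I would do by a case split on $k$.

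In the boundary case $k = 0$ (which forces $\Abs{s} = 0$ since $\Abs{t} \geq 1$), the empty product is $1$, we have $N = \mathtt{peval}(M,[]) = M$, and $t_{(k+1)..\Abs{t}} = t$, so $q_M([],t) = \mathbf{P}^{\valset}_M(t)$; this is $0$ by Lemma~\ref{lemma:P-empty}, using $\mathbf{P}^{\valset}_M([]) > 0$ and $t \neq []$. In the main case $k \geq 1$ the prefix $t_{1..k}$ is a nonempty trace, so Lemma~\ref{lemma:peval-zero} applies and yields $N = \mathtt{peval}(M,t_{1..k}) = \fail$. It then remains to observe that $\mathbf{P}^{\valset}_{\fail}(u) = 0$ for every trace $u$: indeed $\fail \in \gvalset$ reduces only by \tr{Eval Val} to itself, and $\fail \notin \valset$, so there is no $V \in \valset$ and $w$ with $\fail \Downarrow^u_w V$. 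Hence the second factor is $0$ and $q_M(s,t) = 0$.

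Combining the two cases gives $q_M(s,t) = 0$, and therefore $q_M^*(s,t) = q_M(s,t) = 0$ for all $s$ and all $t \neq []$, as required. I do not expect a real obstacle here; the only points requiring care are the separate treatment of the $k=0$ boundary (so that Lemma~\ref{lemma:peval-zero}, which needs a nonempty trace, is invoked only when $k \geq 1$) and the small auxiliary fact that $\mathbf{P}^{\valset}_{\fail}$ is identically zero. An alternative route would run an induction on $\Abs{s}$ via the recursive characterisation in Lemma~\ref{lemma:qstar-rec}, peeling off one sample at a time; but since the first peeled step already sends $M$ to $\fail$, that induction collapses to the same two observations, so the direct computation is preferable.
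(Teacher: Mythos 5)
Your proof is correct and follows essentially the same route as the paper, whose entire argument is the one-line remark that the claim ``follows easily from Lemma~\ref{lemma:peval-zero}.'' Your write-up merely fills in the details the paper glosses over: the boundary case $k=0$ (where $\mathtt{peval}$ is applied to the empty prefix, so Lemma~\ref{lemma:P-empty} rather than Lemma~\ref{lemma:peval-zero} is what closes the case) and the small observation that $\mathbf{P}^{\valset}_{\fail}$ is identically zero.
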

\begin{proof}
Follows easily from Lemma \ref{lemma:peval-zero}. 
\qed
\end{proof}


\begin{restate}{Lemma~\ref{lemma:q-less-one} }
For all $s \in \mathbb{S}$ and $M \in \cterms$, $\int_{\mathbb{S} \setminus {[] }} q_M(s,t) \mu(dt) \leq 1 $
\end{restate}
\begin{proof}
By induction on $|s|$.
\begin{itemize}  
      \item Base case: $s = []$
  \begin{eqnarray*}
    &&\int_{\mathbb{S} \setminus \{[]\}} q_M([],t)\ \mu(dt)\\
    &=&\int_{\mathbb{S} \setminus \{[]\}} \mathbf{P}_M^{\valset}(t)\ \mu(dt)\\
    &\leq& \int_{\mathbb{S}} \mathbf{P}_M(t)\ \mu(dt)\\
    &=& \Tracedist{M}(\mathbb{S})\\
    &=& \Tracedist{M}(\mathbf{O}^{-1}_M(\gvalset))\\
    &=& \sbd{M}(\gvalset)\\
    \text{by Theorem \ref{thm:sampling-distribution}} &=& \tsq{M}(\gvalset)\\
    &\leq& 1
  \end{eqnarray*}
  because $\tsq{M}$ is a sub-probability distribution.
          
  \item Induction step: $s \neq []$
  
  We have:
  \begin{eqnarray*}
    &&\int_{\mathbb{S} \setminus \{[]\}} q_M(s,t)\ \mu(dt)\\
    &=&\int_{\mathbb{S} \setminus \{[]\}} q_M^*(s,t)\ \mu(dt)\\
    \text{(by Thm 16.9 from Billingsley) }    &=& \sum_{i=1}^\infty \int_{\mathbb{R}^i} q_M^*(s,t)\ \mu(dt)\\
    \text{(by Lemma \ref{lemma:Tao-1436})}  &=& \sum_{i=1}^\infty \int_{\mathbb{R}^i} q_M^*(s,t)\ \lambda^i(dt)\\
    &=& \sum_{i=1}^\infty \int_{\mathbb{R}^i} \pdf{\mathsf{Gaussian}}(s_1,\sigma^2,t_1) 
    q_{\mathtt{peval}(M,[t_1]) }^*([s_2, \dots, s_{|s|}],[t_2, \dots, t_{|t|}])\ \lambda^i(dt)\\
    \text{(by Fubini's theorem)}
    &=& \sum_{i=1}^\infty \int_{\mathbb{R}} \pdf{\mathsf{Gaussian}}(s_1,\sigma^2,t_1) 
    \int_{\mathbb{R}^{i-1}} q_{\mathtt{peval}(M,[t_1]) }^*(s',t')\ \lambda^{i-1}(dt')\ \lambda(dt_1)\\
    \text{(by Lemma \ref{lemma:Tao-1446})}
    &=&  \int_{\mathbb{R}} \pdf{\mathsf{Gaussian}}(s_1,\sigma^2,t_1) \sum_{i=0}^\infty
    \int_{\mathbb{R}^{i}} q_{\mathtt{peval}(M,[t_1]) }^*(s',t')\ \lambda^{i}(dt')\ \lambda(dt_1)\\
    &=&  \int_{\mathbb{R}} \pdf{\mathsf{Gaussian}}(s_1,\sigma^2,t_1) 
    \left(\int_{\{[]\}}\mathbf{P}_{\mathtt{peval}(M,[t_1]) }^{\valset}(t')\ \mu (dt')\right. + \\ 
    &&\qquad \left.\int_{\mathbb{S} \setminus \{[]\}} q_{\mathtt{peval}(M,[t_1]) }^*(s',t')\ \mu(dt') \right)\ \lambda(dt_1)
  \end{eqnarray*}

Now, we need to show that for all $N$,

\begin{equation} \label{eqn:-q-less-one}
    \int_{\{[]\}}\mathbf{P}_{N}^{\valset}(t')\ \mu (dt') + 
    \int_{\mathbb{S} \setminus \{[]\}} q_{N}^*(s',t')\ \mu(dt') \leq 1
  \end{equation}

  First, note that $\int_{\{[]\}}\mathbf{P}_{N}^{\valset}(t')\ \mu (dt')
  \leq \int_{\mathbb{S}}\mathbf{P}_{N}^{\valset}(t')\ \mu (dt') \leq 1$,
  by the same property as the one used in the base case.
  We also have 
  $\int_{\{[]\}}\mathbf{P}_{N}^{\valset}(t')\ \mu (dt)
  = \mathbf{P}_{N}^{\valset}([])$, so by Lemma \ref{lemma:p-or-q}, if $\mathbf{P}_{N}^{\valset}([])>0$,
  then 
\[
  \int_{\{[]\}}\mathbf{P}_{N}^{\valset}(t')\ \mu (dt') + 
  \int_{\mathbb{S} \setminus \{[]\}} q_{N}^*(s',t')\ \mu(dt') =  
  \int_{\{[]\}}\mathbf{P}_{N}^{\valset}(t')\ \mu (dt') \leq 1 
\]

On the other hand, if $\mathbf{P}_{N}^{\valset}([])= 0$, then
\[
  \int_{\{[]\}}\mathbf{P}_{N}^{\valset}(t')\ \mu (dt') + 
  \int_{\mathbb{S} \setminus \{[]\}} q_{N}^*(s',t')\ \mu(dt') =  
  \int_{\mathbb{S} \setminus \{[]\}} q_{N}^*(s',t')\ \mu(dt') = 
  \int_{\mathbb{S} \setminus \{[]\}} q_{N}(s',t')\ \mu(dt') \leq 1 
\]
by induction hypothesis.

Hence:
  \begin{eqnarray*}
    &&\int_{\mathbb{S} \setminus \{[]\}} q_M(s,t)\ \mu(dt)\\
    &\leq&  \int_{\mathbb{R}} \pdf{\mathsf{Gaussian}}(s_1,\sigma^2,t_1) 
    \ \lambda(dt_1)\\
    &=& 1 
  \end{eqnarray*}
  as required.
  \qed
\end{itemize}
\end{proof}

\begin{restate}{Lemma~\ref{lemma:q-measurable}}
For any closed program $M$, the transition density 
$q(\cdot, \cdot) : (\sampseq \times \sampseq) \rightarrow \mathbb{R}_{+}$ 
is measurable.
\end{restate}
\begin{proof}
It is enough to show that $q(s,t)$ is measurable for every $|s|=n$ and $|t|=m$, then
the result follows from Lemma \ref{lemma:split-space}.

Note that a function taking a sequence $s$ and returning any subsequence of it
is trivially continuous and measurable, so for any function of $s$ and $t$ to be measurable,
it is enough to show that it is measurable as a function of some projections of $s$ and $t$.

\begin{itemize}
\item If $m > 0$ and $n < m$, then we have
$q(s,t) = \Pi_{i=1}^n \pdf{\mathsf{Gaussian}}(s_1, \sigma^2, t_i) 
\mathbf{P}_{\mathtt{peval}(M,t_{1..n} )}^\valset(t_{n+1 .. m})
= \Pi_{i=1}^n \pdf{\mathsf{Gaussian}}(s_i, \sigma^2, t_i) 
\mathbf{P'}(\mathtt{peval}(M,t_{1..n}), t_{n+1 .. m}) 
[\mathbf{O'}(\mathtt{peval}(M,t_{1..n}), t_{n+1 .. m}) \in \valset]$.

Each $\pdf{\mathsf{Gaussian}}(s_i, \sigma^2, t_i)$ is measurable, as a composition
of a function projecting $(s_i,t_i)$ from $(s,t)$ and the Gaussian pdf, so their
pointwise product must be measurable.

Now, $\mathbf{P'}$ is measurable, and the function mapping
$(s,t)$ to $(\mathtt{peval}(M,t_{1..n}), t_{n+1 .. m})$
is a pair of two measurable functions, one of which is a composition
of the measurable $\mathtt{peval}(M, \cdot)$ and a projection of $t_{1..n}$,
and the other just a projection of $t_{n+1 .. m})$.
Hence, the function mapping $(s,t)$ to
$\mathbf{P'}(\mathtt{peval}(M,t_{1..n}), t_{n+1 .. m})$ is a composition 
of measurable functions.

Finally, $[\mathbf{O'}(\mathtt{peval}(M,t_{1..n}), t_{n+1 .. m}) \in \valset]$ is
a composition of the measurable function mapping
$(s,t)$ to $(\mathtt{peval}(M,t_{1..n}), t_{n+1 .. m})$ and the indicator function
for the measurable set $\valset$, thus it is measurable.

Hence, $q(s,t)$ is a pointwise product of measurable functions, so it is measurable.

\item If $m > 0$ and $n \geq m$, then 
$q(s,t) = \Pi_{i=1}^m \pdf{\mathsf{Gaussian}}(s_1, \sigma^2, t_i) 
\mathbf{P}_{\mathtt{peval}(M,t )}^\valset([])$

$=  \Pi_{i=1}^m \pdf{\mathsf{Gaussian}}(s_i, \sigma^2, t_i) 
\mathbf{P'}(\mathtt{peval}(M,t), []) 
[\mathbf{O'}(\mathtt{peval}(M,t), []) \in \valset]$.

Now, the function mapping $(s,t)$ to $ \Pi_{i=1}^m \pdf{\mathsf{Gaussian}}(s_i, \sigma^2, t_i) $
is measurable like in the previous case. The function mapping $(s,t)$ to $(\mathtt{peval}(M,t), [])$
is a pairing of two measurable functions, one being a composition of the projection of $t$ and 
$\mathtt{peval}(M, \cdot)$, the other being a constant function returning $[]$. Hence,
$\mathbf{P'}(\mathtt{peval}(M,t), []) $ is a composition of two measurable functions.
Meanwhile, $[\mathbf{O'}(\mathtt{peval}(M,t), []) \in \valset]$ is a composition of a measurable function
and an indicator function.


\item If $m = 0$, then $q(s,[]) = 1-\int_{\mathbb{S} \setminus \{[] \}} q(s, t)\, \mu(dt)$.
Since we have already shown that $q(s,t)$ is measurable on $\mathbb{S} \times 
(\mathbb{S} \setminus [])$, $\int_{\mathbb{S} \setminus \{[] \}} q(s, t)\, \mu(dt)$
is measurable by Fubini's theorem, so $q(s,[])$ is a difference of measurable functions, and
hence it is measurable.\qed

\end{itemize}
%
\end{proof}

\begin{restate}{Lemma~\ref{lemma:q-kernel}}
The function $Q$ is a probability kernel on $(\sampseq, \mathcal{S})$.
\end{restate}
\begin{proof}
We need to verify the two properties of probability kernels:
\begin{enumerate}
\item For every $s \in \sampseq$, $Q(s, \cdot)$ is a probability distribution on $\sampseq$.
Since for every $s \in \sampseq$, 
$q(s,\cdot)$ is non-negative measurable $\mathcal{S}$ 
(by \cite[Theorem 18.1]{billingsley95}),
$Q(s,B) = \int_B q(s,y) \mu(dy)$ (as a function of $B$) is a 
well-defined measure for all $s \in \sampseq$. 
Finally, $Q(s, \sampseq) = Q(s,[])+Q(s, \sampseq\setminus\Set{[]}) = 1$.
\item For every $B \in \mathcal{S}$, $Q(\cdot, B)$ is a non-negative 
measurable function on $\sampseq$:
Since $(\sampseq, \mathcal{S}, \mu)$ is a $\sigma$-finite measure space, 
$q(\cdot, \cdot)$ is non-negative and measurable 
$\mathcal{S} \times \mathcal{S}$ and
$Q(s,B)  =\int_B q(s,y) \mu(ds)$,
this follows from \cite[Theorem 18.3]{billingsley95}. \qed
\end{enumerate}

\end{proof}

\bibliographystyle{abbrvnat}
\bibliography{bibliography}

\begin{thebibliography}{42}
\providecommand{\natexlab}[1]{#1}
\providecommand{\url}[1]{\texttt{#1}}
\expandafter\ifx\csname urlstyle\endcsname\relax
  \providecommand{\doi}[1]{doi: #1}\else
  \providecommand{\doi}{doi: \begingroup \urlstyle{rm}\Url}\fi

\bibitem[Bhat et~al.(2013)Bhat, Borgstr{\"o}m, Gordon, and
  Russo]{BBGR12:DerivingPDFs}
S.~Bhat, J.~Borgstr{\"o}m, A.~D. Gordon, and C.~V. Russo.
\newblock Deriving probability density functions from probabilistic functional
  programs.
\newblock In N.~Piterman and S.~A. Smolka, editors, \emph{Proceedings of TACAS
  2013}, volume 7795 of \emph{LNCS}, pages 508--522. Springer, 2013.

\bibitem[Billingsley(1995)]{billingsley95}
P.~Billingsley.
\newblock \emph{Probability and Measure}.
\newblock Wiley-Interscience, third edition, 1995.

\bibitem[Bizjak and Birkedal(2015)]{DBLP:conf/fossacs/BizjakB15}
A.~Bizjak and L.~Birkedal.
\newblock Step-indexed logical relations for probability.
\newblock In A.~M. Pitts, editor, \emph{Proceedings of FoSSaCS 2015}, volume
  9034 of \emph{LNCS}, pages 279--294. Springer, 2015.

\bibitem[Borgstr{\"o}m et~al.(2015)Borgstr{\"o}m, Dal~Lago, Gordon, and
  Szymczak]{mhlambda-arxiv}
J.~Borgstr{\"o}m, U.~Dal~Lago, A.~D. Gordon, and M.~Szymczak.
\newblock A lambda-calculus foundation for universal probabilistic programming
  (long version).
\newblock \emph{CoRR}, abs/1512.08990, 2015.

\bibitem[Cousot and Monerau(2012)]{CousotMonerau-ESOP2012-PAI}
P.~Cousot and M.~Monerau.
\newblock Probabilistic abstract interpretation.
\newblock In \emph{Proceedings of {ESOP} 2012}, volume 7211 of \emph{LNCS},
  pages 166--190. Springer, 2012.

\bibitem[Danos and Ehrhard(2011)]{DanosEhrhard11}
V.~Danos and T.~Ehrhard.
\newblock Probabilistic coherence spaces as a model of higher-order
  probabilistic computation.
\newblock \emph{Information and Computation}, 209\penalty0 (6):\penalty0
  966--991, 2011.

\bibitem[Danos and Harmer(2002)]{DanosHarmer02}
V.~Danos and R.~Harmer.
\newblock Probabilistic game semantics.
\newblock \emph{{ACM} Transactions on Computational Logic}, 3\penalty0
  (3):\penalty0 359--382, 2002.

\bibitem[Ehrhard et~al.(2014)Ehrhard, Tasson, and
  Pagani]{EhrhardTassonPagani14}
T.~Ehrhard, C.~Tasson, and M.~Pagani.
\newblock Probabilistic coherence spaces are fully abstract for probabilistic
  {PCF}.
\newblock In \emph{Proceedings of {POPL} 2014}, pages 309--320. ACM, 2014.

\bibitem[Ferrer~Fioriti and
  Hermanns(2015)]{Hermanns15.terminationProbabilistic}
L.~M. Ferrer~Fioriti and H.~Hermanns.
\newblock Probabilistic termination: Soundness, completeness, and
  compositionality.
\newblock In \emph{Proceedings of POPL 2015}, pages 489--501. ACM, 2015.

\bibitem[Gallay(2009)]{gallay2009mesure}
T.~Gallay.
\newblock \emph{Th{\'e}orie de la mesure et de l'int{\'e}gration}.
\newblock 2009.
\newblock Course notes available online at
  http://im2ag-webmath.e.ujf-grenoble.fr/enseignement2/IMG/pdf/integrationa.pdf.

\bibitem[Goldwasser and Micali(1984)]{GoldwasserMicali}
S.~Goldwasser and S.~Micali.
\newblock Probabilistic encryption.
\newblock \emph{Journal of Computer and System Sciences}, 28\penalty0
  (2):\penalty0 270--299, 1984.

\bibitem[Goodman(2013)]{DBLP:conf/popl/Goodman13}
N.~D. Goodman.
\newblock The principles and practice of probabilistic programming.
\newblock In \emph{Proceedings of {POPL} 2013}, pages 399--402. ACM, 2013.

\bibitem[Goodman and Stuhlm\"{u}ller(2014)]{dippl}
N.~D. Goodman and A.~Stuhlm\"{u}ller.
\newblock The design and implementation of probabilistic programming languages.
\newblock \url{http://dippl.org}, 2014.

\bibitem[Goodman and Tenenbaum(2014)]{probmods}
N.~D. Goodman and J.~B. Tenenbaum.
\newblock Probabilistic models of cognition.
\newblock \url{http://probmods.org}, 2014.

\bibitem[Goodman et~al.(2008)Goodman, Mansinghka, Roy, Bonawitz, and
  Tenenbaum]{DBLP:conf/uai/GoodmanMRBT08}
N.~D. Goodman, V.~K. Mansinghka, D.~M. Roy, K.~Bonawitz, and J.~B. Tenenbaum.
\newblock Church: a language for generative models.
\newblock In D.~A. McAllester and P.~Myllym{\"{a}}ki, editors,
  \emph{Proceedings of {UAI} 2008}, pages 220--229. {AUAI} Press, 2008.

\bibitem[Gordon et~al.(2014)Gordon, Henzinger, Nori, and
  Rajamani]{DBLP:conf/icse/GordonHNR14}
A.~D. Gordon, T.~A. Henzinger, A.~V. Nori, and S.~K. Rajamani.
\newblock Probabilistic programming.
\newblock In M.~B. Dwyer and J.~Herbsleb, editors, \emph{Proceedings of {FOSE}
  2014}, pages 167--181. ACM, 2014.

\bibitem[Hastings(1970)]{hastings70:MH}
W.~K. Hastings.
\newblock {Monte} {Carlo} sampling methods using {Markov} chains and their
  applications.
\newblock \emph{Biometrika}, 57:\penalty0 97--109, 1970.

\bibitem[Homan and Gelman(2014)]{Homan2014NUTS}
M.~D. Homan and A.~Gelman.
\newblock The no-u-turn sampler: Adaptively setting path lengths in
  {H}amiltonian {M}onte {C}arlo.
\newblock \emph{Journal of Machine Learning Research}, 15:\penalty0 1593--1623,
  2014.

\bibitem[Hur et~al.(2015)Hur, Nori, Rajamani, and Samuel]{corsamp}
C.~Hur, A.~V. Nori, S.~K. Rajamani, and S.~Samuel.
\newblock A provably correct sampler for probabilistic programs.
\newblock In P.~Harsha and G.~Ramalingam, editors, \emph{Proceedings of
  {FSTTCS} 2015}, volume~45 of \emph{LIPIcs}, pages 475--488. Schloss Dagstuhl
  - Leibniz-Zentrum f{\"u}r Informatik, 2015.

\bibitem[Jones(1990)]{jones90:PhD}
C.~Jones.
\newblock \emph{Probabilistic Non-determinism}.
\newblock PhD thesis, University of Edinburgh, 1990.
\newblock Available as Technical Report CST-63–90.

\bibitem[Jones and Plotkin(1989)]{JonesPlotkin89}
C.~Jones and G.~D. Plotkin.
\newblock A probabilistic powerdomain of evaluations.
\newblock In \emph{Proceedings of {LICS} 1989}, pages 186--195. ACM, 1989.

\bibitem[Kiselyov(2016)]{problems-kiselov16}
O.~Kiselyov.
\newblock Problems of the lightweight implementation of probabilistic
  programming, 2016.
\newblock Poster at PPS'2016 workshop.

\bibitem[Kozen(1979)]{DBLP:conf/focs/Kozen79}
D.~Kozen.
\newblock Semantics of probabilistic programs.
\newblock In \emph{Proceedings of {FOCS} 1979}, pages 101--114. {IEEE} Computer
  Society, 1979.

\bibitem[Manning and Sch{\"u}tze(1999)]{manning1999foundations}
C.~D. Manning and H.~Sch{\"u}tze.
\newblock \emph{Foundations of statistical natural language processing}.
\newblock MIT Press, 1999.

\bibitem[Mansinghka et~al.(2014)Mansinghka, Selsam, and
  Perov]{DBLP:journals/corr/MansinghkaSP14}
V.~K. Mansinghka, D.~Selsam, and Y.~N. Perov.
\newblock Venture: a higher-order probabilistic programming platform with
  programmable inference.
\newblock \emph{CoRR}, abs/1404.0099, 2014.

\bibitem[Metropolis et~al.(1953)Metropolis, Rosenbluth, Rosenbluth, Teller, and
  Teller]{metropolis53}
N.~Metropolis, A.~W. Rosenbluth, M.~N. Rosenbluth, A.~H. Teller, and E.~Teller.
\newblock Equations of state calculations by fast computing machines.
\newblock \emph{Journal of Chemical Physics}, 21\penalty0 (6):\penalty0
  1087--–1092, 1953.

\bibitem[Nori et~al.(2014)Nori, Hur, Rajamani, and
  Samuel]{DBLP:conf/aaai/NoriHRS14}
A.~V. Nori, C.~Hur, S.~K. Rajamani, and S.~Samuel.
\newblock {R2:} an efficient {MCMC} sampler for probabilistic programs.
\newblock In C.~E. Brodley and P.~Stone, editors, \emph{Proceedings of {AAAI}
  2014}, pages 2476--2482. {AAAI} Press, 2014.

\bibitem[Panangaden(1999)]{panangaden99markovkernels}
P.~Panangaden.
\newblock The category of {Markov} kernels.
\newblock \emph{Electronic Notes in Theoretical Computer Science}, 22:\penalty0
  171--187, 1999.
\newblock In proceedings of PROBMIV 1998.

\bibitem[Panangaden(2009)]{panangaden2009labelled}
P.~Panangaden.
\newblock \emph{Labelled Markov Processes}.
\newblock Imperial College Press, 2009.

\bibitem[Park et~al.(2008)Park, Pfenning, and Thrun]{park08sampling}
S.~Park, F.~Pfenning, and S.~Thrun.
\newblock A probabilistic language based upon sampling functions.
\newblock \emph{ACM Transactions on Programming Languages and Systems},
  31:\penalty0 1, 2008.

\bibitem[Pearl(1988)]{pearl1988probabilistic}
J.~Pearl.
\newblock \emph{Probabilistic reasoning in intelligent systems: networks of
  plausible inference}.
\newblock Morgan Kaufmann, 1988.

\bibitem[Ramsey and Pfeffer(2002)]{DBLP:conf/popl/RamseyP02}
N.~Ramsey and A.~Pfeffer.
\newblock Stochastic lambda calculus and monads of probability distributions.
\newblock In \emph{Proceedings of {POPL} 2002}, pages 154--165, 2002.

\bibitem[Russell(2015)]{DBLP:journals/cacm/Russell15}
S.~J. Russell.
\newblock Unifying logic and probability.
\newblock \emph{Communications of the {ACM}}, 58\penalty0 (7):\penalty0 88--97,
  2015.

\bibitem[Scibior et~al.(2015)Scibior, Ghahramani, and
  Gordon]{DBLP:conf/haskell/ScibiorGG15}
A.~Scibior, Z.~Ghahramani, and A.~D. Gordon.
\newblock Practical probabilistic programming with monads.
\newblock In B.~Lippmeier, editor, \emph{Proceedings of Haskell 2015}, pages
  165--176. {ACM}, 2015.

\bibitem[Staton et~al.(2016)Staton, Yang, Heunen, Kammar, and
  Wood]{DBLP:journals/corr/StatonYHKW16}
S.~Staton, H.~Yang, C.~Heunen, O.~Kammar, and F.~Wood.
\newblock Semantics for probabilistic programming: higher-order functions,
  continuous distributions, and soft constraints.
\newblock \emph{CoRR}, abs/1601.04943, 2016.

\bibitem[Tao(2011)]{tao2011measure}
T.~Tao.
\newblock \emph{An Introduction to Measure Theory}.
\newblock AMS, 2011.

\bibitem[Thrun(2002)]{thrun2002robotic}
S.~Thrun.
\newblock Robotic mapping: A survey.
\newblock In G.~Lakemeyer and B.~Nebel, editors, \emph{Exploring artificial
  intelligence in the new millennium}, pages 1--35. Morgan Kaufmann, 2002.

\bibitem[Tierney(1994)]{tierney1994}
L.~Tierney.
\newblock Markov chains for exploring posterior distributions.
\newblock \emph{The Annals of Statistics}, 22\penalty0 (4):\penalty0
  1701--1728, 1994.

\bibitem[Tolpin et~al.(2015)Tolpin, van~de Meent, and
  Wood]{DBLP:conf/pkdd/TolpinMW15}
D.~Tolpin, J.~van~de Meent, and F.~Wood.
\newblock Probabilistic programming in {Anglican}.
\newblock In A.~Bifet, M.~May, B.~Zadrozny, R.~Gavald{\`{a}}, D.~Pedreschi,
  F.~Bonchi, J.~S. Cardoso, and M.~Spiliopoulou, editors, \emph{Proceedings of
  {ECML} {PKDD} 2015, Part {III}}, volume 9286 of \emph{LNCS}, pages 308--311.
  Springer, 2015.

\bibitem[Toronto(2014)]{TorontoPhD}
N.~Toronto.
\newblock \emph{Useful Languages for Probabilistic Modeling and Inference}.
\newblock PhD thesis, Brigham Young University, Provo, UT, 2014.
\newblock URL
  \url{https://www.cs.umd.edu/~ntoronto/papers/toronto-2014diss.pdf}.

\bibitem[Toronto et~al.(2015)Toronto, McCarthy, and
  Horn]{DBLP:conf/esop/TorontoMH15}
N.~Toronto, J.~McCarthy, and D.~V. Horn.
\newblock Running probabilistic programs backwards.
\newblock In J.~Vitek, editor, \emph{Proceedings of {ESOP} 2015}, volume 9032
  of \emph{LNCS}, pages 53--79. Springer, 2015.

\bibitem[Wingate et~al.(2011)Wingate, Stuhlm\"{u}ller, and
  Goodman]{wingate2011lightweight}
D.~Wingate, A.~Stuhlm\"{u}ller, and N.~D. Goodman.
\newblock Lightweight implementations of probabilistic programming languages
  via transformational compilation.
\newblock In G.~J. Gordon and D.~Dunson, editors, \emph{Proceedings of AISTATS
  2011}, volume~15 of \emph{JMLR: Workshop and Conference Proceedings}, pages
  770--778. JMLR, 2011.

\end{thebibliography}
\end{document}